\crefname{claim}{claim}{claims}
\pgfplotsset{compat=1.17} 
\Crefname{algocf}{Algorithm}{Algorithms}
\tikzstyle{vecArrow} = [thick, decoration={markings,mark=at position
\tikzstyle{innerWhite} = [semithick, white,line width=1.4pt, shorten >= 4.5pt]
\theoremstyle{plain}
\newtheorem{theorem}{Theorem}[section]
\newtheorem{lemma}[theorem]{Lemma}
\newtheorem{proposition}[theorem]{Proposition}
\newtheorem{informal}{Informal Theorem}
\theoremstyle{plain}
\newtheorem{definition}{Definition}[section] 
\newtheorem{example}[definition]{Example}
\theoremstyle{plain}
\newcommand{\xhdr}[1]{\vspace{2mm} \noindent{\bf #1}}
\newcommand{\Cost}[2][]{\text{\bf COST}\ifthenelse{\not\equal{}{#1}}{_{#1}}{}\!\left[{\def\givenn{\middle|}#2}\right]}
\renewcommand{\d}{\mathrm{d}}   
\newcommand{\OPT}{\texttt{OPT}}
\newcommand{\mech}{\mathcal{M}}
\newcommand{\alloc}{x}
\newcommand{\price}{p}
\newcommand{\cost}{c}
\newcommand{\hcost}{\bar\cost}
\newcommand{\val}{v}
\newcommand{\buyerdist}{F}
\newcommand{\sellerdist}{G}
\newcommand{\lval}{\underline{\val}}
\newcommand{\primed}{^\dagger}
\newcommand{\doubleprimed}{^\ddagger}
\newcommand{\lcost}{\underline{\cost}}
\newcommand{\hval}{\bar{\val}}
 \newcommand{\reals}{\mathbb{R}}
\newcommand{\buyerutil}{u}
\newcommand{\sellerutil}{\pi}
\newcommand{\selleralloc}{\tilde{\alloc}}
\newcommand{\GFT}[2][]{\texttt{GFT}\ifthenelse{\not\equal{}{#1}}{_{#1}}{}\!\left[{\def\givenn{\middle|}#2}\right]}
\newcommand{\buyerexanteutil}{U}
\newcommand{\sellerexanteutil}{\Pi}
\newcommand{\RO}{\texttt{RO}}
\newcommand{\quant}{q}
\newcommand{\virtualval}{\psi}
\newcommand{\calC}{\mathcal{C}}
\newcommand{\optquant}{\quant_m}
\newcommand{\reserve}{r}
\newcommand{\optreserve}{\reserve_m}
\newcommand{\mybar}[3]{%
    \mathrlap{\hspace{#2}\overline{\scalebox{#1}[1]{\phantom{\ensuremath{#3}}}}}\ensuremath{#3}
}
\newcommand{\myunderbar}[3]{%
    \mathrlap{\hspace{#2}\underline{\scalebox{#1}[1]{\phantom{\ensuremath{#3}}}}}\ensuremath{#3}
}
\newcommand{\HUnderBar}{\myunderbar{0.8}{0pt}{H}}
\newcommand{\HOverBar}{\mybar{0.68}{2.5pt}{H}}
\newcommand{\MUnderBar}{\myunderbar{0.8}{0pt}{M}}
\newcommand{\MOverBar}{\mybar{0.68}{3pt}{M}}
\newcommand{\LUnderBar}{\myunderbar{0.8}{0pt}{L}}
\newcommand{\LOverBar}{\mybar{0.68}{1.2pt}{L}}
\newcommand{\NoTrade}{\sf No Trade Mechanism}
\newcommand{\FixPrice}{{\sf Fixed Price Mechanism}}
\newcommand{\FixPrices}{{\sf Fixed Price Mechanisms}}
\newcommand{\FPM}{{\sf FPM}}
\newcommand{\RandomOffer}{{\sf Random Offer Mechanism}}
\newcommand{\ROM}{{\sf ROM}}
\newcommand{\BuyerOffer}{{\sf Buyer Offer Mechanism}}
\newcommand{\BOM}{{\sf BOM}}
\newcommand{\SellerOffer}{{\sf Seller Offer Mechanism}}
\newcommand{\SOM}{{\sf SOM}}
\newcommand{\BiasedRandomOffer}{{\sf $\mixprob$-Biased Random Offer Mechanism}}
\newcommand{\NashSocialWelfareMaximizers}{{\sf NSW-Maximizing Mechanisms}}
\newcommand{\BROM}{{\sf $\mixprob$-ROM}}
\newcommand{\SecondBest}{{\sf Second-Best Benchmark}}
\newcommand{\FirstBest}{{\sf First-Best Benchmark}}
\newcommand{\NashSocialWelfareMaximizer}{{\sf NSW-Maximizing Mechanism}}
\newcommand{\NSWM}{{\sf NSWM}}
\newcommand{\sellerbenchmark}{\sellerexanteutil^*}
\newcommand{\buyerbenchmark}{\buyerexanteutil^*}
\newcommand{\supp}{{\sf supp}}
\newcommand{\revcurve}{R}
\newcommand{\ironrevcurve}{\bar{\revcurve}}
\newcommand{\buyercdf}{\buyerdist}
\newcommand{\buyerpdf}{\buyerdist'}
\newcommand{\sellercdf}{\sellerdist}
\newcommand{\sellerpdf}{\sellerdist'}
\newcommand{\buyerhazardrate}{\phi}
\newcommand{\LambertFunc}{W}
\newcommand{\fixedPriceGFTPercentageRegular}{{85.1\%}}
\newcommand{\fixedPriceGFTPercentageMHR}{{91.3\%}}
\newcommand{\fixedPriceGFTPercentageUBRegular}{{87.7\%}}
\newcommand{\fixedPriceGFTPercentageUBMHR}{{94.4\%}}
\newcommand{\mixprob}{\lambda}
\newcommand{\eps}{\varepsilon}
\newcommand{\constantH}{K}
\newcommand{\cumhazard}{\Phi}
\newcommand{\sellerprice}{\tilde{\price}}
\newcommand{\OPTSB}{\OPT_{\textsc{SB}}}
\newcommand{\OPTFB}{\OPT_{\textsc{FB}}}
\newcommand{\GFTapprox}{\calC}
\newcommand{\ksfairness}{{KS-fairness}}
\newcommand{\ksfair}{{KS-fair}}
\newcommand{\equitable}{{equitable}}
\newcommand{\equitability}{{equitability}}
\newcommand{\ksline}{{KS-line}}
\newcommand{\fprice}{\price_f}
\newcommand{\fquant}{\quant_f}
\newcommand{\exanteutilratio}{\Gamma}
\newcommand{\revratio}{\alpha}
\newcommand{\residuesurplusratio}{\beta}
\newcommand{\mechfamily}{\mathfrak{M}_{\buyerdist,\sellerdist}}
\newcommand{\mechfam}{\mathfrak{M}}
\newcommand{\GFTp}[2][]{\tilde{\texttt{GFT}}\ifthenelse{\not\equal{}{#1}}{_{#1}}{}\!\left[{\def\givenn{\middle|}#2}\right]}
\newcommand{\btinstance}{(\buyerdist,\sellerdist)}
\newcommand{\bargainprob}{S}
\newcommand{\bargainprobBT}{\bargainprob_{\buyerdist,\sellerdist}}
\newcommand{\dispnt}{d}
\newcommand{\bargainsolution}{f}
\newcommand{\RR}{\mathbb{R}}
\newcommand{\solnash}{N}
\newcommand{\solks}{K}
\newcommand{\soleg}{E}
\newcommand{\buyercoord}[1]{\buyerutil(#1)}
\newcommand{\sellercoord}[1]{\sellerutil(#1)}
\newcommand{\idealpnt}{a}
\newcommand{\idealp}[1]{\idealpnt^*(#1)}
\newcommand{\idealbuyer}[1]{\buyercoord{\idealp{#1}}}
\newcommand{\idealseller}[1]{\sellercoord{\idealp{#1}}}
\newcommand{\idealutility}[1]{A^*(#1)}
\newcommand{\buyerbenchmarkpoint}{o^\buyerutil}
\newcommand{\sellerbenchmarkpoint}{o^\sellerutil}
\newcommand{\pnta}{x}
\newcommand{\pntb}{y}
\newcommand{\pntc}{z}
\newcommand{\naturals}{\mathbb{N}}
	\DeclareMathOperator{\argmax}{argmax}
\newcommand{\prob}[2][]{\text{Pr}\ifthenelse{\not\equal{}{#1}}{_{#1}}{}\!\left[{\def\givenn{\middle|}#2}\right]}
\newcommand{\expect}[2][]{\mathbb{E}\ifthenelse{\not\equal{}{#1}}{_{#1}}{}\!\left[{\def\givenn{\middle|}#2}\right]}
\newcommand{\tparen}{\big}
\newcommand{\tprob}[2][]{\text{Pr}\ifthenelse{\not\equal{}{#1}}{_{#1}}{}\tparen[{\def\given{\tparen|}#2}\tparen]}
\newcommand{\texpect}[2][]{\mathbb{E}\ifthenelse{\not\equal{}{#1}}{_{#1}}{}\tparen[{\def\given{\tparen|}#2}\tparen]}
\newcommand{\sprob}[2][]{\text{Pr}\ifthenelse{\not\equal{}{#1}}{_{#1}}{}[#2]}
\newcommand{\sexpect}[2][]{\mathbb{E}\ifthenelse{\not\equal{}{#1}}{_{#1}}{}[#2]}
\newcommand{\indicator}[1]{{\mathbbm{1}\left\{ #1 \right\}}}
\newcommand{\plus}[1]{\left( #1 \right)^+}
\newcommand{\bigO}{\mathcal{O}}
\newcommand{\Obj}[2][]{\textsc{obj}\ifthenelse{\not\equal{}{#1}}{_{#1}}{}\!\left[{\def\givenn{\middle|}#2}\right]}
\title{On the Efficiency of Fair and Truthful Trade Mechanisms}
\author{
Moshe Babaioff\thanks{Hebrew University of Jerusalem. Email: {\tt moshe.babaioff@mail.huji.ac.il}} \and Yiding Feng\thanks{Hong Kong University of Science and Technology. Email: {\tt ydfeng@ust.hk}} \and Noam Manaker Morag\thanks{Hebrew University of Jerusalem. Email: {\tt noam.manakermorag@mail.huji.ac.il}}
}
\date{}
\begin{document}

\maketitle
\begin{abstract}
    
We consider the impact of fairness requirements on the social efficiency of truthful mechanisms for trade, focusing on Bayesian bilateral-trade settings. 
Unlike the full information case in which all gains-from-trade can be realized and equally split between the two parties, in the private information setting, equitability has devastating welfare implications (even if only required to hold ex-ante). 
We thus search for an alternative fairness notion and suggest requiring the mechanism to be {\ksfair}: it must ex-ante equalize the fraction of the ideal utilities of the two traders.  
We show that there is always a {\ksfair} (simple) truthful mechanism with expected gains-from-trade that are half the optimum, but always ensuring any better fraction is impossible (even when the seller value is zero). 
We then restrict our attention to trade settings with a zero-value seller and a buyer with valuation distribution that is Regular or MHR, proving that much better fractions can be obtained under these conditions, with simple posted-price mechanisms. 
\end{abstract}

\thispagestyle{empty}
\newpage

\section{Introduction}
The field of \emph{Mechanism Design} studies the design of mechanisms that obtain good outcomes (as high social welfare or revenue) in the presence of strategic traders. In this paper, we consider imposing a fairness requirement on truthful trade mechanisms\footnote{A trade mechanism must also encourage participation even after an agent knows her value (interim individually rational), and not run a deficit.}, and the impact of such a requirement on the economic efficiency (social welfare) of the mechanism.  That is, a mechanism results in trade which generates gains, and (some of these) gains are allocated to the traders. We take a normative approach and search for simple truthful mechanisms that are constrained to distribute these gains ``fairly'', 
studying their economic efficiency. This problem can be viewed as searching for a fair solution to the bargaining problem between the two traders, but in settings with private valuations. Thus, for trade settings, our work focuses on \emph{the implications of the combination of strategic behavior and fairness requirements on the efficiency of the outcome}.

We focus on the fundamental mechanism-design problem of bilateral trade in the Bayesian setting, that is, a seller that sells a single good to a buyer, where traders' values for the good are private, but are sampled independently from a known Bayesian prior.\footnote{Note that the seller is not the mechanism designer, but rather a trader in the mechanism.} The most basic of these problems is the setting where the seller has no value for the good.\footnote{This model is equivalent to a seller with known value for keeping the item: we can assume that value is $0$ by applying a simple normalization. Thus, we assume a value of $0$ for simplicity. The gains-from-trade (GFT) is invariant to the normalization (but not the welfare), and is harder to approximate than the welfare (so positive results for GFT are better).} We call this basic setting the \emph{zero-value seller setting}.

As an example, consider a zero-value seller setting with a buyer that has a value of $2$ for the good. In this simple setting there is no uncertainty about the values (a full information setting). If the seller can dictate the mechanism, she will use the seller's optimal mechanism ({\SellerOffer}) and post an optimal price of $2$. She will always sell the item,\footnote{Here the buyer is indifferent between buying or not, but for any $\varepsilon>0$, the seller can get revenue of $2-\varepsilon$ which giving a strict incentive to buy. Throughout the paper we ignore this issue and allow ties to be broken as needed.} obtaining utility of $2$, leaving the buyer with utility of $0$. This outcome seems highly unfair to the buyer, since the gains-from-trade (GFT) of $2$ are the result of a trade that cannot take place without the buyer, yet the buyer receives none of these gains. The alternative mechanism in which trade happens at a price of $1$, results with an efficient trade in which each trader gets utility of $1$, and that mechanism seems much more fair. More generally, when the value of the buyer is known to be $\val > 0$ (no private information), the mechanism in which trade happens at the price $\val/2$ maximizes the gains and seems to be perfectly fair to both traders, as it is \emph{equitable} (both get the same utility).\footnote{In fact, in such a full information setting, any symmetric solution to the bargaining problem (e.g., the Nash solution, or the egalitarian solution) will result with the same outcome where the two parties equally split the gains.} We thus see that when there is no private information, there is a mechanism that maximizes the GFT and is ``perfectly fair''.  

This work focuses on the more involved situation where traders do have private information, and are acting strategically. As an example, consider a seller of a digital good (zero cost/value for the good) to a population of buyers with values distributed according to some known distribution $\buyerdist$. For example, $\buyerdist$ might be the uniform distribution over $[0,1]$. As it is a digital good, the setting is equivalent to a single-buyer setting with value sampled from the valuation distribution. Let us consider the following three properties of a trade mechanism: being \emph{truthful}, being \emph{fair}, and \emph{maximizing the gains-from-trade (GFT)}. Without any fairness considerations, all GFT can be realized by a truthful mechanism that always trades at price 0. If we do not care about gains, the mechanism that never trades is truthful and can be considered fair under various definitions (e.g., it equalizes the utilities). Finally, disregarding the issue of strategic behavior (assuming access to the true values), the problem reduces to the full information case, for which we saw that it is possible to realize all gains and split them equally. We thus see that any two out of three properties can be achieved. Can all three be achieved together? Unfortunately, it seems that the answer is no. For example, in the zero-value seller setting and a buyer with value uniformly distributed over $[0,1]$, the only way to truthfully maximize the gains is to trade at 0, but that mechanism leaves the seller with no profit at all, which intuitively seems unfair (and is also formally unfair under essentially any fairness definition we can conceive). 

We thus relax the goal of exactly maximizing the GFT and only ask to approximate it in expectation, where approximation is with respect to {the {\SecondBest}, that is, the} maximum expected GFT {achievable by a mechanism} without fairness constraints, but with incentive constraints as well as 
participation constraints and the constraint of the mechanism not running a deficit. We ask:   
\emph{For which fairness notions can the truthful trade mechanism guarantee a good GFT approximation? For those fairness notions where a good approximation is possible, how good can the approximation be?}

We move to consider the problem {of picking a fair truthful mechanism} from an ex ante perspective (in expectation, before any values are realized). If the seller is a monopolist in the market, she can price at a Myerson price and maximize her own expected revenue, but this might result in very low expected utility for the buyer (such as in the case of full information), an outcome that can be considered very unfair. A market regulator (or a court) might impose the constraint that the mechanism used for trade be fair to both parties (at least ex ante). But, what kind of fairness can be required? Possibly the most natural would be to require equitability (at least ex ante). Unfortunately, such a definition is too stringent and does not allow for good GFT approximation, as we illustrate next.     

Consider a zero-value seller setting with a buyer that has a private value $\val$, sampled from the Equal-Revenue (ER) distribution with support $[1,\constantH]$ (where $\constantH>1$ is some constant).\footnote{The Equal-Revenue distribution (with support $[1,\constantH]$) has probability $1/\price$ of having value at least $\price\in [1,\constantH]$. Thus the expected revenue from any fixed trading price $\price\in [1,\constantH]$ is $1$.} As the buyer value is private information, we now consider the utilities of the parties in mechanisms that are truthful. It is well known \citep{mye-81} that the seller's ex ante utility in any such mechanism is at most $1$, and thus an ex ante equitable mechanism can yield total gains for both traders \footnote{Here we only consider the utilities of the two traders (disregarding any money kept by the mechanism). Our results in the paper allow for weak budget-balanced mechanisms, and our positive results hold with respect to the more challenging benchmark of the entire GFT, including the gains kept by the mechanism.} of at most 2. This is a negligible fraction of the maximal GFT of $\Theta(\log \constantH)$ (obtainable by the mechanism that always trades at price 0, which is very unfair to the seller).

We conclude that the fairness notion of ex ante equitability is too stringent for the setting of trade where the buyer has private information: imposing ex ante equitability in settings where the buyer and seller differ significantly from each other (ex ante) results in devastating implications on the GFT. Therefore, we seek an alternative fairness notion that is better suited for trade settings where the buyer is ex ante very different from the seller and the traders have private information, yet allows for a good GFT approximation.

In spirit, this problem is a cooperative bargaining problem, where the traders need to ex ante agree on a truthful mechanism that is fair. If they fail to reach an agreement, the default outcome of no trade occurs. Yet, the set over which the traders are bargaining is not explicitly given, but rather induced by the traders' valuation distributions. Moreover, in bilateral-trade settings, where the seller also has a non-trivial valuation distribution, it is known that some of these mechanisms (e.g., the one that maximizes GFT subject to truthfulness) are very complicated and unintuitive, even for simple distributions. 

Two prominent solutions to cooperative bargaining problems are the Kalai-Smorodinsky (KS) solution \citep{KS-75}, and the Nash solution \citep{nash-51}. We first ask:
\begin{displayquote}
\emph{How large is the fraction of GFT guaranteed by the Kalai-Smorodinsky solution? By the Nash solution?} 
\end{displayquote}
Our work mainly focuses on studying the GFT of mechanisms that are \emph{\ksfair}, satisfying the ``Kalai-Smorodinsky condition'': mechanisms that ex ante equalize the fraction of the ideal utilities of the two traders. 
We explain {\ksfairness} using the following simple example: Consider the zero-value seller setting where the buyer's value is uniformly distributed over $[0, 1]$. The buyer's ideal ex ante utility is $0.5$, obtained by trading at a price of $0$, while the seller's ideal ex ante utility is $0.25$, obtained by trading at the monopoly reserve (i.e., Myerson price) of $0.5$. If the two traders trade at a price of $0.2$, the seller receives an expected utility of $0.16$, while the buyer receives $0.32$. Since both utilities achieve the same fraction (64\%) of their respective optima, this mechanism is {\ksfair}. Moreover, its GFT is $0.16+0.32 = 0.48$, which is 96\% fraction of the optimal GFT of $0.5$, achieved by trading at price of $0$ (which is unfair to the seller).

Although the Kalai-Smorodinsky solution satisfies {\ksfairness}, the mechanism corresponding to that solution may be complex, making both theoretical analysis and practical implementation challenging. This motivates us to design \emph{simple} mechanisms that are {\ksfair} and guarantee good GFT. Therefore, we pose the following question:
\begin{displayquote}
\emph{How large is the fraction of the {\SecondBest} that can be guaranteed by a simple mechanism satisfying {\ksfairness} (aka., the Kalai-Smorodinsky condition)?} 
\end{displayquote}
As we will explain in detail in the following section, our work presents simple mechanisms that are {\ksfair} and give nearly the best fraction of the {\SecondBest} we can hope for {from any {\ksfair} mechanism} (and split all the gains between the two traders). Furthermore, while these mechanisms may not always be Pareto-optimal (with the Pareto-optimal solution being the Kalai-Smorodinsky solution), they imply that the KS solution is also guaranteed to obtain {at least} the same GFT approximation.

\subsection{Our Contributions} 
\label{sec:intro-our}
\label{subsec:contribution}
In this work, we study the efficiency of fair and truthful trade mechanisms. Below, we present
an overview of our contributions.

We focus on direct-revelation mechanisms, which specify an allocation (possibly randomized) and payments for each trader, for every valuation profile of the traders.\footnote{This is without loss of generality, by the revelation principle \citep{mye-81}.}
We restrict our attention to mechanisms that are interim individually rational (IIR), Bayesian incentive compatible (BIC) and ex-ante weak budget balance (ex-ante WBB),
\footnote{\label{footnote:reference to prelim}See \Cref{sec:prelim} for formal definitions.} and refer to such mechanisms as \emph{truthful mechanisms}.   
For general bilateral trade instances where both traders have private values sampled from overlapping distributions, the optimal expected GFT (also known as the {\FirstBest}) may not be achievable \citep{MS-83}. Thus, to understand the impact of fairness on truthful mechanisms, we analyze the GFT approximation of our proposed mechanism with respect to the {\SecondBest}, defined as the maximum GFT achievable by any truthful mechanism.\footnote{For zero-value seller instance, the {\FirstBest} and the {\SecondBest} are clearly equal. For general bilateral trade instance, it is known that the {\SecondBest} is at least ${1}/{3.15}$ fraction of the {\FirstBest} \citep{DMSW-22,Fei-22}.}
Truthful mechanisms that maximize the GFT might be complicated, even for simple distributions \citep{MS-83}. 
In contrast, all truthful mechanisms proposed in this work not only satisfy {\ksfairness} and achieve good GFT approximation to the {\SecondBest}, but they are also simple and easily implementable.

\xhdr{Optimal GFT approximation under {\ksfairness}.} 
As the first result of this work, we establish that the optimal GFT approximation of truthful mechanisms under {\ksfairness} is 50\%.

\begin{informal}[\Cref{thm:optimal GFT:general instance} and \Cref{lem:optimal GFT upper bound:irregular}]
\label{infmthm:general}
For every bilateral trade instance (i.e., any pair of seller and buyer distributions), there exists a truthful mechanism that is {\ksfair} and guarantees a GFT of at least 50\% of the {\SecondBest}.

Moreover, for any $\calC > 50\%$, there exists a zero-value seller instance in which no {\ksfair} truthful mechanism can achieve a 
$\calC$-fraction of the {\SecondBest}. 
\end{informal}
To obtain the positive approximation result, we develop a black-box reduction (\Cref{thm:blackbox reduction}) that converts any mechanism (possibly not {\ksfair}) into a {\ksfair} mechanism whose GFT is at least $\calC$-fraction of the sum of the traders’ ideal utilities, where $\calC$ is the smaller ratio between each trader's ex ante utility in the original mechanism, and her own ideal utility.
Our black-box reduction framework is both simple, general, and thus might be of independent interest.\footnote{In \Cref{appendix:bargaining-and-trade}, we generalize this framework to the cooperative bargaining problem.}
 
We apply our black-box reduction to analyze the {\BiasedRandomOffer}. 
This mechanism, for a given parameter $\mixprob\in[0,1]$, runs 
the {\SellerOffer} with probability $\mixprob$, and the {\BuyerOffer}\textsuperscript{\ref{footnote:reference to prelim}} with probability $1 - \mixprob$.
We show that, with an appropriately chosen 
$\mixprob$, this mechanism guarantees at least 50\% of the {\SecondBest} and is {\ksfair}. Notably, the proposed {\BiasedRandomOffer} is both simple and easy to implement. 
In particular, this mechanism is ex post IR and ex post strong budget balance (SBB), ensuring that all gains of the trade are split between the buyer and seller, leaving nothing to the mechanism. The (unbiased) {\RandomOffer}, which sets $\mixprob = 0.5$, already achieves a $\frac{1}{2}$-approximation to the {\SecondBest} \citep{BCWZ-17}. However, the {\RandomOffer} is generally not {\ksfair}. Our result shows that by carefully selecting $\mixprob$, we can preserve the same GFT approximation ratio while ensuring {\ksfairness}. 

To complete the picture, we also show that this GFT approximation ratio of $\frac{1}{2}$ is optimal among all {\ksfair} truthful mechanisms. To {prove} 
this, we present an explicit construction of an example
(\Cref{example:all fair mech:irregular}) with
zero-value seller and a buyer with value sampled from a distribution that we have carefully constructed to obtain
this tight bound.
Notably, this buyer distribution does not satisfy the regularity condition.
Regularity is a common assumption in the mechanism design literature \citep{mye-81,BR-89}, which holds for many classic distributions (e.g., Gaussian, exponential, uniform). In contrast to this bound of 50\%, as we have illustrated above, when the seller has {no value for the item} 
and the buyer's distribution is uniform between $[0, 1]$, posting a fixed trading price of $0.2$ is {\ksfair} and achieves 96\% of the {\SecondBest}. Motivated by this, we next consider settings where additional assumptions are imposed on the traders' distributions, and show that there are simple truthful mechanisms which are {\ksfair} and obtain much better GFT approximations. 

As a starting point, we study the bilateral trade instances where both traders' distributions satisfy the monotone hazard rate (MHR) condition.\footnote{The MHR condition, which is a {strengthening} of regularity, is also widely adopted in the mechanism design literature and satisfied by classic distributions such as exponential or uniform. See \Cref{sec:prelim} for the formal definition.} In this setting, we prove a stronger guarantee of $\frac{1}{e - 1} \geq 58.1\%$ for the {\ksfair} {\BiasedRandomOffer}, based on the results of \citep{Fei-22}. We next move to focus on the case of $0$-value seller, and prove stronger GFT approximation results.

\xhdr{Zero-value seller instances with regular or MHR distributions.}
In the second part of this work, we focus on the special case of the bilateral trade model where the seller has zero value for the item. This case is of particular interest as, while all gains can be realized by the simple and truthful mechanism which always trades at price $0$, such a mechanism is very unfair to the seller. On the other hand, letting the seller set the mechanism may result in arbitrarily small GFT (i.e., in the case of an equal-revenue distribution) or very unfair allocation (in the case of constant-value buyer).

In \Cref{infmthm:general} (\Cref{lem:optimal GFT upper bound:irregular}) we have shown that, even when the seller has no value for the item, the GFT approximation of $\frac{1}{2}$ cannot be improved when the buyer distribution is not regular. We next study the zero-value seller settings where the buyer's valuation distributions are regular or MHR.

We first consider the case that the buyer's valuation distribution is regular, and significantly improve the approximation of $50\%$ to more than $85\%$: 
\begin{informal}[\Cref{thm:improved GFT:regular buyer}]
\label{infmthm:regular buyer}
    For every zero-value seller instance where the buyer has a regular distribution, there exists a {\FixPrice} (which is truthful) that is {\ksfair}, and whose GFT is at least $\fixedPriceGFTPercentageRegular$ of the {\SecondBest}.

    Moreover, there exists a zero-value seller instance in which the buyer has a regular distribution and no {\ksfair} truthful mechanism obtains more than $\fixedPriceGFTPercentageUBRegular$ of the {\SecondBest}. 
\end{informal}

We remark that while the positive result is established by a {\FixPrice} (i.e., posting a trading price to two traders ex-ante),\footnote{{Interestingly, such a result cannot be obtained with a {\BiasedRandomOffer}. We show that} there exists a zero-value seller instance where the buyer has a regular distribution and yet, the {\ksfair} {\BiasedRandomOffer} only obtains 50\% of the {\SecondBest}, and not more.} the negative result holds for all {\ksfair} truthful mechanisms. Note that {\FixPrices} are the most simple mechanisms, and enjoy some excellent properties: not only they deterministic, they are also dominant strategy incentive compatible (DSIC), ex post IR and ex post SBB (so all GFT is split between the two traders, leaving nothing to the mechanism).

The almost-tight negative result of at most $\fixedPriceGFTPercentageUBRegular$ approximation is proven by presenting an explicit construction of an example (\Cref{example:BROM:regular}) with a zero-value seller and a buyer with value sampled from a regular distribution, and analyzing it (\Cref{lem:GFT UB:regular buyer}). For the positive result, we develop a novel argument based on a \emph{revenue curve reduction} analysis (\Cref{lem:GFT program:regular buyer}). As this analysis is our most significant technical contribution we discuss it in more details after presenting the rest of our results.

We next study the zero-value seller instance where the buyer's valuation distribution is MHR.  
\begin{informal}[\Cref{thm:improved GFT:mhr buyer}]
\label{infmthm:mhr buyer}
    For every zero-value seller instance where the buyer has an MHR distribution, there exists a {\FixPrice} (which is  truthful) that is {\ksfair}, and whose GFT is at least $\fixedPriceGFTPercentageMHR$ of the {\SecondBest}.

    Moreover, there exists a zero-value seller instance in which the buyer has an MHR distribution and no {\ksfair} truthful mechanism obtains more than $\fixedPriceGFTPercentageUBMHR$ of the {\SecondBest}. 
\end{informal}
The almost-tight negative result of at most $\fixedPriceGFTPercentageUBMHR$ approximation is proven by presenting an explicit construction of an example (\Cref{example:all fair:mhr buyer}) with a zero-value seller and a buyer with value sampled from an MHR distribution, and analyzing it (\Cref{lem:GFT UB:mhr buyer}). For the positive result, we conduct a similar argument (\Cref{lem:GFT program:mhr buyer}) as the one used for regular distributions, but now the argument centers on the cumulative hazard rate function instead of the revenue curve.

\xhdr{Implication for the KS-solution.}
While all of our results above (Informal Theorems~\labelcref{infmthm:general,infmthm:regular buyer,infmthm:mhr buyer}) are stated for {\ksfair} truthful mechanisms, the negative results trivially apply to the KS-solution as well, since it satisfies {\ksfairness}. Importantly, it is worth noting that all the positive results also hold for the KS-solution. This follows from the fact that the simple {\ksfair} mechanisms we proposed ({\BiasedRandomOffer} and {\FixPrice}) are ex post SBB, and therefore their GFT is at most the GFT of the KS-solution. See \Cref{lem:SBB implication} for the formal statement.\footnote{Though the KS-solution maximizes the sum of the two traders' utilities among all {\ksfair} mechanisms, it does not directly imply that the KS-solution also maximizes the GFT among all {\ksfair} mechanisms, since the GFT {not only includes the utilities of the two  traders, but also includes} the gains left to the mechanism (which could be positive under WBB).}

\xhdr{Implication for market regulation.} 
Our results also shed light on the following connection between efficiency and fairness regulation. Suppose the seller (resp.\ buyer) is a monopolist and can freely decide on the truthful mechanism to maximize her own utility. In this case, the GFT from the mechanism picked by the monopolist can be arbitrarily smaller than the {\SecondBest}. In contrast, consider an alternative scenario where a regulator imposes a regulation so that the monopolist may only choose a truthful mechanism that is {\ksfair}. In \Cref{lem:SBB implication}, we show that the seller-optimal (resp.\ buyer-optimal) {\ksfair} mechanism achieves the same GFT approximation as the ones stated in Informal Theorems~\labelcref{infmthm:general,infmthm:regular buyer,infmthm:mhr buyer}, so high GFT is guaranteed when fairness is imposed.

\xhdr{Alternative fairness definitions.} Besides {\ksfairness}, we also explore alternative fairness definitions for the bilateral trade model.

We first explore another solution concept for the bargaining problem -- the Nash solution \citep{nash-51}. In the context of the bilateral trade model, the Nash solution corresponds to a {\NashSocialWelfareMaximizer}: a truthful mechanism that maximizes the Nash social welfare (NSW), i.e., the product of the two traders' ex ante utilities. Following an argument that is conceptually similar to our black-box reduction framework (for {\ksfair} mechanisms), we obtain a tight bound on the GFT approximation of any {\NashSocialWelfareMaximizer}.

\begin{informal}
    For every bilateral trade instance, a {\NashSocialWelfareMaximizer} guarantees a GFT of at least 50\% of the {\SecondBest}.

    Moreover, for any $\calC > 50\%$, there exists a zero-value seller instance in which no {\NashSocialWelfareMaximizer} can achieve a
    $\calC$-fraction of the {\SecondBest}.
\end{informal}
While our definition of {\ksfairness} aims to explicitly define fairness for bilateral trade, the  fairness properties of the Nash solution are rather implicit (it is more about subscribing a way to trade-off the traders' utilities). We also remark that a {\NashSocialWelfareMaximizer} could have complicated allocation and payment rules. In contrast, both the {\BiasedRandomOffer} and {\FixPrice} which we proposed and analyzed for {\ksfairness} are simple and easy to be implemented. Hence, we believe {\ksfairness} may be more suitable for the bilateral trade problem, and our paper mainly focuses on it.

As we mentioned earlier, we also establish negative results showing that equitability (motivated by the egalitarian solution \citep{Kal-77,Mye-77} to the bargaining problem) and interim or ex post {\ksfairness} (\Cref{def:interim ks fairness,def:ex post ks fairness}) may not be appropriate. Specifically, we show that the GFT of truthful mechanisms that satisfy any of those alternative fairness definitions, can be arbitrary lower than the {\SecondBest} (\Cref{lem:equitable GFT UB}) or imply no trade and thus zero GFT (\Cref{prop:interim/ex post fairness:no trade}), even in settings with a zero-value seller.

\subsection{Our Techniques} 
\label{subsec:intro:techniques}
We next describe the technical framework we put forward for proving our positive results for instances with a zero-value seller and a buyer with either a regular or an MHR valuation distribution (Informal Theorems~\labelcref{infmthm:regular buyer,infmthm:mhr buyer}, respectively), {and the novelty of our technique}. We start by explaining the high-level proof idea behind \Cref{infmthm:regular buyer}. To obtain the almost-tight positive result of at least $\fixedPriceGFTPercentageRegular$, we aim to directly compare a {\ksfair} {\FixPrice} with the {\SecondBest}.\footnote{As a comparison, in the analysis of the positive result in \Cref{infmthm:general}, we compare the GFT of {\ksfair} mechanisms with the summation of two traders' ideal utilities (which upper bounds the {\SecondBest}). Consider a simple example, where the seller and buyer have deterministic value of zero and one, respectively. In this example, the summation of two traders' ideal utilities could  be twice the {\FirstBest} (and thus at least twice the GFT of any truthful mechanism)}. Hence, by comparing with the summation of two traders' ideal utilities, it is impossible to obtain an approximation ratio strictly better than $\frac{1}{2}$. To do this, we develop a novel revenue curve reduction analysis. 

In the mechanism design literature, a revenue-curve-based analysis is a common approach to deriving approximation guarantees \citep[e.g.,][]{FILS-15,DRY-15,AB-18,AHNPY-18,JLQTX-19,BCW-22,FHL-21,ABB-22,JL-23}. Revenue curves \citep{BR-89} give an equivalent representation of the buyer's valuation distribution and enable clean characterization of the revenue of a given mechanism. Though there does not exist an automatic way to conduct a revenue-curve-based analysis, the common high-level idea is to argue that the worst approximation ratio among the class of regular distributions belongs to a subclass of distributions that can be characterized with a small set of parameters (e.g., monopoly reserve, quantile, or revenue). Most prior work focuses on revenue approximation and allows consideration of a class of mechanisms (e.g., anonymous pricing with all possible prices). In this way, the approximation ratio usually depends on a small set of parameters of the revenue curve, and thus the reduction argument can be established.

In contrast, our analysis requires studying the seller's ex-ante utility (revenue), the buyer's ex-ante utility (residual surplus) and the GFT (expected value) -- all three of these \emph{together}. Furthermore, we focus on a single mechanism ({\ksfair} {\FixPrice}). Hence, both the GFT benchmark and the GFT of our mechanism are highly sensitive to the \emph{entire} revenue curve {(e.g., parameters like the monopoly reserve and revenue are not enough)}.

To overcome this challenge, we conduct a three-step argument (see \Cref{lem:GFT program:regular buyer}). We first consider a {\FixPrice} (that is possibly not {\ksfair})  whose trading price $\price$ is selected based on a \emph{small set} of parameters of the revenue curve. 
In the second step, we argue that depending on buyer's ex-ante utility under trading price $\price$, we can modify this price to obtain a {\ksfair} trading price $\fprice$. By comparing the two traders' utility changes from trading price $\price$ to $\fprice$, we obtain a lower bound of the GFT approximation under trading price $\fprice$, which is {\ksfair}. Importantly, this lower bound is a {function of the trading price $\price$ chosen in the first step}. Loosely speaking, using the first and second step, we show that the worst GFT approximation lower bound belongs to a \emph{subclass of regular distributions} parameterized by trading price $\price$ and several other parameters including monopoly quantile and monopoly revenue. Finally, since this GFT lower bound depends on the choice of $\price$ determined in the first step, we formulate the problem of selecting the best price~$\price$ (to maximize the GFT approximation lower bound) as a \emph{two-player zero-sum game}. In this game, the ``min player'' (adversary) selects the revenue curve (represented by finitely many quantities on the revenue curve), while the ``max player'' (ourselves) selects the price $\price$. The game's payoff is the GFT approximation lower bound derived in the second step. Finally, we capture this two-player zero-sum game as an \emph{optimization program~\ref{program:GFT:regular buyer}} {(a minimization problem)} and numerically lower bound the optimal payoff of this game (i.e., optimal objective value of \ref{program:GFT:regular buyer}) by $\fixedPriceGFTPercentageRegular$.\footnote{Formulating the approximation guarantees as optimization program and then solve it numerically is common in the mechanism design literature. See the related work section (\Cref{subsec:related work}) for more discussion.} Also see \Cref{fig:GFT program:regular buyer} for a graphical illustration. 

The positive result of \Cref{infmthm:mhr buyer} relies on a similar reduction argument. However, instead of conducting the reduction on the revenue curve, we analyze the \emph{cumulative hazard rate function}, which is another equivalent representation of the buyer's valuation distribution. In particular, the buyer's valuation distribution is MHR if and only if the induced cumulative hazard rate function is convex. In our analysis (\Cref{lem:GFT program:mhr buyer} and \Cref{fig:GFT program:mhr buyer}), we formulate the GFT approximation of a {\ksfair} {\FixPrice} as a 
minimization program~\ref{program:GFT:mhr buyer}. We then numerically lower bound its optimal objective value by $\fixedPriceGFTPercentageMHR$.

\subsection{Related Work}
\label{subsec:related work}
Our work connects to several strands of existing literature. 

\xhdr{GFT-optimization in the bilateral trade model.} 
\citet{MS-83} introduce and study Bayesian mechanism design for bilateral trade instances. The authors present the characterization of the GFT-optimal truthful mechanism (achieving the {\SecondBest}) and show the strict gap between the First Best and {\SecondBest}. \citet{DMSW-22} shows that the multiplicative gap between the two benchmarks is at most $1/8.23$, by analyzing the GFT approximation of the {\RandomOffer} against the {\FirstBest}. A follow-up work by \citet{Fei-22} further improves the multiplicative gap to $1/3.15$. The author also shows that the approximation of the {\RandomOffer} is at most $1/(e - 1)$ against the {\FirstBest} when both traders have MHR valuation distributions. Since the GFT-optimal truthful mechanism is complicated, various simple mechanisms have been proposed and been proven to achieve good GFT approximation under various conditions \citep[e.g.,][]{mca-92,BNP-09,BM-16,DTR-17,BCWZ-17,CGKLT-17,BCGZ-18,BGG-20,CLMZ-24,HHPS-25}. Our work differs from all those prior work as we study GFT-optimization subject to fairness conditions motivated by the cooperative bargaining problem.

Besides the Bayesian mechanism design model, there is also a recent literature initiated by \citet{CCCFL-24a} that includes \citep{BCCC-24,BCCF-24,CCCFL-24b,AFF-24} which studies
GFT-optimization in an online learning model, where traders' valuation distributions are unknown and need to be learned from repeated interactions. Instead of analyzing the approximation ratios, most prior work uses the optimal-in-hindsight {\FixPrice} (whose GFT could be arbitrarily smaller than the {\SecondBest}) as the benchmark   and focuses on designing sublinear regret online learning algorithm against this weaker benchmark. \citet{BCCC-24} introduce the problem of fair online bilateral trade and propose a new efficiency measure ``fair GFT'' defined as the minimum utility between two traders. Unlike their approach, our work sticks to the classic GFT as the efficiency measure and incorporates fairness as a constraint, motivated by the cooperative bargaining problem. 

\xhdr{Cooperative bargaining with incomplete information.} \label{subsubsec:related-bargaining}
The cooperative bargaining problem was introduced by \citet{N-50}, who also proposed the Nash solution as the unique outcome satisfying four axioms: Pareto optimality, symmetry, scale invariance, and independence of irrelevant alternatives. In the same model, \citet{KS-75} replaced the independence of irrelevant alternatives axiom with the resource monotonicity axiom and introduced the Kalai-Smorodinsky (KS) solution. Meanwhile, the egalitarian solution, which substitutes the scale invariance axiom with the resource monotonicity axiom, was developed by \citet{Kal-77,Mye-77}. For further details, we refer the reader to \citet{AH-92}. There is also a literature about bargaining with private information \citep[e.g.,][]{HS-72,Mye-84,Sam-84,KW-93,ACD-02}, where analog axioms (similar to the original bargaining problem with public information) and solutions (mechanisms) are introduced and studied. While these works focus on axiomatization of the solution concepts for general bargaining problems, we focus on bilateral trade problems and are interested in bounding the GFT approximation of truthful mechanisms satisfying various fairness notions, against the {\SecondBest}.

Importantly, we remark that there is one difference between the bilateral trade problem and the bargaining problem. In the bilateral trade problem, the GFT includes both traders' utilities and the gains left to the mechanism (which may be positive under WBB). In contrast, the bargaining problem only reasons about the utilities of {the} two agents (traders). Hence, if we restrict {the space of mechanisms to only include} SBB mechanisms (i.e., no gain left to mechanisms), then the bilateral trade problem becomes a special case of the bargaining problem. \citet{BCWZ-17} shows that for GFT-optimization, it is without loss of generality to consider ex post SBB mechanisms. In particular, they develop a transformation that converts any ex ante WBB mechanism into an ex post SBB mechanism with the same allocation rule, by adjusting the payment rule for the two traders. Unfortunately, their transformation change the two traders' utilities {in a way that violates the fairness constraint,} and thus cannot be applied to our work.\footnote{It is interesting to study whether every {\ksfair} ex ante WBB mechanism also admits a {\ksfair} ex post SBB mechanism implementation with the same allocation rule. {This, for example, will imply that the {\ksfair} mechanism with highest GFT is the KS-solution.} {One natural attempt is to proportionally ex-ante split the ex-ante gains of the mechanism to two traders (and thus not affect incentive compatibility or participation).} However, this only ensures ex ante SBB, and there might be ex post positive transfer to the buyer, which is impractical and unnatural. We leave this challenge for future work.}

\xhdr{Approximation analysis via optimization programs.} 
In this work, we obtain almost-tight GFT-approximations in \Cref{thm:improved GFT:regular buyer,thm:improved GFT:mhr buyer} by first formulating the GFT approximation guarantees as optimization programs~\ref{program:GFT:regular buyer}, \ref{program:GFT:mhr buyer} and then numerically lower bounding their optimal objective value. Characterizing the approximation guarantees as optimization programs and then obtaining the final approximation ratios by numerical evaluation is a widely adopted approach in the approximate mechanism design literature, e.g., revenue-maximization for zero-value seller problem \citep{AHNPY-18}, prior-independent mechanism design \citep{HJL-19}, price of anarchy \citep{HTW-18}, mechanism design with samples \citep{FHL-21,ABB-22}. In particular, recent work \citep{CW-23,LRW-23} {studies} the welfare-maximization of the {\FixPrice} for the bilateral trade problem. As a comparison, we analyze the {\FixPrice} with an additional fairness constraint, which requires us to analyze the revenue, residual surplus, and GFT -- all three of these together.

\section{Preliminaries}
\label{sec:prelim}

In this section, we first present the bilateral trade model, and then review some important concepts in Bayesian mechanism design that will be used in our analysis. 

\subsection{Bilateral Trade Model}
There is a single seller and a single buyer, and we refer to them as traders (or agents). 
The seller holds one item, which is demanded by the buyer. 
{Trade will determine an allocation of $\alloc\in[0, 1]$ fraction of the item (or that entire item with probability $\alloc$) that will be traded and transferred to the buyer.} 
The value of each trader for the item is private information, known only to her. 

Specifically, we denote the private value of the buyer by $\val\in\reals_+$, and given an allocation $\alloc\in[0, 1]$ and the payment she pays (monetary transfer from her) $\price\in\reals_+$, her utility $\buyerutil(\alloc,\price)$ is defined to be $\buyerutil(\alloc,\price) = \val\cdot \alloc - \price$.
Similarly, we denote the private value of the seller by $\cost\in\reals_+$, and given an allocation $\alloc\in[0, 1]$ and payment she receives (monetary transfer to her) $\sellerprice\in \reals_+$,\footnote{As it is more convenient  to assume that the seller receives money (rather than paying negative amount), we use $\sellerprice$ to denote the payment the seller receives.} her utility $\sellerutil(\alloc, \sellerprice)$ is defined to be $\sellerutil(\alloc, \sellerprice) = \sellerprice - \cost \cdot \alloc$.

Finally, given allocation $\alloc\in[0, 1]$, the \emph{gains-from-trade} (\emph{GFT}) when there is trade between a seller with value $\cost$ and a buyer with value $\val$ is $(\val - \cost)\cdot \alloc$.

\xhdr{The Bayesian setting.} There is benevolent social planner (mechanism designer) whose goal is to design mechanisms to facilitate trade between the seller and buyer. Though both traders' values are private, we assume each trader's value is independently sampled from some prior distribution (the distributions are not necessarily identical), and the distributions are known to all parties (seller, buyer and social planner). We use $\buyerdist$ to denote the buyer's valuation distribution, and use $\sellerdist$ to denote the seller's valuation distribution, so a bilateral trade instance is a pair $\btinstance$. To simplify the presentation and analysis, we impose the following assumptions on the traders' valuation distributions.
\begin{itemize}
    \item 
    The buyer's valuation distribution $\buyerdist$ has a bounded support $\supp(\buyerdist)$, 
    so it holds that ${\supp(\buyerdist)} \subseteq [\lval, \hval]$ for some $\hval\geq \lval\geq 0$.    
    We assume that $\buyerdist$ has no atoms, except possibly at $\hval$. 
    The cumulative density function (denoted by $\buyercdf$) is left-continuous,\footnote{Following the convention in auction design literature, we define cumulative density function $\buyercdf(t)\triangleq \prob[\val\sim\buyercdf]{\val < t}$ for the buyer's valuation distribution, and $\sellercdf(t) \triangleq \prob[\cost\sim\sellerdist]{\cost \leq t}$ for the seller's valuation distribution.} is differentiable within the interior of the support with measure 1. We denote the corresponding probability density function by $\buyerpdf$.
    
    \item The seller's valuation distribution $\sellerdist$ has a bounded support $\supp(\sellerdist)$, 
    and thus ${\supp(\sellerdist)} \subseteq [\lcost, \hcost]$ for some $\hcost\geq \lcost\geq 0$.    
    We assume that $\sellerdist$ has no atoms, except possibly at $\lcost$. 
    The cumulative density function (denoted by $\sellercdf$) is right-continuous, is differentiable within the interior of the support with measure 1. We denote the corresponding probability density function by $\sellerpdf$.
\end{itemize}
Given a bilateral trade instance $\btinstance$, the maximum (optimal) expected GFT for the instance, 
$\OPTFB\btinstance$, is defined as follows  
\begin{align*}
    \OPTFB\btinstance \triangleq \expect[\val\sim\buyerdist,\cost\sim\sellerdist]{\plus{\val - \cost}}
\end{align*}
where operator $\plus{a} \triangleq \max\{0, a\}$.
We refer to $\OPTFB\btinstance$ as the {\sf First-Best (FB) GFT benchmark}.

\xhdr{Bayesian mechanism design for GFT maximization.} 
A (direct-revelation) mechanism $\mech = (\alloc,\price, \sellerprice)$ first solicits bids from both traders and then decides the allocation and payment based on allocation rule $\alloc$, buyer payment rule $\price$, and seller payment rule $\sellerprice$. Allocation $\alloc:\reals_+^2\rightarrow [0, 1]$ and buyer/seller payments $\price,\sellerprice:\reals_+^{2}\rightarrow\reals_+$ are mappings from the bid profile of the traders to the fraction of item allocated and monetary transfer from the buyer/to the seller, respectively. 

We are interested in the GFT obtainable in some Bayesian Nash equilibrium (BNE) of the designed mechanism. Applying the standard revelation principle \citep{mye-81}, without loss of generality, we focus on mechanisms that satisfy \emph{Bayesian Inventive Compatibility (BIC)}, i.e., reporting private value truthfully forms a BNE. By imposing BIC, we assume that traders report their private values truthfully and view the allocation and payment rules $(\alloc, \price, \sellerprice)$ of a given mechanism as mappings from traders' (true) valuation profile $(\val, \cost)$, to their allocation and payments, respectively. 

With slight abuse of notation, we further define the interim allocation, payment, and utility of a trader in a given mechanism $\mech = (\alloc, \price, \sellerprice)$ as the expected allocation, payment, and utility over the randomness of the other trader's value. Namely,
\begin{align*}
    \text{for the seller:}
    \quad
    &\selleralloc(\cost) = \expect[\val\sim\buyerdist]{\alloc(\val, \cost)},~
    \sellerprice(\cost) = \expect[\val\sim\buyerdist]{\sellerprice(\val, \cost)},~
    \sellerutil(\cost) = \expect[\val\sim\buyerdist]{\sellerprice(\val,\cost) - \cost \cdot \alloc(\val,\cost)}
    \\
    \text{for the buyer:}
    \quad
    &\alloc(\val) = \expect[\cost\sim\sellerdist]{\alloc(\val,\cost)},~
    \price(\val) = \expect[\cost\sim\sellerdist]{\price(\val, \cost)},~
    \buyerutil(\val) = \expect[\cost\sim\sellerdist]{\val\cdot \alloc(\val,\cost) - \price(\val,\cost)}
\end{align*}
Fix any bilateral trade instance $\btinstance$ and any mechanism $\mech = (\alloc,\price,\sellerprice)$, we define the ex ante seller (resp.\ buyer) utility $\sellerexanteutil(\mech,\buyerdist,\sellerdist)$ (resp.\ $\buyerexanteutil(\mech,\buyerdist,\sellerdist)$) as the expected utilities of the seller (resp.\ buyer) over the randomness of the other trader's value. Namely,
\begin{align*}
    \sellerexanteutil(\mech,\buyerdist,\sellerdist) = \expect[\cost\sim\sellerdist]{\sellerutil(\cost)}
    \;\;\mbox{and}\;\;
    \buyerexanteutil(\mech,\buyerdist,\sellerdist) =  \expect[\val\sim\buyerdist]{\buyerutil(\val)}
\end{align*}
Besides Bayesian incentive compatibility, any mechanism under consideration must also satisfy \emph{interim individual rationality} and \emph{ex ante weak budget balance}, formally defined as follows:
\begin{itemize}
    \item \emph{Interim individual rationality (IIR)}: For each trader and each value realization {of that trader}, her interim utility is non-negative.
    \item \emph{Ex ante weak budget balance (ex ante WBB)}: The expected payment collected from the buyer is weakly larger than the expected payment given to the seller, i.e., $\expect[\val\sim\buyerdist]{\price(\val)} \geq \expect[\cost\sim\sellerdist]{\sellerprice(\cost)}$.
\end{itemize}
For every bilateral instance $\btinstance$, we let $\mechfamily$ denote the family of all BIC, IIR  and ex ante WBB mechanisms for these distributions.

The social planner measures the efficiency of mechanisms by their GFT. Specifically, for a bilateral trade instance $\btinstance$ the GFT of mechanism $\mech = (\alloc,\price, \sellerprice)$ is defined as 
\begin{align*}
    \GFT{\mech,\buyerdist,\sellerdist} = \expect[\val\sim\buyerdist,\cost\sim\sellerdist]{(\val - \cost)\cdot \alloc(\val,\cost)}
\end{align*}
We use $\OPTSB$ to denote the maximum (optimal) expected GFT obtainable by any mechanism that is BIC, IIR and ex-ante WBB, i.e.,
\begin{align*}
    \OPTSB\btinstance \triangleq \max_{{\mech\in\mechfamily}}
    \GFT{\mech,\buyerdist,\sellerdist}
\end{align*} 
We also refer to $\OPTSB\btinstance$ as the {\sf Second-Best (SB) GFT benchmark}. It is known that for ``non-degenerate'' bilateral trade instances $\btinstance$, the {\FirstBest} $\OPTFB\btinstance$ is strictly larger than the {{\SecondBest}} $\OPTSB\btinstance$ \citep{MS-83}, but the latter is also a constant approximation to the former, i.e., $\OPTFB\btinstance \leq 3.15 \cdot \OPTSB\btinstance$ \citep{DMSW-22,Fei-22}. 

Throughout the paper, we mainly compare the GFT of a given mechanism $\mech\in\mechfamily$ with the second best benchmark $\OPTSB$. For any $\GFTapprox\in[0, 1]$, we say the GFT of a given mechanism ${\mech\in\mechfamily}$ is a $\GFTapprox$-approximation (to the {{\SecondBest}} $\OPTSB$) if $\GFT{\mech,\buyerdist,\sellerdist} \geq \GFTapprox \cdot \OPTSB\btinstance$.\footnote{Nonetheless, due to the bounded gap between two benchmarks, i.e., $\OPTFB\leq 3.15 \cdot \OPTSB$ for every bilateral trade instances, our results also imply GFT approximations against the {\FirstBest} $\OPTFB$.}

While the {\SecondBest} $\OPTSB$ is with the most relaxed properties (BIC, IIR, and ex ante WBB), all the positive results in this work are obtained with the stronger properties of \emph{ex post individual rationality (ex post IR)} where every trader's ex post utility is non-negative for every realized valuation profile, and \emph{ex post strong budget balance (ex post SBB)} where the buyer's ex post payment is equal to the seller's ex post payment for every realized valuation profile. Some of the positive results even replace BIC with \emph{dominate strategy incentive compatibility (DSIC)} where reporting value truthfully is a dominant strategy.

We make use of four classic mechanisms studied extensively in the bilateral trade model:
\begin{itemize}
    \item {\FixPrice} ({\FPM}): the social planner offers a price $\price$ to both traders ex ante, who individually chooses whether to accept it or not. If both traders accept the price, a trade occurs at price $\price$. Otherwise, no trade occurs and no payments are transferred.\footnote{{We always assume that ties are broken towards trade.}}
     \item {\SellerOffer} ({\SOM}): given her realized value $\cost$, the seller optimally picks a take-it-or-leave-it price $\price$ and offers to sell the item at that price.\footnote{Specifically, the seller picks a monopoly reserve $\optreserve \in\argmax_{\price}(\price -\cost)\cdot (1 - \buyercdf(\price))$ given her realized value $\cost$.} Trade occurs at price $\price$ if the buyer accepts the offer (Otherwise, no trade occurs and no payments are transferred).
    \item {\BuyerOffer} ({\BOM}): given her realized value $\val$, the buyer optimally picks a take-it-or-leave-it price $\price$ and offers to buy the item at that price. Trade occurs at price $\price$ if the seller accepts the offer (Otherwise, no trade occurs and no payments are transferred).
    \item {\RandomOffer} ({\ROM}): 
    the social planner implements the {\RandomOffer} and {\BuyerOffer} with identical ex ante probability $\frac{1}{2}$.
\end{itemize}
All four mechanisms are ex post IR and ex post SBB. The {\FixPrice} is DSIC, while the other three mechanisms are BIC.\footnote{In fact, 
in the {\SellerOffer} (resp.\ {\BuyerOffer}), the buyer (resp.\ seller) has a dominant strategy.} By \citet{mye-81}, the {\SellerOffer} (resp.\ {\BuyerOffer}) maximizes the seller's (resp.\ buyer's) ex ante utility among all BIC, IIR, and ex ante WBB mechanisms.

When traders' distributions $\buyerdist,\sellerdist$ are clear from the context, we sometimes simplify notations and omit them, e.g., writing $\GFT{\mech}$ instead of $\GFT{\mech,\buyerdist,\sellerdist}$. This includes the notations $\mechfam$, $\GFT{\mech}$, $\sellerexanteutil(\mech)$, $\buyerexanteutil(\mech)$, $\OPTFB$, and $\OPTSB$, among others.

\subsection{Necessary Mechanism Design Notations and Concepts}

We next establish some notations and standard properties that we use in the paper. 

\xhdr{Regularity and monotone hazard rate.}
Two important distribution subclasses have been introduced and commonly studied in the mechanism design literature \citep{mye-81,MS-83}.
\begin{definition}[Regularity and MHR for the buyer]
\label{def:regularity buyer}
    A buyer's valuation distribution $\buyerdist$ is \emph{regular} if its virtual value function 
    $
        \virtualval(\val) \triangleq \val -
        \frac{1 - \buyercdf(\val)}{\buyerpdf(\val)}
    $
    is non-decreasing in $\val$. 
    
    A buyer's valuation distribution $\buyerdist$ satisfies the \emph{monotone hazard rate (MHR) condition} if its hazard rate function 
    $
        \buyerhazardrate(\val) \triangleq
        \frac{\buyerpdf(\val)}{1 - \buyercdf(\val)}
        $
    is non-decreasing in $\val$.
\end{definition}
MHR is a stronger condition: any distribution that is MHR is also regular. Symmetrically, we say the seller's valuation distribution $\sellerdist$ is regular (resp.\ MHR) if the random variable $\hcost - \cost$ satisfies the regularity (resp.\ MHR) condition defined for the buyer in \Cref{def:regularity buyer}, where $\hcost$ is the largest value in support $\supp(\sellerdist)$. 

\xhdr{Revenue curve and cumulative hazard rate.}
We first present the revenue curve, which is useful for the revenue analysis.

\begin{definition}[Revenue curve]
    Fix a valuation distribution $\buyerdist$ of the buyer. The \emph{revenue curve} $\revcurve:[0, 1]\rightarrow\reals_+$ is a mapping from a quantile $\quant\in[0, 1]$ to the expected revenue from posting a price equal to $\val(\quant)= \buyercdf^{-1}(1 - \quant)  \triangleq \sup\{\val:\buyercdf(\val) \leq 1 - \quant\}$, the price for which sell happens with probability~$\quant$. Namely, for every quantile $\quant\in[0, 1]$, $\revcurve(\quant) \triangleq \quant\cdot \buyercdf^{-1}(1 - \quant)$.
\end{definition}
We define the \emph{monopoly revenue} as the largest revenue $\max_{\quant\in[0, 1]}\revcurve(\quant)$ and denote every quantile (resp.\ price) attaining this largest revenue as a \emph{monopoly quantile} $\optquant$ (resp.\ \emph{monopoly reserve $\optreserve$}).
The regularity of a valuation distribution has the following nice geometric interpretation on the induced revenue curve. 
\begin{lemma}[\citealp{BR-89}]
\label{lem:concave revenue curve}
    A valuation distribution $\buyerdist$ of the buyer is regular if and only if the induced revenue curve $\revcurve$ is weakly concave (i.e., $\revcurve \equiv \ironrevcurve$). Moreover, for every value $\val$, the virtual value $\virtualval(\val)$ is equal to the right derivative of revenue curve $\revcurve$ at quantile $\quant = 1  - \buyercdf(\val)$, i.e., $\virtualval(\val) = \partial_+\revcurve(\quant)$. 
\end{lemma}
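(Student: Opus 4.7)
The plan is to prove both the ``moreover'' statement and the equivalence simultaneously by computing $\partial_+\revcurve(\quant)$ explicitly and identifying it with $\virtualval(\val(\quant))$, where $\val(\quant) \triangleq \buyercdf^{-1}(1-\quant)$ is the price corresponding to quantile $\quant$. This is essentially a one-line calculation plus a monotonicity observation; the rest is bookkeeping.

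First I would parametrize the revenue curve by quantile and differentiate. Fix $\quant$ in the interior of $(0,1)$ at a point where $\buyercdf$ is differentiable with positive density $\buyerpdf$ (which, by the paper's assumptions on $\buyercdf$, is everywhere in the interior of $\supp(\buyerdist)$ off a measure-zero set). From $\buyercdf(\val(\quant)) = 1 - \quant$, implicit differentiation yields $\val'(\quant) = -1/\buyerpdf(\val(\quant))$. Since $\revcurve(\quant) = \quant \cdot \val(\quant)$, the product rule gives
$$
\revcurve'(\quant) \;=\; \val(\quant) + \quant \cdot \val'(\quant) \;=\; \val(\quant) - \frac{\quant}{\buyerpdf(\val(\quant))} \;=\; \val(\quant) - \frac{1 - \buyercdf(\val(\quant))}{\buyerpdf(\val(\quant))} \;=\; \virtualval(\val(\quant)).
$$
This is exactly the ``moreover'' claim on the differentiability region; invoking right-continuity and the right-derivative convention extends the identity to the full domain.

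For the equivalence, I would observe that $\quant \mapsto \val(\quant)$ is (weakly) decreasing on $[0,1]$. Consequently, $\virtualval(\val)$ is non-decreasing in $\val$ if and only if $\virtualval(\val(\quant)) = \partial_+\revcurve(\quant)$ is non-increasing in $\quant$, and a function on an interval whose right derivative is non-increasing is precisely a (weakly) concave function. Chaining these two equivalences gives regularity of $\buyerdist$ if and only if $\revcurve$ is weakly concave, i.e., $\revcurve \equiv \ironrevcurve$.

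The main obstacle is purely bookkeeping around non-interior points and atoms: a possible atom of $\buyerdist$ at $\hval$ gives a flat segment of $\revcurve$ at small $\quant$, and points where $\buyerpdf$ vanishes (so $\virtualval$ blows up) or where $\buyercdf$ fails to be differentiable must be handled via one-sided limits. All of these are dispatched by working with the right derivative $\partial_+\revcurve$ and the $\sup$-based definition $\val(\quant) = \sup\{\val : \buyercdf(\val)\le 1-\quant\}$ already baked into the revenue-curve definition; the core calculation above is unchanged.
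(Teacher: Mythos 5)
The paper states this lemma as a citation to Bulow and Roberts (1989) and does not reproduce a proof, so there is no in-paper argument to compare against. Your proof is correct and is the standard argument for this fact: the chain-rule computation $\partial_+\revcurve(\quant) = \val(\quant) + \quant\,\val'(\quant) = \val(\quant) - \tfrac{1-\buyercdf(\val(\quant))}{\buyerpdf(\val(\quant))} = \virtualval(\val(\quant))$ gives the ``moreover'' identity, and composing with the decreasing reparametrization $\quant\mapsto\val(\quant)$ converts monotonicity of $\virtualval$ on the support into monotonicity of $\partial_+\revcurve$, which is equivalent to concavity of $\revcurve$. The one hypothesis worth making explicit is that the step ``nonincreasing right derivative $\Leftrightarrow$ concave'' requires $\revcurve$ to be continuous; under the paper's assumptions (no atoms except at $\hval$, which only produces a linear initial segment $\revcurve(\quant)=\quant\hval$ near $\quant=0$) the quantile function $\val(\quant)$ and hence $\revcurve$ are continuous, so this is fine. (If the support had interior gaps, $\revcurve$ would jump and the equivalence would fail, but the usual Bulow--Roberts setting, and the instances used in this paper, all have interval support.) Your edge-case bookkeeping handles exactly the right issues.
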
 

Similar to the connection between the virtual value and revenue curve, we also introduce the \emph{cumulative hazard rate function} $\cumhazard:\supp(\buyerdist)\rightarrow \reals_+$ where $\cumhazard(\val) = \int_{0}^\val \buyerhazardrate(t)\cdot \d t \equiv -\ln(1 - \buyercdf(\val))$. Note that buyer distribution $\buyerdist$ is MHR if and only if the induced cumulative hazard rate function $\cumhazard$ is weakly convex. 

The virtual value and hazard rate admit the following characterization on the buyer's ex ante utility and payment for the zero-value seller instances.
\begin{proposition}[{\citealp{mye-81,HR-08}}]
\label{prop:revenue equivalence}
\label{prop:buyer surplus equivalence}
\begin{flushleft}
{Fix any zero-value seller instance where the buyer has a valuation distribution $\buyerdist$. For every any BIC, IIR, ex ante WBB mechanism $\mech \in\mechfam$,} the buyer's ex ante utility and payment satisfy 
$\buyerexanteutil(\mech)
= 
\expect[\val\sim\buyerdist]{\buyerhazardrate(\val)\cdot \alloc(\val)}$
and 
$\expect[\val\sim\buyerdist]{\price(\val)}
= 
\expect[\val\sim\buyerdist]{\virtualval(\val)\cdot \alloc(\val)}$,
where $\alloc(\val)$ is the interim allocation of the buyer with value $\val$.
\end{flushleft}
\end{proposition}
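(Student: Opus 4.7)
The plan is to carry out a standard Myerson-style envelope argument on the buyer's side. The zero-value seller assumption matters because the interim allocation $\alloc(\val)$ then depends only on the buyer's reported value, so BIC on the buyer's side reduces to the classical single-agent incentive condition, and IIR and ex-ante WBB act only as scalar constraints on the buyer's payments.

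First, I would invoke the standard BIC characterization of \citet{mye-81}: the interim allocation $\alloc(\val)$ is weakly monotone non-decreasing in $\val$ and the interim utility satisfies the envelope identity
\[
\buyerutil(\val) \;=\; \buyerutil(\lval) \;+\; \int_{\lval}^{\val}\alloc(t)\,\d t.
\]
I would then note that it is without loss of generality to take $\buyerutil(\lval)=0$: IIR already forces $\buyerutil(\lval)\geq 0$, and if the inequality were strict, shifting every buyer's payment by the constant $\buyerutil(\lval)$ preserves the allocation rule and the BIC, IIR, and ex-ante WBB properties, so the claimed identities hold under this normalization.

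Finally, I would take the expectation over $\val\sim\buyerdist$ and swap the order of integration via Fubini:
\[
\buyerexanteutil(\mech) \;=\; \expect[\val\sim\buyerdist]{\int_{\lval}^{\val}\alloc(t)\,\d t} \;=\; \int_{\lval}^{\hval}\alloc(t)\,(1-\buyercdf(t))\,\d t \;=\; \expect[\val\sim\buyerdist]{\alloc(\val)\cdot\frac{1-\buyercdf(\val)}{\buyerpdf(\val)}},
\]
which, rewritten in terms of the hazard rate $\buyerhazardrate$, is the first identity. For the payment identity I would start from the interim relation $\price(\val)=\val\cdot\alloc(\val)-\buyerutil(\val)$, take expectations, and substitute the expression for $\buyerexanteutil(\mech)$ just derived:
\[
\expect[\val\sim\buyerdist]{\price(\val)} \;=\; \expect[\val\sim\buyerdist]{\val\,\alloc(\val)} - \buyerexanteutil(\mech) \;=\; \expect[\val\sim\buyerdist]{\left(\val-\frac{1-\buyercdf(\val)}{\buyerpdf(\val)}\right)\alloc(\val)} \;=\; \expect[\val\sim\buyerdist]{\virtualval(\val)\,\alloc(\val)}.
\]
I do not expect any substantive obstacle here: the only subtle point is the normalization $\buyerutil(\lval)=0$, which is routine for BIC/IIR/ex-ante WBB mechanisms, and beyond that the argument is a textbook application of the envelope theorem followed by Fubini.
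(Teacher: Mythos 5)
Your envelope-theorem-plus-Fubini derivation is the standard argument behind the cited Myerson and Hartline--Roughgarden characterizations, and the algebra is correct. The problem is the ``without loss of generality'' step that sets $\buyerutil(\lval)=0$. Shifting every buyer payment by the constant $\buyerutil(\lval)$ does produce a BIC, IIR, ex-ante WBB mechanism $\mech'$ with $\buyerutil'(\lval)=0$, but the proposition makes a claim about the \emph{given} mechanism $\mech$, and $\buyerexanteutil(\mech)=\buyerutil(\lval)+\buyerexanteutil(\mech')$; the equalities you verify for $\mech'$ do not transfer back to $\mech$. A concrete counterexample: take $\lval>0$ and the mechanism that always allocates the item at the flat price $\lval/2$, with $\sellerprice\equiv\lval/2$. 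This mechanism is BIC, IIR and ex-post SBB, with $\alloc\equiv 1$ and $\buyerutil(\lval)=\lval/2>0$, but then $\expect[\val]{\price(\val)}=\lval/2$ while $\expect[\val]{\virtualval(\val)\cdot 1}=\revcurve(1)=\lval$, so the payment identity fails by exactly $\lval/2$, and the utility identity fails by the same amount.

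The right way to handle this is to make explicit that the proposition, as the cited sources present it, carries the normalization $\buyerutil(\lval)=0$ as an implicit hypothesis; equivalently, the identities hold for arbitrary mechanisms in $\mechfam$ only as the one-sided inequalities $\buyerexanteutil(\mech)\geq\expect[\val\sim\buyerdist]{\frac{1-\buyercdf(\val)}{\buyerpdf(\val)}\alloc(\val)}$ and $\expect[\val\sim\buyerdist]{\price(\val)}\leq\expect[\val\sim\buyerdist]{\virtualval(\val)\alloc(\val)}$, which is the only direction the paper ever invokes downstream (e.g., to upper bound expected payment in the proof of \Cref{lem:optimal GFT upper bound:irregular}). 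When $\lval=0$ the equalities hold unconditionally, because the paper's model requires payments to be non-negative, so $\buyerutil(0)=-\price(0)\leq 0$ and hence $\buyerutil(0)=0$ by IIR; you could have made that argument where it applies, but it does not cover instances such as \Cref{example:all fair mech:irregular} where $\lval=1$. (Also, a minor notational point: the expression you derive, $\frac{1-\buyercdf(\val)}{\buyerpdf(\val)}$, is the \emph{inverse} of $\buyerhazardrate(\val)=\frac{\buyerpdf(\val)}{1-\buyercdf(\val)}$ as defined in \Cref{def:regularity buyer}. Your formula is the mathematically correct one; the mismatch is an inconsistency between the paper's definition and its usage of $\buyerhazardrate$ in this proposition, not an error in your calculation.)
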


\section{KS-Fairness}
\label{sec:ksfairness}
In this section, we define {\ksfairness}, the main notion of fairness for bilateral trade that we study in this paper. Before presenting the formal definition, we introduce $\sellerbenchmark$ and $\buyerbenchmark$, which are the seller and buyer's ideal (ex ante)  utilities (aka., the seller and buyer benchmarks) for instance $\btinstance$, computed as\footnote{In defining the ideal utilities $\sellerbenchmark$, $\buyerbenchmark$, as well as in some other definitions later in the paper, we use $\max$ instead of $\sup$ because the set $\mechfam$ is compact, ensuring that the maximum is always attainable.}
\begin{align*}
{\sellerbenchmark= 
    \sellerbenchmark\btinstance} \triangleq \max\limits_{{\mech\in\mechfamily}}\sellerexanteutil(\mech) 
    \;\;
    \mbox{and}
    \;\; 
    {\buyerbenchmark = \buyerbenchmark\btinstance} \triangleq \max\limits_{\mech\in\mechfamily}\buyerexanteutil(\mech)
\end{align*}
As a {reminder}, for {every} bilateral trade instance $\btinstance$, the ideal utilities $\sellerbenchmark$ and $\buyerbenchmark$ are achievable by the {\SellerOffer} and {\BuyerOffer} \citep{mye-81}, respectively. For zero-value seller instances, the seller and buyer's ideal utilities $\sellerbenchmark$ and $\buyerbenchmark$ are achieved by posting a deterministic take-it-or-leave it price at the monopoly reserve and at zero, respectively \citep{mye-81}.

In the spirit of the Kalai-Smorodinsky (KS) solution to bargaining problems \citep{KS-75}, we study \emph{{\ksfairness}} for bilateral trade, defined as follows: 
\begin{definition}[{\ksfairness}]
\label{def:ks fairness} {Fix any bilateral trade instance $\btinstance$.} 
    A mechanism $\mech\in\mechfamily$ is \emph{{\ksfair}} if two players' ex ante utilities achieve the same fraction of
    each player's own ideal utility, i.e.,
    \begin{align*}
        \frac{\sellerexanteutil(\mech,\buyerdist,\sellerdist)}{\sellerbenchmark\btinstance}
        = \frac{\buyerexanteutil(\mech, \buyerdist,\sellerdist)}{\buyerbenchmark\btinstance}
    \end{align*}
\end{definition}

We remark that for every instance, there is at least one mechanism that is (trivially) {\ksfair}: the {\NoTrade} (one in which there is never trade) satisfies {\ksfairness}, since both players receive zero ex ante utility. However, its GFT is also zero, which leads to a zero-approximation to the {\SecondBest} $\OPTSB$ for any non-trivial instance.

In \Cref{appendix:bargaining-and-trade} we discuss a way to view the bilateral trade problem as a bargaining problem, where each mechanism corresponds to a point in the two dimensional plane of ex ante utilities obtained by the two traders. With that perspective, the Kalai-Smorodinsky solution to the bargaining problem that corresponds to the bilateral trade instance, is the {\ksfair} mechanism which is Pareto optimal (maximizes $\buyerexanteutil(\mech)+\sellerexanteutil(\mech)$). In \Cref{appendix:bargaining-and-trade} we also discuss other bargaining solutions (as the Nash solution and the Egalitarian solution) and their applicability to the bilateral trade problem.

Note that {\ksfairness} is defined on the \emph{ex ante} utilities of the traders. In \Cref{subsec:interim ks fairness} we justify the focus on ex ante utilities, by demonstrating that the corresponding interim and ex post fairness notions are too stringent, as both imply that trade must never happen.

For a given bilateral trade instance $\btinstance$, a {\ksfair} mechanism with the highest gains-from-trade $\mech$ is a solution to a linear optimization problem.\footnote{Note that for non-discrete distributions that program is infinite.} The variables of the linear optimization problem are  the {ex post} allocation probability, buyer payment and seller payment $\alloc(\val,\cost),\price(\val,\cost),\sellerprice(\val,\cost)$ for every $\val\in \supp(\buyerdist), \cost\in\supp(\sellerdist)$. It is well known that the requirements of BIC, IIR and ex ante WBB can all be enforced with linear constraints. Additionally, {\ksfairness} can also be enforced {as linear constraints}, as the ex ante utilities $\buyerexanteutil(\mech), \sellerexanteutil(\mech)$ of both traders are linear functions in the variables and the benchmark values $\buyerbenchmark, \sellerbenchmark$ can be computed prior to running this linear optimization problem. Finally, the program maximizes the (expected) GFT obtained by the mechanism, which is a linear objective in the variables.

Notice that maximizing the GFT is not the only option; by changing the objective, we can obtain a  {\ksfair} mechanism which maximizes any linear objective. For example:
\begin{enumerate}
    \item a seller-optimal (resp.\ buyer-optimal) {\ksfair} mechanism can be obtained by maximizing $\sellerexanteutil(\mech)$ (resp. $\buyerexanteutil(\mech)$).
     \item a KS-solution\footnote{\label{footnote:ref-to-ks-solution}See \Cref{subsec:ks-solution} for the formal definition of the KS-solution.} can be obtained by maximizing the objective $\sellerexanteutil(\mech)+\buyerexanteutil(\mech)$.
\end{enumerate}
We remark that the requirement of {\ksfairness} imposes a constant ratio between $\buyerexanteutil(\mech)$ and $\sellerexanteutil(\mech)$. Thus, the KS-solution, the seller-optimal {\ksfair} mechanism, and the buyer-optimal {\ksfair} mechanism, all have the same seller ex ante utility (and also the same buyer ex ante utility).

A significant issue is that the solution to the linear optimization problem might be a very complex mechanism, that is hard to describe, and  thus not practical. Another issue is that it solves a particular instance and does not give us any insights about the guarantees on the approximation to the {\SecondBest} achieved by such mechanisms for particular sets of instances that are of special interest (as when the buyer distribution is regular). Throughout the paper we overcome both issues by presenting {\ksfair} mechanisms which are simple (ex post SBB, ex post IIR, and sometimes even DSIC) and yet achieve some good approximation to the second best GFT. The following lemma formalizes how any such result implies a lower bound for the fraction of the second best GFT achievable by the above linear optimization problems. 

\begin{restatable}{lemma}{lemsbbimplication}
\label{lem:SBB implication}
    Fix any bilateral trade instance $\btinstance$, if there exists a mechanism $\mech\in \mechfamily$ that is ex ante SBB, {\ksfair}, and guarantees a GFT of at least $\calC$ fraction of the {\SecondBest} $\OPTSB$, then each of the following mechanisms also obtains a GFT that is at least $\calC$ fraction of the {\SecondBest}:
\begin{enumerate}
    \item any GFT-maximizing {\ksfair} mechanism $\mech\primed\in\mechfamily$,
    \item {any mechanism $\mech\doubleprimed\in\mechfamily$ that is a} KS-solution\textsuperscript{\ref{footnote:ref-to-ks-solution}}{(and thus also any seller-optimal {\ksfair} mechanism, and any buyer-optimal {\ksfair} mechanism)}. 
\end{enumerate}
\end{restatable}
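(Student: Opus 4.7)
The plan is to reduce everything to one elementary identity and one inequality relating GFT to the sum of ex ante utilities, depending on the budget-balance condition. Specifically, for any mechanism $\mech \in \mechfamily$, by definition
\begin{align*}
\GFT{\mech} - \bigl(\sellerexanteutil(\mech) + \buyerexanteutil(\mech)\bigr) = \expect[\val\sim\buyerdist,\cost\sim\sellerdist]{\price(\val,\cost) - \sellerprice(\val,\cost)}.
\end{align*}
Hence, under ex ante WBB the right-hand side is nonnegative, giving $\GFT{\mech} \geq \sellerexanteutil(\mech) + \buyerexanteutil(\mech)$, and under ex ante SBB the right-hand side is zero, giving $\GFT{\mech} = \sellerexanteutil(\mech) + \buyerexanteutil(\mech)$. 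In particular, applying this to the hypothesized mechanism $\mech$ (which is ex ante SBB and {\ksfair}), I obtain
\begin{align*}
\sellerexanteutil(\mech) + \buyerexanteutil(\mech) = \GFT{\mech} \geq \calC \cdot \OPTSB\btinstance.
\end{align*}

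For part~1, the argument is essentially definitional: any GFT-maximizing {\ksfair} mechanism $\mech\primed \in \mechfamily$ must satisfy $\GFT{\mech\primed} \geq \GFT{\mech}$ because $\mech$ itself is a feasible {\ksfair} mechanism, so $\GFT{\mech\primed} \geq \calC \cdot \OPTSB$.

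For part~2, let $\mech\doubleprimed \in \mechfamily$ be a KS-solution, i.e., a {\ksfair} mechanism maximizing $\sellerexanteutil(\cdot) + \buyerexanteutil(\cdot)$ among all {\ksfair} mechanisms in $\mechfamily$. Since $\mech$ is itself feasible for this maximization, I get $\sellerexanteutil(\mech\doubleprimed) + \buyerexanteutil(\mech\doubleprimed) \geq \sellerexanteutil(\mech) + \buyerexanteutil(\mech)$. Combining with the WBB inequality applied to $\mech\doubleprimed$ and the SBB identity applied to $\mech$ gives
\begin{align*}
\GFT{\mech\doubleprimed} \;\geq\; \sellerexanteutil(\mech\doubleprimed) + \buyerexanteutil(\mech\doubleprimed) \;\geq\; \sellerexanteutil(\mech) + \buyerexanteutil(\mech) \;=\; \GFT{\mech} \;\geq\; \calC\cdot \OPTSB\btinstance,
\end{align*}
which is the desired bound. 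The ``seller-optimal'' and ``buyer-optimal'' {\ksfair} mechanisms need a short extra argument: because the {\ksfairness} constraint $\sellerexanteutil/\sellerbenchmark = \buyerexanteutil/\buyerbenchmark$ ties the two ex ante utilities in a fixed proportion, maximizing either one individually is equivalent to maximizing their (positively weighted) sum, so these mechanisms achieve the same values of $\sellerexanteutil$ and $\buyerexanteutil$ as any KS-solution, and the chain of inequalities above goes through verbatim.

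There is not really a hard step here; the only thing to be careful about is not accidentally using $\GFT{\mech\doubleprimed} = \sellerexanteutil(\mech\doubleprimed) + \buyerexanteutil(\mech\doubleprimed)$ (which would require $\mech\doubleprimed$ to be ex ante SBB, and is not assumed), and instead using only the WBB inequality for $\mech\doubleprimed$, while exploiting the SBB \emph{equality} for the hypothesized $\mech$. The asymmetric role of SBB for $\mech$ versus WBB for $\mech\doubleprimed$ is what makes the reduction work.
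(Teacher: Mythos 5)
Your proposal is correct and follows essentially the same route as the paper's proof: part~1 is definitional, and part~2 uses the ex ante SBB \emph{equality} for $\mech$, the ex ante WBB \emph{inequality} for $\mech\doubleprimed$, and the fact that a KS-solution maximizes $\sellerexanteutil + \buyerexanteutil$ among {\ksfair} mechanisms. The paper phrases the KS-solution's optimality via the seller's ex ante utility (then deduces the buyer's by {\ksfairness}) while you invoke the sum directly, but these are the same observation, and your closing remark about the proportionality constraint making seller-optimal, buyer-optimal, and KS-solution coincide is exactly the point the paper makes in Section~3.
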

\begin{proof}
    We show the GFT approximation for each mechanism separately. For a GFT-maximizing {\ksfair} mechanism $\mech\primed$, its GFT is at least the GFT of mechanism $\mech$ by definition, and thus is a $\calC$-approximation {to the {\SecondBest}}. 
    
    For a KS-solution mechanism $\mech\doubleprimed$, since it is also a seller-optimal {\ksfair} mechanism, the seller's ex ante utility satisfies $\sellerexanteutil(
    {\mech\doubleprimed}) \geq \sellerexanteutil(\mech)$. Combining with the fact that both mechanisms $\mech$ and $\mech\doubleprimed$ are {\ksfair}, we also obtain $\buyerexanteutil(
    {\mech\doubleprimed}) \geq \buyerexanteutil(\mech)$. Thus,
    \begin{align*}
        \GFT{
        {\mech\doubleprimed}} \geq \buyerexanteutil(
        {\mech\doubleprimed}) + \sellerexanteutil(
        {\mech\doubleprimed}) \geq \buyerexanteutil(\mech) + \sellerexanteutil(\mech)
        = \GFT{\mech} \geq \calC \cdot \OPTSB
    \end{align*}
    where the first inequality holds since mechanism $\mech\doubleprimed$ is ex ante WBB, and the equality holds since mechanism $\mech$ is ex ante SBB. 
\end{proof}

As all our positive results prove the existence of ex post SBB {(and thus ex ante SBB as well)} mechanisms $\mech\in \mechfamily$ that {are} {\ksfair} and achieve a good approximation to the {\SecondBest} $\OPTSB$, the lemma applies to all our positive results (\Cref{thm:optimal GFT:general instance,thm:improved GFT:regular buyer,thm:improved GFT:mhr buyer} and \Cref{thm:improved GFT:mhr traders}). Moreover, since all these results also include upper bounds on the fraction of the second best GFT achievable by \emph{any} {\ksfair} truthful mechanism in different settings, these upper bounds also hold for each of the above mechanisms.

\section{GFT Approximation under KS-Fairness for Bilateral Trade Instances}
\label{sec:general-gft-approx}

In this section, we show that for general bilateral instances, there exists a simple mechanism (the {\BiasedRandomOffer}) that is {\ksfair} and has GFT approximation of $\frac{1}{2}$, and that fraction can be improved to $\frac{1}{e-1}$ when the distributions are MHR. The {\BiasedRandomOffer} that we use is BIC, ex post IR, and ex post SBB. 
\begin{definition}[{\BROM}]
    \label{def:biased random offer}
    In the {\BiasedRandomOffer}, the social planner {fixes} a probability $\mixprob\in[0, 1]$ {ex ante}. Then, the {\SellerOffer} is implemented with probability $\mixprob$, and the {\BuyerOffer} is implemented with probability $1 - \mixprob$.\footnote{As a sanity check, with $\mixprob= 0, 0.5, 1$, the {\BiasedRandomOffer} recovers the classic {\BuyerOffer}, (unbiased) {\RandomOffer}, and {\SellerOffer}, respectively.}
\end{definition}
With start for our main result in this section, the results for general bilateral instances.

\begin{theorem}
\label{thm:optimal GFT:general instance}
\label{cor:biased random offer}
    For every bilateral trade instance $\btinstance$, there exists probability $\mixprob\in[0, 1]$ such that the {\BiasedRandomOffer} is {\ksfair} and its GFT is at least at least $\frac{1}{2}$ fraction of the {\SecondBest} $\OPTSB$. 
\end{theorem}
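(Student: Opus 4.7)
The plan is to identify the required probability $\mixprob^*$ by a one-parameter intermediate value argument on the family $\{\text{\BROM}(\mixprob)\}_{\mixprob\in[0,1]}$, and then establish the $\tfrac{1}{2}$-approximation by a short symmetric-averaging bound combined with the observation that $\OPTSB \leq \sellerbenchmark + \buyerbenchmark$. I dispose first of the degenerate case where $\sellerbenchmark = 0$ or $\buyerbenchmark = 0$, in which $\OPTSB = 0$ and the {\NoTrade} mechanism (a special instance of \BROM) already satisfies the claim. Assume henceforth that both benchmarks are strictly positive.

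Since \SOM{} achieves $\sellerexanteutil(\text{\SOM}) = \sellerbenchmark$ and \BOM{} achieves $\buyerexanteutil(\text{\BOM}) = \buyerbenchmark$, by linearity of expectation (\BROM{} being a $\mixprob$-mixture of the two):
\begin{align*}
\sellerexanteutil(\text{\BROM}(\mixprob)) &= \mixprob\cdot \sellerbenchmark + (1-\mixprob)\cdot \sellerexanteutil(\text{\BOM}), \\
\buyerexanteutil(\text{\BROM}(\mixprob)) &= \mixprob\cdot \buyerexanteutil(\text{\SOM}) + (1-\mixprob)\cdot \buyerbenchmark.
\end{align*}
Hence the map $g(\mixprob) := \sellerexanteutil(\text{\BROM}(\mixprob))/\sellerbenchmark - \buyerexanteutil(\text{\BROM}(\mixprob))/\buyerbenchmark$ is affine in $\mixprob$, with $g(0) = \sellerexanteutil(\text{\BOM})/\sellerbenchmark - 1 \leq 0$ and $g(1) = 1 - \buyerexanteutil(\text{\SOM})/\buyerbenchmark \geq 0$. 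The intermediate value theorem yields $\mixprob^*\in[0,1]$ with $g(\mixprob^*) = 0$, so $\text{\BROM}(\mixprob^*)$ is {\ksfair}.

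Let $\calC$ denote the common ratio at $\mixprob^*$, so $\sellerexanteutil(\text{\BROM}(\mixprob^*)) = \calC\cdot \sellerbenchmark$ and $\buyerexanteutil(\text{\BROM}(\mixprob^*)) = \calC\cdot \buyerbenchmark$. Plugging into the two displayed identities and using non-negativity of $\sellerexanteutil(\text{\BOM})$ and $\buyerexanteutil(\text{\SOM})$, we obtain $\calC \geq \mixprob^*$ and $\calC \geq 1-\mixprob^*$. Adding yields $2\calC \geq 1$, so $\calC \geq \tfrac{1}{2}$ regardless of where in $[0,1]$ the root $\mixprob^*$ happens to lie. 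This symmetric averaging trick is the crux of the proof and the step I expect to be the main obstacle to spot cleanly; everything else is routine bookkeeping.

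To finish, note that $\text{\BROM}(\mixprob^*)$ is ex post SBB (both \SOM{} and \BOM{} are ex post SBB, and a mixture preserves this), so $\GFT{\text{\BROM}(\mixprob^*)} = \sellerexanteutil(\text{\BROM}(\mixprob^*)) + \buyerexanteutil(\text{\BROM}(\mixprob^*)) = \calC\cdot(\sellerbenchmark + \buyerbenchmark)$. Invoking the result of \citep{BCWZ-17} that the {\SecondBest} benchmark can w.l.o.g.\ be attained by an ex post SBB mechanism (whose GFT then equals the sum of the traders' ex ante utilities, each bounded by its ideal), we get $\OPTSB \leq \sellerbenchmark + \buyerbenchmark$. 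Therefore $\GFT{\text{\BROM}(\mixprob^*)} \geq \calC\cdot \OPTSB \geq \tfrac{1}{2}\OPTSB$, completing the proof.
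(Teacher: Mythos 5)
Your proof is correct, and it takes a genuinely different (and arguably cleaner) route than the paper's. The paper proves the theorem by first establishing a general \emph{black-box reduction} (\Cref{thm:blackbox reduction}): given any truthful mechanism $\mech$ with $\GFTapprox = \min\{\sellerexanteutil(\mech)/\sellerbenchmark,\, \buyerexanteutil(\mech)/\buyerbenchmark\}$, mixing $\mech$ with the appropriate one-sided offer produces a {\ksfair} mechanism achieving at least $\GFTapprox \cdot \OPTSB$; it then applies this reduction to the unbiased {\RandomOffer}, for which both ratios are trivially at least $\tfrac{1}{2}$. Your argument instead works directly inside the one-parameter family $\{\BROM(\mixprob)\}_{\mixprob}$: the intermediate value theorem yields a {\ksfair} member at some $\mixprob^*$, and then the pair of inequalities $\calC \geq \mixprob^*$ and $\calC \geq 1 - \mixprob^*$ (obtained from non-negativity of the cross utilities $\sellerexanteutil(\BOM)$ and $\buyerexanteutil(\SOM)$) immediately forces $\calC \geq \tfrac12$. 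Your averaging step is exactly the crux and is sound; it compresses the paper's two-lemma structure into one direct calculation. The trade-off is that the paper's black-box formulation is a reusable tool: it is applied again in \Cref{thm:improved GFT:mhr traders} to lift the $\tfrac{1}{e-1}$ MHR guarantee, whereas your self-contained argument, while tighter for this statement, would have to be redone or generalized to cover that case. One minor point: you invoke $\OPTSB \leq \sellerbenchmark + \buyerbenchmark$ via the ex post SBB reduction of \citet{BCWZ-17}; the paper states this bound directly (\Cref{lemma:SB-upper-bound}) and also implicitly uses it in the proof of \Cref{thm:blackbox reduction}, so the reliance is the same. Your degenerate-case disposal ($\sellerbenchmark = 0$ or $\buyerbenchmark = 0$ forces $\OPTFB = 0$, hence $\OPTSB = 0$) is also correct, though the paper leaves it implicit.
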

We remark that GFT approximation of $\frac{1}{2}$ in the above theorem does not require any assumption on the traders' valuation distributions. Moreover, this GFT approximation is \emph{optimal}: we show in \Cref{sec:improved GFT} that there exists a zero-value seller instance (\Cref{example:all fair mech:irregular}) where every BIC, IIR, ex ante WBB mechanism that is {\ksfair}, cannot obtain more than half of the {\SecondBest}.

To prove \Cref{thm:optimal GFT:general instance}, we develop a black-box reduction framework: it converts an arbitrary mechanism $\mech$ into a {\ksfair} mechanism $\mech\primed$ with provable GFT approximation guarantee, which is equal to the smaller ratio between each trader's ex ante utility in mechanism $\mech$ with her own benchmark. We then apply this framework on the (unbiased) {\RandomOffer} to prove the GFT approximation of $\frac{1}{2}$ 
as claimed.

\begin{restatable}[Black-box reduction]{theorem}{thmblackboxreduction}
    \label{thm:blackbox reduction}
    {Fix any bilateral trade instance $\btinstance$.}
    Fix any mechanism $\mech$ {that is BIC, IIR and ex ante WBB} (possibly not {\ksfair}). 
    Define the constant $\GFTapprox\in[0,1]$ as
    \begin{align*}
        \GFTapprox \triangleq \min\left\{
        \frac{\sellerexanteutil(\mech)}{\sellerbenchmark},
        \frac{\buyerexanteutil(\mech)}{\buyerbenchmark}
        \right\}
    \end{align*}
    Then, there exists BIC, IIR and ex ante WBB mechanism $\mech\primed$ that is {\ksfair} and guarantees a GFT of at least $\GFTapprox$ fraction of the {\SecondBest} $\OPTSB$. Specifically, mechanism $\mech\primed$ is constructed as follows:
    \begin{itemize}
        \item If $\frac{\buyerexanteutil(\mech)}{\buyerbenchmark} \geq \frac{\sellerexanteutil(\mech)}{\sellerbenchmark}$, mechanism $\mech$ is implemented with probability $\mixprob$, and the {\SellerOffer} is implemented with probability $1-\mixprob$, for some  $\mixprob\in[0, 1]$.
        \item If $\frac{\buyerexanteutil(\mech)}{\buyerbenchmark} \leq \frac{\sellerexanteutil(\mech)}{\sellerbenchmark}$, mechanism $\mech$ is implemented with probability $\mixprob$, and the {\BuyerOffer} is implemented with probability $1-\mixprob$, for some  $\mixprob\in[0, 1]$.
    \end{itemize}
\end{restatable}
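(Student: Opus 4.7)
The plan is to treat the case $\buyerexanteutil(\mech)/\buyerbenchmark \geq \sellerexanteutil(\mech)/\sellerbenchmark$, in which $\GFTapprox = a := \sellerexanteutil(\mech)/\sellerbenchmark$ and $\mech\primed$ mixes $\mech$ with the {\SellerOffer}; the symmetric case mixes $\mech$ with the {\BuyerOffer} and is handled identically. The starting observation is that $\mechfamily$ is closed under convex combinations: BIC, IIR, and ex-ante WBB are each preserved by tossing an independent biased coin to pick between two component mechanisms. Hence for every $\mixprob \in [0, 1]$ the candidate mechanism $\mech\primed(\mixprob)$ that runs $\mech$ with probability $\mixprob$ and the {\SellerOffer} with probability $1 - \mixprob$ lies in $\mechfamily$, and both traders' ex-ante utilities under $\mech\primed(\mixprob)$ are linear in $\mixprob$.

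Define $r_\seller(\mixprob) := \sellerexanteutil(\mech\primed(\mixprob))/\sellerbenchmark$ and $r_\buyer(\mixprob) := \buyerexanteutil(\mech\primed(\mixprob))/\buyerbenchmark$. At $\mixprob = 1$, $r_\seller(1) = a$ and $r_\buyer(1) = b := \buyerexanteutil(\mech)/\buyerbenchmark \geq a$ by the case assumption; at $\mixprob = 0$, $r_\seller(0) = 1$ and $r_\buyer(0) \leq 1$ by definition of the benchmarks. Since $r_\seller - r_\buyer$ is continuous (indeed affine) in $\mixprob$ and changes sign on $[0,1]$, the intermediate value theorem furnishes some $\mixprob^* \in [0,1]$ with $r_\seller(\mixprob^*) = r_\buyer(\mixprob^*)$; call the common value $\calC^*$. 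Setting $\mech\primed := \mech\primed(\mixprob^*)$ yields a {\ksfair} mechanism in $\mechfamily$. Moreover, $\calC^* = r_\seller(\mixprob^*) = \mixprob^* a + (1-\mixprob^*) \cdot 1$ is a convex combination of $a$ and $1 \geq a$, so $\calC^* \geq a = \GFTapprox$.

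It remains to convert this KS-ratio lower bound into a GFT bound against $\OPTSB$. Ex-ante WBB gives $\GFT{\mech\primed} \geq \sellerexanteutil(\mech\primed) + \buyerexanteutil(\mech\primed) = \calC^*(\sellerbenchmark + \buyerbenchmark) \geq \GFTapprox (\sellerbenchmark + \buyerbenchmark)$, so it suffices to show $\sellerbenchmark + \buyerbenchmark \geq \OPTSB$. For this I will invoke the reduction of \citet{BCWZ-17} noted in \Cref{subsec:related work}: any ex-ante WBB mechanism can be converted into an ex-post SBB mechanism with the same allocation rule, hence the same GFT, so $\OPTSB$ is attained by some $\mech^\dagger \in \mechfamily$ that is ex-post SBB. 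For such $\mech^\dagger$ the mechanism's budget surplus vanishes, whence $\OPTSB = \GFT{\mech^\dagger} = \sellerexanteutil(\mech^\dagger) + \buyerexanteutil(\mech^\dagger) \leq \sellerbenchmark + \buyerbenchmark$ by definition of the benchmarks. Chaining the inequalities yields $\GFT{\mech\primed} \geq \GFTapprox \cdot \OPTSB$, as claimed. The only step that is genuinely nontrivial is this final inequality $\OPTSB \leq \sellerbenchmark + \buyerbenchmark$; the IVT argument and the bound $\calC^* \geq \GFTapprox$ are routine once the convex-combination construction is in place.
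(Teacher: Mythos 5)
Your proof is correct and follows essentially the same route as the paper: construct the convex mixture of $\mech$ with the appropriate single-offer mechanism, apply the intermediate value theorem to find the KS-fair mixing weight, and use linearity/monotonicity of the seller's ratio to show the common ratio is at least $\GFTapprox$. The only cosmetic difference is that you spell out the bound $\OPTSB \le \sellerbenchmark + \buyerbenchmark$ via the \citet{BCWZ-17} SBB reduction, whereas the paper simply recalls it (it appears as \Cref{lemma:SB-upper-bound}).
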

We remark that if the original mechanism $\mech$ is ex post IR (resp.\ ex post SBB), the constructed mechanism $\mech\primed$ from \Cref{thm:blackbox reduction} is also ex post IR (resp.\ ex post SBB). Moreover, our black-box reduction (\Cref{thm:blackbox reduction}) and the positive result from \Cref{thm:optimal GFT:general instance} can be extrapolated to obtain similar results for the more abstract model of cooperative bargaining, with an almost identical proof. In \Cref{appendix:general-bargaining-results} we detail the formulations of these results in the bargaining model. Therefore, though the black-box reduction framework is conceptually simple, it is general and could be of independent interest. 
\begin{proof}[Proof of \Cref{thm:blackbox reduction}]
    Without loss of generality, we assume $\frac{\buyerexanteutil(\mech)}{\buyerbenchmark} \geq \frac{\sellerexanteutil(\mech)}{\sellerbenchmark} = 
    \GFTapprox$. (The other case follows a symmetric argument.)
    For every $\mixprob\in[0, 1]$, consider $\mech\primed_{\mixprob}$ constructed as follows: original mechanism $\mech$ is implemented with probability $\mixprob$ and the {\SellerOffer} is implemented with probability $1-\mixprob$. By construction, we have
    \begin{align*}
        \frac{\sellerexanteutil(\mech\primed_\mixprob)}{\sellerbenchmark} 
        =
        \frac{\mixprob\cdot \sellerexanteutil(\mech) + (1-\mixprob)\cdot \sellerexanteutil(\SOM)}{\sellerbenchmark} 
        =
        \frac{\mixprob\cdot \sellerexanteutil(\mech) + (1-\mixprob)\cdot \sellerbenchmark}{\sellerbenchmark} 
    \end{align*}
    which is weakly decreasing linearly in $\mixprob\in[0, 1]$, since $\sellerbenchmark(\mech) \leq \sellerbenchmark$ by definition. Similarly, 
    \begin{align*}
        \frac{\buyerexanteutil(\mech\primed_\mixprob)}{\buyerbenchmark} 
        =
        \frac{\mixprob\cdot \buyerexanteutil(\mech) + (1-\mixprob)\cdot \buyerexanteutil(\SOM)}{\buyerbenchmark} 
    \end{align*}
    which is also linear in $\mixprob\in[0, 1]$. Moreover, due to the case assumption, we know 
    \begin{align*}
        \frac{\sellerexanteutil(\mech\primed_1)}{\sellerbenchmark} 
        \leq 
        \frac{\buyerexanteutil(\mech\primed_1)}{\buyerbenchmark}
        \;\;
        \mbox{and}
        \;\;
        \frac{\sellerexanteutil(\mech\primed_0)}{\sellerbenchmark} = 1
        \geq 
        \frac{\buyerexanteutil(\mech\primed_0)}{\buyerbenchmark}
    \end{align*}
    Thus, there exists $\mixprob^*\in[0, 1]$ such that 
    \begin{align*}
        \frac{\buyerexanteutil(\mech\primed_{\mixprob^*})}{\buyerbenchmark} 
        \overset{(a)}{=} 
        \frac{\sellerexanteutil(\mech\primed_{\mixprob^*})}{\sellerbenchmark} 
        \overset{(b)}{\geq}
        \frac{\sellerexanteutil(\mech\primed_{1})}{\sellerbenchmark} 
        \overset{(c)}{\geq}
        \GFTapprox
    \end{align*}
    which implies that mechanism $\mech\primed_{\mixprob^*}$ is {\ksfair} and its GFT is a $\GFTapprox$-approximation to the {\SecondBest} $\OPTSB$. (Recall that the {\SecondBest} $\OPTSB$ is upper bounded by $\sellerbenchmark + \buyerbenchmark$.) Here equality~(a) holds due to the intermediate value theorem, inequality~(b) holds due to the monotonicity of $
    {\sellerexanteutil(\mech\primed_{\mixprob})}/{\sellerbenchmark}$ as a function of $\mixprob$ argued above,
    and 
    inequality~(c) holds since mechanism $\mech\primed_1$ is equivalent to original mechanism $\mech$.
\end{proof}

Utilizing this black-box reduction,  we immediately prove \Cref{thm:optimal GFT:general instance} as follows. 
\begin{proof}[Proof of \Cref{thm:optimal GFT:general instance}]
    Consider the (unbiased) {\RandomOffer}. Note that both traders' ex ante utilities satisfy
    \begin{align*}
        \frac{\buyerexanteutil(\ROM)}{\buyerbenchmark}
        =
        \frac{1}{2} \left(\frac{\buyerexanteutil(\SOM)}{\buyerbenchmark}
        +
        \frac{\buyerexanteutil(\BOM)}{\buyerbenchmark}\right)
        \geq 
        \frac{1}{2}
        \;
        \mbox{and}
        \;
        \frac{\sellerexanteutil(\ROM)}{\sellerbenchmark}
        =
        \frac{1}{2} \left(\frac{\sellerexanteutil(\SOM)}{\sellerbenchmark}
        +\frac{\sellerexanteutil(\BOM)}{\sellerbenchmark}
        \right)
        \geq 
        \frac{1}{2}
    \end{align*}    
    where we use the fact that the buyer's (resp.\ seller's) benchmark $\buyerbenchmark$ ($\sellerbenchmark$) is achieved by the {\BuyerOffer} (resp.\ {\SellerOffer}), i.e., $\buyerexanteutil(\BOM) = \buyerbenchmark$ (resp.\ $\sellerexanteutil(\SOM) = \sellerbenchmark$). Invoking \Cref{thm:blackbox reduction} for the {\RandomOffer}, we obtain {\ksfair} mechanism $\mech\primed$ whose GFT is at least $\frac{1}{2}$ fraction of the {\SecondBest} $\OPTSB$. Moreover, it can be verified by the construction in \Cref{thm:blackbox reduction}, mechanism $\mech\primed$ is the {\BiasedRandomOffer}. 
\end{proof}

We next consider the bilateral trade instances where both traders' valuation distributions are MHR. In this setting, prior work \citep{Fei-22} shows that both the {\BuyerOffer} and {\SellerOffer} (which might not be {\ksfair}) guarantee a GFT approximation of $\frac{1}{e - 1}\approx 58.1\%$. Combining this with our black-box reduction, an improved GFT approximation of $\frac{1}{e - 1}$ for the {\ksfair} {\BiasedRandomOffer} is obtained. For completeness, we also include its proof below.

\begin{theorem}[Theorem~3.1 in \citealp{Fei-22}]
\label{thm:BO SO GFT:mhr traders}
    For every bilateral trade instance, if the buyer (resp.\ seller) has MHR valuation distribution, then the {\SellerOffer} (resp.\ {\BuyerOffer}) obtains at least $\frac{1}{e - 1}$ fraction of the {\FirstBest} $\OPTFB$.
\end{theorem}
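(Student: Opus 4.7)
I will prove the claim for the \SellerOffer\ with an MHR buyer (the symmetric claim for \BuyerOffer\ with an MHR seller follows by reversing the roles of buyer and seller). The plan is to establish the inequality pointwise for every seller value $c$, and then integrate over $c \sim \sellerdist$. So fix $c \in \supp(\sellerdist)$ and let $r = r^*(c) \in \argmax_{p}(p-c)(1-\buyercdf(p))$ be the monopoly reserve offered by the seller. Using $1-\buyercdf(y) = e^{-\cumhazard(y)}$ from \Cref{prop:buyer surplus equivalence}'s surrounding discussion, the first-order condition gives the standard identity $(r-c)\buyerhazardrate(r) = 1$ (note $r \geq c$ since the objective is non-negative at $p=c$).

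Next I would write the two quantities I want to compare, both conditioned on the seller value being $c$, as
\[
\GFT[c]{\OPTFB} = \int_c^{\infty} e^{-\cumhazard(y)}\, \d y, \qquad \GFT[c]{\SOM} = (r-c)\,e^{-\cumhazard(r)} + \int_r^{\infty} e^{-\cumhazard(y)}\, \d y,
\]
using the tail formula for $\expect[\val\sim\buyerdist]{(\val - c)\indicator{\val \geq r}}$ and $\expect[\val\sim\buyerdist]{\plus{\val-c}}$. Denote $A = (r-c)e^{-\cumhazard(r)}$, $B = \int_r^{\infty} e^{-\cumhazard(y)}\,\d y$, and $C = \int_c^r e^{-\cumhazard(y)}\,\d y$, so that $\GFT[c]{\SOM} = A + B$ and $\GFT[c]{\OPTFB} = B + C$. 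The ratio to lower bound is $(A+B)/(B+C)$.

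The key step, and the main place where MHR is used, is to bound $C$ in terms of $A$. By \Cref{def:regularity buyer}, $\buyerhazardrate$ is non-decreasing, hence $\cumhazard$ is convex, so every tangent line lies below $\cumhazard$. Using the tangent at $r$, for $y \in [c, r]$ we have $\cumhazard(y) \geq \cumhazard(r) - \buyerhazardrate(r)(r - y)$, which yields
\[
C \;\leq\; e^{-\cumhazard(r)} \int_c^r e^{\buyerhazardrate(r)(r-y)}\, \d y \;=\; e^{-\cumhazard(r)} \cdot \frac{e^{\buyerhazardrate(r)(r-c)} - 1}{\buyerhazardrate(r)} \;=\; (e-1)\,(r-c)\,e^{-\cumhazard(r)} \;=\; (e-1)\,A,
\]
where the second equality substitutes $\buyerhazardrate(r)(r-c) = 1$ from the first-order condition. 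Given this, $(A+B)/(B+C) \geq (A+B)/(B + (e-1)A) \geq 1/(e-1)$, since the latter inequality is equivalent to $(e-2)B \geq 0$, which holds because $e > 2$.

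Finally, I would integrate over the seller value: $\GFT{\SOM} = \expect[\cost \sim \sellerdist]{\GFT[\cost]{\SOM}} \geq \frac{1}{e-1}\expect[\cost \sim \sellerdist]{\GFT[\cost]{\OPTFB}} = \frac{1}{e-1}\OPTFB$. The only nontrivial step is the convex-tangent bound converting $C$ into $(e-1)A$; everything else is bookkeeping in the tail formulas for expectations.
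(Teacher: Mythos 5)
The paper itself does not prove this statement; it is cited directly from \citet{Fei-22} (Theorem~3.1 there), so there is no in-paper proof to compare against. That said, your proposed proof is correct, and it is in fact the natural argument one would expect to underlie the cited result: condition on the seller cost $c$, write both the first-best GFT and the seller-offer GFT as integrals of the survival function $e^{-\cumhazard}$, and use convexity of the cumulative hazard rate (the tangent-line lower bound at $r$) together with the optimality condition at the monopoly reserve to convert the ``lost'' piece $C$ into at most $(e-1)$ copies of the ``retained'' piece $A$. The final inequality $(A+B)/(B+C)\geq 1/(e-1)$ reduces to $(e-2)B\geq 0$ as you say, and integrating over $\cost\sim\sellerdist$ is linear and unproblematic.

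One small technical remark worth flagging: the first-order condition need not hold with \emph{equality} in general. If the seller's optimal reserve $r$ sits at the top of the support (which can happen, e.g., when $\buyerdist$ has an atom at $\hval$, a case the paper explicitly allows), the optimality condition only gives $(r-c)\buyerhazardrate(r)\leq 1$. This does not break your argument, though, because
\[
\frac{e^{\buyerhazardrate(r)(r-c)}-1}{\buyerhazardrate(r)} \;=\; (r-c)\cdot\frac{e^{x}-1}{x}\bigg|_{x=\buyerhazardrate(r)(r-c)}
\]
and $x\mapsto (e^{x}-1)/x$ is increasing on $(0,\infty)$, so $x\leq 1$ still yields the bound $C\leq (e-1)A$. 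You might want to state the first-order condition as an inequality and note this monotonicity explicitly, rather than asserting the identity outright, so that the corner case is visibly covered.

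Everything else in the proposal — the tail formulas, the decomposition into $A,B,C$, the use of MHR via convexity of $\cumhazard$, and the reduction to the pointwise bound — is sound and constitutes a complete, self-contained proof of the cited theorem.
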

\begin{restatable}{corollary}{thmmhrtraders}
\label{thm:improved GFT:mhr traders}
For every bilateral trade instance $\btinstance$ where both traders have MHR valuation distributions, there exists probability $\mixprob\in[0, 1]$ such that the {\BiasedRandomOffer} is {\ksfair} and its GFT is at least at least $\frac{1}{e-1}$ fraction of the {\SecondBest} $\OPTSB$.\footnote{The $\frac{1}{e - 1}$ approximation in this theorem also holds for the {\FirstBest} $\OPTFB$, i.e., $\GFT{\mech} \geq \frac{1}{e - 1}\cdot \OPTFB$.}
\end{restatable}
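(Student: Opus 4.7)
The plan is to combine two ingredients already established in the paper: the existence argument for a KS-fair biased random offer from Theorem \ref{thm:optimal GFT:general instance}, and the MHR-specific approximation bounds on SOM and BOM from Theorem \ref{thm:BO SO GFT:mhr traders}. The key observation is that, for every $\mixprob \in [0,1]$, the $\mixprob$-biased random offer $\mech_\mixprob$ is by construction a convex combination of SOM and BOM, so by linearity of expectation
\[
\GFT{\mech_\mixprob} \;=\; \mixprob\,\GFT{\SOM} \;+\; (1-\mixprob)\,\GFT{\BOM}.
\]
Under MHR on both sides, each summand is already at least $\tfrac{1}{e-1}\,\OPTFB$ by Theorem \ref{thm:BO SO GFT:mhr traders}, so the $\tfrac{1}{e-1}$ bound on GFT holds \emph{uniformly} in $\mixprob$. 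The only remaining task is to pick $\mixprob$ so that KS-fairness is enforced.

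\textbf{Picking a KS-fair $\mixprob^*$.} First I would invoke Theorem \ref{thm:optimal GFT:general instance} to produce some $\mixprob^* \in [0,1]$ for which $\mech_{\mixprob^*}$ is KS-fair. Briefly, the underlying intermediate-value argument is that $\sellerexanteutil(\mech_\mixprob)/\sellerbenchmark$ is affine and continuous in $\mixprob$, equal to $1$ at $\mixprob = 1$ since $\sellerexanteutil(\SOM) = \sellerbenchmark$, while $\buyerexanteutil(\mech_\mixprob)/\buyerbenchmark$ is likewise affine and continuous and equals $1$ at $\mixprob = 0$ since $\buyerexanteutil(\BOM) = \buyerbenchmark$. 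The two ratios therefore coincide at some $\mixprob^* \in [0,1]$, at which point $\mech_{\mixprob^*}$ satisfies Definition \ref{def:ks fairness}.

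\textbf{Bounding the GFT.} Next I would specialize the convex-combination identity to $\mixprob^*$ and apply Theorem \ref{thm:BO SO GFT:mhr traders} in both directions: the buyer's MHR distribution yields $\GFT{\SOM} \geq \tfrac{1}{e-1}\OPTFB$, and the seller's MHR distribution yields $\GFT{\BOM} \geq \tfrac{1}{e-1}\OPTFB$. Combining,
\[
\GFT{\mech_{\mixprob^*}} \;\geq\; \tfrac{1}{e-1}\,\OPTFB \;\geq\; \tfrac{1}{e-1}\,\OPTSB,
\]
which establishes the claim (and incidentally the stronger statement against $\OPTFB$ recorded in the footnote).

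\textbf{Main subtlety.} There is no genuine technical obstacle, but one conceptual point is worth highlighting in the writeup: the $\tfrac{1}{e-1}$ bound \emph{cannot} be obtained by plugging the unbiased random offer into the min-of-ratios bound of Theorem \ref{thm:blackbox reduction}, which yields only $\tfrac{1}{2}$. The improvement comes from bounding the GFT of every biased random offer uniformly in $\mixprob$ via the MHR hypotheses; KS-fairness only pins down \emph{which} $\mixprob^*$ to use, and that choice does not degrade the GFT guarantee because $\GFT{\mech_\mixprob}$ is affine in $\mixprob$ with both endpoints already strong.
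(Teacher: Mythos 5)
Your proposal is correct and takes essentially the same route as the paper: obtain a KS-fair $\mixprob^*$ via the intermediate-value/black-box argument, then observe that every realized execution of the biased random offer is either SOM or BOM, each of which already obtains $\frac{1}{e-1}$ of $\OPTFB$ under the MHR hypotheses by Theorem~\ref{thm:BO SO GFT:mhr traders}. The ``main subtlety'' you flag --- that the GFT bound is uniform in $\mixprob$ rather than inherited from the min-of-ratios guarantee of Theorem~\ref{thm:blackbox reduction} --- is exactly what the paper's one-line proof relies on implicitly.
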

\begin{proof}[Proof of \Cref{thm:improved GFT:mhr traders}]
    Applying \Cref{thm:blackbox reduction} with the (unbiased) {\RandomOffer}, there exists $\mixprob\in[0, 1]$ such that the {\BiasedRandomOffer} is {\ksfair}. Since in the {\BiasedRandomOffer}, either {\BuyerOffer} or {\SellerOffer} are implemented in every realized execution, invoking \Cref{thm:BO SO GFT:mhr traders} finishes the proof.
\end{proof}

\section{GFT Approximation under KS-Fairness for Zero-Value Seller Instances}

\label{sec:improved GFT}

In this section, we focus on the special case where the seller's valuation distribution is a single point mass (known to all). This is equivalent to assuming that the seller has zero value, which is publicly known by all parties. This special case has been studied extensively in the mechanism design literature \citep[e.g.,][]{mye-81,BR-89,HR-09}, mainly from the perspective of a monopolist seller aiming to maximize her revenue. In this special case of a zero-value seller, as trading at price of $0$ is always efficient, the {\FirstBest} and {\SecondBest} become identical, and are equal to the expected value of the buyer, i.e., $\OPTFB=\OPTSB = \expect[\val\sim\buyerdist]{\val}$. 

In \Cref{subsec:optimal GFT upper bound:general instance}, we first present a negative result that matches the GFT approximation of $\frac{1}{2}$ established in \Cref{thm:optimal GFT:general instance}, and thus prove the tightness of the result. Notably, this negative result is derived from a zero-value seller instance  where the buyer's valuation distribution is not regular (\Cref{example:all fair mech:irregular}). Motivated by this, we then study the settings where the buyer's valuation distribution is regular (\Cref{subsec:improved GFT:regular buyer}), and then consider settings that the distribution is MHR (\Cref{subsec:improved GFT:mhr buyer}). As we will show, (significantly) improved GFT-approximations are achievable for such instances.

\subsection{Matching Negative Result for Zero-Value Seller Instances}
\label{subsec:optimal GFT upper bound:general instance}
In this section we show that even for the simple setting where the seller has zero value, there exists an instance (\Cref{example:all fair mech:irregular}) for which every BIC, IIR, ex ante WBB mechanism that is {\ksfair}, cannot obtain more than half of the {\SecondBest} $\OPTSB$. We note that the buyer's valuation distribution in this instance is not regular. 

\begin{example}
\label{example:all fair mech:irregular}
Fix any $\constantH \geq 1$. Let $\val\primed \triangleq \frac{\constantH}{\sqrt{\ln\constantH} + 1}$. The buyer has a valuation distribution $\buyerdist$ (that is not regular) defined as follows: 
distribution~$\buyerdist$ 
has support $\supp(\buyerdist)=[1, \constantH]$ and cumulative density function $\buyercdf(\val) = \frac{\val - 1}{\val}$ for $\val \in[1,\val\primed]$, $\buyercdf(\val) = 1- \frac{{\ln\constantH}}{(\val/\constantH+\sqrt{\ln\constantH}-1)\cdot \constantH}$ for $\val\in[\val\primed, \constantH]$, and $\buyercdf(\val) = 1$ for $\val \in(\constantH, \infty)$.  
(Namely, there is an atom at $\constantH$ with probability mass of ${\sqrt{\ln\constantH}}/{\constantH}$.)
The seller has a deterministic value of 0. See \Cref{fig:all fair mech:irregular} for an illustration of the revenue curve and the bargaining set. 
\end{example}

\begin{figure}
    \centering
    \subfloat[]{
\begin{tikzpicture}[scale=0.8, transform shape]
\begin{axis}[
axis line style=gray,
axis lines=middle,
xtick={0, 0.2, 0.32, 1},
ytick={0, 1, 5},
xticklabels={0, $\frac{\sqrt{\ln\constantH}}{\constantH}$, $\frac{1}{\val\primed}$, 1},
yticklabels={0, 1, ${\sqrt{\ln\constantH}}$},
xmin=0,xmax=1.1,ymin=-0.0,ymax=5.5,
width=0.5\textwidth,
height=0.4\textwidth,
samples=1000]

\addplot[black!100!white, line width=0.5mm] (0, 0) -- (0.2, 5) -- (0.32, 1) -- (1, 1);

\addplot[dotted, gray, line width=0.3mm] (1, 0) -- (1, 1) -- (0, 1);
\addplot[dotted, gray, line width=0.3mm] (0.2, 0) -- (0.2, 5) -- (0, 5);
\addplot[dotted, gray, line width=0.3mm] (0.32, 0) -- (0.32, 1);

\draw[decorate,decoration={brace,amplitude=6pt}] (0.2,0) -- (0.32,0) node[midway,above=4pt] {\tiny $1/\constantH$};

\end{axis}

\end{tikzpicture}
\label{fig:all fair mech:irregular:revenue curve}
}~~~~
    \subfloat[]{
\begin{tikzpicture}[scale=0.8, transform shape]
\begin{axis}[
axis line style=gray,
axis lines=middle,
xlabel = $\buyerexanteutil$,
ylabel = $\sellerexanteutil$,
xtick={0, 1},
ytick={0, 0.1, 0.4},
xticklabels={0, ~~~~~~~~~~~$\buyerbenchmark\approx\ln\constantH$},
yticklabels={0, $\frac{1}{\sqrt{\ln\constantH}}$, $\sellerbenchmark = \sqrt{\ln\constantH}$},
xmin=-0.05,xmax=1.1,ymin=-0.05,ymax=1.1,
width=0.5\textwidth,
height=0.4\textwidth,
samples=1000]

\addplot[black!100!white, line width=0.5mm] (0, 0) -- (0, 0.4) -- (1, 0) -- (0, 0);

\fill[blue!20] (0, 0) -- (0, 0.4) -- (1, 0) -- (0, 0);

\addplot[dotted, gray, line width=0.3mm] (1, 0) -- (1, 0.4) -- (0, 0.4);

\addplot[dashed, red, line width = 0.5mm] (0, 0) -- (1, 0.4);

\draw[black, fill=black, line width=0.5mm] (axis cs:1, 0) circle[radius=0.08cm];
\draw[black, fill=black, line width=0.5mm] (axis cs:0, 0.4) circle[radius=0.08cm];

\draw[red, fill=red, line width=0.5mm] (0.48, 0.175) rectangle (0.52, 0.225);

\end{axis}

\end{tikzpicture}
\label{fig:all fair mech:irregular:ex ante utility pair}
}
    \caption{Graphical illustration of \Cref{example:all fair mech:irregular}. In \Cref{fig:all fair mech:irregular:revenue curve}, the black curve is the revenue curve of the buyer. In \Cref{fig:all fair mech:irregular:ex ante utility pair}, the shaded region corresponds to all pairs of buyer and seller's ex ante utilities $(\buyerexanteutil,\sellerexanteutil)$ that are achievable by some BIC, IIR, ex ante WBB mechanism. The two black points represent the traders' ex ante utility pairs $(\buyerbenchmark,0)$ and $(0,\sellerbenchmark)$, induced by the {\BuyerOffer} and {\SellerOffer}, respectively. The red dashed line represents utility pairs that are {\ksfair} (i.e., ones with $\buyerexanteutil/\buyerbenchmark = \sellerexanteutil/\sellerbenchmark$). The red square corresponds to a truthful {\ksfair} mechanism whose GFT is at least $\frac{1}{2}$ fraction of the {\SecondBest} $\OPTSB$.}
    \label{fig:all fair mech:irregular}
\end{figure}

\begin{restatable}{proposition}{lemoptimalGFTUB}
\label{lem:optimal GFT upper bound:irregular}
    Fix any $\eps > 0$.
    For any instance in \Cref{example:all fair mech:irregular} with sufficiently large $\constantH$ (as a function of $\eps$), every BIC, IIR, ex ante WBB mechanism $\mech$ that is {\ksfair} obtains at most $(\frac{1}{2} + \eps)$ fraction of the {\SecondBest} $\OPTSB$, i.e., $\GFT{\mech} \leq (\frac{1}{2} + \eps) \cdot \OPTSB$. Moreover, if mechanism $\mech$ further has a deterministic integral allocation rule (i.e., mechanism $\mech$ is a {\FixPrice}), its GFT is at most $\eps$ fraction of the {\SecondBest} $\OPTSB$, i.e., $\GFT{\mech}\leq \eps\cdot \OPTSB$. 
\end{restatable}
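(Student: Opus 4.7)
The plan is to upper-bound the KS-fairness ratio $\alpha \triangleq \buyerexanteutil(\mech)/\buyerbenchmark = \sellerexanteutil(\mech)/\sellerbenchmark$ for any BIC, IIR, ex-ante WBB mechanism $\mech$ on this instance, showing $\alpha \leq 1/2 + o(1)$, and then conclude the first claim by combining with $\GFT{\mech} = \buyerexanteutil(\mech) + \expect[\val\sim\buyerdist]{\price(\val)}$ and the coarse bound $\expect[\val\sim\buyerdist]{\price(\val)} \leq \sqrt{\ln\constantH}$. First I verify the relevant quantities for this instance: $\buyerbenchmark = \expect[\val\sim\buyerdist]{\val} = (1+o(1))\ln\constantH$, $\sellerbenchmark = \sqrt{\ln\constantH}$ (the peak of the revenue curve at quantile $\sqrt{\ln\constantH}/\constantH$, attained by trading only at the atom $\val = \constantH$), and $\OPTSB = \buyerbenchmark$ (since the seller has zero value).

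Next I parameterize any BIC mechanism by its non-decreasing interim allocation $\alloc(\val)$ on $[1,\constantH]$ together with a free normalization $u_0 = \buyerutil(1) \in [0,\alloc(1)]$. Let $z \triangleq \alloc(\val\primed)$ be the supremum of $\alloc$ on the equal-revenue region $[1,\val\primed]$, and let $x_H \triangleq \alloc(\constantH)$ be the atom allocation (so $z \leq x_H$). Splitting the identity $\buyerexanteutil(\mech) = u_0 + \int_1^{\constantH}\alloc(\val)(1-\buyercdf(\val))\,\d\val$ into the equal-revenue and middle regions, and using monotonicity to upper-bound $\alloc(\val)$ by $z$ on $[1,\val\primed]$ and by $x_H$ on $[\val\primed, \constantH)$, I obtain $\buyerexanteutil(\mech) \leq z\ln\constantH + O(\sqrt{\ln\constantH})$, so the KS-fairness constraint forces $z \geq \alpha - o(1)$. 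For the buyer's payment, integration by parts on $\int_0^1 \revcurve'(\quant)\alloc(\quant)\,\d\quant - u_0$ in quantile space, combined with the region-wise bounds $\revcurve(\quant) \leq \sqrt{\ln\constantH}$ on $[0,1/\val\primed]$ and $\revcurve(\quant) \leq 1$ on $[1/\val\primed,1]$, gives $\expect[\val\sim\buyerdist]{\price(\val)} \leq \sqrt{\ln\constantH}(x_H - z) + z$ (the allocation in the middle region telescopes out of the sum). Combined with $\alpha\sqrt{\ln\constantH} = \sellerexanteutil(\mech) \leq \expect[\val\sim\buyerdist]{\price(\val)}$ this yields $\alpha \leq (x_H - z) + z/\sqrt{\ln\constantH}$. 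Substituting $z \geq \alpha - o(1)$ and $x_H \leq 1$ gives $\alpha(2 - 1/\sqrt{\ln\constantH}) \leq 1 + o(1)$, hence $\alpha \leq 1/2 + o(1)$. Finally $\GFT{\mech}/\OPTSB \leq \alpha + \sqrt{\ln\constantH}/\ln\constantH \leq 1/2 + \eps$ for $\constantH$ sufficiently large.

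For the stronger bound on the {\FixPrice}, I analyze the KS-fairness condition as a function of the posted trading price $\price$. On $\price \in [1,\val\primed]$ the seller's utility $\sellerexanteutil = \price(1-\buyercdf(\price)) = 1$ is constant while $\buyerexanteutil$ strictly decreases from $\approx \ln\constantH$ to $\sqrt{\ln\constantH}$; on $\price \in [\val\primed, \constantH]$, $\sellerexanteutil$ strictly increases from $1$ to $\sqrt{\ln\constantH}$ while $\buyerexanteutil$ strictly decreases to zero. A direct calculation shows that the equality $\buyerexanteutil/\buyerbenchmark = \sellerexanteutil/\sellerbenchmark$ holds only at $\price = \val\primed$ (apart from trivial choices $\price > \constantH$ that yield no trade and zero GFT), at which point both ratios equal $1/\sqrt{\ln\constantH}$ and $\GFT{\mech} = 1 + \sqrt{\ln\constantH}$, a $O(1/\sqrt{\ln\constantH})$ fraction of $\OPTSB$, which is at most $\eps\cdot\OPTSB$ for $\constantH$ large. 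The main obstacle in the general analysis is the careful region-wise handling of the atom at $\constantH$ in both estimates: obtaining the sharp coefficient $2$ in the inequality $\alpha(2 - 1/\sqrt{\ln\constantH}) \leq 1 + o(1)$ requires the $\buyerexanteutil$ and $\expect{\price(\val)}$ bounds to be simultaneously tight, with each exploiting a different structural feature of the revenue curve (the height of the peak for the payment bound, and the large mass of the equal-revenue tail for the utility bound).
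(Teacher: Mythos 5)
Your argument for the first claim is sound and takes essentially the same route as the paper's: both exploit the three-region structure of the revenue curve (the atom at $\constantH$, the short segment of negative virtual value, and the long equal-revenue tail) to relate the KS-fairness ratio $\alpha$ to the allocation level $z=\alloc(\val\primed)$, showing $z\gtrsim\alpha$ from the buyer-utility side and $\alpha \lesssim 1 - z$ from the payment side, hence $\alpha \leq 1/2+o(1)$. You phrase it as a direct bound on $\alpha$ rather than the paper's proof by contradiction, and you integrate by parts in quantile space rather than invoking the virtual-value identity (Prop.~2.3), but the calculation is the same. One small gap: the assertion $u_0 \in [0,\alloc(1)]$ is not valid in general, since interim payments can be negative and hence $u_0 = \buyerutil(1)$ can exceed $\alloc(1)$. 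What you actually have, from ex ante WBB, is $u_0 = \expect{\virtualval(\val)\alloc(\val)} - \expect{\price(\val)} \leq \expect{\virtualval(\val)\alloc(\val)} \leq \sellerbenchmark = \sqrt{\ln\constantH}$, which is still $o(\ln\constantH)$ and suffices for your argument.

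The second part has a genuine error: the KS-fairness equality for a posted price does \emph{not} hold exactly at $\price = \val\primed$. At $\price=\val\primed$ the seller's ratio is exactly $1/\sqrt{\ln\constantH}$ while the buyer's ratio $\buyerexanteutil(\val\primed)/\buyerbenchmark$ is strictly smaller; the unique KS-fair posted price $\fprice$ lies strictly \emph{below} $\val\primed$, at roughly $\val\primed/e^{3/2}$. Your conclusion survives regardless, because for every $\price\in[1,\val\primed]$ the seller's revenue is constantly $1$, so the KS-fair $\fprice\in(1,\val\primed)$ gives $\GFT{\FPM_\fprice}=1+\buyerexanteutil(\fprice)=1+\buyerbenchmark/\sqrt{\ln\constantH}=O(\sqrt{\ln\constantH})=o(\OPTSB)$; but you should say this rather than assert the (false) exact location. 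A second, more structural issue: you restrict to literal posted-price mechanisms, whereas the paper's statement and proof cover \emph{all} BIC, IIR, ex ante WBB, KS-fair mechanisms with a deterministic integral allocation rule, i.e., threshold allocations with arbitrary compatible (possibly non-SBB) payments. Such a mechanism has the same GFT as the corresponding posted-price mechanism but different utility ratios, so the restriction is strictly weaker. The paper handles the general case by a case analysis on $\alloc(\val\primed)\in\{0,1\}$, bounding the seller's revenue by $1$ when $\alloc(\val\primed)=1$ and the buyer's utility by $(1+o(1))\sqrt{\ln\constantH}$ when $\alloc(\val\primed)=0$, and then combining each bound with KS-fairness.
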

Recall that the {\ksfair} {\BiasedRandomOffer} (from \Cref{thm:optimal GFT:general instance}) is a randomized mechanism.\footnote{Namely, if the item is indivisible, the allocation should be randomized. Otherwise (if the item is divisible), the allocation should be fractional.} \Cref{lem:optimal GFT upper bound:irregular} also shows the necessity of the randomization to achieve constant GFT approximation under {\ksfairness}. 

\begin{proof}[Proof of \Cref{lem:optimal GFT upper bound:irregular}]
    In \Cref{example:all fair mech:irregular}, the seller has zero-value, so the {\SecondBest} $\OPTSB$ is equal to the {\FirstBest} $\OPTFB$, and it  satisfies
    \begin{align*}
        \OPTSB = \expect[\val]{\val} =
        \displaystyle\int_1^{\constantH} \val\cdot \d \buyercdf(\val)
        + \constantH\cdot (1 - \buyercdf(\constantH))
        =
        (1 - o_{}(1))\cdot \ln \constantH
    \end{align*}
    Moreover, two traders' benchmark can be computed as
    \begin{align*}
        \sellerbenchmark = \sqrt{\ln \constantH}
        \;\;
        \mbox{and}
        \;\;
        \buyerbenchmark = \OPTSB = (1 - o_{}(1))\cdot \ln \constantH
    \end{align*}
    We now prove the first claim in the lemma statement by contradiction. Suppose there exists BIC, IIR, ex ante WBB, and {\ksfair} mechanism $\mech = (\alloc,\price,\sellerprice)$ whose GFT is a $(\frac{1}{2} + \Omega_{}(1))$-approximation to the {\SecondBest} $\OPTSB$.
    Note that the GFT of mechanism $\mech$ can be upper bounded as
    \begin{align*}
        \GFT{\mech} & = \expect[\val]{\val\cdot \alloc(\val)} 
        \leq
        \displaystyle\int_1^{\val\primed} \val\cdot \alloc(\val)\cdot \d \buyercdf(\val)
        +
        \displaystyle\int_{\val\primed}^{\constantH} \val\cdot \d \buyercdf(\val)
        + \constantH\cdot (1 - \buyercdf(\constantH))
        \\
        &{}=
        \displaystyle\int_1^{\val\primed} \val\cdot \alloc(\val)\cdot \d \buyercdf(\val)
        +
        o_{}(\ln\constantH)
    \end{align*}
    Invoking the assumption that $\GFT{\mech} \geq (\frac{1}{2} + \Omega_{}(1))\cdot \OPTSB$, we have 
    \begin{align*}
        \displaystyle\int_1^{\val\primed} \val\cdot \alloc(\val)\cdot \d \buyercdf(\val) 
        \geq
        \left(\frac{1}{2} + \Omega_{}(1)\right)
        \cdot \ln\constantH
    \end{align*}
    Moreover, since mechanism $\mech$ is BIC, the buyer's interim allocation $\alloc$ is weakly increasing and thus
    \begin{align*}
        \displaystyle\int_1^{\val\primed} \val\cdot \alloc(\val)\cdot \d \buyercdf(\val)
        \leq 
        \displaystyle\int_1^{\val\primed} \val\cdot \alloc(\val\primed)\cdot \d \buyercdf(\val)
        =
        (\alloc(\val\primed) - o_{}(1))\cdot \ln\constantH
    \end{align*}
    which further implies 
    \begin{align*}
        \alloc(\val\primed) \geq  \frac{1}{2} + \Omega_{}(1)
    \end{align*}
    We next upper bound the seller's ex ante utility $\sellerexanteutil(\mech)$ as follows
    \begin{align*}
        \sellerexanteutil(\mech) &{} = 
        \sellerprice(0) \overset{(a)}{\leq} 
        \expect[\val]{\price(\val)}
        \overset{(b)}{=}
        \expect[\val]{\virtualval(\val)\cdot \alloc(\val)}
        \overset{(c)}{\leq} 
        \virtualval(\constantH)\cdot (1 - \buyercdf(\constantH))
        +
        \virtualval(\val\primed)\cdot (\buyercdf(\constantH) - \buyercdf(\val\primed)) \cdot \alloc(\val\primed)
        \\
        &{}
        =
        \left(1 - \alloc(\val\primed) + o_{}(1)\right)\cdot \sqrt{\ln\constantH}
        \overset{(d)}{\leq}
        \left(\frac{1}{2} - \Omega_{}(1)\right) \cdot \sqrt{\ln\constantH}
    \end{align*}
    where inequality~(a) holds since mechanism $\mech$ is ex ante WBB, 
    equality~(b) holds due to \Cref{prop:revenue equivalence},
    inequality~(c) holds since the virtual value function $\virtualval$ satisfies $\virtualval(\constantH) = \constantH \geq 0$, $\virtualval(\val) = \virtualval(\val\primed) < 0$ for $\val\in[\val\primed,\constantH)$, $\virtualval(\val) = 0$ for $\val\in[1, \val\primed)$, and the buyer's interim allocation $\alloc$ is weakly increasing due to BIC, and inequality~(d) holds since $\alloc(\val\primed) \geq \frac{1}{2} + \Omega_{}(1)$ argued above.
    
    Meanwhile, the buyer's ex ante utility $\buyerexanteutil(\mech)$ can be lower bounded as 
    \begin{align*}
        \buyerexanteutil(\mech) &{}= \expect[\val]{\val\cdot \alloc(\val) - \price(\val)}
        \overset{(a)}{\geq} \expect[\val]{\val\cdot \alloc(\val)} - \sqrt{\ln\constantH}
        \overset{(b)}{\geq}
        \left(\frac{1}{2} + \Omega_{}(1)\right)
        \cdot \ln\constantH
    \end{align*}
    where the inequality~(a) holds since the expected payment of the buyer is upper bounded by the monopoly revenue $\sqrt{\ln\constantH}$, and inequality~(b) holds since $\int_1^{\val\primed} \val\alloc(\val)\cdot \d \buyercdf(\val) \geq (\frac{1}{2} + \Omega_{}(1)) \cdot \ln\constantH$ argued above.

    Putting all the pieces together, both traders' ex ante utilities satisfy
    \begin{align*}
        \frac{\sellerexanteutil(\mech)}{\sellerbenchmark} &\leq 
        \frac{\left(\frac{1}{2} - \Omega_{}(1)\right) \cdot \sqrt{\ln\constantH}}{\ln\constantH}
        =
        \frac{1}{2} - \Omega_{}(1)
        \;\;
        \mbox{and}
        \;\;-
        \frac{\buyerexanteutil(\mech)}{\buyerbenchmark} &\geq 
        \frac{\left(\frac{1}{2} + \Omega_{}(1)\right)
        \cdot \ln\constantH}{(1 - o_{}(1))\cdot \ln \constantH}
        =
        \frac{1}{2} + \Omega_{}(1)
    \end{align*}
    which implies mechanism $\mech$ is not {\ksfair}. This is a contradiction.
    
    We now prove the second claim in the lemma statement. Fix any BIC, IIR, ex ante WBB, {\ksfair} mechanism $\mech= (\alloc,\price,\sellerprice)$ that has a deterministic integral allocation rule $\alloc$. We consider two cases depending on allocation $\alloc(\val\primed)$ at value $\val\primed$. 
    
    \xhdr{Case (a) $\alloc(\val\primed) = 1$:} In this case, we can upper bound the seller's ex ante utility as 
    \begin{align*}
        \sellerexanteutil(\mech) = \sellerprice(0) \overset{(a)}{\leq} \expect[\val]{\price(\val)} \overset{(b)}{=} \expect[\val]{\virtualval(\val) \cdot \alloc(\val)}
        \overset{(c)}{\leq} \expect[\val]{\virtualval(\val)\cdot \indicator{\val \geq \val\primed}} = 1
    \end{align*}
    where inequality~(a) holds due to ex ante WBB, equality~(b) holds due to \Cref{prop:revenue equivalence}, and inequality~(c) holds since $\virtualval(\val) = 0 $ for every value $\val<\val\primed$ and $\alloc(\val) = \alloc(\val\primed) = 1$ for every value $\val \geq \val\primed$ (implied by BIC). Therefore, the GFT of mechanism can be upper bounded by
    \begin{align*}
        \GFT{\mech} &= \sellerexanteutil(\mech) + \buyerexanteutil(\mech) + \expect[\val,\cost]{\price(\val) - \sellerprice(\cost)}
        \overset{}{\leq } \frac{\sellerexanteutil(\mech)}{\sellerbenchmark}\cdot \left(\sellerbenchmark + \buyerbenchmark\right) + \expect[\val]{\price(\val)}
        \\
        &\leq 
        (1 - o_{}(1))\cdot \sqrt{\ln \constantH}
        = \bigO_{}\left(\frac{1}{\sqrt{\ln\constantH}}\right)\cdot \OPTSB
    \end{align*}
    where the first inequality holds due to the {\ksfairness} and the second inequality holds due to $\sellerexanteutil(\mech)\leq 1$ and $\expect[\val]{\price(\val)} \leq 1$ argued above.

    \xhdr{Case (b) $\alloc(\val\primed) = 0$:} In this case, we can upper bound the buyer's ex ante utility as 
    \begin{align*}
        \buyerexanteutil(\mech) \leq \expect[\val]{\val\cdot \indicator{\val\geq \val\primed}} = (1 + o_{}(1))\cdot \sqrt{\ln \constantH}
    \end{align*}
    and thus the GFT can be upper bounded as
    \begin{align*}
        \GFT{\mech} &= \sellerexanteutil(\mech) + \buyerexanteutil(\mech) + \expect[\val,\cost]{\price(\val) - \sellerprice(\cost)}
        \leq 2 \sellerbenchmark + \buyerexanteutil(\mech) 
        \\
        &\leq (3 +o_{}(1))\cdot \sqrt{\ln \constantH}
        = \bigO_{}\left(\frac{1}{\sqrt{\ln\constantH}}\right)\cdot \OPTSB
    \end{align*}
    where the first inequality holds since $\sellerexanteutil(\mech) \leq \sellerbenchmark$ and $\expect[\val,\cost]{\price(\val) - \sellerprice(\cost)}\leq\sellerbenchmark$.
    We now finish the proof of \Cref{lem:optimal GFT upper bound:irregular}.
\end{proof}

\subsection{Zero-Value Seller and Buyer with a Regular Distribution}
\label{subsec:improved GFT:regular buyer}

In this section we consider zero-value seller settings when the buyer's valuation distribution is assumed to be regular. Note that due to \Cref{example:all fair mech:irregular} and \Cref{lem:optimal GFT upper bound:irregular}, when no assumption is made about the valuation distribution, there are instances in which any {\ksfair} mechanism has GFT that is at most half the {\SecondBest}.

By imposing regularity on the buyer's valuation distribution and restricting the seller's value to be deterministically equal to zero, a natural first attempt is to explore whether the {\BiasedRandomOffer} mechanism, as developed in \Cref{cor:biased random offer}, can achieve an improved GFT approximation ratio strictly greater than $\frac{1}{2}$. However, the following example, along with its corresponding lemma, demonstrates the failure of the {\BiasedRandomOffer} to achieve a better GFT approximation.

\begin{example}
\label{example:BROM:regular}
Fix any $\constantH \geq 1$. The buyer has a regular valuation distribution $\buyerdist$, which has support $[0, \constantH]$ and cumulative density function $\buyercdf(\val) = \frac{(\constantH - 1)\cdot \val}{(\constantH - 1)\cdot \val + \constantH}$ for $\val \in[0,\constantH]$ and $\buyercdf(\val) = 1$ for $\val \in(\constantH, \infty)$.
(Namely, there is an atom at $\constantH$ with probability mass of $\frac{1}{\constantH}$.)
The seller has a deterministic value of 0.
See \Cref{fig:BROM:regular} for an illustration.
\end{example}

\begin{restatable}{lemma}{lemBROMregular}
\label{lem:BROM:regular} 
    Fix any $\eps > 0$. In \Cref{example:BROM:regular} with sufficiently large $\constantH$ (as a function of $\eps$), the {\ksfair} {\BiasedRandomOffer} $\mech$ (from \Cref{cor:biased random offer}) obtains less than $(\frac{1}{2} + \eps)$ of the {\SecondBest} $\OPTSB$, i.e., $\GFT{\mech} < (\frac{1}{2} + \eps)\cdot \OPTSB$.
\end{restatable}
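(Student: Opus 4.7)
The plan is to instantiate every relevant quantity in \Cref{example:BROM:regular} explicitly, identify the particular $\mixprob$ that makes the {\BiasedRandomOffer} {\ksfair}, and then directly compare its GFT to the {\SecondBest}. First, from the survival function $1-\buyercdf(\val) = H/((H-1)\val + H)$ on $[0,H)$ together with the atom of mass $1/H$ at $H$, I compute
\begin{align*}
\OPTSB = \OPTFB = \expect[\val]{\val} = \int_0^H (1-\buyercdf(\val))\,\d\val = \frac{H}{H-1}\ln H.
\end{align*}
The revenue curve $R(\val) = \val(1-\buyercdf(\val)) = \val H/((H-1)\val+H)$ is strictly increasing on $[0,H)$ and equals $1$ at the atom $\val=H$, so the monopoly reserve is $H$ and the monopoly revenue is $1$. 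Thus the seller benchmark (achieved by {\SOM} posting price $H$) is $\sellerbenchmark=1$, and the buyer benchmark (achieved by {\BOM} posting price $0$, since the seller has value $0$) is $\buyerbenchmark = \expect[\val]{\val}$.

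Next I characterize the utilities of the building blocks. Under {\SOM} with price $H$, only the $1/H$-mass atom trades and pays exactly its value, giving $\sellerexanteutil(\SOM)=1$ and $\buyerexanteutil(\SOM)=0$. Under {\BOM} with buyer's optimal offer $\price=0$ to the zero-value seller, trade always happens at price $0$, giving $\sellerexanteutil(\BOM)=0$ and $\buyerexanteutil(\BOM)=\expect[\val]{\val}$. Consequently, for the {\BiasedRandomOffer} $\mech_\mixprob$,
\begin{align*}
\sellerexanteutil(\mech_\mixprob)=\mixprob, \qquad \buyerexanteutil(\mech_\mixprob)=(1-\mixprob)\,\expect[\val]{\val}.
\end{align*}
The {\ksfairness} condition $\sellerexanteutil(\mech_\mixprob)/\sellerbenchmark = \buyerexanteutil(\mech_\mixprob)/\buyerbenchmark$ therefore reduces to $\mixprob = 1-\mixprob$, yielding $\mixprob=\tfrac{1}{2}$ (so in this example the {\ksfair} {\BiasedRandomOffer} coincides with the unbiased {\RandomOffer}; this is consistent with the construction in \Cref{thm:blackbox reduction} applied to the unbiased {\ROM}, whose two ratios are already equal).

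Finally I compute the GFT. The {\SOM} trades only at the atom with GFT $H\cdot(1/H)=1$, while {\BOM} always trades with GFT $\expect[\val]{\val}$, so
\begin{align*}
\GFT{\mech_{1/2}} = \tfrac{1}{2}\cdot 1 + \tfrac{1}{2}\cdot \frac{H}{H-1}\ln H.
\end{align*}
Dividing by $\OPTSB = \frac{H}{H-1}\ln H$ gives
\begin{align*}
\frac{\GFT{\mech_{1/2}}}{\OPTSB} = \frac{1}{2} + \frac{H-1}{2H\ln H},
\end{align*}
which tends to $\tfrac{1}{2}$ as $H\to\infty$. Hence for any $\eps>0$, picking $\constantH$ large enough that $(H-1)/(2H\ln H) < \eps$ yields $\GFT{\mech} < (\tfrac{1}{2}+\eps)\cdot\OPTSB$, as required. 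There is no serious obstacle here: the only subtlety is correctly identifying which $\mixprob$ the black-box reduction selects, and I handle that by observing that $\mech_{1/2}$ is already {\ksfair} in this instance.
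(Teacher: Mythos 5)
Your proof is correct and follows essentially the same route as the paper: identify that in this instance the unique \ksfair{} \BiasedRandomOffer{} is the unbiased \RandomOffer{} ($\mixprob = 1/2$), then compute its GFT approximation ratio and let $\constantH \to \infty$. The paper's proof is terse and states these facts without derivation; you supply the missing detail — an explicit formula for $\OPTSB$, the check that $\sellerbenchmark = 1$ via the strictly increasing revenue curve, and the reduction of the \ksfairness{} condition to $\mixprob = 1-\mixprob$ — but the underlying argument is the same.
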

\begin{proof}
    In \Cref{example:BROM:regular}, the {\SecondBest} $\OPTSB = \expect[\val]{\val} = (1 - o_{}(1))\cdot \ln\constantH$. The {\ksfair} {\BiasedRandomOffer} $\mech$ is exactly the (unbiased) {\RandomOffer}. Consequently, the GFT of the {\BiasedRandomOffer} $\mech$ is $\GFT{\mech} = (\frac{1}{2} + o_{}(1))\cdot \ln\constantH$. This completes the proof as desired. 
\end{proof}

\Cref{lem:BROM:regular} shows that to obtain a larger fraction of GFT by a {\ksfair} mechanism we cannot use to the {\BiasedRandomOffer}, so next we consider alternative mechanisms. While \Cref{lem:optimal GFT upper bound:irregular} showed that the {\FixPrice} cannot obtain good GFT approximation for general buyer distributions, we move to study it when the distribution is regular. For the setting of \Cref{example:BROM:regular} it can be verified that there exists a particular price $\fprice\in(0, \constantH)$ such that the {\FixPrice} with trading price $\fprice$ is {\ksfair}. To see this, note that for this example, in the {\FixPrice}, as trading price $\price$ increases, the buyer's ex ante utility is continuous and strictly decreasing, while the seller's ex ante utility is continuous and strictly increasing up to the monopoly reserve $\optreserve = \constantH$ (since the buyer's valuation distribution is regular and thus revenue curve is concave). Moreover, at the two extreme cases where price $\price = 0, \constantH$, two traders' ex ante utilities become zero and their benchmarks $\sellerbenchmark, \buyerbenchmark$, respectively. Hence, the existence of {\ksfair} trading price $\fprice\in(0, \constantH)$ is guaranteed by the intermediate value theorem. Finally, for this example, by simple numerical calculation, it can be checked that the GFT approximation of the {\FixPrice} with trading price $\fprice$ is also large. For example, with $\constantH = 25$, the GFT approximation is $\fixedPriceGFTPercentageUBRegular$. (Analytical expression can be found in Eqn.~\eqref{eq:BROM:regular:fair price} and \eqref{eq:BROM:regular:fair price GFT approx}.) 
Also see \Cref{fig:BROM:regular} for an illustration.

As the main result of this subsection, we extend the above observation regarding \Cref{example:BROM:regular} to general zero-value seller instances where the buyer has a regular valuation distribution.\footnote{{We remind that when no assumption (such as regularity) is made for the buyer's valuation distribution, even if the seller is zero-value, any {\ksfair} {\FixPrice} can only achieve $\eps$ fraction of the {\SecondBest} $\OPTSB$ for every $\eps > 0$ (\Cref{lem:optimal GFT upper bound:irregular}).}} Although arguing the existence of such a {\ksfair} {\FixPrice} is relatively easy due to the intermediate value theorem, proving its almost-tight GFT approximation is \emph{technically highly non-trivial}. 

\begin{figure}
    \centering
    \begin{tikzpicture}[scale=1, transform shape]
\begin{axis}[
axis line style=gray,
axis lines=middle,
xlabel = $\quant$,
xtick={0, 0.04, 0.35168, 1},
ytick={0, 0.8767514, 1},
xticklabels={0, $\frac{1}{\constantH}$, $\fquant$, 1},
yticklabels={0, $0.877$, $1$},
xmin=0,xmax=1.1,ymin=-0.0,ymax=1.1,
width=0.8\textwidth,
height=0.5\textwidth,
samples=500]

\addplot[black!100!white, line width=0.5mm] (0, 0) -- (0.04, 1) -- (1, 0);

\addplot[domain=0:0.04, dashed, line width=0.5mm] (x, {0});
\addplot[domain=0.04:1, dashed, line width=0.5mm] (x, {(1 + 25 / 24 * (ln(x) + ln(25) - x + 0.04) - 25 / 24 * (1 - x)) / 3.3529956509});

\addplot[blue, line width=1.5mm] (0, 0) -- (0.04, 0.2982407);
\addplot[domain=0.04:0.35168, blue, line width=1.5mm] (x, {(1 + 25 / 24 * (ln(x) + ln(25) - x + 0.04)) / 3.3529956509});
\addplot[blue, line width=0.5mm] (0.35168, 0.8767514) -- (1, 0);

\addplot[red, line width=0.5mm] (0, 0) -- (0.04, 0.2982407);
\addplot[domain=0.04:1, red, line width=0.5mm] (x, {(1 + 25 / 24 * (ln(x) + ln(25) - x + 0.04)) / 3.3529956509});

\addplot[dotted, gray, line width=0.3mm] (0.35168, 0) -- (0.35168, 0.8767514) -- (0, 0.8767514);

\addplot[dotted, gray, line width=0.3mm] (1, 0) -- (1, 1) -- (0, 1);

\end{axis}

\end{tikzpicture}
    \caption{Graphical illustration of \Cref{example:BROM:regular} when $\constantH = 25$. The x-axis is quantile $\quant$. 
    The black solid line is the revenue curve of the buyer.
    Consider {\FixPrice} $\mech$ with trading price $\price = \val(\quant)$ for every quantile $\quant$. The black solid (resp.\ dashed) curve also represents   ${\sellerexanteutil(\mech)}/{\sellerbenchmark}$ (resp.\ ${\buyerexanteutil(\mech)}/{\buyerbenchmark}$) for the seller (resp.\ buyer). The red curve is the GFT approximation ratio ${\GFT{\mech}}/{\OPTSB}$. 
    A {\ksfair} {\FixPrice} is achieved at trading price $\fprice = \val(\fquant)$ with GFT approximation ratio of $0.877$ (when $\constantH = 25$).
    Finally, the blue curve is used in the proof of the negative result in \Cref{lem:GFT UB:regular buyer}.}
    \label{fig:BROM:regular}
\end{figure}

\begin{restatable}{theorem}{thmGFTregularbuyer}
\label{thm:improved GFT:regular buyer}
    For every bilateral trade instance of a zero-value seller and a buyer with a regular valuation distribution $\buyerdist$, there exists price $\fprice$ smaller than the smallest monopoly reserve $\optreserve$ for distribution $\buyerdist$, such that the {\FixPrice} with trading price $\fprice$ is {\ksfair} and its GFT is at least $\calC$ fraction of the {\SecondBest} $\OPTSB$. Here $\calC$ is the solution to minimization program~\ref{program:GFT:regular buyer},
    and we observe that 
    $\calC\geq \fixedPriceGFTPercentageRegular$ by a numerical computation.

    Moreover, there exists an instance of a zero-value seller and a  buyer with a regular valuation distribution, in which any BIC, IIR, ex ante WBB mechanism that is {\ksfair} has GFT that is less than  $\fixedPriceGFTPercentageUBRegular$ of the {\SecondBest} $\OPTSB$.
\end{restatable}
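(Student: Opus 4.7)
I would first establish existence via the intermediate value theorem. Let $\sigma(\price):=\sellerexanteutil(\FPM_\price)/\sellerbenchmark$ and $\beta(\price):=\buyerexanteutil(\FPM_\price)/\buyerbenchmark$. Since $\sellerbenchmark$ and $\buyerbenchmark$ are attained by the {\SellerOffer} (posting $\optreserve$) and the {\BuyerOffer} (posting $0$) respectively, we have $\sigma(0)=0$, $\sigma(\optreserve)=1$, $\beta(0)=1$, and $\beta(\optreserve)<1$. Regularity yields concavity of $\revcurve$ (\Cref{lem:concave revenue curve}), hence $\sigma$ is weakly increasing and continuous on $[0,\optreserve)$; $\beta$ is clearly continuous and strictly decreasing. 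The intermediate value theorem then produces $\fprice\in(0,\optreserve)$ with $\sigma(\fprice)=\beta(\fprice)$, i.e., $\FPM_{\fprice}$ is {\ksfair}.

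\textbf{Lower bound on GFT via revenue-curve reduction.} For the $\calC\geq\fixedPriceGFTPercentageRegular$ bound, the plan is to execute the three-step framework outlined in \Cref{subsec:intro:techniques}. (i) Parameterize the instance by a small set of scalars extracted from the revenue curve $\revcurve$: a reference price $\price$ with quantile $\quant=1-\buyercdf(\price)$, the monopoly quantile $\optquant$, and the monopoly revenue $\revcurve(\optquant)$. Using \Cref{prop:revenue equivalence}, rewrite $\sellerexanteutil(\FPM_\price)=\revcurve(\quant)$ as well as $\buyerexanteutil(\FPM_\price)$ and $\OPTSB=\expect[\val]{\val}$ as integrals involving $\revcurve$; by concavity, each is sandwiched between bounds depending only on the chosen scalars. (ii) Starting from $\price$, move toward the {\ksfair} price $\fprice$ — raising it if $\sigma(\price)<\beta(\price)$ and lowering it otherwise — and bound the resulting GFT change using the same scalars and concavity of $\revcurve$. (iii) Cast the resulting worst-case approximation as a two-player zero-sum game: an adversary selects a concave revenue curve consistent with the prescribed scalars, and we select $\price$. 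The game's value is the minimization program~\ref{program:GFT:regular buyer}, whose optimum I would then bound below by $\fixedPriceGFTPercentageRegular$ numerically.

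\textbf{Matching negative result.} For the $<\fixedPriceGFTPercentageUBRegular$ upper bound I would analyze \Cref{example:BROM:regular} at a suitable value of $\constantH$. Write $\alpha=\sellerexanteutil(\mech)/\sellerbenchmark=\buyerexanteutil(\mech)/\buyerbenchmark$ for any {\ksfair} BIC/IIR/ex-ante-WBB mechanism $\mech=(\alloc,\price,\sellerprice)$. Using \Cref{prop:revenue equivalence} together with ex-ante WBB, bound $\sellerexanteutil(\mech)\leq \expect[\val]{\virtualval(\val)\alloc(\val)}$ and $\expect[\val]{\price(\val)}\leq$ monopoly revenue; combine with BIC-monotonicity of $\alloc(\val)$ and the explicit virtual-value function of the instance (non-positive except at the atom $\constantH$) to bound $\GFT{\mech}=\buyerexanteutil(\mech)+\expect[\val]{\price(\val)}$. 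Optimizing over the feasible pairs $(\sellerexanteutil(\mech),\buyerexanteutil(\mech))$ subject to KS-fairness, I expect the optimum to be attained by the {\ksfair} {\FixPrice} at $\fprice$ computed explicitly for the example, yielding the ratio $\fixedPriceGFTPercentageUBRegular$.

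\textbf{Main obstacle.} The principal difficulty is the finite-dimensional reduction (Steps~(i)--(iii)) for the positive result. Unlike classical revenue-curve arguments that control a single quantity, here $\sellerexanteutil$, $\buyerexanteutil$, and $\OPTSB=\expect[\val]{\val}$ must be controlled \emph{simultaneously}, and the {\ksfair} trading price $\fprice$ is an implicit function of the entire curve. The key technical claim is that the adversary's worst concave revenue curve, among those consistent with a few prescribed scalars, has a very restricted form (e.g., piecewise linear with few breakpoints), so that the minimization program~\ref{program:GFT:regular buyer} becomes finite-dimensional and amenable to a certified numerical lower bound of $\fixedPriceGFTPercentageRegular$.
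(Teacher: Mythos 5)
Your proposal is correct and follows essentially the same approach as the paper: existence via the intermediate value theorem (which the paper folds into its case analysis inside \Cref{lem:GFT program:regular buyer}), the positive $\fixedPriceGFTPercentageRegular$ bound via the three-step revenue-curve reduction culminating in the finite-dimensional zero-sum game and program~\ref{program:GFT:regular buyer}, and the negative $\fixedPriceGFTPercentageUBRegular$ bound by analyzing \Cref{example:BROM:regular} (the paper takes $\constantH=25$) and using the identity $\GFT{\mech}=\buyerexanteutil(\mech)+\expect[\val]{\price(\val)}$ together with the constant virtual value on $[0,\constantH)$ to show a fixed price maximizes the relaxed objective. One small structural difference: the paper relaxes the negative-result bound by dropping the KS-fairness constraint but retaining a $\min\{\revratio,\residuesurplusratio\}$ term (justified by WBB implying $\revratio\geq\residuesurplusratio$ for KS-fair mechanisms), whereas you phrase the optimization as being over KS-fair pairs directly; the paper's relaxation is what makes the reduction to fixed prices go through cleanly, and you would likely arrive at the same formulation when filling in details.
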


\Cref{thm:improved GFT:regular buyer} follows directly from two technical lemmas (\Cref{lem:GFT program:regular buyer,lem:GFT UB:regular buyer}) which establish the positive and negative results in the theorem statement respectively. The technical overview of our argument are given under each lemma.

\subsubsection{GFT Approximation by KS-Fair {\FixPrice}}
We characterize the GFT approximation of a {\ksfair} {\FixPrice} as follows.
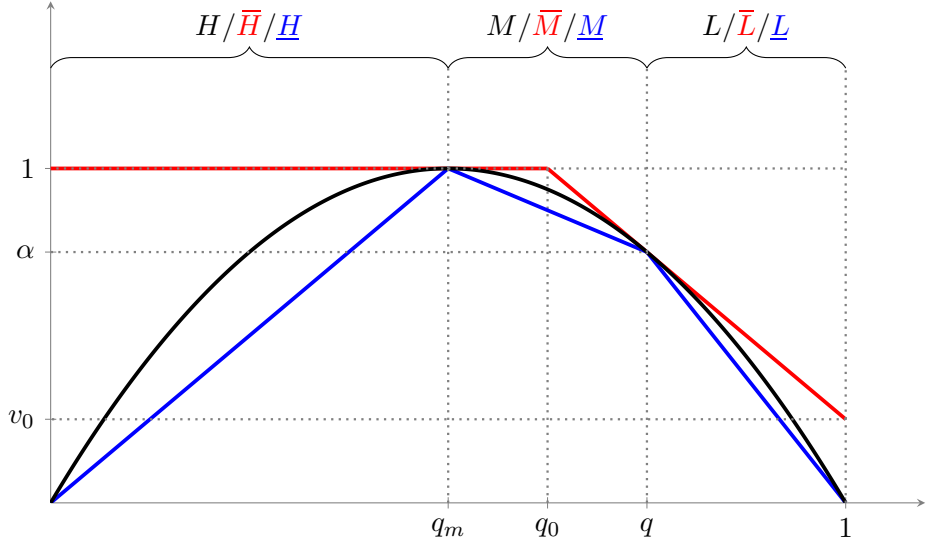
\begin{figure}
    \centering
    \begin{tikzpicture}[scale=1, transform shape]
\begin{axis}[
axis line style=gray,
axis lines=middle,
xtick={0, 0.5, 0.625, 0.75, 1},
ytick={0, 0.25, 0.75, 1},
xticklabels={0, $\optquant$, $\quant_0$, $\quant$,  1},
yticklabels={0, $\val_0$, $\alpha$, $1$},
xmin=0,xmax=1.1,ymin=-0.0,ymax=1.5,
width=0.8\textwidth,
height=0.5\textwidth,
samples=500]


\addplot[domain=0:0.625, red, line width=0.5mm] (x, {1});
\addplot[domain=0.625:1, red, line width=0.5mm] (x, {-2*(x-0.75)+0.75});
\addplot[dotted, gray, line width=0.3mm] (0.625, 0) -- (0.625, 1);

\addplot[line width=0.5mm, blue] (0, 0) -- (0.5, 1) -- (0.75, 0.75) -- (1, 0);

\addplot[domain=0:1, black!100!white, line width=0.5mm] (x, {x * (1-x) / 0.25}); 

\addplot[dotted, gray, line width=0.3mm] (1, 0) -- (1, 0.25) -- (0, 0.25);

\addplot[dotted, gray, line width=0.3mm] (0.5, 0) -- (0.5, 1) -- (0, 1);
\addplot[dotted, gray, line width=0.3mm] (0.5, 1) -- (1, 1);

\addplot[dotted, gray, line width=0.3mm] (0.75, 0) -- (0.75, 0.75) -- (0, 0.75);





\draw[decorate,decoration={brace,amplitude=8pt}] (0.0,1.3) -- (0.5,1.3) node[midway,above=6pt] {$H$/{{\color{red}$\HOverBar$}}/{{\color{blue}$\HUnderBar$}}};
\draw[decorate,decoration={brace,amplitude=8pt}] (0.5,1.3) -- (0.75,1.3) node[midway,above=6pt] {$M$/{{\color{red}$\MOverBar$}}/{{\color{blue}$\MUnderBar$}}};
\draw[decorate,decoration={brace,amplitude=8pt}] (0.75,1.3) -- (1,1.3) node[midway,above=6pt] {$L$/{{\color{red}$\LOverBar$}}/{{\color{blue}$\LUnderBar$}}};
\addplot[dotted, gray, line width=0.3mm] (0.5, 1) -- (0.5, 1.3);
\addplot[dotted, gray, line width=0.3mm] (0.75, 0.75) -- (0.75, 1.3);
\addplot[dotted, gray, line width=0.3mm] (1, 0.25) -- (1, 1.3);

\end{axis}

\end{tikzpicture}
    \caption{Graphical illustration of the analysis for \Cref{lem:GFT program:regular buyer}. The black curve is the concave revenue curve $\revcurve$ of the buyer. The red and blue revenue curves $\revcurve_1, \revcurve_2$ (defined in the analysis) sandwich the original revenue curve $\revcurve$.}
    \label{fig:GFT program:regular buyer}
\end{figure}
\begin{lemma}
\label{lem:GFT program:regular buyer}
    For every bilateral trade instance of a zero-value seller and buyer with regular valuation distribution $\buyerdist$, there exists price $\fprice$ smaller than the smallest monopoly reserve $\optreserve$ for distribution $\buyerdist$, such that the {\FixPrice} with trading price $\fprice$ is {\ksfair} and its GFT is at least $\calC$ fraction of the {\SecondBest} $\OPTSB$. Here $\calC$ is the solution to the following program~\ref{program:GFT:regular buyer}:
    \begin{align}
    \label{program:GFT:regular buyer}
    \tag{$\mathcal{P}_{\mathrm{REG}}$}
    \arraycolsep=5.4pt\def\arraystretch{1}
        \begin{array}{llll}
          \calC \triangleq~~&\min\limits_{\optquant, H}
          ~\max\limits_{\revratio}
          ~\min\limits_{\substack{\quant, \val_0, M, L}}   & 
          \displaystyle\exanteutilratio + \frac{\exanteutilratio}{H + M + L} &
          \vspace{10pt}
          \\
          \vspace{10pt}
          &\text{s.t.}
          & \optquant\in[0, 1]~,  & 
          \\
          \vspace{10pt}
          && \revratio\in(0, 1)~,  & 
          \\ 
          \vspace{10pt}
          && \quant\in \left[\optquant + (1 - \revratio)(1 - \optquant), 1\right]~,  & 
          \\
          \vspace{10pt}
          && \val_0\in \left[0, 1 - \displaystyle\frac{1 - \revratio}{\quant - \optquant}(1 - \optquant)\right]~,  & 
          \\
          \vspace{10pt}
          && H\in\left[\HUnderBar, \HOverBar\right]~, 
          M\in\left[\MUnderBar, \MOverBar\right]~,
          L\in\left[\LUnderBar, \LOverBar\right]~,& 
        \end{array}
    \end{align}
    where $\exanteutilratio$, $\quant_0$, $\HUnderBar, \HOverBar, \MUnderBar, \MOverBar, \LUnderBar, \LOverBar$ are auxiliary variables constructed as 
    \begin{align*}
        \exanteutilratio &\triangleq  \displaystyle
        \revratio - \plus{\frac{H + M + L}{H + M + L + 1}\left(\revratio - \frac{H + M - \revratio}{H + M + L}\right)}~,
        \\
        \quant_0 &\triangleq 1 - \frac{(1 - \val_0)(1 - q)}{\revratio - \val_0}~,
        \\
        \HUnderBar &\triangleq 1~, \HOverBar \triangleq \infty~, 
        \\
        \MUnderBar &\triangleq \ln\left(\frac{q}{\optquant}\right)\left(1 + \frac{1 - \revratio}{q - \optquant}\optquant\right) - 1 + \revratio ~,
        \\
        \MOverBar &\triangleq  \ln\left(\frac{\quant_0}{\optquant}\right) 
        +
        \ln\left(\frac{q}{\quant_0}\right)\left(\val_0 + \frac{\revratio - \val_0}{1 - q}\right) 
        -
        \frac{q - \quant_0}{1 - q}(\revratio - \val_0)~,
        \\
        \LUnderBar &\triangleq \frac{\revratio}{1 - \quant}\ln\left(\frac{1}{\quant}\right)~,
        \LOverBar \triangleq \ln\left(\frac{1}{q}\right)\left(\val_0 + \frac{\revratio - \val_0}{1 - q}\right) - \revratio + \val_0~.
    \end{align*}
    By a numerical computation (see \Cref{apx:numerical evaluation:regular buyer}), we observe that 
    $\calC\geq \fixedPriceGFTPercentageRegular$.
\end{lemma}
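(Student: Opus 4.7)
I would first establish existence of a {\ksfair} price $\fprice < \optreserve$ via the intermediate value theorem applied to the function $g(\price) := \sellerexanteutil/\sellerbenchmark - \buyerexanteutil/\buyerbenchmark$ evaluated at the {\FixPrice} with trading price $\price$. At $\price = 0$ the seller earns nothing while the buyer attains $\buyerbenchmark$, so $g(0) = -1$; at $\price = \optreserve$ the seller attains $\sellerbenchmark$ and the buyer earns at most $\buyerbenchmark$, so $g(\optreserve) \geq 0$. Both ex-ante utilities are continuous in $\price$ on $[0, \optreserve]$ (by our no-atom assumptions), hence $g$ vanishes at some $\fprice \in [0, \optreserve]$. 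A non-degeneracy argument based on the strict concavity of the revenue curve near its peak then forces $\fprice$ to lie strictly below the smallest monopoly reserve.

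\textbf{Parameterization.} After rescaling so that $\sellerbenchmark = \revcurve(\optquant) = 1$, I parameterize candidate prices by their quantile $\quant \geq \optquant$, and let $\revratio := \revcurve(\quant)$ denote the seller's revenue at that price. Writing $\val(t) := \revcurve(t)/t$ for the buyer's value at quantile $t$, I decompose $\OPTSB = \int_0^1 \val(t)\,dt = H + M + L$, where $H, M, L$ are the integrals of $\val(t)$ on $[0,\optquant]$, $[\optquant, \quant]$, and $[\quant, 1]$ respectively. At the fixed price $\val(\quant)$: the seller's utility is $\revratio$; the buyer's utility is $\int_0^{\quant}(\val(t) - \val(\quant))\,dt = (H+M) - \revratio$; and the total GFT is $H+M$. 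The ex-ante utility ratios are therefore $\sellerexanteutil/\sellerbenchmark = \revratio$ and $\buyerexanteutil/\buyerbenchmark = (H+M-\revratio)/(H+M+L)$.

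\textbf{Two-case adjustment to reach KS-fairness.} With $(\optquant, H)$ fixed (the adversary's first move), I pick $\revratio$ (the max-player's move), and then split into two cases according to which trader is below the {\ksfair} ratio. In Case~1, when $\revratio \leq (H+M)/(H+M+L+1)$, the seller is weakly below; I \emph{raise} the price (i.e., decrease $\quant$ toward $\optquant$) until the two ratios equalize. Since $\revcurve$ is non-increasing on $[\optquant, 1]$, the seller's ratio $\exanteutilratio$ at the resulting KS-fair price satisfies $\exanteutilratio \geq \revratio$. In Case~2, when $\revratio > (H+M)/(H+M+L+1)$, the seller is above; I \emph{lower} the price (increase $\quant$), which only adds trades and therefore weakly increases the GFT. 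At the resulting KS-fair price this gives $\exanteutilratio(1 + H + M + L) \geq H + M$, so $\exanteutilratio \geq (H+M)/(H+M+L+1)$. A routine calculation shows that these two bounds together match the piecewise definition of $\exanteutilratio$ in program~\ref{program:GFT:regular buyer}. Since both ratios equal $\exanteutilratio$ at the KS-fair price, the GFT approximation ratio is $\exanteutilratio(1+H+M+L)/(H+M+L) = \exanteutilratio + \exanteutilratio/(H+M+L)$, exactly the program's objective.

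\textbf{Encoding regularity and the min--max--min reduction.} It remains to translate concavity of $\revcurve$ (equivalent to regularity) into the constraints on $(H,M,L)$ given the parameters $(\optquant, \revratio, \quant, \val_0)$. For this I sandwich $\revcurve$ between two piecewise-linear concave envelopes passing through $(\optquant, 1)$ and $(\quant, \revratio)$, as illustrated in \Cref{fig:GFT program:regular buyer}: a red upper envelope that is flat at $1$ on $[0, \quant_0]$ and linear thereafter, and a blue lower envelope that is the broken line through $(0,0), (\optquant, 1), (\quant, \revratio), (1, 0)$. The auxiliary parameter $\val_0$ (with the derived quantity $\quant_0$ as defined in the program) pins down the kink of the upper envelope on $[\quant, 1]$. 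Integrating $\val(t) = \revcurve(t)/t$ against these envelopes over the intervals $[0, \optquant]$, $[\optquant, \quant]$, and $[\quant, 1]$ yields precisely the bounds $\HUnderBar, \HOverBar, \MUnderBar, \MOverBar, \LUnderBar, \LOverBar$ listed in the program. The adversary minimizes first over $(\optquant, H)$, then the max player picks $\revratio$, and finally the adversary picks $(\quant, \val_0, M, L)$, producing the min--max--min program $\mathcal{P}_{\mathrm{REG}}$. The hardest step is the envelope construction and the accompanying integral bookkeeping, since the GFT approximation depends on the \emph{entire} revenue curve rather than a small set of moments; everything else is manipulation. The final numerical evaluation (deferred to \Cref{apx:numerical evaluation:regular buyer}) certifies $\calC \geq \fixedPriceGFTPercentageRegular$.
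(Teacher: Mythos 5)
Your proposal follows essentially the same three-step revenue-curve-reduction approach as the paper: fix $(\optquant, H)$, choose $\revratio$, analyze the {\FixPrice} at the quantile where $\revcurve = \revratio$, then move the price to the {\ksfair} one while tracking the loss, and finally sandwich $\revcurve$ between the red flat-then-linear upper envelope and the blue broken-line lower envelope (invoking the monotonicity of truncated means in the revenue curve, the paper's \Cref{lem:revenue curve monotonicity}) to obtain the constraints on $H, M, L$. Your Case 2 is a small simplification of the paper's: rather than constructing an intermediate price $\price\primed$ with seller revenue $\revratio-\Delta$ and then possibly re-invoking Case 1, you observe directly that lowering the price weakly increases GFT, so the SBB identity at the {\ksfair} price forces $\exanteutilratio(1+H+M+L)\geq H+M$, which is the same bound. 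One minor imprecision: the kink $\quant_0$ of the red upper envelope lies in $[\optquant,\quant]$ (it is where the tangent at $\quant$ reaches height~$1$), not in $[\quant,1]$ as you state, but this does not affect the argument.
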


\xhdr{Proof overview of \Cref{lem:GFT program:regular buyer}.} We first explain the idea of proving the lemma above and the optimization program in its statement. As we explained before, to obtain an GFT approximation better than $\frac{1}{2}$, we cannot compare the GFT with $\sellerbenchmark + \buyerbenchmark$, since it may be twice larger than the {\SecondBest}.\footnote{Since the seller has zero value, the first and {\SecondBest}s both equal to the expected value of the buyer, i.e., $\OPTFB = \OPTSB = \expect[\val]{\val}$.} Hence, unlike the proof of \Cref{thm:blackbox reduction} in the previous section, our analysis of \Cref{lem:GFT program:regular buyer} aims to directly compare the GFT of a {\ksfair} {\FixPrice} with the {\SecondBest}. Loosely speaking, {for a given regular distribution} we establish a lower bound on the GFT approximation. We then search over all regular distributions and argue that the worst regular distribution that minimizes the established GFT lower bound belongs to a distribution subclass, whose revenue curve can be characterized by finite many parameters (i.e., variables in the optimization program in the statement of \Cref{lem:GFT program:regular buyer}, also see \Cref{fig:GFT program:regular buyer}). For this distribution subclass, we lower bound its GFT approximation as program~\ref{program:GFT:regular buyer}.

Reducing the class of regular distributions to a subclass with finitely many parameters is a common technique for deriving approximation bounds in the mechanism design literature \citep[e.g.,][]{AHNPY-18, JLQTX-19, FHL-21, JL-23}. However, most prior work focuses on revenue approximation and allows consideration of a class of mechanisms (e.g., anonymous pricing with all possible prices). This makes it relatively easy to argue that the approximation ratio depends on a \emph{small set} of parameters (e.g., monopoly quantile, monopoly price) of the revenue curve. In contrast, our analysis requires studying both revenue and GFT (expected value). Furthermore, we focus on a single mechanism ({\ksfair} {\FixPrice}). Hence, both the GFT benchmark and the GFT of our mechanism are highly sensitive to the \emph{entire} revenue curve.

To overcome this challenge, our technical lemma (\Cref{lem:GFT program:regular buyer}) develops a three-step argument. We first consider a (possibly not {\ksfair}) {\FixPrice} whose trading price $\price$ is selected based on a \emph{small set} of parameters of the revenue curve. In particular, depending on the monopoly quantile $\optquant$ and the expected value $H = \expect[\val]{\val\cdot \indicator{\val \geq \optreserve}}$ above the monopoly reserve $\optreserve$, we select a constant $\revratio\in(0, 1)$ and let $\price$ be the trading price smaller than the monopoly reserve such that by posting price $\price$, the seller's ex ante utility (revenue) is $\revratio$ fraction of her benchmark (monopoly revenue) $\sellerbenchmark$, i.e., $\price\cdot(1 - \buyercdf(\price)) = \revratio\cdot \sellerbenchmark$. Our second step is conceptually similar to our previous black-box reduction (\Cref{thm:blackbox reduction}). In particular, we argue that depending on buyer's ex ante utility under trading price $\price$, we can decrease (or increase) this price to obtain a {\ksfair} trading price $\fprice$. This step allows us to lower bound the GFT approximation of a {\ksfair} {\FixPrice} mechanism with trading price $\fprice$ as a \emph{function of the price $\price$ chosen in the first step}. Finally, since this GFT lower bound depends on the choice of $\price$ determined in the first step, we formulate the problem of selecting the optimal price $\price$ (to maximize the GFT approximation lower bound) as a two-player zero-sum game. In this game, the ``min player'' (adversary) selects the revenue curve (represented by finitely many quantities on the revenue curve), while the ``max player'' (ourselves) selects the price $\price$. The game's payoff is the GFT approximation lower bound derived in the second step. Finally, we numerically solve this two-player zero-sum game and obtain $\fixedPriceGFTPercentageRegular$ stated in the lemma.

\begin{proof}[Proof of \Cref{lem:GFT program:regular buyer}]
    Let $\revcurve$ be the revenue curve of the buyer and $\optquant\in \argmax_{\quant}\revcurve(\quant)$ be the monopoly quantile. If there exist multiple monopoly quantiles, we let $\optquant$ be the largest one. Without loss of generality, we normalize the monopoly revenue to be equal to one, i.e., $\revcurve(\optquant) = 1$ and thus monopoly reserve $\optreserve = \frac{1}{\optquant}$. Since the buyer's valuation distribution is regular, revenue curve $\revcurve$ is concave (\Cref{lem:concave revenue curve}).
    
    \xhdr{Step 0- Introducing necessary notations.} We introduce $\revratio\in(0, 1)$ as a constant whose value will be pined down at the end of this analysis. Given constant $\revratio$, let $\quant\in[\optquant, 1]$ be the largest quantile such that $\revcurve(\quant) \geq \revratio$. Moreover, let $\val_0 \triangleq \revcurve(\quant) + (1-\quant)\cdot \revcurve'(\quant)$ and $\quant_0 \triangleq \quant + (1 - \revratio) / \revcurve'(\quant)$. Finally, we also define 
    \begin{align*}
        H &\triangleq \expect[\val]{\val\cdot \indicator{\val \geq \optreserve}},
        \;\;
        M \triangleq \expect[\val]{\val\cdot \indicator{\revratio/\quant \leq \val < \optreserve}},
        \;\;
        L \triangleq \expect[\val]{\val\cdot \indicator{ \val < \revratio/\quant}}
    \end{align*}
    which partition the {\SecondBest} into three pieces, i.e., $\OPTSB = \expect[\val]{\val} = H + M + L$. All notations are illustrated in \Cref{fig:GFT program:regular buyer}.

    \xhdr{Step 1- Characterizing of a (possibly) not {\ksfair} {\FixPrice}.}
    First, we consider a (possibly not {\ksfair}) {\FixPrice} ({$\FPM_{\price}$}) with trading price $\price = \frac{\revratio}{\quant}$:
    \begin{itemize}
        \item For the seller, her ex ante utility (aka., revenue) is $\sellerexanteutil(\FPM_{\price}) = \price\cdot (1 - \buyercdf(\price)) = \price\cdot \quant = \revratio$. Since her benchmark $\sellerbenchmark$ (aka., monopoly revenue) is normalized to one, her ex ante utility is an $\alpha$ fraction of her benchmark $\sellerbenchmark$.
        \item For the buyer, her ex ante utility can be computed as 
        \begin{align*}
            \buyerexanteutil(\FPM_{\price}) &= 
            \expect[\val]{\plus{\val - \price}} = 
            H + M - \revratio
        \end{align*}
        Meanwhile, the buyer's benchmark $\buyerbenchmark$ is 
        \begin{align*}
            \buyerbenchmark = H + M + L
        \end{align*}
    \end{itemize}
    Putting the two pieces together, we conclude that in this (possibly not {\ksfair}) {\FixPrice} ({$\FPM_{\price}$}), both traders' ex ante utilities satisfy
    \begin{align*}
        \frac{\sellerexanteutil(\FPM_{\price})}{\sellerbenchmark} = \revratio
        \;\;
        \mbox{and}
        \;\;
        \frac{\buyerexanteutil(\FPM_{\price})}{\buyerbenchmark} = \frac{H + M - \revratio}{H + M + L}
    \end{align*}
    
    \xhdr{Step 2- Characterizing of {\ksfair} {\FixPrice}.} Define auxiliary notation $\exanteutilratio\in[0, 1]$ as 
    \begin{align*}
        \exanteutilratio \triangleq \displaystyle
        \revratio - \plus{\frac{H + M + L}{H + M + L + 1}\left(\revratio - \frac{H + M - \revratio}{H + M + L}\right)}
    \end{align*}
    We next show that there exists price $\fprice\in[0,\optreserve]$ such that the {\FixPrice} with trading price $\fprice$ is {\ksfair} and both traders' ex ante utilities are at least $\exanteutilratio$ fraction of their benchmarks $\sellerbenchmark,\buyerbenchmark$, respectively. To see this, consider the following two cases separately.
    \begin{itemize}
        \item Suppose that $\revratio < \frac{H + M - \revratio}{H + M + L}$ and thus $\exanteutilratio = \revratio$. In this case, by increasing trading price $\price$ in the {\FixPrice}, the seller's ex ante utility increases continuously (due to the concavity of revenue curve $\revcurve$) and the buyer's ex ante utility decreases continuously. Invoking the intermediate value theorem, there exists price $\fprice \in (\price, \optreserve)$ such that both traders' ex ante utilities is at least $\exanteutilratio$ fraction of their benchmarks $\sellerbenchmark,\buyerbenchmark$, respectively.
        \item Suppose that $\revratio \geq \frac{H + M - \revratio}{H + M + L}$ and thus $\exanteutilratio = \revratio - {\frac{H + M + L}{H + M + L + 1}\left(\revratio - \frac{H + M - \revratio}{H + M + L}\right)}$. In this case, let $\Delta \triangleq {\frac{H + M + L}{H + M + L + 1}\left(\revratio - \frac{H + M - \revratio}{H + M + L}\right)}\geq 0$. By decreases trading price $\price$ in the {\FixPrice}, the seller's ex ante utility decreases continuously (due to the concavity of revenue curve $\revcurve$). Let $\price\primed < \price$ be the trading price such that the seller's ex ante   utility (aka., revenue) is equal to $\revratio - \Delta$. Under trading price $\price\primed$, the buyer's ex ante utility is at least $H + M - \revratio + \Delta$. (To see this, note that by decreasing trading price from $\price$ to $\price\primed$, the GFT weakly increases and the seller's ex ante utility decreases by $\Delta$. Thus, the buyer's ex ante utility increases by at least $\Delta$.) Due to the definition of $\Delta$, two traders' ex ante utilities in the {\FixPrice} ($\FPM_{\price\primed}$) with trading price $\price\primed$ satisfy
        \begin{align*}
            \frac{\sellerexanteutil(\FPM_{\price\primed})}{\sellerbenchmark} = \revratio - \Delta = \frac{H + M - \revratio + \Delta}{H + M + L}
            \leq 
            \frac{\buyerexanteutil(\FPM_{\price\primed})}{\buyerbenchmark}
        \end{align*}
        If the inequality above holds with equality, the {\FixPrice} with trading price $\price\primed$ is {\ksfair} and both traders' ex ante utilities are at least $\exanteutilratio$ fraction of their benchmarks $\sellerbenchmark,\buyerbenchmark$, respectively. Otherwise, we can invoke argument in the previous case.
    \end{itemize}
    Summarizing the analysis above, the GFT-approximation of the {\ksfair} {\FixPrice} ($\FPM_{\fprice}$) with trading price $\fprice$ can be computed as 
    \begin{align*}
        \frac{\GFT{\FPM_{\fprice}}}{\OPTSB}
        \geq
        \frac{\exanteutilratio \cdot (\sellerbenchmark + \buyerbenchmark)}{\expect[\val]{\val}}
        =
        \exanteutilratio + \frac{\exanteutilratio}{H + M + L}
    \end{align*}

    \xhdr{Step 3- Formulating GFT approximation as two-player game.} Putting all the pieces together, the optimization program in the lemma statement can be viewed as a two-player zero-sum game between a min player (adversary) and a max player (ourself as GFT-approximation prover). The payoff in this game is the GFT approximation lower bound $\exanteutilratio + \frac{\exanteutilratio}{H + M + L}$ shown above. As a reminder, quantities $\exanteutilratio, H, M, L$ depend on both the buyer's valuation distribution and constant $\revratio$ used in the analysis. The min player chooses the worst regular distribution (equivalently, concave revenue curve) of the buyer, and the max player chooses constant $\revratio$. Importantly, the choice of constant $\revratio$ can depend on the buyer's valuation distribution. To capture this, we formulate this two-player zero-sum game in three stages: 
    \begin{itemize}
        \item (Stage 1) The min player (adversary) chooses monopoly quantile $\optquant$ and $H$.
        \item (Stage 2) The max player (ourself as GFT-approximation prover) chooses constant $\revratio$ for the analysis.
        \item (Stage 3) The min player (adversary) chooses $\val_0, \quant, M, L$.
    \end{itemize}
    It remains to verify that all constraints in the optimization program capture the feasibility condition for the both min player and max player's actions. We next verify the non-trivial constraints individually. 
    \begin{itemize}
        \item (Lower bound for quantile $\quant$) Recall that $\quant\in[\optquant, 1]$ is the largest quantile such that $\revcurve(\quant) \geq \revratio$. 
        The concavity of revenue curve $\revcurve$ implies that $\revcurve(\quant) \geq \frac{1-\quant}{1-\optquant}\revcurve(\optquant)$. Since $\revcurve(\quant) = \revratio$ and $\revcurve(\optquant) = 1$, we obtain $\quant\geq \optquant + (1 - \revratio)(1 - \optquant)$ as stated in the constraint.
        \item (Upper bound for value $\val_0$) Recall that $\val_0 \triangleq \revcurve(\quant) + (1-\quant)\cdot \revcurve'(\quant)$. The concavity of revenue curve $\revcurve$ implies that $\revcurve'(\quant) \leq \frac{\revcurve(\quant) - \revcurve(\optquant)}{\quant - \optquant} = \frac{\revratio - 1}{\quant - \optquant}$. After rearranging, we obtain $\val_0 \leq 1 - \frac{1 - \revratio}{\quant - \optquant}(1 - \optquant)$ as stated in the constraint.
        \item (Bounds for truncated GFT $L$) Recall that $L \triangleq \expect[\val]{\val\cdot \indicator{ \val < \revratio/\quant}}$. The concavity of the revenue curve implies that for every quantile $\quant\primed \in [\quant, 1]$,
        \begin{align*}
            \frac{1 - \quant\primed}{1 - \quant}\cdot \revcurve(\quant)
            \leq 
            \revcurve(\quant\primed) 
            \leq
            \frac{1 - \quant\primed}{1 - \quant}
            \cdot (\revcurve(\quant) - \val_0)
            +
            \val_0
        \end{align*}
        where both inequalities bind at $\quant\primed = \quant$. The left-hand side and right-hand side can be viewed as two revenue curves $\revcurve_1, \revcurve_2$ that sandwich the original revenue curve $\revcurve$ (see blue and red revenue curves illustrated in \Cref{fig:GFT program:regular buyer}). It can be verified that $\LUnderBar$ and $\LOverBar$ are $\expect[\val]{\val\cdot \indicator{ \val < \revratio/\quant}}$ where the random value $\val$ is realized from valuation distribution induced by those two revenue curves $\revcurve_1, \revcurve_2$, respectively. Invoking \Cref{lem:revenue curve monotonicity}, we obtain $\LUnderBar\leq L \leq \LOverBar$ as stated in the constraint, since these two revenue curves $\revcurve_1, \revcurve_2$ sandwich the original revenue curve $\revcurve$.
        \item (Bounds for truncated GFT $M$) The argument is similar to the argument above for truncated GFT $L$. Recall that $M \triangleq \expect[\val]{\val\cdot \indicator{ \revratio/\quant \leq \val <\optreserve }}$. The concavity of the revenue curve implies that for every quantile $\quant\primed \in [\optquant, \quant]$,
        \begin{align*}
            \frac{\quant\primed - \optquant}{\quant - \optquant}\cdot (\revcurve(\optquant) - \revcurve(\quant)) + \revcurve(\quant)
            \leq 
            \revcurve(\quant\primed) 
            \leq
            \min\left\{ \revcurve(\optquant), 
            \frac{\quant\primed - \optquant}{\quant - \quant_0}\cdot (\revcurve(\optquant) - \revcurve(\quant)) + \revcurve(\quant)
            \right\}
        \end{align*}
        where both inequalities bind at $\quant\primed = \optquant$ and $\quant\primed = \quant$. The left-hand side and right-hand side can be viewed as two revenue curves $\revcurve_1, \revcurve_2$ that sandwich the original revenue curve $\revcurve$ (see blue and red revenue curves illustrated in \Cref{fig:GFT program:regular buyer}). It can be verified that $\MUnderBar$ and $\MOverBar$ are $\expect[\val]{\val\cdot \indicator{ \revratio/\quant \leq \val <\optreserve }}$ where the random value $\val$ is realized from valuation distribution induced by those two revenue curves $\revcurve_1, \revcurve_2$, respectively. Invoking \Cref{lem:revenue curve monotonicity}, we obtain $\MUnderBar\leq M \leq \MOverBar$ as stated in the constraint, since these two revenue curves $\revcurve_1, \revcurve_2$ sandwich the original revenue curve $\revcurve$.
        \item (Bounds for truncated GFT $H$) The argument is similar to the argument above for truncated GFT $L$. Recall that $H \triangleq \expect[\val]{\val\cdot \indicator{\val \geq \optreserve}}$. The concavity of the revenue curve and monopoly revenue $\revcurve(\optreserve) = 1$ imply that for every quantile $\quant\primed \in [0, \optquant]$,
        \begin{align*}
            \frac{\quant\primed}{\optquant}\cdot \revcurve(\optquant)
            \leq 
            \revcurve(\quant\primed) 
            \leq
             \revcurve(\optquant)
        \end{align*}
        where both inequalities bind at $\quant\primed = \optquant$. The left-hand side and right-hand side can be viewed as two revenue curves $\revcurve_1, \revcurve_2$ that sandwich the original revenue curve $\revcurve$ (see blue and red revenue curves illustrated in \Cref{fig:GFT program:regular buyer}). It can be verified that $\HUnderBar$ and $\HOverBar$ are $\expect[\val]{\val\cdot \indicator{  \val \geq\optreserve }}$ where the random value $\val$ is realized from valuation distribution induced by those two revenue curves $\revcurve_1, \revcurve_2$, respectively. Invoking \Cref{lem:revenue curve monotonicity}, we obtain $\HUnderBar\leq H \leq \HOverBar$ as stated in the constraint, since these two revenue curves $\revcurve_1, \revcurve_2$ sandwich the original revenue curve~$\revcurve$.
        \item (Equation for quantile $\quant_0$) Recall that  $\quant_0 \triangleq \quant + (1 - \revratio) / \revcurve'(\quant)$ and $\val_0 \triangleq \revcurve(\quant) + (1-\quant)\cdot \revcurve'(\quant)$. Combining both equations with $\revcurve(\quant) = \revratio$, we obtain $\quant_0 = 1 - \frac{(1 - \val_0)(1 - q)}{\revratio - \val_0}$ as stated in the optimization program.
    \end{itemize}
    Finally, we numerically evaluation the optimization program and obtain $\fixedPriceGFTPercentageRegular$. We present more details of this numerical evaluation in \Cref{apx:numerical evaluation:regular buyer}. This completes the proof of \Cref{lem:GFT program:regular buyer}.
\end{proof}

\begin{lemma}
\label{lem:revenue curve monotonicity}
    Given any two distributions $\buyerdist_1,\buyerdist_2$ and any two value $\val\primed, \val\doubleprimed$ with $\val\primed \leq \val\doubleprimed$. 
    Suppose $1-\buyercdf_1(\val\primed) = 1-\buyercdf_2(\val\primed)\triangleq \quant\primed$ and $1-\buyercdf_1(\val\doubleprimed) = 1-\buyercdf_2(\val\doubleprimed)\triangleq\quant\doubleprimed$. If the induced revenue curves $\revcurve_1,\revcurve_2$ satisfy that for every quantile $\quant \in[\quant\doubleprimed,\quant\primed]$,
    $\revcurve_1(\quant) \leq \revcurve_2(\quant)$,
    then 
    \begin{align*}
        \expect[\val\sim\buyerdist_1]{\val\cdot \indicator{\val\primed \leq \val\leq \val\doubleprimed}}
        \leq 
        \expect[\val\sim\buyerdist_2]{\val\cdot \indicator{\val\primed \leq \val\leq \val\doubleprimed}}
    \end{align*}
\end{lemma}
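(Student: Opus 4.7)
The plan is to change variables from value to quantile, reducing both expectations to one-dimensional integrals against a common measure, and then exploiting the pointwise hypothesis on the revenue curves.

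First, I would introduce the quantile (inverse-CDF) functions $\val_i(\quant) \triangleq \sup\{\val : \buyercdf_i(\val) \leq 1-\quant\}$ for $i=1,2$, so that by definition $\revcurve_i(\quant) = \quant \cdot \val_i(\quant)$. The hypotheses $1-\buyercdf_1(\val\primed)=1-\buyercdf_2(\val\primed)=\quant\primed$ and $1-\buyercdf_1(\val\doubleprimed)=1-\buyercdf_2(\val\doubleprimed)=\quant\doubleprimed$ tell us that $\val_i(\quant\primed)=\val\primed$ and $\val_i(\quant\doubleprimed)=\val\doubleprimed$ for both $i$; because each $\val_i(\cdot)$ is non-increasing, the event $\{\val\primed \leq \val \leq \val\doubleprimed\}$ corresponds in quantile space to $\{\quant\doubleprimed \leq \quant \leq \quant\primed\}$.

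Next, I would apply the standard change of variable $\val = \val_i(\quant)$ (equivalently, Lebesgue measure on $[0,1]$ is the push-forward of $\buyercdf_i$ under $\val \mapsto 1-\buyercdf_i(\val)$) to rewrite the truncated expectation as
\begin{align*}
\expect[\val\sim\buyerdist_i]{\val\cdot \indicator{\val\primed \leq \val\leq \val\doubleprimed}}
\;=\; \int_{\quant\doubleprimed}^{\quant\primed} \val_i(\quant)\,\dd \quant
\;=\; \int_{\quant\doubleprimed}^{\quant\primed} \frac{\revcurve_i(\quant)}{\quant}\,\dd \quant,
\end{align*}
for $i=1,2$, using $\val_i(\quant)=\revcurve_i(\quant)/\quant$ for $\quant>0$. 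The matching-quantile assumption at the endpoints ensures that any atoms of $\buyerdist_1$ or $\buyerdist_2$ at $\val\primed$ or $\val\doubleprimed$ contribute identically to both sides of the identity, so there is no ambiguity between the two distributions.

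With this representation in hand, the conclusion is immediate: on the interior $(\quant\doubleprimed, \quant\primed)$ we have $\quant>0$, so the hypothesis $\revcurve_1(\quant) \leq \revcurve_2(\quant)$ gives the pointwise inequality $\revcurve_1(\quant)/\quant \leq \revcurve_2(\quant)/\quant$, and monotonicity of the integral finishes the proof. The only subtlety is the boundary case $\quant\doubleprimed = 0$ (i.e., $\val\doubleprimed = \hval$), where $\revcurve_i(\quant)/\quant$ might appear singular near $0$; but this ratio is just $\val_i(\quant)$, which is uniformly bounded by $\hval < \infty$ thanks to the bounded-support assumption on $\buyerdist$, so the integrals remain finite and the pointwise comparison carries through. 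I expect no serious obstacle: once the quantile-space rewriting is recognized, the lemma is essentially a one-line monotonicity argument.
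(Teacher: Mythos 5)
Your proof is correct, and it takes a genuinely different route than the paper's. The paper first converts the revenue-curve dominance $\revcurve_1(\quant)\leq\revcurve_2(\quant)$ on $[\quant\doubleprimed,\quant\primed]$ (noting that the matching quantiles at $\val\primed,\val\doubleprimed$ force the two curves to agree at the endpoints) into CDF dominance $\buyercdf_1(\val)\geq\buyercdf_2(\val)$ on $[\val\primed,\val\doubleprimed]$ with equality at the endpoints, and then concludes via integration by parts against $\buyercdf$. You skip both steps by working directly in quantile space: after the change of variables $\quant = 1-\buyercdf(\val)$, both truncated expectations become $\int_{\quant\doubleprimed}^{\quant\primed} \val_i(\quant)\,\dd\quant = \int_{\quant\doubleprimed}^{\quant\primed} \revcurve_i(\quant)/\quant\,\dd\quant$, and the pointwise hypothesis gives the inequality by monotonicity of the integral. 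This is arguably the more transparent version: it shows the lemma is literally ``pointwise domination of revenue curves implies domination of the integrals,'' with no inversion of CDFs and no integration by parts. The endpoint-matching hypothesis plays a slightly different role in the two proofs --- the paper uses it to cancel the boundary terms in the integration by parts, whereas you use it to ensure that the two integrals run over a common interval $[\quant\doubleprimed,\quant\primed]$; both are legitimate uses of the same assumption.

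One small caveat, which your write-up flags but only partly resolves: with the paper's left-continuous convention $\buyercdf(t)=\Pr[\val<t]$, the identity $\expect_{\val\sim\buyerdist}[\val\cdot\indicator{\val\primed\leq\val\leq\val\doubleprimed}]=\int_{\quant\doubleprimed}^{\quant\primed}\val(\quant)\,\dd\quant$ omits any point mass at $\val\doubleprimed$: the atom corresponds to a flat segment of $\val(\cdot)$ occupying an interval of quantiles of the form $(\quant\doubleprimed - p,\quant\doubleprimed]$, which meets $[\quant\doubleprimed,\quant\primed]$ only in a null set. (An atom at $\val\primed$ is captured correctly, since its flat segment lies inside the integration range.) This does not actually disadvantage your argument relative to the paper's, whose integration-by-parts identity has the same imprecision at $\val\doubleprimed$; and in the paper's applications (the $H,M,L$ bounds for the sandwiching revenue curves), the distributions involved are continuous at the relevant cutoffs, so neither proof's slight looseness here matters. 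Still, if you wanted to be fully rigorous you would either assume continuity at $\val\doubleprimed$ or note that any shared atom contributes equally to both sides and cancels.
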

\begin{proof}
    Since for every quantile $\quant \in[\quant\doubleprimed,\quant\primed]$,
    $\revcurve_1(\quant) \leq \revcurve_2(\quant)$ and the inequality is binding at $\quant = \quant\primed$ and $\quant = \quant\doubleprimed$, it is guaranteed that for every value $\val\in[\val\primed,\val\doubleprimed]$, $\buyercdf_1(\val) \geq \buyercdf_2(\val)$
    and the inequality is binding at $\val = \val\primed$ and $\val = \val\doubleprimed$. This is sufficient to prove the lemma statement:
    \begin{align*}
        &\expect[\val\sim\buyerdist_1]{\val\cdot \indicator{\val\primed \leq \val\leq \val\doubleprimed}} 
        =
        \displaystyle\int_{\val\primed}^{\val\doubleprimed}
        \val\cdot \d \buyercdf_1(\val) 
        \overset{(a)}{=}
        \val\doubleprimed \buyercdf_1(\val\doubleprimed)
        -
        \val\primed\buyercdf_1(\val\primed)
        -
        \displaystyle\int_{\val\primed}^{\val\doubleprimed}
        \buyercdf_1(\val) \cdot \d \val
        \\
        &\qquad\overset{(b)}{\leq}
        \val\doubleprimed \buyercdf_2(\val\doubleprimed)
        -
        \val\primed\buyercdf_2(\val\primed)
        -
        \displaystyle\int_{\val\primed}^{\val\doubleprimed}
        \buyercdf_2(\val) \cdot \d \val
        \overset{(c)}{=}  
        \displaystyle\int_{\val\primed}^{\val\doubleprimed}
        \val\cdot \d \buyercdf_2(\val) 
        =
        \expect[\val\sim\buyerdist_2]{\val\cdot \indicator{\val\primed \leq \val\leq \val\doubleprimed}} 
    \end{align*}
    where equalities~(a) (c) hold due to the integration by parts, and inequality~(b) holds as we argued above. This complete the proof of \Cref{lem:revenue curve monotonicity}.
\end{proof}

\subsubsection{Negative Result for Zero-Value Seller and Buyer with a Regular Distribution}

In this part, we establish the negative result (\Cref{lem:GFT UB:regular buyer}) for a zero-value seller and a buyer with regular valuation distribution by analyzing \Cref{example:BROM:regular}.

\begin{lemma}
    \label{lem:GFT UB:regular buyer}
    In \Cref{example:BROM:regular} with $\constantH = 25$, any BIC, IIR, ex ante WBB mechanism $\mech$ that is {\ksfair} has GFT that is less than $\fixedPriceGFTPercentageUBRegular$ of the {\SecondBest} $\OPTSB$. Here $\fixedPriceGFTPercentageUBRegular$ is the numerical evaluated upper bound of the following expression with $\constantH = 25$:\footnote{The \emph{Lambert $W$ function}, also called the \emph{omega function} or \emph{product logarithm} in mathematics, is the converse relation of the function $y(x) = x\cdot e^x$. The principal branch $x = \LambertFunc(y)$ is the inverse relation when $x \geq -1$.} 
    \begin{align*}
    \frac{\constantH\ln\constantH- (\constantH - 1)\cdot \LambertFunc\left(\frac{\constantH^{\frac{\constantH}{\constantH - 1}}\ln (\constantH)}{\constantH-1}\right)}{\constantH\ln \constantH}\cdot \left(\frac{\constantH}{\constantH - 1}+\frac{1}{\ln\constantH}\right)
    \end{align*}
\end{lemma}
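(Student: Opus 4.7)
The plan is to show that among BIC, IIR, ex-ante WBB mechanisms that are {\ksfair} in \Cref{example:BROM:regular}, the GFT-maximizing mechanism is a {\FixPrice}; to solve the {\ksfairness} condition for its trading price in closed form via $\LambertFunc$; and to evaluate the resulting ratio numerically at $\constantH = 25$. First I compute the revenue curve, which is piecewise linear: $\revcurve(\quant) = \quant\constantH$ on $[0, \tfrac{1}{\constantH}]$ (corresponding to the atom of $\buyerdist$ at $\constantH$) and $\revcurve(\quant) = \tfrac{\constantH(1-\quant)}{\constantH - 1}$ on $[\tfrac{1}{\constantH}, 1]$. Hence $\sellerbenchmark = 1$, $\OPTSB = \buyerbenchmark = \tfrac{\constantH \ln \constantH}{\constantH - 1}$, while $\revcurve'(\quant) = \constantH$ on $(0, \tfrac{1}{\constantH})$ and $\revcurve'(\quant) = -\tfrac{\constantH}{\constantH - 1}$ on $(\tfrac{1}{\constantH}, 1)$. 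For an arbitrary BIC mechanism $\mech$ with interim allocation in quantile space $\tilde\alloc(\quant) = \alloc(\val(\quant))$ (non-increasing in $\quant$ and $[0,1]$-valued), I apply \Cref{prop:buyer surplus equivalence} to obtain $\buyerexanteutil(\mech) = \tfrac{\constantH}{\constantH - 1}\int_{1/\constantH}^1 \tfrac{\tilde\alloc(\quant)}{\quant}\,\d\quant$, $\expect[\val]{\price(\val)} = \int_0^1 \revcurve'(\quant)\,\tilde\alloc(\quant)\,\d\quant$, and $\GFT{\mech} = \int_0^1 \val(\quant)\,\tilde\alloc(\quant)\,\d\quant$. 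Because the seller has deterministic value $0$, ex-ante WBB gives $\sellerexanteutil(\mech) \leq \expect[\val]{\price(\val)}$, and {\ksfairness} (using $\sellerbenchmark = 1$) forces $\sellerexanteutil(\mech) = \buyerexanteutil(\mech)/\buyerbenchmark$; together these reduce to the single linear constraint $\buyerexanteutil(\mech)/\buyerbenchmark \leq \expect[\val]{\price(\val)}$ on $\tilde\alloc$. So maximizing $\GFT{\mech}$ over {\ksfair} BIC/IIR/WBB mechanisms is an LP over non-increasing $\tilde\alloc \in [0,1]$.

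The main step is to reduce this LP to a {\FixPrice}. I dualize the {\ksfairness}+WBB constraint with multiplier $\lambda \geq 0$, obtaining a pointwise Lagrangian whose coefficient on $\tilde\alloc(\quant)$ is $\constantH(1+\lambda)$ on $(0, \tfrac{1}{\constantH})$ and $\tfrac{1}{\quant}\bigl(\tfrac{\constantH}{\constantH-1} - \tfrac{\lambda}{\ln\constantH}\bigr) - \tfrac{\constantH(1+\lambda)}{\constantH - 1}$ on $(\tfrac{1}{\constantH}, 1)$. For every $\lambda \in (0, \buyerbenchmark)$ the first-interval coefficient is strictly positive, and the second-interval coefficient is strictly decreasing in $\quant$, so the pointwise Lagrangian maximizer over $\tilde\alloc(\quant) \in [0,1]$ has the form $\indicator{\quant \leq \quant^*(\lambda)}$ for some threshold $\quant^*(\lambda)$. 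This indicator is automatically non-increasing (so the monotonicity constraint is slack) and coincides with the interim allocation of a {\FixPrice} at trading price $\val(\quant^*(\lambda))$. By strong LP duality there is a multiplier at which the {\ksfairness}+WBB constraint is tight, which forces $\quant^* = \fquant$, the unique quantile at which the {\FixPrice} itself is {\ksfair}; hence the LP optimum is the {\ksfair} {\FixPrice} at $\fquant$.

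For a {\FixPrice} at quantile $\quant \in (\tfrac{1}{\constantH}, 1)$, direct computation gives $\sellerexanteutil = \tfrac{\constantH(1-\quant)}{\constantH - 1}$, $\buyerexanteutil = \tfrac{\constantH \ln(\quant\constantH)}{\constantH - 1}$, and $\GFT{\FPM_\quant} = \tfrac{\constantH(1-\quant + \ln(\quant\constantH))}{\constantH - 1}$; the {\ksfairness} condition reduces to $(\constantH-1)\ln(\quant\constantH) = \constantH(1-\quant)\ln\constantH$. Substituting $y = \tfrac{\quant\constantH\ln\constantH}{\constantH - 1}$ rewrites this as $y\, e^{y} = \tfrac{\constantH^{\constantH/(\constantH-1)}\ln\constantH}{\constantH-1}$, so $\fquant\,\constantH\ln\constantH = (\constantH - 1)\,\LambertFunc\!\bigl(\tfrac{\constantH^{\constantH/(\constantH-1)}\ln\constantH}{\constantH-1}\bigr)$. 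Plugging $\ln(\fquant\constantH) = \tfrac{\constantH(1-\fquant)\ln\constantH}{\constantH-1}$ back into $\GFT{\FPM_\fquant}/\OPTSB$ collapses the ratio to $(1-\fquant)\bigl(\tfrac{\constantH}{\constantH - 1} + \tfrac{1}{\ln\constantH}\bigr)$, and rewriting $1-\fquant = \tfrac{\constantH\ln\constantH - (\constantH-1)\LambertFunc(\cdots)}{\constantH\ln\constantH}$ reproduces the lemma's expression. Numerical evaluation at $\constantH = 25$ gives a value strictly below $\fixedPriceGFTPercentageUBRegular$. The delicate part is the Lagrangian analysis: I must verify that the discontinuity of the coefficient at $\quant = \tfrac{1}{\constantH}$ (from the atom of $\buyerdist$ at $\constantH$) is always a drop from a strictly positive value, so the pointwise optimum retains its single-threshold structure, and that strong duality applies (which follows from LP feasibility at $\tilde\alloc \equiv 0$ together with the objective being bounded by $\OPTSB$). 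The remaining algebra and the Lambert inversion are routine.
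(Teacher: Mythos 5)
Your proof is correct and reaches the same closed-form expression, but via a genuinely different route. The paper's proof first \emph{relaxes} the KS-fairness constraint: since KS-fairness and WBB imply $\revratio(\mech)\ge\residuesurplusratio(\mech)$, the GFT of any KS-fair mechanism equals $\revratio(\mech)\sellerbenchmark+\min\{\revratio(\mech),\residuesurplusratio(\mech)\}\buyerbenchmark$, and this quantity can be upper bounded by maximizing it over \emph{all} BIC/IIR/ex-ante-WBB mechanisms (not just KS-fair ones). It then shows this relaxed maximum is attained by a {\FixPrice} via a mass-swapping argument: for any allocation rule, replacing it with the threshold rule $\indicator{\val\ge\val\primed}$ that preserves total interim allocation mass weakly increases both $\revratio$ (since the virtual value is constant on $[0,\constantH)$ and positive at the atom) and $\residuesurplusratio$ (since the information-rent factor $\tfrac{1-\buyercdf}{\buyerpdf}$ is increasing on $[0,\constantH)$ and vanishes at the atom). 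You instead keep the KS-fairness constraint, observe that together with WBB it collapses to the single linear inequality $\buyerexanteutil/\buyerbenchmark\le\expect[\val]{\price(\val)}$ on the interim allocation, and characterize the resulting LP optimum by Lagrangian duality, showing the dual-optimal pointwise maximizer has a monotone threshold structure. Both routes confirm that the KS-fair {\FixPrice} is extremal and then do the same Lambert-$W$ inversion. Your approach is more self-contained in identifying the optimum exactly, but it leans on strong duality for an infinite-dimensional LP and on verifying that the optimal multiplier lies in $(0,\buyerbenchmark)$ (so that the coefficient on $(\tfrac{1}{\constantH},1)$ is strictly decreasing and the threshold solution is automatically monotone and unique) --- you flag this but do not supply the details. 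The paper's relaxation avoids all duality machinery at the cost of introducing a slightly weaker but easier-to-analyze program; the relaxation happens to be tight because its optimizing {\FixPrice} has $\revratio=\residuesurplusratio$, i.e., is KS-fair. To fully close your argument you would want to (i) show $\lambda^*\in(0,\buyerbenchmark)$ explicitly (e.g., by the sign of the constraint derivative at $\lambda=0$ and $\lambda\to\buyerbenchmark$), and (ii) justify strong duality (Sion's minimax or a compactness argument on the set of monotone $[0,1]$-valued allocation rules).
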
 
The main difficulty to prove the lemma above is to establish such a GFT approximation bound for \emph{all} BIC, IIR, ex ante WBB, {\ksfair} mechanisms, which may be more complicated than a {\ksfair} {\FixPrice}. Loosely speaking, we first construct an upper bound of GFT (see \eqref{eq:fixed price mech suffices:regular UB}) for every BIC, IIR, ex ante WBB, and {\ksfair} mechanisms. We then argue that for this example, a {\ksfair} {\FixPrice} (which is DSIC, ex post IR, ex post SBB) optimizes this upper bound among all BIC, IIR, ex ante WBB, but (possibly not {\ksfair}) mechanisms.

\begin{proof}[Proof of \Cref{lem:GFT UB:regular buyer}]
Fix any BIC, IIR, ex ante WBB mechanism $\mech=(\alloc,\price,\sellerprice)$ that is {\ksfair}.  
Let $\revratio(\mech)$ be the ratio between the buyer's expected payment over the monopoly revenue, and $\residuesurplusratio(\mech)$ be the ratio between the buyer's ex ante utility and her expected value. Namely,
\begin{align*}
    \revratio(\mech) \triangleq  
    \frac{\expect[\val]{\price(\val,0)}}{\sellerbenchmark}
    =
    \frac{\sellerexanteutil(\mech) + \expect[\val]{\price(\val,0) - \sellerprice(\val,0)}}{\sellerbenchmark} 
    \;\;
    \mbox{and}
    \;\;
    \residuesurplusratio(\mech) \triangleq \frac{\buyerexanteutil(\mech)}{\buyerbenchmark}
\end{align*}
Since mechanism $\mech$ is ex ante WBB and {\ksfair}, $\revratio(\mech) \geq \residuesurplusratio(\mech)$. Consequently, the GFT of mechanism $\mech$ can be expressed as follows:
\begin{align*}
    \GFT{\mech} 
    &= \sellerexanteutil(\mech) + \buyerexanteutil(\mech) + \expect[\val]{\price(\val,0) - \sellerprice(\val,0)}
    \\
    &=
    \revratio(\mech) \cdot \sellerbenchmark
    +
    \residuesurplusratio(\mech) \cdot \buyerbenchmark
    \\
    &=
    \revratio(\mech) \cdot \sellerbenchmark
    +
    \min\{\revratio(\mech), \residuesurplusratio(\mech)\} \cdot \buyerbenchmark
\end{align*}
where the last equality holds since $\revratio(\mech) \geq \residuesurplusratio(\mech)$ argued above. Thus, the optimal GFT among all BIC, IIR, ex ante WBB, {\ksfair} mechanisms can be upper bounded as
\begin{align}
\nonumber
    \max_{\substack{\mech\in\mechfam:~\text{$\mech$ is {\ksfair}}}} \GFT{\mech}
    & =
    \max_{\substack{\mech\in\mechfam:~\text{$\mech$ is {\ksfair}}}}  \revratio(\mech) \cdot \sellerbenchmark
    +
    \min\{\revratio(\mech), \residuesurplusratio(\mech)\} \cdot \buyerbenchmark
    \\
\label{eq:fixed price mech suffices:regular UB}
    &\leq 
    \max_{\substack{\mech\in\mechfam}}  \revratio(\mech) \cdot \sellerbenchmark
    +
    \min\{\revratio(\mech), \residuesurplusratio(\mech)\} \cdot \buyerbenchmark
\end{align}
where we drop the {\ksfairness} requirement in the last step. 

Next we show that in \Cref{example:BROM:regular}, the optimal mechanism of the optimization program~\eqref{eq:fixed price mech suffices:regular UB} is a {\FixPrice}. Fix an arbitrary BIC, IIR, ex ante WBB mechanism $\mech\primed = (\alloc\primed,\price\primed,\sellerprice\primed)$. Let $\val\primed$ be the unique solution such that
\begin{align*}
    \displaystyle\int_0^{\val\primed} \alloc\primed(\val) \cdot \d\buyercdf(\val) 
    =
    \displaystyle\int_{\val\primed}^{\constantH} \left(1 - \alloc\primed(\val)\right)\cdot \d\buyercdf(\val)
\end{align*}
The existence and uniqueness of value $\val\primed$ hold, since mechanism $\mech\primed$ is BIC and thus interim allocation $\alloc\primed(\val)$ of the buyer is weakly increasing, and buyer's valuation distribution $\buyerdist$ has no point mass on $[0, \constantH)$. Now consider the {\FixPrice} $\mech\doubleprimed$ with trading price $\price\doubleprimed \triangleq \val\primed$. We claim that $\revratio(\mech\doubleprimed) \geq \revratio(\mech\primed)$ and $\residuesurplusratio(\mech\doubleprimed) \geq \residuesurplusratio(\mech\primed)$.
To see this, note that 
\begin{align*}
    \revratio(\mech\doubleprimed) &\overset{(a)}{=} \frac{1}{\sellerbenchmark}\cdot \expect[\val]{\virtualval(\val)\cdot \indicator{\val\geq \price\doubleprimed}}
    =
    \frac{1}{\sellerbenchmark}\cdot 
    \left(\expect[\val]{\virtualval(\val)\cdot \indicator{\val= \constantH}}
    +
    \displaystyle\int_{\val\primed}^{\constantH} \virtualval(\val)\cdot \d\buyercdf(\val)
    \right)
    \\
    &\overset{(b)}{=}
    \frac{1}{\sellerbenchmark}\cdot 
    \left(\expect[\val]{\virtualval(\val)\cdot \indicator{\val= \constantH}}
    +
    \displaystyle\int_{\val\primed}^{\constantH} \virtualval(\val)\alloc\primed(\val)\cdot \d\buyercdf(\val)
    +
    \displaystyle\int_0^{\val\primed} \virtualval(\val)(1-\alloc\primed(\val))\cdot \d\buyercdf(\val)
    \right)
    \\
    &\overset{(c)}{\geq}
    \revratio(\mech\primed)
\end{align*}
where equality~(a) holds due to \Cref{prop:revenue equivalence},
equality~(b) holds since virtual value $\virtualval(\val)$ is identical for every value $\val\in[0, \constantH)$ and the construction of value $\val\primed$,
and inequality~(c) holds due to \Cref{prop:revenue equivalence} and the fact that virtual value $\virtualval(\constantH) = \constantH > 0$. Similarly,
\begin{align*}
    \residuesurplusratio(\mech\doubleprimed) &\overset{(a)}{=} \frac{1}{\buyerbenchmark}\cdot \expect[\val]{\buyerhazardrate(\val)\cdot \indicator{\val\geq \price\doubleprimed}}
    =
    \frac{1}{\buyerbenchmark}\cdot 
    \left(\expect[\val]{\buyerhazardrate(\val)\cdot \indicator{\val= \constantH}}
    +
    \displaystyle\int_{\val\primed}^{\constantH} \buyerhazardrate(\val)\cdot \d\buyercdf(\val)
    \right)
    \\
    &\overset{(b)}{\geq}
    \frac{1}{\buyerbenchmark}\cdot 
    \left(\expect[\val]{\buyerhazardrate(\val)\cdot \indicator{\val= \constantH}}
    +
    \displaystyle\int_{\val\primed}^{\constantH} \buyerhazardrate(\val)\alloc\primed(\val)\cdot \d\buyercdf(\val)
    +
    \displaystyle\int_0^{\val\primed} \buyerhazardrate(\val)(1-\alloc\primed(\val))\cdot \d\buyercdf(\val)
    \right)
    \\
    &\overset{(c)}{=}
    \residuesurplusratio(\mech\primed)
\end{align*}
where equality~(a) holds due to \Cref{prop:buyer surplus equivalence},
inequality~(b) holds since hazard rate $\buyerhazardrate(\val)$ is increasing in $\val\in[0, \constantH)$ and the construction of value $\val\primed$,
and equality~(c) holds due to \Cref{prop:buyer surplus equivalence} and the fact that hazard rate $\buyerhazardrate(\constantH) = 0$.

Putting the two pieces together, we know that {\FixPrice} $\mech\doubleprimed$ has weakly higher objective value in program~\eqref{eq:fixed price mech suffices:regular UB}. Hence, the optimal mechanism of the optimization program~\eqref{eq:fixed price mech suffices:regular UB} is a {\FixPrice}. 

Finally, we evaluate the objective value of optimization program~\eqref{eq:fixed price mech suffices:regular UB} among {\FixPrice} $\mech$ with every trading price $\price\in[0, \constantH]$ for \Cref{example:BROM:regular} (see the blue curve in \Cref{fig:BROM:regular}). In particular, for every quantile $\quant\in[0, 1]$, {\FixPrice} with trading price $\price = \buyercdf^{-1}(1 - \quant)$ satisfies
\begin{align*}
    \revratio(\mech) = \frac{\constantH}{\constantH - 1}\cdot (1 - \quant)
    \;\;
    \mbox{and}
    \;\;
    \residuesurplusratio(\mech) = \frac{\ln\quant}{\ln \constantH} + 1 
\end{align*}
Combining with the fact that $\sellerbenchmark = 1$, $\buyerbenchmark = \frac{\constantH\ln\constantH}{\constantH - 1}$, we know the optimal objective value of program~\eqref{eq:fixed price mech suffices:regular UB} is obtained at\footnote{It can be verified that $\quant^*$ is also the quantile of the {\ksfair} trading price $\fprice$ and its induced objective value of program~\ref{eq:fixed price mech suffices:regular UB} is equal to the GFT approximation of the {\ksfair} {\FixPrice} with trading price $\fprice$.}
\begin{align}
\label{eq:BROM:regular:fair price}
    \quant^* = \frac{(\constantH - 1)\cdot \LambertFunc\left(\frac{\constantH^{\frac{\constantH}{\constantH-1}}\ln\constantH}{\constantH - 1}\right)}{\constantH\ln\constantH}
\end{align}
with objective value
\begin{align}
\label{eq:BROM:regular:fair price GFT approx}
    \frac{\constantH\ln\constantH - (\constantH - 1)\cdot \LambertFunc\left(\frac{\constantH^{\frac{\constantH}{\constantH-1}}\ln\constantH}{\constantH - 1}\right)}{\constantH\ln\constantH}
    \cdot \left(
    \frac{\constantH}{\constantH - 1
    }
    +\frac{1}{\ln\constantH}
    \right)
\end{align}
Setting $\constantH = 25$, we obtain $\text{\eqref{eq:BROM:regular:fair price GFT approx}} \leq \fixedPriceGFTPercentageUBRegular$ as the upper bound of the optimal GFT approximation among all BIC, IIR, ex ante WBB, {\ksfair} mechanisms. This completes the proof of \Cref{lem:GFT UB:regular buyer}.
\end{proof}

\subsection{Zero-Value Seller and Buyer with an MHR Distribution}
\label{subsec:improved GFT:mhr buyer}
In this section we consider zero-value seller settings when the buyer's valuation distribution is assumed to be MHR. With this stronger assumption that the distribution is MHR rather than regular, we are able to improved the GFT approximation and show that the approximation for \emph{all} instances with MHR distributions is strictly larger than the approximation guaranteed by regularity alone.

\begin{restatable}{theorem}{thmGFTmhrbuyer}
\label{thm:improved GFT:mhr buyer} 
    For every bilateral trade instance of a zero-value seller and buyer with MHR valuation distribution $\buyerdist$, there exists price $\fprice$ smaller than the monopoly reserve $\optreserve$ for distribution $\buyerdist$,\footnote{When distribution $\buyerdist$ is MHR, its induced monopoly reserve $\optreserve$ is unique.} 
    such that the {\FixPrice} mechanism with trading price $\fprice$ is {\ksfair} and its GFT is at least $\calC$ fraction of the {\SecondBest} $\OPTSB$. Here $\calC$ is the solution to minimization program~\ref{program:GFT:mhr buyer},  
    and we observe that 
    $\calC\geq \fixedPriceGFTPercentageMHR$ by a numerical computation.

    Moreover, there exists an instance of a zero-value seller and buyer with MHR valuation distribution, in which any BIC, IIR, ex ante WBB mechanism that is {\ksfair} has GFT that is less than  $\fixedPriceGFTPercentageUBMHR$ of the {\SecondBest} $\OPTSB$.
\end{restatable}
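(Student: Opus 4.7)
The proof will closely mirror the structure of Theorem \ref{thm:improved GFT:regular buyer}, with the key substitution being that the reduction is performed on the \emph{cumulative hazard rate function} $\cumhazard(\val) = -\ln(1 - \buyercdf(\val))$ rather than on the revenue curve. Recall that $\buyerdist$ is MHR if and only if $\cumhazard$ is weakly convex, and by \Cref{prop:buyer surplus equivalence} the buyer's ex ante utility under a {\FixPrice} with trading price $\price$ equals $\expect[\val]{\buyerhazardrate(\val)\cdot \indicator{\val \geq \price}}$, which admits a clean representation in terms of $\cumhazard$. Existence of a {\ksfair} trading price $\fprice \leq \optreserve$ follows from the intermediate value theorem exactly as in the regular case: at $\price = 0$ the buyer gets her ideal utility and the seller gets $0$, at $\price = \optreserve$ the seller's ratio dominates, and both ex ante utilities vary continuously in~$\price$.

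For the positive (approximation) result, the plan is to execute a three-step argument analogous to \Cref{lem:GFT program:regular buyer}. First, I would parameterize instances by the monopoly quantile $\optquant$ (equivalently, the value $\optreserve$ and the cumulative hazard $\cumhazard(\optreserve) = -\ln\optquant$), the ``high'' expected value $H = \expect[\val]{\val\cdot\indicator{\val\geq\optreserve}}$, and introduce a constant $\revratio\in(0,1)$ governing a reference trading price~$\price$ chosen so that the seller's revenue is $\revratio$ times her benchmark. Second, via the same argument as in the regular case (using ex ante WBB together with KS-fairness), I would show that from the {\FixPrice} at~$\price$ one can move continuously to a {\ksfair} {\FixPrice} at price $\fprice$, and bound the resulting GFT-ratio from below by a quantity $\exanteutilratio + \exanteutilratio/(H+M+L)$, where $M$ and $L$ are the analogous truncated expected values. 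Third, I would use convexity of $\cumhazard$ to sandwich it between its supporting linear functions and the chord through $(\optreserve,\cumhazard(\optreserve))$, which in turn translates via $1 - \buyercdf(\val) = e^{-\cumhazard(\val)}$ into explicit (exponential rather than polynomial) upper and lower bounds on $H$, $M$, and $L$ in terms of the finitely many parameters. This yields the minimization program \ref{program:GFT:mhr buyer} as a two-player zero-sum game, whose value is then lower bounded numerically by $\fixedPriceGFTPercentageMHR$.

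For the negative result, the plan is to analyze the announced \Cref{example:all fair:mhr buyer} (a zero-value seller with a carefully chosen MHR buyer distribution) exactly as in the proof of \Cref{lem:GFT UB:regular buyer}. That is: for any BIC, IIR, ex ante WBB mechanism $\mech$, define $\revratio(\mech) = \expect[\val]{\price(\val,0)}/\sellerbenchmark$ and $\residuesurplusratio(\mech) = \buyerexanteutil(\mech)/\buyerbenchmark$; use ex ante WBB plus {\ksfairness} to get $\revratio(\mech)\geq\residuesurplusratio(\mech)$; then write $\GFT{\mech} = \revratio(\mech)\sellerbenchmark + \min\{\revratio(\mech),\residuesurplusratio(\mech)\}\buyerbenchmark$ and drop the fairness constraint to obtain an upper bound. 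The next step is to show, using \Cref{prop:revenue equivalence,prop:buyer surplus equivalence} together with the monotonicity of the virtual value $\virtualval$ and the hazard rate $\buyerhazardrate$ for the MHR instance constructed, that the maximizer of this relaxed bound is a {\FixPrice}. The final step is a one-dimensional numerical optimization over the trading price~$\price$.

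The main obstacle I anticipate is technical rather than conceptual: deriving clean closed-form bounds $\MUnderBar,\MOverBar,\LUnderBar,\LOverBar$ (and the corresponding constraints for program \ref{program:GFT:mhr buyer}) from the convexity of $\cumhazard$. Because $1 - \buyercdf = e^{-\cumhazard}$ is exponential, the integrals defining $H,M,L$ no longer reduce to polynomial expressions in the parameters as in the regular case, and care is needed to ensure that the sandwiching linear bounds on $\cumhazard$ indeed correspond to valid MHR distributions with the prescribed values of $\optquant$, $\cumhazard(\price)$, and $\cumhazard'(\price^+)$ at the relevant break-points. Once these bounds are correctly established, the reduction to a finite-dimensional zero-sum game and the subsequent numerical verification proceed in direct parallel with \Cref{lem:GFT program:regular buyer}.
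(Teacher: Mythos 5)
Your proposal is correct and follows essentially the same approach as the paper: existence of the KS-fair price $\fprice$ via the intermediate value theorem; a three-step reduction to program~\ref{program:GFT:mhr buyer} using the convexity of the cumulative hazard rate function $\cumhazard$ (with the exponential map $1-\buyercdf = e^{-\cumhazard}$ giving exponential rather than polynomial bounds on $H,M,L$); sandwiching $\cumhazard$ between supporting lines and chords, yielding the two-player zero-sum game whose value is bounded numerically; and, for the negative result, analyzing \Cref{example:all fair:mhr buyer} by relaxing the KS-fairness constraint, showing the relaxed bound is maximized by a {\FixPrice} via the monotonicity of $\virtualval$ and $\buyerhazardrate$, and numerically optimizing over the trading price. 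The only minor imprecision is that for \Cref{example:all fair:mhr buyer} the hazard rate is \emph{constant} (not strictly increasing) on $[0,e)$, which in the paper's proof gives an equality (rather than an inequality) in the step bounding $\residuesurplusratio$; this does not affect the validity of your argument.
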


\Cref{thm:improved GFT:mhr buyer} directly follows \Cref{lem:GFT program:mhr buyer,lem:GFT UB:mhr buyer} which are presented below. Their proofs are similar to the ones for regular distributions in \Cref{subsec:improved GFT:regular buyer}.

\subsubsection{GFT Approximation by KS-Fair {\FixPrice}}

In this section we show that {\FixPrice} that is {\ksfair} can get very good GFT approximation. We characterize the GFT approximation of a {\ksfair} {\FixPrice} in the following lemma. Its proof is similar to the one for \Cref{lem:GFT program:regular buyer} in \Cref{subsec:improved GFT:regular buyer}. Specifically, we search over all MHR distributions and argue that the worst GFT can only be induced by a subclass of them, whose \emph{cumulative hazard rate function} (rather than the revenue curve, as was the case for regular buyer distributions) can be characterized by finite many parameters (see \Cref{fig:GFT program:mhr buyer}). We defer its formal proof to \Cref{apx:lemgftprogrammhrbuyer}.

\begin{figure}
    \centering
    \begin{tikzpicture}[scale=1, transform shape]
\begin{axis}[
axis line style=gray,
axis lines=middle,
xtick={0, 0.4, 0.8, 1, 1.22436, 1.64872, 2.7182818},
ytick={0, 0.1177216, 0.5, 2.1},
xticklabels={0, $\val_0$, $\price$, 1, $\val_1$, $\optreserve$, $e$},
yticklabels={0, $\ln(\frac{\price}{\revratio})$, $\ln(\optreserve)$, $\infty$},
xmin=0,xmax=3.0,ymin=-0.05,ymax=2.9,
width=0.8\textwidth,
height=0.5\textwidth,
samples=500]


\addplot[domain=1:5, dashed, line width=0.5mm] (x, {ln(x)});

\addplot[dotted, gray, line width=0.3mm] (1.64872, 0) -- (1.64872, 0.5) -- (0, 0.5);

\addplot[dotted, gray, line width=0.3mm] (0.8, 0) -- (0.8, 2.4);

\addplot[dotted, gray, line width=0.3mm] (1.64872, 2.1) -- (1.64872, 2.4);
\addplot[dotted, gray, line width=0.3mm] (0.8, 0.1177216) -- (0., 0.1177216);

\addplot[dotted, gray, line width=0.3mm] (1.22436, 0) -- (1.22436, 0.242612);

\addplot[dotted, gray, line width=0.3mm] (1.64872, 2.1) -- (0, 2.1);

\addplot[red, line width=0.5mm] (0, 0) -- (0.4, 0) -- (1.22436, 0.242612);
\addplot[domain=1.22436:5, red, line width=0.5mm] (x, {0.5 - 1 / 1.64872 * (1.64872 - x)});

\addplot[blue, line width=0.5mm] (0, 0) -- (0.8, 0.1177216) -- (1.64872, 0.5);
\addplot[blue, dotted, line width=0.5mm] (1.64872, 0.5) --(1.64872, 2.1);
\addplot[blue, line width=0.5mm] (1.64872, 2.1) --(5, 2.1);
\draw[blue, fill=white, line width=0.5mm] (axis cs:1.64872, 2.1) circle[radius=0.05cm];

\addplot[domain=0:5, black!100!white, line width=0.5mm] (x, {x * x * 0.18394});

\draw[decorate,decoration={brace,amplitude=8pt}] (0.0,2.4) -- (0.8,2.4) node[midway,above=6pt] {$L$/{{\color{red}$\LOverBar$}}/{{\color{blue}$\LUnderBar$}}};
\draw[decorate,decoration={brace,amplitude=8pt}] (0.8,2.4) -- (1.64872,2.4) node[midway,above=6pt] {$M$/{{\color{red}$\MOverBar$}}/{{\color{blue}$\MUnderBar$}}};
\draw[decorate,decoration={brace,amplitude=8pt}] (1.64872,2.4) -- (3,2.4) node[midway,above=6pt] {$H$/{{\color{red}$\HOverBar$}}/{{\color{blue}$\HUnderBar$}}};

\end{axis}

\end{tikzpicture}
    \caption{Graphical illustration of the analysis for \Cref{lem:GFT program:mhr buyer}. The black curve is the convex cumulative hazard rate function $\cumhazard$ of the buyer. The red and blue cumulative hazard rate functions $\cumhazard_1, \cumhazard_2$ (defined in the analysis) sandwich the original cumulative hazard rate function $\cumhazard$. The black dashed line is $\ln(\val)$, which touches the cumulative hazard rate function $\cumhazard$ at monopoly reserve $\optreserve$.}
    \label{fig:GFT program:mhr buyer}
\end{figure}
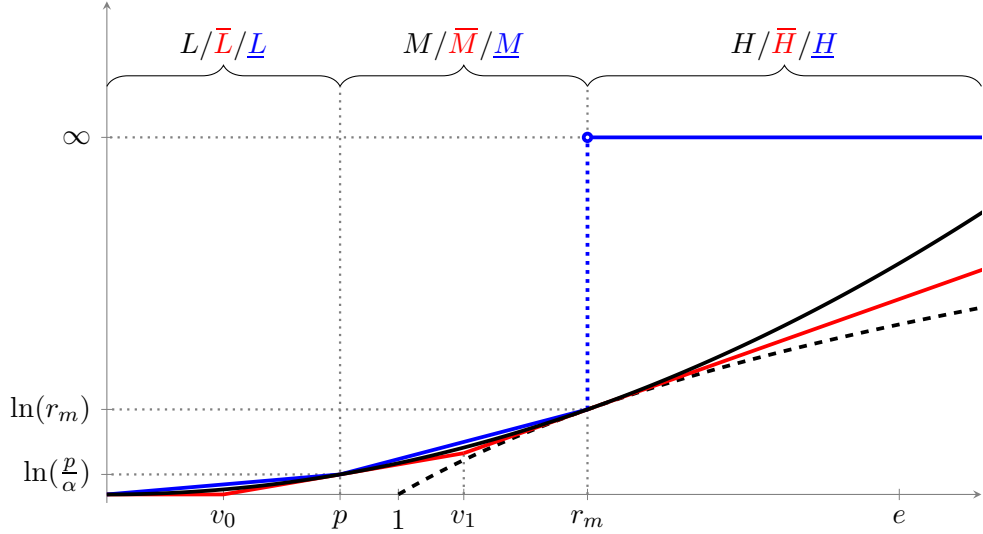

\begin{restatable}{lemma}{lemgftprogrammhrbuyer}
\label{lem:GFT program:mhr buyer}
    For every bilateral trade instance of a zero-value seller and buyer with MHR valuation distribution, there exists price $\fprice$ smaller than the monopoly reserve $\optreserve$ such that the {\FixPrice} with trading price $\fprice$ is {\ksfair} and its GFT is a $\GFTapprox$-approximation to the {\SecondBest} $\OPTSB$. Here $\calC$ is the solution to the following program~\ref{program:GFT:mhr buyer}:
    \begin{align}
    \label{program:GFT:mhr buyer}
    \tag{$\mathcal{P}_{\mathrm{MHR}}$}
    \arraycolsep=5.4pt\def\arraystretch{1}
        \begin{array}{llll}
          \calC \triangleq~~&\min\limits_{\optreserve, H}
          ~\max\limits_{\revratio}
          ~\min\limits_{\substack{\price, \val_0, M, L}}   & 
          \displaystyle\exanteutilratio + \frac{\exanteutilratio}{H + M + L} &
          \vspace{10pt}
          \\
          \vspace{10pt}
          &\text{s.t.}
          & \optreserve\in[1, e]~,  & 
          \\
          \vspace{10pt}
          && \revratio\in(0, 1)~,  & 
          \\ 
          \vspace{10pt}
          && \displaystyle\price\in \left[\max\left\{
    -\optreserve\cdot \LambertFunc\left(-\frac{\revratio}{e}\right), \revratio\right\},
    - \optreserve\cdot \LambertFunc\left(-\displaystyle\frac{\revratio\ln(\optreserve)}{\optreserve}\right)\cdot \frac{1}{\ln(\optreserve)} \right]~,  & 
          \\
          \vspace{10pt}
          && \displaystyle\val_0\in \left[0, \optreserve - \frac{\ln(\optreserve)}{\ln(\optreserve) - \ln\left(\frac{\price}{\revratio}\right)}\cdot (\optreserve - \price)\right]~,  & 
          \\
          \vspace{10pt}
          && H\in[\HUnderBar, \HOverBar]~, 
          M\in[\MUnderBar, \MOverBar]~,
          L\in[\LUnderBar, \LOverBar]~,& 
        \end{array}
    \end{align}
    where $\exanteutilratio$, $\val_1$, $\HUnderBar, \HOverBar, \MUnderBar, \MOverBar, \LUnderBar, \LOverBar$ are auxiliary variables constructed as 
    \begin{align*}
        \exanteutilratio &\triangleq  \displaystyle
        \revratio - \plus{\frac{H + M + L}{H + M + L + 1}\left(\revratio - \frac{H + M - \revratio}{H + M + L}\right)}~,
        \\
        \val_1 &\triangleq \left(\frac{1}{\optreserve} - \frac{1}{\price - \val_0}\ln\left(\frac{\price}{\revratio}\right)\right)^{-1}\left(
        \ln\left(\frac{\price}{\revratio}\right) - \ln(\optreserve)
        + 1 
        - \frac{\price}{\price-\val_0}\ln\left(\frac{\price}{\revratio}\right) 
        \right)~,
        \\
        \HUnderBar &\triangleq 1~, \HOverBar \triangleq 2~, 
        \\
        \MUnderBar &\triangleq \left(\price\cdot \optreserve\cdot \ln\left(\frac{\revratio\optreserve}{\price}\right)\right)^{-1}(\optreserve-\price)(\revratio\cdot \optreserve-\price) - 1 + \revratio ~,
        \\
        \MOverBar &\triangleq  \left({\ln\left(\frac{\price}{\revratio}\right)}\right)^{-1}\left(\left(\frac{\price}{\revratio}\right)^{\frac{\val_1 - \price}{\price - \val_0}} - 1\right)(\price - \val_0) + e^{1 - \frac{\val_1}{\optreserve}} - 2 + \revratio~,
        \\
        \LUnderBar &\triangleq \left({\ln\left(\frac{\price}{\revratio}\right)}\right)^{-1}(\price - \revratio) - \revratio~,
        \LOverBar \triangleq \left({\ln\left(\frac{\price}{\revratio}\right)}\right)^{-1}\left(1 - \frac{\revratio}{\price}\right)(\price - \val_0) + \val_0 - \revratio ~.
    \end{align*}
    By a numerical computation (see \Cref{apx:numerical evaluation:mhr buyer}), we observe that 
    $\calC\geq \fixedPriceGFTPercentageMHR$.
\end{restatable}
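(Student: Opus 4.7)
The plan is to follow the same three-step framework used in the proof of \Cref{lem:GFT program:regular buyer}, but with the cumulative hazard rate function $\cumhazard$ playing the role that the revenue curve $\revcurve$ played there. The pivotal structural fact is that a buyer's distribution is MHR if and only if $\cumhazard$ is convex (and $1-\buyercdf(\val) = e^{-\cumhazard(\val)}$), which is exactly parallel to the ``regular iff $\revcurve$ concave'' equivalence from \Cref{lem:concave revenue curve}. Accordingly, every argument that leveraged concavity of $\revcurve$ in \Cref{lem:GFT program:regular buyer} will be reprised here using convexity of $\cumhazard$, and the worst-case analysis will again reduce to a subclass parametrized by finitely many anchor points on the graph of $\cumhazard$, namely the values $0,\val_0,\price,\val_1,\optreserve$ illustrated in \Cref{fig:GFT program:mhr buyer}.

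\textbf{Three steps.} I would first normalize the monopoly revenue to $1$. Since the FOC for MHR revenue maximization gives $\cumhazard'(\optreserve) = 1/\optreserve$, the normalization forces $\cumhazard(\optreserve) = \ln(\optreserve)$; combining this with convexity of $\cumhazard$ through the origin and $\cumhazard \geq 0$ yields $\optreserve \in [1,e]$, as required in the program. Given $\revratio \in (0,1)$ I would then pick the trading price $\price \in [0,\optreserve]$ with $\price\cdot e^{-\cumhazard(\price)} = \revratio$ so the seller's ex-ante utility equals $\revratio\cdot \sellerbenchmark$; partitioning $\expect[\val]{\val} = H+M+L$ at the cutoffs $\price$ and $\optreserve$, the buyer's ex-ante utility evaluates to $H+M-\revratio$ exactly as in the regular case. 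Step~2 invokes the intermediate value theorem in the same way: by moving the trading price up or down from $\price$ (the former uses concavity of revenue-vs-price in the relevant range, which here follows because revenue on $[0,\optreserve]$ is unimodal under MHR), one finds a {\ksfair} price $\fprice \leq \optreserve$ whose GFT approximation is at least $\exanteutilratio + \exanteutilratio/(H+M+L)$ for the $\exanteutilratio$ defined in the program. Step~3 re-interprets the choice of $\cumhazard$ versus $\revratio$ as a three-stage zero-sum game: the adversary first picks $(\optreserve, H)$, then we pick $\revratio$, then the adversary chooses $(\price,\val_0,M,L)$, and the game's payoff is $\exanteutilratio + \exanteutilratio/(H+M+L)$. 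Numerical evaluation of the resulting program (deferred to \Cref{apx:numerical evaluation:mhr buyer}) then delivers $\calC \geq \fixedPriceGFTPercentageMHR$.

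\textbf{Main obstacle.} The real work lies in producing the explicit feasibility bounds on $\price$, $\val_0$, and especially the sandwich bounds $\HUnderBar,\HOverBar,\MUnderBar,\MOverBar,\LUnderBar,\LOverBar$. Here $\val_0$ is the $x$-axis intercept of the tangent to $\cumhazard$ at $\price$ (so $\val_0 = \price - \cumhazard(\price)/\cumhazard'(\price)$) and $\val_1$ is the intersection of that tangent with the line through $(\optreserve,\ln\optreserve)$ of slope $1/\optreserve$ (which upper-bounds $\cumhazard$ by convexity and the normalization $\cumhazard(\optreserve) = \ln\optreserve$). The bounds on $\price$ come from requiring that a valid convex $\cumhazard$ with $\cumhazard(\optreserve)=\ln\optreserve$ can actually pass through the prescribed point at $\price$; the two extremal choices there (kinked convex curves hugging the tangent versus the chord) produce the Lambert-$W$ expressions in the constraint on $\price$. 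For each truncated integral $H,M,L$, I would identify the pointwise minimum and pointwise maximum convex functions $\cumhazard_1 \leq \cumhazard \leq \cumhazard_2$ that are consistent with the anchor data on the relevant interval — on $(0,\price)$ the extremes are the kinked function with a flat piece on $[0,\val_0]$ followed by the tangent (giving $\LOverBar$) and the chord to $(\price,\cumhazard(\price))$ (giving $\LUnderBar$); on $(\price,\optreserve)$ the extremes are the continuation of the tangent capped by the secant through $(\val_1,\ln(\optreserve))$ and the chord (giving $\MOverBar,\MUnderBar$); on $(\optreserve,\infty)$ the extremes are the ray of slope $1/\optreserve$ (a uniform-tail cap, giving $\HOverBar=2$) and an abrupt vertical jump at $\optreserve$ (giving $\HUnderBar=1$). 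Integrating $\val\cdot \cumhazard'(\val)\,e^{-\cumhazard(\val)}\,\d\val$ against each of these extremal shapes yields the closed-form bounds appearing in the program statement, and an analog of \Cref{lem:revenue curve monotonicity} (now saying that pointwise-larger convex $\cumhazard$ stochastically shifts mass downward) justifies that these really are valid bounds; this monotonicity step plus the case analysis on each of the three intervals is where the bulk of the routine but delicate computation lives.
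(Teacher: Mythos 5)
Your proposal is correct and follows essentially the same route as the paper's proof in \Cref{apx:lemgftprogrammhrbuyer}: normalize the monopoly revenue, run Steps~1--2 of the regular argument verbatim with $\cumhazard$ convex replacing $\revcurve$ concave, identify $\val_0$ and $\val_1$ as the tangent intercepts you describe, sandwich $\cumhazard$ on $[0,\price]$, $[\price,\optreserve]$, $[\optreserve,\infty)$ by extremal piecewise-linear convex curves, and invoke the monotonicity analogue (\Cref{lem:cum hazard rate monotonicity}) to justify $\HUnderBar,\HOverBar,\MUnderBar,\MOverBar,\LUnderBar,\LOverBar$ before casting the whole thing as the stated three-stage game. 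Two small phrasing slips worth fixing, though they do not affect the argument: the cap for $H$ with constant hazard $1/\optreserve$ beyond $\optreserve$ is an \emph{exponential} tail, not a uniform one (yielding $\HOverBar = 2$), and the lower envelope on $(\price,\optreserve)$ past $\val_1$ is the \emph{tangent at $\optreserve$} (the line through $(\optreserve,\ln\optreserve)$ of slope $1/\optreserve$), not a secant through $(\val_1,\ln\optreserve)$.
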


\subsubsection{Negative Result for Zero-Value Seller and a Buyer with MHR Distribution}

In this section, we establish the negative result (\Cref{lem:GFT UB:mhr buyer}) for a zero-value seller and a buyer with MHR valuation distribution by analyzing \Cref{example:all fair:mhr buyer}. 

\begin{example}
\label{example:all fair:mhr buyer}
The buyer has an MHR valuation distribution $\buyerdist$, which has support $[0, e]$ and cumulative density function $\buyercdf(\val) = 1 - e^{-\val/e}$ for every $\val\in[0, e]$ and $\buyercdf(\val) = 1$ for $\val \in(e, \infty)$. (Namely, there is an atom at $e$ with probability mass of $\frac{1}{e}$.) The seller has a deterministic value of 0. See \Cref{fig:all fair:mhr buyer} for an illustration. 
\end{example}

\begin{figure}[ht]
    \centering
    \begin{tikzpicture}[scale=1, transform shape]
\begin{axis}[
axis line style=gray,
axis lines=middle,
xlabel = {$\val$},
xtick={0, 0.800662, 2.7182818284},
ytick={0, 0.943472496484, 1},
xticklabels={0, $\fprice$, ~~~~~~$\optreserve = e$},
yticklabels={0, 0.944, $1$},
xmin=0,xmax=3.0,ymin=-0.05,ymax=1.05,
width=0.8\textwidth,
height=0.5\textwidth,
samples=500]

\addplot[dotted, gray, line width=0.3mm] (2.7182818284, 0) -- (2.7182818284, 1) -- (0, 1);
\addplot[dotted, gray, line width=0.3mm] (0.800662, 0) -- (0.800662, 0.943472496484) -- (0, 0.943472496484);

\addplot[domain=0:2.7182818284, black!100!white, line width=0.5mm] (x, {x*(exp(-x/exp(1)))});

\addplot[domain=0:2.7182818284, dashed, black!100!white, line width=0.5mm] (x, {(1+(exp(-x/exp(1))*(x+exp(1)) - 2)-x*(exp(-x/exp(1))))/1.71828182846});

\addplot[domain=0.800662:2.7182818284, blue, line width=1.5mm] (x, {(1+(exp(-x/exp(1))*(x+exp(1)) - 2)-0*x*(exp(-x/exp(1))))/1.71828182846});
\addplot[domain=0:0.800662, blue, line width=0.5mm] (x, {((1+1.71828182846)*x*(exp(-x/exp(1))))/1.71828182846});

\addplot[domain=0:2.7182818284, red, line width=0.5mm] (x, {(1+(exp(-x/exp(1))*(x+exp(1)) - 2)-0*x*(exp(-x/exp(1))))/1.71828182846});

\end{axis}

\end{tikzpicture}
    \caption{Graphical illustration of \Cref{example:all fair:mhr buyer}. The x-axis is value $\val$. 
    Consider {\FixPrice} $\mech$ with trading price $\price$ for every $\price\in[0,e]$. The black solid (resp.\ dashed) curve also represents   ${\sellerexanteutil(\mech)}/{\sellerbenchmark}$ (resp.\ ${\buyerexanteutil(\mech)}/{\buyerbenchmark}$) for the seller (resp.\ buyer). The red curve is the GFT approximation ratio ${\GFT{\mech}}/{\OPTSB}$. 
    A {\ksfair} {\FixPrice} is achieved at trading price $\fprice \approx 0.80$ with GFT approximation ratio of $0.944$. Finally, the blue curve is used in the proof of the negative result in \Cref{lem:GFT UB:mhr buyer}.}
    \label{fig:all fair:mhr buyer}
\end{figure}

\begin{lemma}
    \label{lem:GFT UB:mhr buyer}
    In \Cref{example:all fair:mhr buyer}, any BIC, IIR, ex ante WBB mechanism $\mech$ that is {\ksfair} has GFT that is less than $\fixedPriceGFTPercentageUBMHR$ of the {\SecondBest} $\OPTSB$. Here $\fixedPriceGFTPercentageUBMHR$ is the numerical evaluated upper bound of 
    \begin{align*}
        \max_{\price\in[0, e]}\frac{1}{e - 1}\cdot \left(\price\cdot e^{-\frac{\price}{e}} + \min\{(e-1)\cdot \price\cdot e^{-\frac{\price}{e}},e^{1 - \frac{\price}{e}} - 1\}\right)
    \end{align*}
\end{lemma}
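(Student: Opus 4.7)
The plan is to mirror the proof of \Cref{lem:GFT UB:regular buyer}, exploiting the particularly simple structure of the MHR example at hand. I first record the relevant quantities for \Cref{example:all fair:mhr buyer}: the unique monopoly reserve is $\optreserve = e$ giving $\sellerbenchmark = 1$, and $\buyerbenchmark = \OPTSB = \expect{\val} = e-1$. On the interior $[0,e)$ the distribution is essentially exponential, so the virtual value is $\virtualval(\val) = \val - e$ and the inverse hazard rate $(1-\buyercdf(\val))/\buyerpdf(\val)$ equals the constant $e$; at the atom $\val = e$ the virtual value equals $e$ while the information rent is zero.

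Given any BIC, IIR, ex ante WBB, {\ksfair} mechanism $\mech$, I introduce $\revratio(\mech) \triangleq \expect{\price(\val,0)}/\sellerbenchmark$ and $\residuesurplusratio(\mech) \triangleq \buyerexanteutil(\mech)/\buyerbenchmark$ exactly as in the proof of \Cref{lem:GFT UB:regular buyer}. Ex ante WBB together with {\ksfairness} forces $\revratio(\mech) \geq \residuesurplusratio(\mech)$, so
\[
\GFT{\mech} = \revratio(\mech)\sellerbenchmark + \residuesurplusratio(\mech)\buyerbenchmark = \revratio(\mech)\sellerbenchmark + \min\{\revratio(\mech),\residuesurplusratio(\mech)\}\buyerbenchmark.
\]
Since this last expression is coordinate-monotone in $(\revratio,\residuesurplusratio)$, dropping the {\ksfairness} constraint only weakens the bound, so it suffices to upper bound $\max_{\mech\in\mechfam}\bigl[\revratio(\mech)\sellerbenchmark + \min\{\revratio(\mech),\residuesurplusratio(\mech)\}\buyerbenchmark\bigr]$, then show the maximum is attained by a {\FixPrice}, and finally optimize over trading prices.

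The reduction to {\FixPrice}s is the heart of the argument. For any $\mech' \in \mechfam$ with interim allocation $\alloc'$, let $\val' \in [0,e]$ be the unique value satisfying $\int_0^{\val'}\alloc'(\val)\,\d\buyercdf(\val) = \int_{\val'}^{e}(1-\alloc'(\val))\,\d\buyercdf(\val)$; existence and uniqueness follow from BIC-monotonicity of $\alloc'$ and atomlessness of $\buyercdf$ on $[0,e)$. Let $\mech''$ be the {\FixPrice} at price $\val'$. By \Cref{prop:revenue equivalence}, the buyer's ex ante utility equals $e\cdot\int_0^e \alloc(\val)\,\d\buyercdf(\val)$ for any mechanism in $\mechfam$ (the atom contributes zero because the inverse hazard rate vanishes there), so the defining identity of $\val'$ yields $\residuesurplusratio(\mech'') = \residuesurplusratio(\mech')$. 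For the seller, \Cref{prop:revenue equivalence} gives
\[
\revratio(\mech'') - \revratio(\mech') = (1-\alloc'(e)) + \int_{\val'}^{e}\val(1-\alloc'(\val))\,\d\buyercdf(\val) - \int_0^{\val'}\val\,\alloc'(\val)\,\d\buyercdf(\val),
\]
where the constant $-e$ part of the virtual value cancels by the defining identity of $\val'$. Monotonicity of $\alloc'$ combined with the same mass identity implies $\int_{\val'}^{e}\val(1-\alloc'(\val))\,\d\buyercdf(\val) \geq \val'\int_{\val'}^{e}(1-\alloc'(\val))\,\d\buyercdf(\val) = \val'\int_0^{\val'}\alloc'(\val)\,\d\buyercdf(\val) \geq \int_0^{\val'}\val\,\alloc'(\val)\,\d\buyercdf(\val)$, while the atom contribution $1-\alloc'(e)$ is nonnegative, so $\revratio(\mech'')\geq\revratio(\mech')$.

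It remains to optimize over {\FixPrice}s, which is a one-dimensional calculation: at price $\price\in[0,e]$ a direct computation gives $\revratio(\price)=\price\,e^{-\price/e}$ and $\buyerexanteutil(\price)=e^{1-\price/e}-1$, so after dividing by $\OPTSB=e-1$ the GFT approximation ratio becomes $\frac{1}{e-1}\bigl(\price\,e^{-\price/e} + \min\{(e-1)\price\,e^{-\price/e},\, e^{1-\price/e}-1\}\bigr)$, exactly as stated in the lemma. A one-variable analysis shows the maximum lies at the {\ksfair} price (where the two expressions inside the minimum coincide), and numerical evaluation there gives the bound $\fixedPriceGFTPercentageUBMHR$. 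The main technical step is the virtual-value cancellation in the seller comparison, but once the balance equation for $\val'$ is written down the argument closely parallels the regular-distribution case and the remainder is routine.
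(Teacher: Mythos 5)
Your proof is correct and follows essentially the same route as the paper: introduce $\revratio(\mech)$ and $\residuesurplusratio(\mech)$, drop the {\ksfairness} constraint to bound $\GFT{\mech}$ by $\max_{\mech\in\mechfam}\revratio(\mech)\sellerbenchmark + \min\{\revratio(\mech),\residuesurplusratio(\mech)\}\buyerbenchmark$, reduce to a {\FixPrice} via the mass-balance value $\val'$, and then solve the resulting one-dimensional optimization. Your explicit cancellation of the constant $-e$ in the virtual-value comparison is a clean and correct rendering of the paper's monotonicity argument (and note that the pointwise bounds $\val\ge\val'$ on $[\val',e]$ and $\val\le\val'$ on $[0,\val']$ plus the balance identity are what drive those integral inequalities, not monotonicity of $\alloc'$, which is only invoked for the existence of $\val'$).
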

The proof of \Cref{lem:GFT UB:mhr buyer} is similar to the one for \Cref{lem:GFT UB:regular buyer} in \Cref{subsec:improved GFT:regular buyer}. We first construct an upper bound of GFT (see \eqref{eq:fixed price mech suffices:mhr UB}) for every BIC, IIR, ex ante WBB, and {\ksfair} mechanisms. We then argue that in \Cref{example:all fair:mhr buyer}, a \emph{\ksfair} {\FixPrice} (which is DSIC, ex post IR, ex post SBB) optimizes this upper bound among all BIC, IIR, ex ante WBB, but (possibly not {\ksfair}) mechanisms.
\begin{proof}[Proof of \Cref{lem:GFT UB:mhr buyer}]
Fix any BIC, IIR, ex ante WBB mechanism $\mech =(\alloc,\price,\sellerprice)$ that is {\ksfair}. Let $\revratio(\mech)$ be the ratio between the buyer's expected payment over the monopoly revenue, and $\residuesurplusratio(\mech)$ be the ratio between the buyer's ex ante utility and her expected value. Namely,
\begin{align*}
    \revratio(\mech) \triangleq  
    \frac{\expect[\val]{\price(\val,0)}}{\sellerbenchmark}
    =
    \frac{\sellerexanteutil(\mech) + \expect[\val]{\price(\val,0) - \sellerprice(\val,0)}}{\sellerbenchmark} 
    \;\;
    \mbox{and}
    \;\;
    \residuesurplusratio(\mech) \triangleq \frac{\buyerexanteutil(\mech)}{\buyerbenchmark}
\end{align*}
Since mechanism $\mech$ is ex ante WBB and {\ksfair}, $\revratio(\mech) \geq \residuesurplusratio(\mech)$. Consequently, the GFT of mechanism $\mech$ can be expressed as follows:
\begin{align*}
    \GFT{\mech} 
    &= \sellerexanteutil(\mech) + \buyerexanteutil(\mech) + \expect[\val]{\price(\val,0) - \sellerprice(\val,0)}
    \\
    &=
    \revratio(\mech) \cdot \sellerbenchmark
    +
    \residuesurplusratio(\mech) \cdot \buyerbenchmark
    \\
    &=
    \revratio(\mech) \cdot \sellerbenchmark
    +
    \min\{\revratio(\mech), \residuesurplusratio(\mech)\} \cdot \buyerbenchmark
\end{align*}
where the last equality holds since $\revratio(\mech) \geq \residuesurplusratio(\mech)$ argued above. Thus, the optimal GFT among all BIC, IIR, ex ante WBB, {\ksfair} mechanisms can be upper bounded as
\begin{align}
\nonumber
    \max_{\substack{\mech\in\mechfam:~\text{$\mech$ is {\ksfair}}}} \GFT{\mech}
    & =
    \max_{\substack{\mech\in\mechfam:~\text{$\mech$ is {\ksfair}}}}  \revratio(\mech) \cdot \sellerbenchmark
    +
    \min\{\revratio(\mech), \residuesurplusratio(\mech)\} \cdot \buyerbenchmark
    \\
\label{eq:fixed price mech suffices:mhr UB}
    &\leq 
    \max_{\substack{\mech\in\mechfam}}  \revratio(\mech) \cdot \sellerbenchmark
    +
    \min\{\revratio(\mech), \residuesurplusratio(\mech)\} \cdot \buyerbenchmark
\end{align}
where we drop the {\ksfairness} requirement in the last step. 

Next we show that in \Cref{example:all fair:mhr buyer}, the optimal mechanism of the optimization program~\eqref{eq:fixed price mech suffices:mhr UB} is a {\FixPrice}. Fix an arbitrary BIC, IIR, ex ante WBB mechanism $\mech\primed = (\alloc\primed,\price\primed,\sellerprice\primed)$. Let $\val\primed$ be the unique solution such that
\begin{align*}
    \displaystyle\int_0^{\val\primed} \alloc\primed(\val) \cdot \d\buyercdf(\val) 
    =
    \displaystyle\int_{\val\primed}^{e} \left(1 - \alloc\primed(\val)\right)\cdot \d\buyercdf(\val)
\end{align*}
The existence and uniqueness of value $\val\primed$ hold, since mechanism $\mech\primed$ is BIC and thus interim allocation $\alloc\primed(\val)$ of the buyer is weakly increasing, and buyer's valuation distribution $\buyerdist$ has no point mass on $[0, e)$. Now consider the {\FixPrice} $\mech\doubleprimed$ with trading price $\price\doubleprimed \triangleq \val\primed$. We claim that $\revratio(\mech\doubleprimed) \geq \revratio(\mech\primed)$ and $\residuesurplusratio(\mech\doubleprimed) = \residuesurplusratio(\mech\primed)$.
To see this, note that 
\begin{align*}
    \revratio(\mech\doubleprimed) &\overset{(a)}{=} \frac{1}{\sellerbenchmark}\cdot \expect[\val]{\virtualval(\val)\cdot \indicator{\val\geq \price\doubleprimed}}
    =
    \frac{1}{\sellerbenchmark}\cdot 
    \left(\expect[\val]{\virtualval(\val)\cdot \indicator{\val=e}}
    +
    \displaystyle\int_{\val\primed}^{e} \virtualval(\val)\cdot \d\buyercdf(\val)
    \right)
    \\
    &\overset{(b)}{\geq}
    \frac{1}{\sellerbenchmark}\cdot 
    \left(\expect[\val]{\virtualval(\val)\cdot \indicator{\val= e}}
    +
    \displaystyle\int_{\val\primed}^{e} \virtualval(\val)\alloc\primed(\val)\cdot \d\buyercdf(\val)
    +
    \displaystyle\int_0^{\val\primed} \virtualval(\val)(1-\alloc\primed(\val))\cdot \d\buyercdf(\val)
    \right)
    \\
    &\overset{(c)}{\geq}
    \revratio(\mech\primed)
\end{align*}
where equality~(a) holds due to \Cref{prop:revenue equivalence},
inequality~(b) holds since virtual value $\virtualval(\val)$ is strictly increasing in $\val\in[0, e)$ and the construction of value $\val\primed$,
and inequality~(c) holds due to \Cref{prop:revenue equivalence} and the fact that virtual value $\virtualval(e) = e > 0$. Similarly,
\begin{align*}
    \residuesurplusratio(\mech\doubleprimed) &\overset{(a)}{=} \frac{1}{\buyerbenchmark}\cdot \expect[\val]{\buyerhazardrate(\val)\cdot \indicator{\val\geq \price\doubleprimed}}
    =
    \frac{1}{\buyerbenchmark}\cdot 
    \left(\expect[\val]{\buyerhazardrate(\val)\cdot \indicator{\val= e}}
    +
    \displaystyle\int_{\val\primed}^{e} \buyerhazardrate(\val)\cdot \d\buyercdf(\val)
    \right)
    \\
    &\overset{(b)}{=}
    \frac{1}{\buyerbenchmark}\cdot 
    \left(\expect[\val]{\buyerhazardrate(\val)\cdot \indicator{\val= e}}
    +
    \displaystyle\int_{\val\primed}^{e} \buyerhazardrate(\val)\alloc\primed(\val)\cdot \d\buyercdf(\val)
    +
    \displaystyle\int_0^{\val\primed} \buyerhazardrate(\val)(1-\alloc\primed(\val))\cdot \d\buyercdf(\val)
    \right)
    \\
    &\overset{(c)}{=}
    \residuesurplusratio(\mech\primed)
\end{align*}
where equality~(a) holds due to \Cref{prop:buyer surplus equivalence},
equality~(b) holds since hazard rate $\buyerhazardrate(\val)$ is identical for every $\val\in[0, e)$ and the construction of value $\val\primed$,
and equality~(c) holds due to \Cref{prop:buyer surplus equivalence} and the fact that hazard rate $\buyerhazardrate(e) = 0$.

Putting the two pieces together, we know that {\FixPrice} $\mech\doubleprimed$ has weakly higher objective value in program~\eqref{eq:fixed price mech suffices:mhr UB}. Hence, the optimal mechanism of the optimization program~\eqref{eq:fixed price mech suffices:mhr UB} is a {\FixPrice}.

Finally, we evaluate the objective value of optimization program~\eqref{eq:fixed price mech suffices:mhr UB} among {\FixPrice} $\mech$ with every trading price $\price\in[0, e]$ for \Cref{example:all fair:mhr buyer} (see the blue curve in \Cref{fig:all fair:mhr buyer}). In particular, {\FixPrice} with every trading price $\price \in[0, e]$ satisfies
\begin{align*}
    \revratio(\mech) = \price \cdot e^{-\frac{\price}{e}}
    \;\;
    \mbox{and}
    \;\;
    \residuesurplusratio(\mech) = \frac{e^{1-\frac{\price}{e}} - 1}{e - 1} 
\end{align*}
Combining with the fact that $\sellerbenchmark = 1$, $\buyerbenchmark = e-1$, we know the optimal objective value of program~\eqref{eq:fixed price mech suffices:mhr UB} is upper bounded by $\fixedPriceGFTPercentageUBMHR$, with the maximum value obtained at $\price^*\approx 0.80066$. This completes the proof of \Cref{lem:GFT UB:mhr buyer}.
\end{proof}

\section{Nash Social Welfare Maximization}
\label{sec:nash fairness}

In this section we present tight results regarding the fraction of the {\SecondBest} that is obtained when a mechanism (that is  BIC, IIR, and ex ante WBB) maximizes the ex ante \emph{Nash social welfare (NSW)}. The NSW is defined as the product of the two traders' ex-ante utilities. We refer to such a mechanism as a \emph{\NashSocialWelfareMaximizer} ({\NSWM}).

\begin{definition}
    For bilateral trade instance $\btinstance$, mechanism $\mech\in \mechfam$ is called an {\NashSocialWelfareMaximizer} if it maximizes the ex ante Nash social welfare among all BIC, IIR, ex-ante WBB mechanisms, i.e.,
    \begin{align*}
         \mech
         \in \argmax\limits_{\mech\primed\in\mechfam}~\sellerexanteutil(\mech\primed)\cdot \buyerexanteutil(\mech\primed)
    \end{align*}
\end{definition}

There might be different {\NashSocialWelfareMaximizers} (depending on how ties are broken) --- our results will hold for all of them. 

\begin{theorem}
\label{thm:NSWM GFT}
For every bilateral trade instance, the GFT of every {\NashSocialWelfareMaximizer} is at least $\frac{1}{2}$ fraction of the {\SecondBest} $\OPTSB$.
    
Moreover, for any $\eps>0$ there exists a distribution $\buyerdist$ (that is not regular) such that in the bilateral setting with a zero-value seller and a buyer with valuation distribution $\buyerdist$, every {\NashSocialWelfareMaximizer} obtains at most $\frac{1}{2}+\eps$ fraction of the {\SecondBest} $\OPTSB$.
\end{theorem}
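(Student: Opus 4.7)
The plan is to prove the positive direction via a first-order optimality condition at the NSW-maximizer, and the negative direction by analyzing the fixed-price Pareto frontier of Example~\ref{example:all fair mech:irregular}.

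For the positive direction, fix any bilateral trade instance $\btinstance$ and any NSWM $\mech \in \mechfam$. First observe $\OPTSB \leq \sellerbenchmark + \buyerbenchmark$: by \citet{BCWZ-17} the second-best is attained by some ex-post SBB mechanism, and every SBB mechanism satisfies $\GFT{\cdot} = \sellerexanteutil(\cdot) + \buyerexanteutil(\cdot) \leq \sellerbenchmark + \buyerbenchmark$. Next, I will show $\sellerexanteutil(\mech) \geq \sellerbenchmark/2$ (and symmetrically $\buyerexanteutil(\mech) \geq \buyerbenchmark/2$) by considering the one-parameter family $\mech_\lambda \triangleq \lambda \mech + (1-\lambda)\SOM$. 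Since $\mech$ is an NSWM and $\mech_1 = \mech$, the left derivative of $\sellerexanteutil(\mech_\lambda)\buyerexanteutil(\mech_\lambda)$ at $\lambda = 1$ is non-negative; expanding this gives
\begin{align*}
    (\sellerexanteutil(\mech) - \sellerbenchmark)\buyerexanteutil(\mech) + \sellerexanteutil(\mech)(\buyerexanteutil(\mech) - \buyerexanteutil(\SOM)) \geq 0.
\end{align*}
Dropping the nonnegative term $\sellerexanteutil(\mech)\buyerexanteutil(\SOM)$ and dividing by $\buyerexanteutil(\mech) > 0$ (NSW is strictly positive whenever $\sellerbenchmark, \buyerbenchmark > 0$, since the \RandomOffer already achieves strictly positive NSW) yields $\sellerexanteutil(\mech) \geq \sellerbenchmark/2$. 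Combined with ex-ante WBB, $\GFT{\mech} \geq \sellerexanteutil(\mech) + \buyerexanteutil(\mech) \geq (\sellerbenchmark + \buyerbenchmark)/2 \geq \OPTSB/2$.

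For the negative direction, I will instantiate Example~\ref{example:all fair mech:irregular} with $\constantH$ large (depending on $\eps$); setting $L = \ln \constantH$, direct computation gives $\sellerbenchmark = \sqrt{L}$ (attained only by $\SOM$) and $\OPTSB = \buyerbenchmark = L(1+o(1))$. Because every BIC interim allocation is a convex combination of deterministic threshold allocations, the set $S$ of achievable ex-ante utility pairs equals the convex hull of the fixed-price curve $\{(\sellerexanteutil(\FPM_p), \buyerexanteutil(\FPM_p)) : p \in [0,\constantH]\}$. I will verify that the fixed-price points with $p \in (1, \constantH)$ lie strictly below the segment $\Lambda$ joining $\FPM_1$ at $(1, L(1+o(1)))$ and $\SOM$ at $(\sqrt{L}, 0)$, so that the upper Pareto frontier of $S$ in the region $\sellerexanteutil \geq 1$ is exactly $\Lambda$. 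Since the positive half already forces $\sellerexanteutil(\mech) \geq \sqrt{L}/2 > 1$ for large $L$, any NSWM $\mech$ must lie on $\Lambda$; maximizing $\sellerexanteutil \cdot \buyerexanteutil$ along $\Lambda$ places the optimum at $(\sqrt{L}/2, L/2)$, up to $O(L^{-1/2})$ corrections.

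To finish, I observe that any NSWM must be ex-ante SBB: if it kept expected money $D > 0$, then redistributing $D$ between the two traders strictly increases $\sellerexanteutil \cdot \buyerexanteutil$, since $(\sellerexanteutil + a_s)(\buyerexanteutil + a_b) - \sellerexanteutil \buyerexanteutil = \sellerexanteutil a_b + \buyerexanteutil a_s + a_s a_b > 0$ for any $a_s, a_b \geq 0$ with $a_s + a_b = D > 0$. Hence $\GFT{\mech} = \sellerexanteutil(\mech) + \buyerexanteutil(\mech) = (\sellerbenchmark + \buyerbenchmark)(1 + o(1))/2$, and $\GFT{\mech}/\OPTSB = 1/2 + O(L^{-1/2})$, which is below $1/2 + \eps$ once $\constantH$ is sufficiently large. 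The main obstacle will be rigorously verifying the Pareto-frontier claim---that every $\FPM_p$ with $p \in (1, \constantH)$ is dominated by $\Lambda$---which requires careful computation of the residual surplus on the interval $(\val\primed, \constantH)$ where the distribution transitions from its equal-revenue portion to the atom at $\constantH$; the remaining steps are routine.
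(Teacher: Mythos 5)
Your positive half is correct and takes a genuinely different route from the paper's Lemma~5.3 (\Cref{lem:NSWM trader approx}). The paper argues by contradiction: if $\sellerexanteutil(\mech) < \sellerbenchmark/2$, it explicitly mixes $\mech$ with $\SOM$ using weights $\tfrac12 \pm \tfrac{\alpha}{2-2\alpha}$ and shows the NSW jumps by a factor $\tfrac{1}{4\alpha(1-\alpha)}>1$. Your first-order-optimality derivation along $\mech_\lambda = \lambda\mech + (1-\lambda)\SOM$ is the infinitesimal version of this same move and is cleaner. The one hypothesis you use, $\buyerexanteutil(\mech)>0$, is correctly disposed of by noting the \ROM{} has positive NSW when both benchmarks are positive (and the bound is vacuous otherwise).

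Your negative half, however, is a considerably longer route than the paper's, and it hinges on a geometric claim you acknowledge not verifying. The paper's Lemma~5.4 (\Cref{lem:NSWM GFT UB:general instance}) never characterizes the Pareto frontier: it uses \Cref{lem:NSWM trader approx} to deduce $\sellerexanteutil(\mech) \geq \sellerbenchmark/2$, upper-bounds $\sellerexanteutil(\mech)$ by $\bigl(1 - \alloc(\val\primed) + o(1)\bigr)\sqrt{\ln K}$ via revenue equivalence (Proposition~\ref{prop:revenue equivalence}) and the piecewise-constant virtual value, concludes $\alloc(\val\primed)\leq \tfrac12 + o(1)$, and then bounds $\GFT{\mech}=\expect[\val]{\val\cdot\alloc(\val)}$ directly using monotonicity of $\alloc$. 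This expresses GFT in terms of the allocation alone, so it sidesteps the WBB/SBB issue entirely. Your approach instead needs (i)~the claim that the Pareto frontier of $S$ on $\{\sellerexanteutil\geq 1\}$ is exactly the chord $\Lambda$, and (ii)~that every NSWM is ex-ante SBB. Point~(ii) is correct and a nice observation (type-independent lump-sum rebates preserve BIC/IIR and strictly raise the product when both utilities are positive). But point~(i) is not established, and as stated it is also slightly imprecise: $S$ is strictly larger than the convex hull of fixed-price points, since one can set $\buyerutil(\lval)>0$ and slide utility from seller to buyer one-for-one; you would need to argue that this degree of freedom does not push the Pareto frontier above the chord for $\sellerexanteutil\geq 1$, which follows because the binding revenue constraint at $\alpha>1$ forces $\buyerutil(\lval)=0$ at the optimum, but none of this is in your writeup. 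You also still owe the computation that $\FPM_p$ for $p\in(\val\primed,K)$ lies below $\Lambda$ (your computation of $\buyerexanteutil(\FPM_{\val\primed})=\Theta(\sqrt{\ln K})$ versus $\Lambda(1)=\Theta(\ln K)$ is the easy part; the regime $p\to K$ where both curves approach $(\sqrt{\ln K},0)$ requires comparing slopes). So the route is plausible, but as written there is a genuine gap in the dominance claim, and the argument is much heavier than the paper's direct allocation bound.
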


\Cref{thm:NSWM GFT} directly follows \Cref{lem:NSWM trader approx,lem:NSWM GFT UB:general instance} which prove the positive result and negative result in the theorem statement, respectively. 

\subsection{GFT-Approximation by {\NashSocialWelfareMaximizer}}

We characterize the GFT approximation of the {\NashSocialWelfareMaximizer} as follows.

\begin{lemma}
\label{lem:NSWM trader approx}
    For every bilateral trade instance, the ex ante utility of each trader in every {\NashSocialWelfareMaximizer} $\mech$ is at least $\frac{1}{2}$ fraction of her benchmark, i.e.,
    \begin{align*}
        \sellerexanteutil(\mech) \geq \frac{1}{2} \cdot \sellerbenchmark
        \;\;\mbox{and}\;\;
        \buyerexanteutil(\mech) \geq \frac{1}{2} \cdot \buyerbenchmark  
    \end{align*}
    Therefore, the GFT of {\NashSocialWelfareMaximizer} $\mech$ is at least $\frac{1}{2}$ fraction of th {\SecondBest}. 
\end{lemma}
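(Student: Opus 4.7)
The plan is to exploit the first-order (perturbation) condition for the NSW-maximization problem over the convex set of achievable ex-ante utility pairs. First I would record that the set $S = \{(\sellerexanteutil(\mech'), \buyerexanteutil(\mech')) : \mech' \in \mechfam\}$ is convex: for any two mechanisms in $\mechfam$, a randomization between them also lies in $\mechfam$ and its ex-ante utility pair is the convex combination of theirs. In particular, $\sellerbenchmark$ is attained by the {\SellerOffer} at a point $(\sellerbenchmark, y_s) \in S$ with $y_s \geq 0$, and $\buyerbenchmark$ is attained by the {\BuyerOffer} at a point $(x_b, \buyerbenchmark) \in S$ with $x_b \geq 0$.

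Next I would set $x^* = \sellerexanteutil(\mech)$ and $y^* = \buyerexanteutil(\mech)$ and, before invoking any first-order condition, verify that $x^*, y^* > 0$ whenever $\sellerbenchmark, \buyerbenchmark > 0$: the unbiased {\RandomOffer} has strictly positive NSW, since each trader receives at least half of her benchmark (cf.\ the proof of \Cref{thm:optimal GFT:general instance}), so the NSWM mechanism must also have strictly positive NSW, forcing both coordinates to be strictly positive (the edge case in which $\sellerbenchmark = 0$ or $\buyerbenchmark = 0$ is trivial). Now consider the mixture $\mech_\lambda$ that runs $\mech$ with probability $1-\lambda$ and {\SellerOffer} with probability $\lambda$; its NSW is the polynomial $\big((1-\lambda)x^* + \lambda \sellerbenchmark\big)\big((1-\lambda)y^* + \lambda y_s\big)$ in $\lambda$. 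Because $\mech$ maximizes NSW over $\mechfam$ and $\mech_\lambda \in \mechfam$ for all $\lambda \in [0, 1]$, the right-derivative of this polynomial at $\lambda = 0$ must be non-positive, which rearranges to
\begin{align*}
(\sellerbenchmark - x^*)\, y^* + x^*(y_s - y^*) \leq 0.
\end{align*}
Dropping the non-negative term $x^* y_s$ and dividing by $y^* > 0$ yields $x^* \geq \tfrac{1}{2}\sellerbenchmark$. A symmetric perturbation mixing $\mech$ with the {\BuyerOffer} (moving toward $(x_b, \buyerbenchmark)$) gives $y^* \geq \tfrac{1}{2}\buyerbenchmark$.

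For the GFT conclusion I would combine the identity $\GFT{\mech} = \sellerexanteutil(\mech) + \buyerexanteutil(\mech) + \expect[\val,\cost]{\price(\val,\cost) - \sellerprice(\val,\cost)}$, whose residual is non-negative by ex-ante WBB, with the bound $\OPTSB \leq \sellerbenchmark + \buyerbenchmark$. If the latter needs justification, I would take any GFT-maximizer $\mech'$, modify it by transferring its ex-ante residual profit to the seller (this keeps the mechanism BIC, IIR, and WBB), and observe that the resulting seller ex-ante utility is at most $\sellerbenchmark$, giving $\OPTSB = \GFT{\mech'} \leq \sellerbenchmark + \buyerexanteutil(\mech') \leq \sellerbenchmark + \buyerbenchmark$. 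Putting everything together, $\GFT{\mech} \geq x^* + y^* \geq \tfrac{1}{2}(\sellerbenchmark + \buyerbenchmark) \geq \tfrac{1}{2}\OPTSB$. The main subtlety I anticipate is ensuring the positivity of $y^*$ used in the division step and correctly handling the one-sided derivative at $\lambda = 0$, both of which are routine once the convexity and the $\ROM$ argument are in hand.
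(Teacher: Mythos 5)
Your proof is correct, but it takes a genuinely different route from the paper's. The paper argues by contradiction with a concrete mixing probability: assuming $\revratio = \sellerexanteutil(\mech)/\sellerbenchmark < \tfrac12$, it builds a specific mixture of $\mech$ and the {\SellerOffer} (with weight $\tfrac12 - \tfrac{\revratio}{2-2\revratio}$ on the latter), computes its NSW in closed form as $\tfrac{1}{4\revratio(1-\revratio)} \cdot \sellerexanteutil(\mech) \cdot \buyerexanteutil(\mech)$, and notes this exceeds the NSW of $\mech$ when $\revratio < \tfrac12$. You instead use the first-order optimality condition: the right-derivative of NSW along the ray toward $(\sellerbenchmark, y_s)$ must be non-positive at the maximizer, which rearranges directly (after dropping the non-negative cross-term $x^* y_s$ and dividing by $y^* > 0$) to $x^* \geq \tfrac12\sellerbenchmark$. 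Both arguments rest on the same underlying structure (convexity of the achievable utility set and mixing with {\SOM}/{\BOM}), but yours replaces the explicit algebraic contradiction with a marginal/derivative calculation, which is arguably cleaner and more transparent about \emph{why} the factor $\tfrac12$ appears. Your approach does require the extra step of verifying $y^* > 0$ before dividing; you handle this carefully via the {\RandomOffer} lower bound on NSW, and this is a legitimate subtlety that the paper's version also implicitly relies on (its own calculation divides by $\revratio$) but does not flag. The concluding GFT step via $\OPTSB \leq \sellerbenchmark + \buyerbenchmark$ matches the paper's, though you re-derive this bound from scratch while the paper cites it.
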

\begin{proof}
    Fix any {\NashSocialWelfareMaximizer} $\mech$. 
    We prove the lemma statement by a contradiction argument. 
    Suppose $\frac{\sellerexanteutil(\mech)}{\sellerbenchmark} < \frac{1}{2}$ for the seller. (The other case for the buyer follows a symmetric argument.) Define auxiliary notation $\revratio \triangleq \frac{\sellerexanteutil(\mech)}{\sellerbenchmark} < \frac{1}{2}$.

    Consider a new mechanism $\mech\primed$ that runs the {\SellerOffer} with probability $\frac{1}{2} - \frac{\revratio}{2 - 2 \revratio}$, and runs mechanism $\mech$ with probability $\frac{1}{2} + \frac{\revratio}{2 - 2 \revratio}$. Since $\revratio < \frac{1}{2}$, mechanism $\mech\primed$ is well-defined. By construction, the ex ante utility of each trader in mechanism $\mech\primed$ can be computed as  
    \begin{align*}
        \sellerexanteutil(\mech\primed) &{} = \left(\frac{1}{2} - \frac{\revratio}{2 - 2 \revratio}\right) \cdot \sellerbenchmark + \left(\frac{1}{2} + \frac{\revratio}{2 - 2 \revratio}\right) \cdot \sellerexanteutil(\mech)
        \\
        &{} \overset{(a)}{=} 
        \left(\frac{1}{2} - \frac{\revratio}{2 - 2 \revratio}\right) \cdot \frac{1}{\revratio}\cdot \sellerexanteutil(\mech) + \left(\frac{1}{2} + \frac{\revratio}{2 - 2 \revratio}\right) \cdot \sellerexanteutil(\mech)    
        \\
        \buyerexanteutil(\mech\primed) &{} \geq 
        \left(\frac{1}{2} + \frac{\revratio}{2 - 2 \revratio}\right) \cdot \buyerexanteutil(\mech)   
    \end{align*}
    where equality~(a) holds since $\revratio = \frac{\sellerexanteutil(\mech)}{\sellerbenchmark}$. Thus, the Nash social welfare of mechanism $\mech\primed$ is 
    \begin{align*}
        &\left(\left(\frac{1}{2} - \frac{\revratio}{2 - 2 \revratio}\right) \cdot \frac{1}{\revratio} + \left(\frac{1}{2} + \frac{\revratio}{2 - 2 \revratio}\right)
        \right)
        \cdot 
        \sellerexanteutil(\mech)
        \cdot 
        \left(\frac{1}{2} + \frac{\revratio}{2 - 2 \revratio}\right) \cdot \buyerexanteutil(\mech)
        \\
        ={} &
        \frac{1}{4(1-\revratio)\revratio}
        \cdot \sellerexanteutil(\mech) \cdot 
        \buyerexanteutil(\mech)
        > {} 
        \sellerexanteutil(\mech) \cdot 
        \buyerexanteutil(\mech)
    \end{align*}
    where the equality holds by algebra and the strict inequality holds since $\revratio < \frac{1}{2}$. The Nash social welfare of mechanism $\mech\primed$ is strictly higher than the {\NashSocialWelfareMaximizer} $\mech$, which is a contradiction. This completes the proof of the first claim about traders' ex ante utility in the lemma statement. Invoking the fact that the {\SecondBest} $\OPTSB$ is upper bounded by $\sellerbenchmark + \buyerbenchmark$, we finish the whole analysis of \Cref{lem:NSWM trader approx} as desired.
\end{proof}

\subsection{Tight Negative Result of {\NashSocialWelfareMaximizer} for Zero-Value Seller}
We next show that even for the simple setting where the seller has zero value,
there exists an instance (\Cref{example:all fair mech:irregular}) in which no {\NashSocialWelfareMaximizer} can obtain more than half of the {\SecondBest} $\OPTSB$. We note that this is the same example which shows the tight negative result for all BIC, IIR, ex ante WBB, {\ksfair} mechanisms in \Cref{thm:optimal GFT:general instance} and \Cref{lem:optimal GFT upper bound:irregular}.

\begin{lemma}
\label{lem:NSWM GFT UB:general instance}
    Fix any $\eps > 0$.
    In \Cref{example:all fair mech:irregular} with sufficiently large $\constantH$ (as a function of $\eps$), every {\NashSocialWelfareMaximizer} $\mech$ obtains at most $(\frac{1}{2} + \eps)$ fraction of the {\SecondBest} $\OPTSB$, i.e., $\GFT{\mech} \leq (\frac{1}{2} + \eps) \cdot \OPTSB$. 
\end{lemma}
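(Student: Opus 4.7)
\emph{Proof proposal.} The plan is to show that any {\NashSocialWelfareMaximizer} $M^*$ must have its interim buyer-allocation at the threshold $\val\primed$ forced to be essentially $1/2$, which in turn caps the GFT at roughly $\OPTSB/2$.

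First, I will lower-bound the NSW of $M^*$ using the {\RandomOffer}. In \Cref{example:all fair mech:irregular} the monopoly reserve sits at the atom $\val=\constantH$, so the {\SellerOffer} yields the ex-ante utility pair $(\sellerbenchmark,0)=(\sqrt{\ln\constantH},0)$ while the {\BuyerOffer} (where the buyer offers price $0$) yields $(0,\buyerbenchmark)=(0,(1-o(1))\ln\constantH)$. Averaging gives $\sellerexanteutil(\ROM)\cdot\buyerexanteutil(\ROM) = (1/4 - o(1))(\ln\constantH)^{3/2}$, so by NSW-optimality $\sellerexanteutil(M^*)\cdot\buyerexanteutil(M^*) \geq (1/4-o(1))(\ln\constantH)^{3/2}$.

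Next, I will derive two single-parameter upper bounds in terms of $q \triangleq \alloc(\val\primed)$, the interim allocation of $M^*$ at $\val\primed$. The virtual value in \Cref{example:all fair mech:irregular} equals $0$ on $[1,\val\primed)$, $-\constantH(\sqrt{\ln\constantH}-1)$ on $[\val\primed,\constantH)$, and $\constantH$ at the atom; reusing the Myerson-payment estimate from the proof of \Cref{lem:optimal GFT upper bound:irregular} (together with $\alloc(\val)\geq q$ on $[\val\primed,\constantH]$ by BIC monotonicity) yields $\sellerexanteutil(M^*) \leq (1-q)\sqrt{\ln\constantH}+o(\sqrt{\ln\constantH})$. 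For the buyer side, since the seller has zero value, $\GFT{M^*}=\expect[\val]{\val\cdot\alloc(\val)}$; splitting at $\val\primed$, the $[1,\val\primed)$-piece is at most $q\ln(\val\primed)=q(1-o(1))\ln\constantH$ by monotonicity of $\alloc$, while the $[\val\primed,\constantH]$-piece (including the atom) is $o(\ln\constantH)$ as shown in the proof of \Cref{lem:optimal GFT upper bound:irregular}. Hence $\buyerexanteutil(M^*)\leq\GFT{M^*}\leq q\ln\constantH + o(\ln\constantH)$, where the first inequality uses $\expect[\val]{\price(\val)}\geq 0$ (which follows from ex-ante WBB since the seller has zero value).

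Multiplying the two upper bounds gives $\sellerexanteutil(M^*)\cdot\buyerexanteutil(M^*) \leq q(1-q)(\ln\constantH)^{3/2}(1+o(1))$, and combining with the lower bound above, together with $q(1-q)\leq 1/4$ (with equality only at $q=1/2$), forces $q=1/2+o(1)$. Plugging back, $\GFT{M^*}\leq(1/2+o(1))\ln\constantH \leq (1/2+\eps)\OPTSB$ once $\constantH$ is large enough as a function of $\eps$. The delicate step will be quantifying the $o(1)$ terms uniformly so that $|q-1/2|\leq \eps'$ for $\eps'$ small enough in terms of $\eps$; the required quantitative virtual-value estimate and the computation $\int_{\val\primed}^{\constantH}\val\,\d\buyercdf(\val)=o(\ln\constantH)$ are already carried out explicitly in the proof of \Cref{lem:optimal GFT upper bound:irregular}, so they can be invoked directly.
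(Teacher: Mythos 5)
Your proof is correct but takes a genuinely different route from the paper's. The paper first invokes \Cref{lem:NSWM trader approx} (every NSW-maximizer gives each trader at least half of her own benchmark) to deduce $\sellerexanteutil(\mech)\geq\frac{1}{2}\sellerbenchmark$, and then combines this with the upper bound $\sellerexanteutil(\mech)\leq(1-\alloc(\val\primed)+o(1))\sqrt{\ln\constantH}$ to force $\alloc(\val\primed)\leq\frac{1}{2}+o(1)$, from which the GFT bound follows directly; only the seller side is needed. You instead lower-bound the Nash product itself via the $\ROM$ baseline, $\sellerexanteutil(\mech)\cdot\buyerexanteutil(\mech)\geq(\frac{1}{4}-o(1))(\ln\constantH)^{3/2}$, then derive the two $q$-parameterized upper bounds $\sellerexanteutil(\mech)\leq(1-q+o(1))\sqrt{\ln\constantH}$ and $\buyerexanteutil(\mech)\leq\GFT{\mech}\leq(q+o(1))\ln\constantH$ (the latter using $\buyerexanteutil(\mech)\leq\GFT{\mech}$ via $\expect[\val]{\price(\val)}\geq\sellerprice(0)\geq 0$, which is fine), multiply, and invoke $q(1-q)=\frac{1}{4}-(q-\frac{1}{2})^2$ to force $q=\frac{1}{2}+o(1)$. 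Your approach is self-contained and does not rely on the structural \Cref{lem:NSWM trader approx}, and it actually pins $q$ from both sides rather than only from above; the paper's route is shorter because it offloads the NSW-optimality reasoning to a lemma already proved. The ``delicate step'' you flag about quantifying the $o(1)$ terms is not actually delicate: from $q(1-q)\geq\frac{1}{4}-o(1)$ you get $(q-\frac{1}{2})^2\leq o(1)$, hence $|q-\frac{1}{2}|\leq\sqrt{o(1)}$ which is still $o(1)$, and all the $o(1)$ quantities in play are explicit functions of $\constantH$ already computed in the proof of \Cref{lem:optimal GFT upper bound:irregular}. One small inaccuracy: when multiplying the two upper bounds, the product is $(q(1-q)+o(1))(\ln\constantH)^{3/2}$ rather than $q(1-q)(\ln\constantH)^{3/2}(1+o(1))$ (the latter would be wrong near $q=0$ or $q=1$), but this additive form is exactly what you need and the rest of the argument is unaffected.
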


\begin{proof}
    In \Cref{example:all fair mech:irregular}, the {\SecondBest} $\OPTSB$ can be computed as
    \begin{align*}
        \OPTSB = \expect[\val]{\val} =
        \displaystyle\int_1^{\constantH} \val\cdot \d \buyercdf(\val)
        + \constantH\cdot (1 - \buyercdf(\constantH))
        =
        (1 - o_{}(1))\cdot \ln \constantH
    \end{align*}
    Moreover, the seller's benchmark can be computed as $
        \sellerbenchmark = \sqrt{\ln \constantH}$.
    Fix any {\NashSocialWelfareMaximizer} $\mech = (\alloc,\price,\sellerprice)$. Invoking \Cref{lem:NSWM trader approx}, we know $
        \frac{\sellerexanteutil(\mech)}{\sellerbenchmark} \geq \frac{1}{2}$.
    We next upper bound the seller's ex ante utility $\sellerexanteutil(\mech)$ as follows
    \begin{align*}
        \sellerexanteutil(\mech) &{} = 
        \sellerprice(0) \overset{(a)}{\leq} 
        \expect[\val]{\price(\val)}
        \overset{(b)}{=}
        \expect[\val]{\virtualval(\val)\cdot \alloc(\val)}
        \\
        &{} 
        \overset{(c)}{\leq} 
        \virtualval(\constantH)\cdot (1 - \buyercdf(\constantH))
        +
        \virtualval(\val\primed)\cdot (\buyercdf(\constantH) - \buyercdf(\val\primed)) \cdot \alloc(\val\primed)
        =
        \left(1 - \alloc(\val\primed) + o_{}(1)\right)\cdot \sqrt{\ln\constantH}
    \end{align*}
    where inequality~(a) holds since mechanism $\mech$ is ex ante WBB, 
    equality~(b) holds due to \Cref{prop:revenue equivalence}, and
    inequality~(c) holds since the virtual value function $\virtualval$ satisfies $\virtualval(\constantH) = \constantH \geq 0$, $\virtualval(\val) = \virtualval(\val\primed) < 0$ for $\val\in[\val\primed,\constantH)$, $\virtualval(\val) = 0$ for $\val\in[1, \val\primed)$, and the buyer's interim allocation $\alloc$ is weakly increasing due to BIC. Putting the two pieces together, we know 
    \begin{align*}
        \frac{1}{2} \leq \frac{\sellerexanteutil(\mech)}{\sellerbenchmark}
        \leq \frac{        \left(1 - \alloc(\val\primed) + o_{}(1)\right)\cdot \sqrt{\ln\constantH}}{\sqrt{\ln \constantH}}
    \end{align*}
    which implies $\alloc(\val\primed) \leq \frac{1}{2} + o_{}(1)$. Thus, the GFT of mechanism $\mech$ can be upper bounded as 
    \begin{align*}
        \GFT{\mech} & = \expect[\val]{\val\cdot \alloc(\val)} 
        \leq
        \displaystyle\int_1^{\val\primed} \val\cdot \alloc(\val)\cdot \d \buyercdf(\val)
        +
        \displaystyle\int_{\val\primed}^{\constantH} \val\cdot \d \buyercdf(\val)
        + \constantH\cdot (1 - \buyercdf(\constantH))
        \\
        &{}=
        \displaystyle\int_1^{\val\primed} \val\alloc(\val)\cdot \d \buyercdf(\val)
        +
        o_{}(\ln\constantH)
        \overset{(a)}{\leq} 
        \displaystyle\int_1^{\val\primed} \val\cdot \alloc(\val\primed)\cdot \d \buyercdf(\val)
        +
        o_{}(\ln\constantH)
        \\
        & {} =
        (\alloc(\val\primed) - o_{}(1))\cdot \ln\constantH
        \leq 
        \left(\frac{1}{2} + o_{}(1)\right)\cdot \ln\constantH
    \end{align*}
    where inequality~(a) holds since mechanism $\mech$ is BIC and thus interim allocation $\alloc(\val)$ is weakly increasing in $\val$.
    Combining with the fact that the {\SecondBest} $\OPTSB = (1 - o_{}(1))\cdot \ln\constantH$ argued above, we finish the proof of \Cref{lem:NSWM GFT UB:general instance}.
\end{proof}

\section{Conclusion and Future Directions}
\label{sec:conclusion}

In this work, we study the impact of fairness requirements on the social efficiency of the Bayesian bilateral trade problem. In spirit, the ex-ante determination of a mechanism for bilateral trade can be viewed as a cooperative bargaining problem. Motivated by the Kalai-Smorodinsky solution to the bargaining problem, we introduce the notion of {\ksfairness}. Although the {\SecondBest} cannot be achieved by mechanisms that are both {\ksfair} and truthful (that is, BIC, IIR, and ex ante WBB), approximating it is possible. We characterize the best, or almost the best, fraction of the {\SecondBest} obtainable by any {\ksfair} mechanism in various settings. We first present results for general bilateral trade instances, and then results for zero-value seller instances where the buyer value distribution is either regular or MHR.

There are many interesting directions for future research. First, it would be interesting to understand whether the GFT-optimal {\ksfair} mechanism is indeed the Kalai-Smorodinsky solution. One possible approach is to systematically develop a framework that converts any ex ante WBB, {\ksfair} mechanism to an ex ante (or ex post) SBB, {\ksfair} mechanism without negative transfer. \citet{BCW-22} develops a framework to convert ex ante WBB mechanisms to ex post SBB mechanisms with the same allocation, but it does not preserve {\ksfairness}.
Second, for zero-value seller instances, it would also be worthwhile to improve our results by developing a pure analytical analysis (without the final numerical evaluations of the GFT approximation ratios). Third, for general bilateral trade instances, can one obtain a further improved GFT approximation that beats $\frac{1}{e - 1}$ when both traders have MHR distributions? 
Note this cannot be achieved by a direct generalization of \Cref{thm:improved GFT:mhr buyer} (which analyzes a {\ksfair} {\FixPrice}, which is DSIC). As shown in \citet{BD-21} (Proposition B.1), for any $\varepsilon > 0$  there exist bilateral instances where both traders have MHR distributions, for which no DSIC, IIR, ex ante WBB mechanism can achieve $\varepsilon$ fraction of the {\SecondBest}.
More generally, is it possible to obtain a GFT approximation that beats $\frac{1}{2}$ when both traders have regular distributions? As we have shown in \Cref{lem:BROM:regular}, this cannot be achieved using the {\ksfair} {\BiasedRandomOffer}. Hence, new implementations of {\ksfair} mechanisms might be needed. Finally, it would be valuable to extend the definitions and results to multi-buyer settings, and then also to double-auction settings.

\subsection*{Acknowledgments} 
Moshe Babaioff's research is supported in part by a Golda Meir Fellowship and the Israel Science Foundation (grant No.\ 301/24).

\bibliographystyle{apalike}
	\bibliography{refs.bib}

\appendix

\section{Cooperative Bargaining}
\label{appendix:bargaining-and-trade}
In this section, we discuss the relationship between fair bilateral trade and cooperative bargaining. First, we discuss an abstract model of cooperative bargaining, and discuss a reduction from picking a truthful mechanism for a bilateral trade instance, to picking a solution to a bargaining problem. Next, we discuss several prominent solutions for the cooperative bargaining problem and reexamine them through the lens of bilateral trade. We show a connection between fairness notions discussed in this paper for bilateral trade, and 
solutions to bargaining problems.

Consider a bilateral instance $\btinstance$. Recall that we denote by $\mechfamily$ the family of all BIC, IIR  and ex ante WBB mechanisms for $\btinstance$. For each such mechanism $\mech\in \mechfamily$, we can compute the ex ante utilities of the buyer and the seller, which are  $\buyerexanteutil(\mech)$ and $ \sellerexanteutil(\mech)$, respectively. We are interested in studying fairness notions for bilateral trade mechanisms. In essence, this amounts to selecting which pairs of utility values $(\buyerexanteutil(\mech), \sellerexanteutil(\mech))$ are considered fair (ex ante). Thus, instead of thinking of a trading problem over an item, one can  take an ex ante view and think of the problem of picking a fair truthful mechanism as a cooperative bargaining problem over ex ante utilities of truthful mechanisms.

The general study of cooperative bargaining problems has a rich history, originating from the seminal work of Nash \cite{N-50}. We begin by presenting an abstract model of cooperative bargaining.

\begin{definition} [Bargaining Problem]
    A \emph{bargaining problem} for two agents is a pair $(\bargainprob,\dispnt)$ where $\bargainprob$ is a bounded, closed and convex subset of $\RR\times\RR$ and $\dispnt\in \bargainprob$ is the disagreement point.

    A \emph{bargaining solution} to the bargaining problem is a function $\bargainsolution$ that given a pair $(\bargainprob,\dispnt)$ outputs an agreement point $\bargainsolution(\bargainprob,\dispnt) \in \bargainprob$.
\end{definition}

Intuitively, the coordinates of each point in $\bargainprob$ represent the utilities of the agents when agreeing on that outcome. {Throughout the paper, for a given point $\pnta\in \RR\times\RR$ we will denote the utilities of agents by $\buyercoord{\pnta}=x_1$ and $\sellercoord{\pnta}=x_2$. The disagreement point $\dispnt$ represents the utilities that each agent gets when no agreement is reached. Note that the agents will never agree on a solution $\pnta\in \bargainprob$ if $x_i < d_i$ for some agent $i$. For our purposes we will always assume that $\dispnt=(0,0)$ and omit $\dispnt$ entirely.

For a bilateral trade instance $\btinstance$ we will define a bargaining problem $\bargainprobBT$ where each point corresponds to the ex ante utilities of the agents for some BIC, IIR and ex ante WBB mechanism. Formally:
\begin{definition} [Bilateral Trade Bargaining] 
The bargaining set that corresponds to the bilateral trade instance $\btinstance$ is:
\begin{align*}
    \bargainprobBT=\left\{(\buyerexanteutil(\mech),\sellerexanteutil(\mech)) \mid \mech \in \mechfamily\right\}
\end{align*}
\end{definition}

Note that the set $\bargainprobBT$ is indeed convex (and therefore compact): for any two mechanisms $\mech_1,\mech_2\in \mechfamily$ and constant probability $\mixprob$, we can define a mechanism $\mech$ which runs $\mech_1$ with probability $\mixprob$ and $\mech_2$ with probability $1-\mixprob$, and the ex ante utilities of this new mechanism will be the convex combination of the utilities of each mechanism. Note that this new mechanism is also in $\mechfamily$ since randomizing between two BIC, IIR and WBB mechanisms maintains these properties. Additionally, the set $\bargainprobBT$ is closed since the set of possible utility outcomes from the set of BIC, IIR and ex ante WBB mechanisms is closed, as these requirements all correspond to weak inequalities. Finally, the disagreement point for the bilateral trade bargaining problem is always $\dispnt=(0,0)$, since if the agents cannot reach an agreement then no trade occurs and neither gains or loses utility.

An important aspect of the bilateral trade problem that is not captured by the corresponding bargaining problem is the profit of the mechanism. WBB mechanisms allow the buyer to pay a higher price than the seller receives, and this additional payment becomes the profit of the mechanism. Importantly, this payment is considered as part of the GFT of the mechanism, yet it is not visible in the bargaining problem. Thus, given a bilateral trade instance $\btinstance$, each point $\pnta\in\bargainprobBT$ corresponds to a BIC, IIR and ex post WBB mechanism $\mech\in\mechfamily$ with $\GFT{\mech} \geq \buyercoord{\pnta}+\sellercoord{\pnta}$, {and the inequality might be strict}. 

A consequence of this difference is that given a bilateral trade instance $\btinstance$, the set of Second-Best mechanisms (those that maximize the GFT) is not necessarily on the Pareto front of the bargaining problem $\bargainprobBT$ (the set of Pareto efficient points). In \cite{BCWZ-17} it is shown that any BIC, IIR and ex ante WBB mechanism can be converted to a BIC, IIR and ex post SBB mechanism with the same GFT, which implies that at least one of the Second-Best mechanisms should be on the Pareto front, but this reduction may change the relative utilities of the traders. Thus the KS solution (see \Cref{def:ks-solution}) to the bargaining problem $\bargainprobBT$ may not correspond to the {\ksfair} mechanism with the highest GFT. To show that this is not the case, it is sufficient to convert a BIC, IIR, ex ante WBB and {\ksfair} mechanism into a BIC, IIR, ex ante \emph{SBB} and {\ksfair} mechanism with the same GFT. A natural attempt to accomplish this is to proportionally ex ante split the ex ante gains of the mechanism between the two traders (an ex ante split does not affect incentive compatibility or participation). However, using this method there might be ex post positive transfer to the buyer, which is impractical and unnatural. We leave this challenge for future work.

\subsection{Bargaining Solutions}

There are many proposed solutions to the bargaining problem, each corresponding to different notions of fairness\footnote{{Each of the solutions can be proven to match a different set of axiomatic assumptions on the behavior of the solution. We will not discuss this axiomatic analysis in this paper. See \cite{T-94} for a general overview.}}. In this paper we will discuss the three most prominent solutions:
\begin{enumerate}
    \item \emph{The Egalitarian solution:} the maximal\footnote{{Throughout this paper we will use the term "maximal" to refer to the "north-east" point on a closed line with positive slope. Formally, this will be the point $\pnta$ on the line which maximizes $\buyercoord{\pnta} + \sellercoord{\pnta}$. Since we will only consider lines with positive slope, this will be equivalent to maximizing $\buyercoord{\pnta}$ or $\sellercoord{\pnta}$ separately.}} outcome in which both agents receive the same utility.
    \item \emph{The Kalai-Smorodinsky solution:} the maximal outcome in which both agents receive the same portion of their optimal utility.
    \item \emph{The Nash solution:} the outcome which maximizes the product of both agents' utilities.
\end{enumerate}

We will now use the reduction from bilateral-trade to the bargaining problem defined above to adapt these solutions into fairness notions for bilateral trade.

\subsubsection{Egalitarian Solution}
One solution to the bargaining problem is the egalitarian solution, originally presented in \citet{Kal-77,Mye-77}:

\begin{definition}
    The \emph{egalitarian solution} $\pnta=\soleg(\bargainprob)$ is the maximal point $\pnta\in\bargainprob$ of equal coordinates $\buyercoord{x}=\sellercoord{x}$. 
\end{definition}

A useful intuition is to think of the egalitarian solution as intersection between the {north-east} (Pareto) boundary of $\bargainprob$ and the 45-degree ray from the origin.

Consider now a bilateral-trade instance $\btinstance$ and the corresponding bargaining problem $\bargainprobBT$. The egalitarian solution $\soleg(\bargainprobBT)$ corresponds to the mechanism $\mech\in \mechfamily$ that satisfies $\buyerexanteutil(\mech) = \sellerexanteutil(\mech)$ {with the highest ex ante utility for both agent's}. This incentivizes the following fairness notion:

\begin{restatable}[Equitability]{definition}{defequitability}
\label{def:equitable}
    {For a given bilateral trade instance $\btinstance$, mechanism $\mech\in \mechfamily$} is \emph{\equitable} if two players achieve the same ex ante utilities, i.e., $\sellerexanteutil(\mech) = \buyerexanteutil(\mech)$.
\end{restatable}

Thus, in bilateral trade the egalitarian solution corresponds to the equitable mechanism $\mech$ {that maximizes $\buyerexanteutil(\mech) + \sellerexanteutil(\mech)$} (or equivalently, either $\buyerexanteutil(\mech)$ or $\sellerexanteutil(\mech)$ individually). Equitability is certainly a desirable fairness notion, but, unfortunately, it also too stringent a requirement in many cases: equitable mechanisms cannot guarantee any constant fraction of the {\SecondBest}, even when limited to a zero-cost seller any buyer with regular distribution. An analysis of equitable mechanisms can be found in \Cref{appendix:equitablity}.

\subsubsection{Kalai-Smorodinsky Solution}

\label{subsec:ks-solution}

Throughout this paper we study the fairness notion of {\ksfairness} for bilateral trade.\footnote{See \Cref{sec:ksfairness} for a formal definition.} This notion is related to the Kalai-Smorodinsky solution to the bargaining problem, originally presented in \citep{KS-75}.
This solution picks the Pareto (maximal) point that equalizes the ratio of each agent's utility to her ideal utility.

\begin{definition}
\label{def:ks-solution}
    For a given bargaining problem $\bargainprob$, the \emph{ideal point} $\idealp{\bargainprob}$ of $\bargainprob$ is defined $\idealbuyer{\bargainprob}=\max\left\{\buyercoord{\pnta}\mid\pnta\in \bargainprob\right\}$ and $\idealseller{\bargainprob}=\max\left\{\sellercoord{\pnta}\mid \pnta\in \bargainprob\right\}$.
    
    The \emph{Kalai-Smorodinsky solution} $\pnta=\solks(\bargainprob)$ is the point $\pnta\in \bargainprob$ satisfying $\frac{\buyercoord{\pnta}}{\idealbuyer{\bargainprob}}=\frac{\sellercoord{\pnta}}{\idealseller{\bargainprob}}$ which maximizes $\buyercoord{\pnta}+\sellercoord{\pnta}$. Alternatively, it is the maximal point of $\bargainprob$ on the line connecting the origin to $\idealp{\bargainprob}$. {We refer to the line connecting the ideal point and the origin as the \emph{\ksline}}.
\end{definition}

For a bilateral trade instance $\btinstance$ we consider the corresponding bargaining problem $\bargainprobBT$. The ideal point of this problem is
\begin{align*}
\idealp{\bargainprobBT}= \left(\max_{\mech\in\mechfamily}\buyerexanteutil(\mech),\max_{\mech\in\mechfamily}\sellerexanteutil(\mech)\right)   =  (\buyerbenchmark, \sellerbenchmark)
\end{align*}

Thus, the Kalai-Smorodinsky KS solution corresponds to the mechanism $\mech\in \mechfamily$ {which maximizes $\buyerexanteutil(\mech)$ among all those satisfying} {\ksfairness}.

We note that given a bilateral trade instance $\btinstance$, it is possible to explicitly cast the problem of finding a KS solution mechanism $\mech$ as a linear optimization problem  (See \Cref{sec:ksfairness} for a detailed description of this program). Moreover, \Cref{lem:SBB implication} implies that all the positive results of this paper (\Cref{thm:optimal GFT:general instance,thm:improved GFT:mhr traders,thm:improved GFT:regular buyer,thm:improved GFT:mhr buyer}) for the approximation to the {\SecondBest} achievable by a {\ksfair} and truthful mechanisms also apply to the KS solution. Additionally, the upper bounds presented in these results also hold for the KS solution in their corresponding settings.

\subsubsection{Nash Solution}
The Nash solution is a prominent solution to the bargaining problem, originally presented in \cite{N-50}. 

\begin{definition}
    The \emph{Nash solution} $x=\solnash(\bargainprob)$ is the maximizer of the product $x_1 \cdot x_2$.
\end{definition}

For the bilateral trade mechanism, the Nash solution will be the mechanism $
\mech \in \mechfamily$ that maximizes $\buyerexanteutil(\mech)\cdot \sellerexanteutil(\mech)$. An examination of Nash Social Welfare Maximization for bilateral trade can be found in \Cref{sec:nash fairness}. In particular, we show a tight bound of $\frac{1}{2}$ on the approximation of the {\SecondBest} obtainable by a mechanism which maximizes the (ex ante) Nash Social Welfare.

\subsection{Generalizing \Cref{sec:general-gft-approx} Results to Cooperative Bargaining}
\label{appendix:general-bargaining-results}

In this section we extrapolate the results of \Cref{sec:general-gft-approx} to the more general model of cooperative bargaining.
We show that both \Cref{thm:blackbox reduction} and \Cref{thm:optimal GFT:general instance} can be generalized to the cooperative bargaining model. 

The following known result for bilateral trade helps bridge the gap between the models.

\begin{restatable}[\citealp{BCWZ-17}]{lemma}{SBupperbound}
 \label{lemma:SB-upper-bound}
     For every bilateral trade instance $\btinstance$, the {\SecondBest} $\OPTSB$ is at most the sum of two players' ideal utilities $\sellerbenchmark,\buyerbenchmark$, i.e., $\OPTSB \leq \sellerbenchmark + \buyerbenchmark$.
\end{restatable}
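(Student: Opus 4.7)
The plan is to decompose the gains-from-trade into the two traders' ex ante utilities plus the mechanism's expected profit, and then ``redistribute'' that profit to one of the traders via a constant rebate in order to bound the whole thing by the ideal utilities.

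More precisely, for any $\mech = (\alloc, \price, \sellerprice) \in \mechfamily$, I would first write the identity
\begin{align*}
\GFT{\mech} = \expect[\val,\cost]{(\val - \cost) \alloc(\val,\cost)} = \buyerexanteutil(\mech) + \sellerexanteutil(\mech) + r(\mech),
\end{align*}
where $r(\mech) \triangleq \expect[\val]{\price(\val)} - \expect[\cost]{\sellerprice(\cost)} \geq 0$ is the mechanism's expected profit (non-negative by ex ante WBB). The bound $\sellerexanteutil(\mech) \leq \sellerbenchmark$ is immediate from the definition of $\sellerbenchmark$, but a direct bound $\buyerexanteutil(\mech) \leq \buyerbenchmark$ is not enough because the profit term $r(\mech)$ could still be large. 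The key step is to absorb $r(\mech)$ into the buyer's utility without leaving $\mechfamily$.

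To do this, I would construct an auxiliary mechanism $\mech' = (\alloc, \price', \sellerprice)$ with the same allocation rule and seller payment rule, and with buyer payment $\price'(\val,\cost) \triangleq \price(\val,\cost) - r(\mech)$. Since the rebate $r(\mech)$ is a constant (independent of the reported valuations), reporting truthfully is still a best response for both traders, so $\mech'$ remains BIC. The buyer's interim utility only increases by the constant $r(\mech) \geq 0$, so IIR is preserved for the buyer, and the seller's interim utility is unchanged. Finally, $\mech'$ is in fact ex ante SBB, hence ex ante WBB. Thus $\mech' \in \mechfamily$ and $\buyerexanteutil(\mech') = \buyerexanteutil(\mech) + r(\mech)$, which by definition of $\buyerbenchmark$ is at most $\buyerbenchmark$.

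Combining the two bounds gives
\begin{align*}
\GFT{\mech} = \sellerexanteutil(\mech) + \bigl(\buyerexanteutil(\mech) + r(\mech)\bigr) \leq \sellerbenchmark + \buyerbenchmark,
\end{align*}
and maximizing over $\mech \in \mechfamily$ yields $\OPTSB \leq \sellerbenchmark + \buyerbenchmark$. The only delicate point is verifying that the constant rebate transformation preserves BIC and IIR; this is essentially immediate because adding a constant to the buyer's payoff does not affect any deviation's relative payoff or violate her participation constraint (which was already satisfied without the rebate). Symmetrically, one could instead rebate the profit to the seller and bound via $\sellerbenchmark$; either direction works.
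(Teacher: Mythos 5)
Your proof is correct. The paper does not supply its own proof of this lemma---it is cited directly from \citet{BCWZ-17}---so there is no in-text argument to compare against, but your derivation is sound on every point: the decomposition $\GFT{\mech} = \buyerexanteutil(\mech) + \sellerexanteutil(\mech) + r(\mech)$ is an identity; adding the constant rebate $r(\mech)$ to the buyer's payoff preserves BIC (both sides of every incentive inequality shift by the same constant), preserves IIR (the buyer's interim utility only increases, the seller's is untouched), and makes the mechanism ex ante SBB, hence ex ante WBB; and so $\mech' \in \mechfamily$ and $\buyerexanteutil(\mech') = \buyerexanteutil(\mech) + r(\mech) \leq \buyerbenchmark$. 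Combined with $\sellerexanteutil(\mech) \leq \sellerbenchmark$, the bound follows.

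Worth noting as context: the machinery in \citet{BCWZ-17} that this paper discusses elsewhere converts an ex ante WBB mechanism into an \emph{ex post} SBB mechanism with the same allocation rule, which is a stronger and more delicate transformation (it needs to balance the budget pointwise). For the present claim you only need the ex-ante statement, and your constant-rebate trick achieves it with essentially no work. This is the right level of generality for the lemma as stated, and your remark that one could equally well rebate to the seller is also correct. No gaps.
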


This result motivates the following definition for the bargaining model:

\begin{definition}
    The \emph{ideal {total} utility} of a given bargaining problem $\bargainprob$ is $\idealutility{\bargainprob} = \idealbuyer{\bargainprob} + \idealseller{\bargainprob}$. We write that point $\pnta\in\bargainprob$ has a $\GFTapprox$ fraction of the ideal utility if $\buyercoord{\pnta} + \sellercoord{\pnta} \geq \GFTapprox\cdot\idealutility{\bargainprob}$.
\end{definition}

Recall that {\ksfair} mechanisms in the bilateral trade model correspond to points on the {\ksline} in the bargaining model. Combining this with the above definition and \Cref{lemma:SB-upper-bound} implies a reduction of the bilateral trade model to the bargaining problem: {any positive result shown for the fraction of the ideal {total} utility attainable by a point on the {\ksline} for the general bargaining problem implies that same fraction of the {\SecondBest} can always be obtained by a truthful {\ksfair} mechanism in the bilateral trade model}. This reduction shows that the general results depicted in this section imply their corresponding results from \Cref{sec:general-gft-approx}.

\subsubsection{Generalization of \texorpdfstring{\Cref{thm:blackbox reduction}}{Theorem~\ref{thm:blackbox reduction}}}

{We first} generalize the black-box reduction framework of \Cref{thm:blackbox reduction} to the bargaining model. Given a bargaining problem $\bargainprob$, the framework converts an arbitrary point $\pnta\in\bargainprob$ into a point $\pntb$ along the {\ksline} of $\bargainprob$, with a provable approximation guarantee of the ideal {total} utility of $\bargainprob$. The proof is near identical to the proof of \Cref{thm:blackbox reduction}, we write the proof explicitly for the sake of formality.

\begin{theorem}[Black-box reduction]
    \label{lemma:bargaining:blackbox reduction}
    Fix some bargaining problem $\bargainprob$ and point $\pnta\in \bargainprob$. Define constant $\GFTapprox\in[0,1]$ as
    \begin{align*}
        \GFTapprox \triangleq \min\left\{
        \frac{\buyercoord{x}}{\idealbuyer{\bargainprob}},
        \frac{\sellercoord{x}}{\idealseller{\bargainprob}}
        \right\}
    \end{align*}
    Then, there exists a point $\pntb\in S$ on the {\ksline} which is has a $\GFTapprox$ fraction of the ideal utility of $\bargainprob$, that is $\buyercoord{\pntb} + \sellercoord{\pntb} \geq \GFTapprox\cdot \idealutility{\bargainprob}$. Specifically, the solution $\pntb$ is a convex combination $\pntb=\mixprob \cdot \pnta + (1-\mixprob) \cdot \pntc$ for some appropriate $\mixprob\in[0, 1]$, where $\pntc$ is chosen as follows:
    \begin{itemize}
        \item If $\frac{\buyercoord{\pnta}}{\idealbuyer{\bargainprob}} \geq \frac{\sellercoord{\pnta}}{\idealseller{\bargainprob}}$ then let $\pntc$ be some arbitrary point satisfying $\sellercoord{\pntc} = \idealseller{\bargainprob}$.
        \item If $\frac{\buyercoord{\pnta}}{\idealbuyer{\bargainprob}} < \frac{\sellercoord{\pnta}}{\idealseller{\bargainprob}}$ then let $\pntc$ be some arbitrary point satisfying $\buyercoord{\pntc} = \idealbuyer{\bargainprob}$.
    \end{itemize}
\end{theorem}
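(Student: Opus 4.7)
The plan is to mirror the proof of Theorem~\ref{thm:blackbox reduction} in the bargaining model, replacing the role of the \SellerOffer{} (resp.~\BuyerOffer) by an arbitrary $\pntc\in\bargainprob$ that attains the seller's (resp.~buyer's) ideal coordinate. Without loss of generality I will treat the case $\frac{\buyercoord{\pnta}}{\idealbuyer{\bargainprob}}\geq \frac{\sellercoord{\pnta}}{\idealseller{\bargainprob}}=\GFTapprox$; the opposite case is symmetric. Because $\bargainprob$ is closed and bounded, the supremum $\idealseller{\bargainprob}=\max\{\sellercoord{\pnta'}:\pnta'\in\bargainprob\}$ is attained, so there exists $\pntc\in\bargainprob$ with $\sellercoord{\pntc}=\idealseller{\bargainprob}$. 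This is the only place where the topological assumptions on $\bargainprob$ are used, and it is the analogue of invoking the existence of the {\SellerOffer} in the trade setting.

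Next I will parameterize the candidate solutions. For $\mixprob\in[0,1]$, set $\pntb_\mixprob=\mixprob\cdot\pnta+(1-\mixprob)\cdot\pntc$, which belongs to $\bargainprob$ by convexity. The key computation is that the seller's normalized coordinate
\begin{align*}
\frac{\sellercoord{\pntb_\mixprob}}{\idealseller{\bargainprob}}
= \mixprob\cdot\frac{\sellercoord{\pnta}}{\idealseller{\bargainprob}}+(1-\mixprob)\cdot 1
= 1-\mixprob(1-\GFTapprox)
\end{align*}
is continuous and weakly decreasing in $\mixprob$, taking the value $1$ at $\mixprob=0$ and $\GFTapprox$ at $\mixprob=1$; in particular it is $\geq \GFTapprox$ throughout $[0,1]$. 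The buyer's normalized coordinate $\frac{\buyercoord{\pntb_\mixprob}}{\idealbuyer{\bargainprob}}$ is likewise continuous (linear) in $\mixprob$, equals $\frac{\buyercoord{\pntc}}{\idealbuyer{\bargainprob}}\leq 1$ at $\mixprob=0$, and equals $\frac{\buyercoord{\pnta}}{\idealbuyer{\bargainprob}}\geq \GFTapprox$ at $\mixprob=1$.

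Finally I will apply the intermediate value theorem to the continuous function $\mixprob\mapsto \frac{\sellercoord{\pntb_\mixprob}}{\idealseller{\bargainprob}}-\frac{\buyercoord{\pntb_\mixprob}}{\idealbuyer{\bargainprob}}$, which is nonnegative at $\mixprob=0$ and nonpositive at $\mixprob=1$ by the endpoint computations above. This produces some $\mixprob^\star\in[0,1]$ for which the two normalized coordinates agree; the point $\pntb\triangleq \pntb_{\mixprob^\star}$ therefore lies on the {\ksline}. Since at $\mixprob^\star$ the seller's normalized coordinate is still at least its $\mixprob=1$ value of $\GFTapprox$, both coordinates satisfy $\buyercoord{\pntb}\geq \GFTapprox\cdot \idealbuyer{\bargainprob}$ and $\sellercoord{\pntb}\geq \GFTapprox\cdot \idealseller{\bargainprob}$, so summing gives $\buyercoord{\pntb}+\sellercoord{\pntb}\geq \GFTapprox\cdot \idealutility{\bargainprob}$ as required. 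I do not foresee any real obstacle: the argument is a direct transcription of the trade-setting proof, and the only subtlety is noting that compactness of $\bargainprob$ is what guarantees an explicit witness $\pntc$ attaining the ideal seller coordinate (the analogue of the \SellerOffer{}'s existence in $\mechfamily$).
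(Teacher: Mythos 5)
Your proof is correct and follows the same approach as the paper: it fixes a point $\pntc$ attaining the seller's ideal coordinate, interpolates linearly between $\pnta$ and $\pntc$, applies the intermediate value theorem to the difference of normalized coordinates to land on the KS-line, and uses the monotonicity of the seller's normalized coordinate to bound both coordinates below by $\GFTapprox$. The observation that compactness is what guarantees the existence of $\pntc$ (playing the role of the {\SellerOffer}) is the same remark the paper makes implicitly.
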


\begin{proof}
    Without loss of generality, we assume $\frac{\buyercoord{\pnta}}{\idealbuyer{\bargainprob}} \geq \frac{\sellercoord{\pnta}}{\idealseller{\bargainprob}} = \GFTapprox$. The other case follows a symmetric argument. Fix an arbitrary point $\pntc\in {\bargainprob}$ such that $\sellercoord{\pntc}=\idealseller{\bargainprob}$. For every $\mixprob\in[0, 1]$, consider the point $\pntb_{\mixprob}$ defined as the convex combination $\pntb_{\mixprob}=\mixprob \cdot \pnta + (1-\mixprob)\cdot \pntc$. By construction, we have
    \begin{align*}
        \frac{\sellercoord{\pntb_{\mixprob}}}{\idealseller{\bargainprob}} 
        =
        \frac{\mixprob\cdot \sellercoord{\pnta} + (1-\mixprob)\cdot \sellercoord{z}}{\idealseller{\bargainprob}} 
        =
        \frac{\mixprob\cdot \sellercoord{\pnta} + (1-\mixprob)\cdot \idealseller{\bargainprob}}{\idealseller{\bargainprob}} 
    \end{align*}
    which is weakly decreasing linearly in $\mixprob\in[0, 1]$, since $\sellercoord{\pnta} \leq \idealseller{\bargainprob}$ by definition. Similarly, 
    \begin{align*}
        \frac{\buyercoord{\pntb_{\mixprob}}}{\idealbuyer{\bargainprob}} 
        =
        \frac{\mixprob\cdot \buyercoord{\pnta} + (1-\mixprob)\cdot \buyercoord{z}}{\idealbuyer{\bargainprob}} 
    \end{align*}
    which is also linear in $\mixprob\in[0, 1]$. Moreover, due to the case assumption, we know 
    \begin{align*}
        \frac{\sellercoord{\pnta}}{\idealseller{\bargainprob}} 
        \leq 
        \frac{\buyercoord{\pnta}}{\idealbuyer{\bargainprob}}
        \;\;
        \mbox{and}
        \;\;
        \frac{\pntb^{0}}{\idealseller{\bargainprob}} = 1
        \geq 
        \frac{\pntb^{0}}{\idealbuyer{\bargainprob}}
    \end{align*}
    Thus, there exists $\mixprob^*\in[0, 1]$ such that 
    \begin{align*}
        \frac{\buyercoord{\pntb_{\mixprob^*}}}{\idealbuyer{\bargainprob}} 
        \overset{(a)}{=} 
        \frac{\sellercoord{\pntb_{\mixprob^*}}}{\idealseller{\bargainprob}} 
        \overset{(b)}{\geq}
        \frac{\sellercoord{\pntb_{1}}}{\idealseller{\bargainprob}} 
        \overset{(c)}{\geq}
        \GFTapprox
    \end{align*}
    which implies that the point $\pntb_{\mixprob^*}$ is on the {\ksline}. Here, equality~(a) holds due to the intermediate value theorem, inequality~(b) holds due to the monotonicity of $
    {\pntb_{\mixprob}}/{\idealseller{\bargainprob}}$ as a function of $\mixprob$ argued above,
    and 
    inequality~(c) holds since the point $\pntb^{1}$ is equivalent to original point $\pnta$. Finally,
    \begin{align*}
        \pntb_{\mixprob^*} + \pntb_{\mixprob^*} \geq \GFTapprox \cdot (\idealbuyer{\bargainprob} + \idealseller{\bargainprob}) = \GFTapprox \cdot \idealutility{\bargainprob}
    \end{align*}
    which completes the proof. 
\end{proof}


\subsubsection{Generalization of \texorpdfstring{\Cref{thm:optimal GFT:general instance}}{Theorem~\ref{thm:optimal GFT:general instance}}}

{We now present a generalization of} \Cref{thm:optimal GFT:general instance} to the bargaining model. We show that for any bargaining problem $\bargainprob$, there always exists a point $\pnta\in \bargainprob$ on the {\ksline} such that $\pnta$ is a $\frac{1}{2}$ approximation of the ideal {total} utility of $\bargainprob$. The proof is near identical to the proof of \Cref{thm:optimal GFT:general instance}, we write the proof explicitly for the sake of formality. We make use of the generalized black-box reduction  \Cref{lemma:bargaining:blackbox reduction} proven in the previous section.

\begin{theorem}[Ideal {Total} Utility Approximation on the {\ksline}]
\label{lemma:bargaining:half-approx-ksline}
    Fix a bargaining problem $\bargainprob$ and points $\buyerbenchmarkpoint,\sellerbenchmarkpoint\in\bargainprob$ such that $\buyerbenchmarkpoint$ is an ideal point for the first agent ($\buyercoord{\buyerbenchmarkpoint}=\idealbuyer{\bargainprob}$ and $\sellerbenchmarkpoint$ is an ideal point for the second agent ($\sellercoord{\sellerbenchmarkpoint}=\idealseller{\bargainprob}$). Then the line connecting $\buyerbenchmarkpoint$ and $\sellerbenchmarkpoint$ intersects the {\ksline} at a point $\pntb\in \bargainprob$ which satisfies $\buyercoord{\pntb}+\sellercoord{\pntb}\geq \frac{1}{2} \cdot \idealutility{\bargainprob}$.
\end{theorem}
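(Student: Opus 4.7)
My plan is to exploit convexity to parameterize the segment between $\buyerbenchmarkpoint$ and $\sellerbenchmarkpoint$, locate its intersection with the {\ksline} via the intermediate value theorem, and then read off the desired bound by a case split on the mixing weight.

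Concretely, abbreviate $a = \idealbuyer{\bargainprob}$ and $b = \idealseller{\bargainprob}$, so the {\ksline} is $\{(ta,tb) : t\in[0,1]\}$. For $\mixprob \in [0,1]$ define the convex combination $\pntb_\mixprob = \mixprob\, \buyerbenchmarkpoint + (1-\mixprob)\, \sellerbenchmarkpoint$; by convexity of $\bargainprob$, each $\pntb_\mixprob$ lies in $\bargainprob$. Using $\buyercoord{\buyerbenchmarkpoint} = a$, $\sellercoord{\sellerbenchmarkpoint} = b$, $\buyercoord{\sellerbenchmarkpoint} \leq a$, and $\sellercoord{\buyerbenchmarkpoint} \leq b$, the continuous function
\[
f(\mixprob) \;=\; a\,\sellercoord{\pntb_\mixprob} - b\,\buyercoord{\pntb_\mixprob}
\]
satisfies $f(1) = a(\sellercoord{\buyerbenchmarkpoint} - b) \leq 0$ and $f(0) = b(a - \buyercoord{\sellerbenchmarkpoint}) \geq 0$. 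The intermediate value theorem then yields some $\mixprob^*\in[0,1]$ with $f(\mixprob^*) = 0$, and setting $\pntb \triangleq \pntb_{\mixprob^*}$ gives a point in $\bargainprob$ on the segment that satisfies $\buyercoord{\pntb}/a = \sellercoord{\pntb}/b$, i.e., a point on the {\ksline}.

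The heart of the argument is then the following case split. Write $t \triangleq \buyercoord{\pntb}/a = \sellercoord{\pntb}/b$, so that $\buyercoord{\pntb} + \sellercoord{\pntb} = t(a+b) = t\cdot \idealutility{\bargainprob}$; it suffices to show $t \geq 1/2$. If $\mixprob^* \geq 1/2$, then
\[
\buyercoord{\pntb} \;=\; \mixprob^*\, a + (1-\mixprob^*)\,\buyercoord{\sellerbenchmarkpoint} \;\geq\; \mixprob^*\, a \;\geq\; \tfrac{1}{2}\, a,
\]
so $t \geq 1/2$. Otherwise $\mixprob^* \leq 1/2$, in which case
\[
\sellercoord{\pntb} \;=\; \mixprob^*\,\sellercoord{\buyerbenchmarkpoint} + (1-\mixprob^*)\, b \;\geq\; (1-\mixprob^*)\, b \;\geq\; \tfrac{1}{2}\, b,
\]
again giving $t \geq 1/2$. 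Either way, $\buyercoord{\pntb} + \sellercoord{\pntb} \geq \tfrac{1}{2}\,\idealutility{\bargainprob}$.

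I do not expect a real obstacle here, since the lemma is essentially a two-dimensional geometric statement about the triangle spanned by the origin, $\buyerbenchmarkpoint$, and $\sellerbenchmarkpoint$; the only subtleties are justifying the existence of the intersection (handled by IVT above) and making sure the coordinate bounds $0 \leq \buyercoord{\sellerbenchmarkpoint}$ and $0 \leq \sellercoord{\buyerbenchmarkpoint}$ are in force, which is ensured by the disagreement point being taken as the origin in our bargaining framework. Combined with \Cref{lemma:SB-upper-bound}, this theorem recovers the $\tfrac{1}{2}$-approximation of the {\SecondBest} by some {\ksfair} point in $\bargainprobBT$, providing the bargaining-model analog of \Cref{thm:optimal GFT:general instance}.
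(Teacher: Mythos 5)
Your proof is correct, and it takes a genuinely more self-contained route than the paper's. The paper proves the theorem in two steps: it first observes that the midpoint $\pnta = \tfrac{1}{2}\buyerbenchmarkpoint + \tfrac{1}{2}\sellerbenchmarkpoint$ has both coordinate-ratios at least $\tfrac{1}{2}$, then feeds this midpoint into \Cref{lemma:bargaining:blackbox reduction} (the bargaining analog of the black-box reduction), which internally runs an IVT argument along the segment from $\pnta$ toward one ideal endpoint. You instead bypass the black-box lemma entirely: you parameterize the full segment from $\sellerbenchmarkpoint$ to $\buyerbenchmarkpoint$, locate the {\ksline} intersection directly via the IVT applied to $f(\mixprob) = a\,\sellercoord{\pntb_\mixprob} - b\,\buyercoord{\pntb_\mixprob}$, and obtain the $\tfrac{1}{2}$ bound by a clean case split on whether the intersection weight $\mixprob^*$ is at least or at most $\tfrac{1}{2}$, reading the bound off one coordinate in each case. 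The paper's approach buys modularity (reusing an already-stated lemma), while yours is more elementary and makes the underlying two-dimensional geometry explicit. You also correctly flag the only delicate point -- that $\buyercoord{\sellerbenchmarkpoint}$ and $\sellercoord{\buyerbenchmarkpoint}$ must be nonnegative -- which the paper's proof relies on implicitly (its claim that the midpoint's ratios are at least $\tfrac{1}{2}$ needs exactly the same hypothesis).
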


\begin{proof}
    Consider the point $\pnta=\frac{1}{2}\cdot \buyerbenchmarkpoint + \frac{1}{2}\cdot \sellerbenchmarkpoint$. Notice that both players' ex ante utilities satisfy
    \begin{align*}
        \frac{\buyercoord{\pnta}}{\idealbuyer{\bargainprob}}
        =
        \frac{\buyercoord{\buyerbenchmarkpoint}+\buyercoord{\sellerbenchmarkpoint}}{2\cdot \idealbuyer{\bargainprob}}
        \geq 
        \frac{1}{2}
        \;
        \mbox{ and }
        \;
        \frac{\sellercoord{\pnta}}{\idealbuyer{\bargainprob}}
        =
        \frac{\sellercoord{\buyerbenchmarkpoint}+\sellercoord{\sellerbenchmarkpoint}}{2\cdot \idealseller{\bargainprob}}
        \geq 
        \frac{1}{2}
    \end{align*}    
    Invoking \Cref{lemma:bargaining:blackbox reduction} with $\GFTapprox=\frac{1}{2}$ for the point $\pnta$, we obtain a point $\pntb\in S$ on the intersection between the {\ksline} and the line connecting $\buyerbenchmarkpoint, \sellerbenchmarkpoint$ such that $\buyercoord{\pntb}+\sellercoord{\pntb}\geq \frac{1}{2} \cdot (\idealbuyer{\bargainprob} +\idealseller{\bargainprob})$, completing the proof.
\end{proof}


\section{Equitability}
\label{appendix:equitablity}

In this section we study another fairness notion, {\equitability} (formally defined below), which requires the ex-ante utilities of the agents to be equalized by a mechanism that is BIC, IIR, ex ante WBB. We show that even with zero-value seller, the seller ex-ante utility in an {\equitable} mechanism (that is BIC, IIR, ex ante WBB) might be arbitrary lower than the maximum ex ante utility of the seller (which is at most the maximum utility of the buyer). Additionally, for every $\eps > 0$, the GFT of an {\equitable} mechanism might be only an $\eps$ fraction of the {\SecondBest} $\OPTSB$.

\defequitability*

\begin{figure}
    \centering
    \subfloat[]{
\begin{tikzpicture}[scale=0.8, transform shape]
\begin{axis}[
axis line style=gray,
axis lines=middle,
xlabel = $\quant$,
ylabel = $\revcurve$,
xtick={0, 0.05, 1},
ytick={0, 0.2, 1},
xticklabels={0, $\frac{1}{\constantH\sqrt{\ln\constantH}}$, 1},
yticklabels={0, $\frac{1}{\sqrt{\ln\constantH}}$, 1},
xmin=0,xmax=1.1,ymin=-0.0,ymax=1.2,
width=0.5\textwidth,
height=0.4\textwidth,
samples=1000]

\addplot[black!100!white, line width=0.5mm] (0, 0) -- (0.05, 0.2) -- (1, 1);

\addplot[dotted, gray, line width=0.3mm] (1, 0) -- (1, 1) -- (0, 1);
\addplot[dotted, gray, line width=0.3mm] (0.05, 0) -- (0.05,0.2) -- (0, 0.2);

\end{axis}

\end{tikzpicture}
\label{fig:equitable:revenue curve}
}~~~~
    \subfloat[]{
\begin{tikzpicture}[scale=0.8, transform shape]
\begin{axis}[
axis line style=gray,
axis lines=middle,
xlabel = $\buyerexanteutil$,
ylabel = $\sellerexanteutil$,
xtick={0, 1},
ytick={0, 0.1, 0.4},
xticklabels={0, $\buyerbenchmark=\Theta(\sqrt{\ln\constantH})$},
yticklabels={0, $\frac{1}{\sqrt{\ln\constantH}}$, $\sellerbenchmark = 1$},
xmin=0,xmax=1.1,ymin=-0.1,ymax=1.2,
width=0.5\textwidth,
height=0.4\textwidth,
samples=1000]

\addplot[black!100!white, line width=0.5mm] (0, 0) -- (0, 0.1) -- (0.95, 0.4) -- (1, 0) -- (0, 0);

\fill[blue!20] (0, 0) -- (0, 0.1) -- (0.95, 0.4) -- (1, 0) -- (0, 0);

\addplot[dotted, gray, line width=0.3mm] (0.9, 0.4) -- (0, 0.4);

\addplot[dashed, red, line width = 0.5mm] (0, 0) -- (0.5, 1);

\draw[black, fill=black, line width=0.5mm] (axis cs:1, 0) circle[radius=0.08cm];
\draw[black, fill=black, line width=0.5mm] (axis cs:0.95, 0.4) circle[radius=0.08cm];


\draw[red, fill=red, line width=0.5mm] (0.04, 0.09) rectangle (0.08, 0.15);

\end{axis}

\end{tikzpicture}
\label{fig:equitable:ex ante utility pair}
}
\caption{Graphical illustration of \Cref{example:equitable}. In \Cref{fig:equitable:revenue curve}, the black curve is the revenue curve of the buyer. In \Cref{fig:equitable:ex ante utility pair}, the shaded region corresponds to the pair of buyer and seller's ex ante utilities $(\buyerexanteutil,\sellerexanteutil)$ that is achievable from some BIC, IIR, ex ante WBB mechanism. The two black points represent the traders' ex ante utility pair induced by the {\BuyerOffer} and {\SellerOffer}, respectively. The red dashed line represents equitable utility pairs (i.e., ones with $\buyerexanteutil = \sellerexanteutil$). The red square corresponds to the GFT-optimal {\equitable} mechanism, whose GFT and two traders' ex ante utilities are significantly worse than the {\SecondBest} $\OPTSB$ and two traders' own benchmarks $\buyerbenchmark,\sellerbenchmark$.}
    \label{fig:equitable}
\end{figure}

We consider the following instance where the seller has a deterministic value of zero, and the buyer has a regular valuation distribution.
\begin{example}
    \label{example:equitable}
    Fix any $\constantH \geq e$. The buyer has a regular valuation distribution $\buyerdist$, which has support $[1, \constantH]$ and cumulative density function $\buyercdf(\val) = \frac{(\constantH\sqrt{\ln\constantH} - 1)(\val - 1)}{(\constantH\sqrt{\ln\constantH} - 1)(\val - 1) + \constantH - 1}$ for $\val \in[1,\constantH]$
    and $\buyercdf(\val) = 1$ for $\val \in(\constantH, \infty)$.
    (Namely, there is an atom at $\constantH$ with probability mass of $\frac{1}{\constantH\sqrt{\ln\constantH}}$.)
    The seller has a deterministic value of 0.
    See \Cref{fig:equitable} for an illustration of the revenue
curve and the bargaining set of $\buyerdist$.
\end{example}

\begin{lemma}
    \label{lem:equitable GFT UB}
    Fix any $\eps > 0$. In \Cref{example:equitable} with sufficiently large $\constantH$ (as a function of $\eps$),
    every BIC, IIR, ex ante WBB, and {\equitable} $\mech$ obtains at most $\eps^2$ fraction of the {\SecondBest} $\OPTSB$, i.e., $\GFT{\mech} \leq \eps^2 \cdot \OPTSB$. Moreover, the seller and buyer's ex ante utilities is at most $\eps$ and $\eps^2$ fraction of their benchmarks, i.e., $\sellerexanteutil(\mech) \leq \eps \cdot \sellerbenchmark$ and $\buyerexanteutil(\mech) \leq \eps^2 \cdot \buyerbenchmark$.
\end{lemma}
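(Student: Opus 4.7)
The plan is to exploit the highly asymmetric revenue curve in \Cref{example:equitable}: $\sellerbenchmark = 1$ is attained at the monopoly price $\val=1$, whereas $\buyerbenchmark = \expect[\val]{\val} = (1+o(1))\sqrt{\ln\constantH}$. Equitability forces $\sellerexanteutil(\mech) = \buyerexanteutil(\mech) = u$, and I will show $u \leq O(1/\sqrt{\ln\constantH})$ by upper bounding $\sellerexanteutil(\mech)$ linearly in a ``continuous-part trade probability'' $T$ and lower bounding $\buyerexanteutil(\mech)$ by the same $T$ but with a coefficient of order $\sqrt{\ln\constantH}$, so equitability pinches $T$ (and hence $u$) to be tiny.

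I would first compute the benchmarks by direct integration and record the virtual-value structure: writing $A \triangleq \constantH\sqrt{\ln\constantH}-1$, $B \triangleq \constantH-1$ and $q_0 \triangleq 1/(\constantH\sqrt{\ln\constantH})$ (the atom probability), one gets $\virtualval(\val) = 1 - B/A = 1 - o(1)$ on $[1,\constantH)$ while $\virtualval(\constantH)=\constantH$ at the atom. Setting $T \triangleq \int_1^{\constantH}\alloc(t)\buyerpdf(t)\,d t$ and $\alpha \triangleq \alloc(\constantH)$, \Cref{prop:revenue equivalence} together with ex-ante WBB yield $\sellerexanteutil(\mech) \leq \expect[\val]{\virtualval(\val)\alloc(\val)} = (1-o(1))T + \alpha/\sqrt{\ln\constantH} \leq T + 1/\sqrt{\ln\constantH}$.

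For the matching lower bound on $\buyerexanteutil(\mech)$, I would use Myerson's envelope theorem with IR at $\val=1$ to get $\buyerexanteutil(\mech) \geq \int_1^{\constantH}\alloc(t)(1-\buyercdf(t))\,d t$, and then change variables to the quantile $\quant = 1-\buyercdf(\val)$ (so $\val-1 = B(1-\quant)/(A\quant)$) to rewrite the right-hand side as $(B/A)\int_{q_0}^1 \alloc(\quant)/\quant\,d\quant$. Now BIC makes $\alloc(\quant)$ monotone non-increasing in $\quant$, and $1/\quant$ is also non-increasing in $\quant$; Chebyshev's integral inequality applied to these two co-monotone functions yields $\int_{q_0}^1 \alloc(\quant)/\quant\,d\quant \geq T\ln(1/q_0)/(1-q_0) = (1+o(1))T\ln\constantH$, and hence $\buyerexanteutil(\mech) \geq (1-o(1))\,T\sqrt{\ln\constantH}$.

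Combining via equitability gives $(1-o(1))T\sqrt{\ln\constantH} \leq u \leq T + 1/\sqrt{\ln\constantH}$, which forces $T \leq O(1/\ln\constantH)$ and hence $u \leq O(1/\sqrt{\ln\constantH})$. For $\constantH$ large enough (depending on $\eps$), this yields $\sellerexanteutil(\mech) = u \leq \eps = \eps\,\sellerbenchmark$ and $\buyerexanteutil(\mech)/\buyerbenchmark \leq O(1/\ln\constantH) \leq \eps^2$. Since also $\GFT{\mech} = \buyerexanteutil(\mech) + \expect[\val]{\price(\val,0)} \leq u + \sellerbenchmark = O(1) = o(\sqrt{\ln\constantH})$ while $\OPTSB = \Theta(\sqrt{\ln\constantH})$, we obtain $\GFT{\mech}/\OPTSB \leq \eps^2$ as well. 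The main obstacle is the Chebyshev step: noticing that in quantile space both $\alloc$ and $1/\quant$ are co-monotone produces the crucial ``$\ln\constantH$-amplification'' that makes the linear-in-$T$ seller upper bound clash with the $\sqrt{\ln\constantH}\,T$ buyer lower bound, after which everything else is bookkeeping on the scaling of the benchmarks.
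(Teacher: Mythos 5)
Your proof is correct. It reproduces the core mechanism of the paper's argument — an equitability constraint in a near-equal-revenue instance forces both traders' utilities to be tiny — but packages it more cleanly. The paper splits into two cases based on whether $\expect[\val]{\price(\val)}$ is above or below $2\sqrt{1/\ln\constantH}$ and derives a contradiction with equitability in the ``large payment'' case via an inequality on $\int_1^{\constantH}\val\,\alloc(\val)\,\d\buyercdf(\val)$ whose justification (inequality (a) in the paper's Case (b)) implicitly relies on positive correlation of the two non-decreasing functions $\val$ and $\alloc(\val)$ under the conditional measure. You instead avoid the case split entirely by sandwiching $u \triangleq \sellerexanteutil(\mech) = \buyerexanteutil(\mech)$ between a linear-in-$T$ upper bound (revenue equivalence plus the bounded contribution $\constantH q_0 = 1/\sqrt{\ln\constantH}$ of the atom) and a $T\sqrt{\ln\constantH}$-scale lower bound, with the amplification made explicit by Chebyshev's integral inequality for the co-monotone pair $\alloc(q)$ and $1/q$ on $[q_0,1]$ in quantile space. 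These are the same underlying inequality stated in two dual ways; your version has the advantage of being stated and justified precisely (and yields the $O(1/\sqrt{\ln\constantH})$ bound on $u$ transparently). One minor point: for the GFT bound you use the crude $\expect[\val]{\price}\le\sellerbenchmark=1$, giving $\GFT{\mech}=O(1)$; this still works but needs a slightly larger $\constantH$ than if you had reused your own bound $\expect[\val]{\price}\le T+1/\sqrt{\ln\constantH}=O(1/\sqrt{\ln\constantH})$, which would match the paper's $\eps^2$ decay more directly. Neither affects correctness.
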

Recall that the (unbiased) {\RandomOffer} ({\RO}) ensures that each trader receives at least $\frac{1}{2}$ fraction of her own benchmark, i.e., $\sellerexanteutil(\RO) \geq \frac{1}{2}\cdot \sellerbenchmark$ and $\buyerexanteutil(\RO)\geq \frac{1}{2}\cdot \buyerbenchmark$. In contrast, \Cref{lem:equitable GFT UB} indicates that by imposing the {\equitability}, it may not only limit the GFT but also significantly reduce \emph{both} traders' ex ante utilities (compared with their benchmarks). In this sense, the {\equitability} is not be an appropriate fairness definition, since it may be disliked by both traders (due to low ex ante utilities) and the social planner (due to low GFT).

\begin{proof}[Proof of \Cref{lem:equitable GFT UB}]
    Let $\constantH = \max\{e^{4/\eps^2},e^{49}\}$. In \Cref{example:equitable}, the {\SecondBest} $\OPTSB$ can be computed as 
    \begin{align*}
        \OPTSB = \expect[\val]{\val} = 
        \displaystyle\int_1^{\constantH}\val\cdot \d \buyercdf(\val) + \constantH\cdot (1 - \buyercdf(\constantH)) \geq  
        \sqrt{\ln\constantH}
        =
        \frac{2}{\eps}
    \end{align*}
    Meanwhile, two traders' benchmarks are $\sellerbenchmark = 1$ and $\buyerbenchmark = \expect[\val]{\val} \geq \frac{2}{\eps}$. 
    
    Fix any BIC, IIR, ex ante WBB, and {\equitable} mechanism $\mech = (\alloc, \price, \sellerprice)$. We consider two cases depending on expected payment $\expect[\val]{\price(\val)}$ of the buyer.

    \xhdr{Case (a) $\expect[\val]{\price(\val)} \leq 2 \sqrt{1 / \ln\constantH}$:} In this case, since mechanism $\mech$ is ex ante WBB, the seller's ex ante utility can be upper bounded as 
    \begin{align*}
        \sellerexanteutil(\mech) = \sellerprice(0) \leq \expect[\val]{\price(\val)} \leq 2 \sqrt{1 / \ln\constantH} = \eps
    \end{align*}
    Moreover, since mechanism is {\equitable}, the buyer's ex ante utility can also be upper bounded as 
    \begin{align*}
        \buyerexanteutil(\mech) = \sellerexanteutil(\mech) \leq 2 \sqrt{1 / \ln\constantH} = \eps
    \end{align*}
    Putting all the pieces together, we show all three inequalities in the lemma statement, i.e.,
    \begin{align*}
        \GFT{\mech} = \expect[\val]{\price(\val)} + \buyerexanteutil(\mech)
        \leq 2\eps \leq \eps^2 \cdot \OPTSB~,
        \\
        \sellerexanteutil(\mech) \leq \eps \leq \eps\cdot \sellerbenchmark
        \;\;\mbox{and}\;\;
        \buyerexanteutil(\mech) \leq \eps \leq \eps^2\cdot \buyerbenchmark
    \end{align*}
    We finish the analysis of case~(a).
    
    \xhdr{Case (b) $\expect[\val]{\price(\val)} \geq 2\sqrt{1/\ln\constantH}$:} In this case, we argue that mechanism $\mech$ violates the {\equitability}. (In other words, this case never happens.) Invoking \Cref{prop:revenue equivalence}, we obtain
    \begin{align*}
        \expect[\val]{\price(\val)} = \expect[\val]{\virtualval(\val)\cdot \alloc(\val)} = \virtualval(\constantH)\cdot (1 - \buyercdf(\constantH))\cdot \alloc(\constantH) 
        +
        \displaystyle\int_{1}^\constantH \virtualval(\val)\cdot \alloc(\val)\cdot\d \buyercdf(\val)
    \end{align*}
    Combining with the facts that $\virtualval(\constantH)\cdot (1 - \buyercdf(\constantH))\cdot \alloc(\constantH)  \leq \sqrt{1/\ln\constantH}$ and $\expect[\val]{\price} \geq 2\sqrt{1/\ln\constantH}$, we obtain
    \begin{align*}
        \displaystyle\int_{1}^\constantH \virtualval(\val)\cdot \alloc(\val)\cdot\d \buyercdf(\val)
        \geq \frac{1}{2}\cdot \expect[\val]{\price(\val)}
    \end{align*}
    Next, we upper bound the ex ante utility of the buyer as follows
    \begin{align*}
        \buyerexanteutil(\mech) &{} = \expect[\val]{\val\cdot \alloc(\val) - \price(\val)}
        \geq 
        \int_1^\constantH \val\cdot \alloc(\val)\cdot \d\buyercdf(\val) - 
        \expect[\val]{\price(\val)}
        \\
        &{}\overset{(a)}{\geq}
        \frac{1}{2}\left(1 - \sqrt{\frac{1}{\ln\constantH}}\right)\cdot \expect[\val]{\price(\val)}\cdot \sqrt{\ln\constantH} - \expect[\val]{\price(\val)} \overset{(b)}{\geq} 2 \cdot \expect[\val]{\price(\val)}
    \end{align*}
    where inequality~(a) holds since $\int_1^\constantH\virtualval(\val)\alloc(\val)\cdot\d\buyercdf(\val) \geq \expect[\val]{\price(\val)}/2$ argued above and the buyer's interim allocation $\alloc(\val)$ is weakly increasing in $\val$ (due to BIC), and inequality~(b) holds since $\constantH\geq e^{49}$. Meanwhile, since mechanism $\mech$ is ex ante WBB, the seller's ex ante utility is at most 
    \begin{align*}
        \sellerexanteutil(\mech) = \sellerprice(0) \leq \expect[\val]{\price(\val)} \leq \frac{1}{2}\cdot \buyerexanteutil(\mech) < \buyerexanteutil(\mech)
    \end{align*}
    which implies mechanism $\mech$ violates the {\equitability}. This is a contradiction. Therefore, we finish the analysis of case (b) and the proof of \Cref{lem:equitable GFT UB}.
\end{proof}

\section{Interim KS-Fairness and Ex Post KS-Fairness} 
\label{subsec:interim ks fairness}
Recall that the definition of {\ksfairness} (\Cref{def:ks fairness}) imposes a condition on the ex-ante utilities of the two traders. Specifically, it requires that the ex-ante utilities of both traders achieve the same fraction of their respective benchmarks. In this section, we explore two variants of {\ksfairness}, where the condition is modified to apply at the interim stage (\Cref{def:interim ks fairness}) or the ex post stage (\Cref{def:ex post ks fairness}), as defined below.

As demonstrated in \Cref{prop:interim/ex post fairness:no trade}, there exist simple instances in which any BIC, IIR, and ex-ante WBB mechanism that satisfies either interim {\ksfairness} or ex post {\ksfairness} results in the {\NoTrade} mechanism (i.e., one where trade never occurs). 
Thus, these definitions are too stringent, justifying our focus on the ex ante definition of fairness ({\ksfairness}, as defined in \Cref{def:ks fairness}).

\begin{definition}[Interim {\ksfairness}]
\label{def:interim ks fairness}
    For bilateral trade instance $\btinstance$, mechanism $\mech\in \mechfam$ is \emph{interim {\ksfair}}, if the two traders' interim utilities for any pair of value realizations $(\val, \cost)$ achieve the same {fraction of} each trader's own interim benchmark $\sellerbenchmark(\cost)$ and $\buyerbenchmark(\val)$, i.e.,
    \begin{align*}
        \frac{\sellerutil(\cost)}{\sellerbenchmark(\cost)}
        = \frac{\buyerutil(\val)}{\buyerbenchmark(\val)}
    \end{align*}
    where $\sellerutil(\cost)$ (resp.\ $\buyerutil(\val)$) is the seller's (resp.\ buyer's) interim utility given value realization $\cost$ (resp.\ $\val)$ in mechanism $\mech$. Benchmark $\sellerbenchmark(\cost)$ is defined as the seller's interim utility in the BIC, IIR, and ex ante WBB mechanism that maximizes that utility (when she has a deterministic value $\cost$), when facing a buyer with distribution $\buyerdist$. Similarly, benchmark $\buyerbenchmark(\val)$ is defined as the buyer's interim utility in the BIC, IIR, and ex ante WBB mechanism that maximizes that utility (when she has a deterministic value $\val$), when facing a seller with distribution $\sellerdist$.
\end{definition}

\begin{definition}[Ex post {\ksfairness}]
\label{def:ex post ks fairness}
    For bilateral trade instance $\btinstance$, mechanism $\mech\in \mechfam$ is \emph{ex post {\ksfair}}, if the two traders' ex post utilities for any pair of value realizations $(\val, \cost)$ achieve the same {fraction of} each trader's own ex post benchmark $\sellerbenchmark(\val, \cost)$ and $\buyerbenchmark(\val, \cost)$, i.e.,
    \begin{align*}
        \frac{\sellerutil(\val, \cost)}{\sellerbenchmark(\val, \cost)}
        = \frac{\buyerutil(\val, \cost)}{\buyerbenchmark(\val,\cost)}
    \end{align*}
    where $\sellerutil(\val, \cost)$ (resp.\ $\buyerutil(\val, \cost)$) is the seller's (resp.\ buyer's) ex post utility given value realization $\val$ and $\cost$ in mechanism $\mech$. Benchmark $\sellerbenchmark(\val, \cost)$ is defined as the seller's ex post in the BIC, IIR, and ex ante WBB mechanism that maximizes that utility (when she has a deterministic value $\cost$), when facing a buyer with a deterministic value $\val$. Similarly, benchmark $\buyerbenchmark(\val,\cost)$ is defined as the buyer's ex post utility in the BIC, IIR, and ex ante WBB mechanism that maximizes that utility (when she has a deterministic value $\val$), when facing a seller with a deterministic value $\cost$.
\end{definition}

We remark that the interim {\ksfairness} (\Cref{def:interim ks fairness}) implies the ex ante {\ksfairness} (\Cref{def:ks fairness}). To see this, note that in both fairness definition, the seller's ex ante benchmark $\sellerbenchmark$ and interim benchmark $\sellerbenchmark(\cost)$ are consistent, i.e., $\sellerbenchmark = \expect[\cost]{\sellerbenchmark(\cost)}$, and the same consistency holds for the buyer's benchmarks, i.e., $\buyerbenchmark = \expect[\val]{\buyerbenchmark(\val)}$. Following the same reason, we observe that the ex post {\ksfairness} does not imply interim or ex ante {\ksfairness}, since its benchmark may be different, i.e., $\sellerbenchmark \not= \expect[\val,\cost]{\sellerbenchmark(\val,\cost)}$. 

We next present our result for these fairness notions, showing that for non-degenerate instances, only the {\NoTrade} satisfy the interim or ex post {\ksfairness}. The proof is based on studying the following instance.

\begin{example}
\label{example:interim fairness:zero-value seller uniform buyer}
The buyer has valuation distribution $\buyerdist$ with support $\supp(\buyerdist) = [\lval, \hval]$, where $0\leq \lval < \hval$. The seller has a deterministic value of zero.
\end{example}

\begin{lemma}
\label{prop:interim/ex post fairness:no trade}
    In \Cref{example:interim fairness:zero-value seller uniform buyer}, if a BIC, IIR, ex ante WBB mechanism $\mech$ is interim {\ksfair} or ex post {\ksfair}, then it must be the {\NoTrade} (which has zero GFT). 
\end{lemma}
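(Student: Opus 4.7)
\smallskip

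\noindent\textbf{Proof proposal.} The plan is to show that the KS-fairness identity, combined with the standard BIC envelope for the buyer and ex-ante WBB plus interim IR for the seller, pins down the buyer's interim allocation as a constant, which must then be zero.

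I would first identify the relevant benchmarks in the zero-value-seller example. Since the seller has a single (known) value $0$, her interim benchmark is the monopoly revenue $\sellerbenchmark(0) = \max_{\price} \price\,(1-\buyercdf(\price)) =: R^*$, attained by the \SellerOffer; the assumption $\lval < \hval$ guarantees $R^* > 0$. The buyer with value $v$, facing a seller deterministically valued at $0$, attains her interim benchmark $\buyerbenchmark(v) = v$ in the \BuyerOffer (post price $0$, so trade always occurs). The interim KS-fairness condition therefore reads, for every $v \in \supp(\buyerdist)$ with $v>0$,
\begin{align*}
\frac{\buyerutil(v)}{v} \;=\; \frac{\sellerutil(0)}{R^*} \;=:\; \gamma,
\end{align*}
so $\buyerutil(v)\equiv \gamma v$ throughout the support.

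I would then match this against the BIC envelope $\buyerutil(v) = \buyerutil(\lval) + \int_{\lval}^{v}\alloc(t)\,\dd t$ for the buyer (with $\alloc(t)=\alloc(t,0)$ since the seller is deterministic). Substituting $\buyerutil(v)=\gamma v$ and differentiating forces $\alloc(v)\equiv \gamma$ almost everywhere on $\supp(\buyerdist)$, so the buyer's interim payment collapses to $\price(v) = v\alloc(v) - \buyerutil(v) = \gamma v - \gamma v = 0$. Hence the expected buyer payment is zero; ex ante WBB forces $\sellerprice(0)\le 0$, while seller IIR gives $\sellerprice(0) = \sellerutil(0)\ge 0$. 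Combining, $\sellerutil(0)=0$, so $\gamma=0$, so $\alloc\equiv 0$ on the support, and $\mech$ is the \NoTrade{}.

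For the ex post case, the deterministic two-point benchmarks are $\sellerbenchmark(v,0) = \buyerbenchmark(v,0) = v$ (either side can extract the full ex post surplus in the single-profile benchmark mechanism), so ex post KS-fairness reduces to the pointwise identity $\sellerprice(v,0) = \sellerutil(v,0) = \buyerutil(v,0)$ for every $v>0$. Substituting the BIC envelope expression $\buyerutil(v,0) = \buyerutil(\lval,0) + \int_{\lval}^{v}\alloc(t,0)\,\dd t$ into $\sellerprice(v,0) = \buyerutil(v,0)$ and comparing the resulting ex-ante payments under ex ante WBB yields, after Fubini,
\begin{align*}
\int_{\lval}^{\hval}\alloc(v,0)\bigl(v\buyerpdf(v) - 2(1-\buyercdf(v))\bigr)\,\dd v \;\geq\; 2\buyerutil(\lval,0)\;\geq\;0.
\end{align*}
The main obstacle will be closing the argument from here: unlike the interim case, where the constant-ratio identity immediately collapses $\alloc$ to a constant and forces $\price\equiv 0$, the ex post pointwise identity does not directly determine the shape of $\alloc(\cdot,0)$. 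The cleanest route I see is to exploit the specific distribution used in the example (the ``uniform buyer'' labelling of the example, together with the boundary constraint $\buyerutil(\lval,0)\ge 0$ from IIR and the monotonicity of $\alloc(\cdot,0)$ from BIC) to show that any nontrivial allocation satisfying the pointwise identity would make the seller's compensating payment $\sellerprice(v,0)=\buyerutil(v,0)$ outstrip the buyer's interim payment, contradicting ex ante WBB and forcing $\alloc(\cdot,0)\equiv 0$.
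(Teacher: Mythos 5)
Your interim argument is essentially the paper's: you read off the global constant $\gamma=\buyerutil(\val)/\val=\sellerutil(0)/\sellerbenchmark(0)$, plug $\buyerutil(\val)=\gamma\val$ into the BIC envelope to force $\alloc(\val,0)\equiv\gamma$ and $\price(\val,0)\equiv 0$, and then use ex ante WBB plus seller IIR to conclude $\gamma=0$. That matches the paper step for step.

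For the ex post part, the obstacle you flag is real, and it is worth being precise about why. Under the reading you adopted — pointwise equality $\sellerutil(\val,0)/\val=\buyerutil(\val,0)/\val$ with the common fraction allowed to depend on $\val$ — the conclusion actually \emph{fails}. Concretely, take $\buyerdist$ uniform on $[0,1]$, post the trading price $1/3$ to the buyer, and set $\sellerprice(\val,0)=(\val-1/3)^{+}$: both traders' ex post utilities are $(\val-1/3)^{+}$, both ex post benchmarks are $\val$, so the fractions agree pointwise; the buyer side is DSIC and IR, the seller side is IR; $\expect[\val]{\price(\val,0)}=\expect[\val]{\sellerprice(\val,0)}=2/9$, so ex ante WBB holds with equality; yet the GFT is $2/9>0$. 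Your Fubini inequality $\int_{\lval}^{\hval}\alloc(v,0)\bigl(v\buyerpdf(v)-2(1-\buyercdf(v))\bigr)\dd v\ge 0$ is satisfied by this threshold allocation, which is precisely why you cannot close the argument from it. The paper's one-line ex post proof implicitly uses the stronger reading in which $\sellerutil(\val,0)/\sellerbenchmark(\val,0)$ is a single constant $\gamma$ for \emph{every} $\val$ (matching the structure of the interim definition, where independence forces the ratio to be constant). Under that reading the ex post case collapses immediately to your interim argument: $\buyerutil(\val,0)=\gamma\val$ for a global $\gamma$, the envelope gives $\alloc(\val,0)\equiv\gamma$ and $\price(\val,0)\equiv 0$, and WBB plus IIR give $\gamma=0$. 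Also note that \Cref{example:interim fairness:zero-value seller uniform buyer} allows an arbitrary $\buyerdist$ on $[\lval,\hval]$ despite its label, so you should not rely on a particular distribution to rescue the pointwise reading.
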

\begin{proof}
    We first consider the case where mechanism $\mech = (\alloc,\price,\sellerprice)$ is interim {\ksfair}. 
    Since mechanism $\mech$ is BIC and IIR, the buyer's interim utility $\buyerutil(\val)$ for every value $\val\in[\lval,\hval]$ can be expressed as 
    $\buyerutil(\val) = \buyerutil(\lval) + \int_{\lval}^\val \alloc(t)\cdot \d t$.
    Meanwhile, since the seller has deterministic value of zero, the buyer's interim benchmark $\buyerbenchmark(\val) = \val$.
    Invoking the assumption that mechanism $\mech$ is interim {\ksfair}, for every value $\val\in[\lval,\hval]$ of the buyer, we know 
    $\frac{\buyerutil(\val)}{\buyerbenchmark(\val)}
        =
        \frac{\sellerutil(0)}{\sellerbenchmark(0)}$.
    Putting all the three pieces together, we know that for every value $\val\in[\lval,\hval]$,
    \begin{align*}
        \frac{1}{\val}\cdot \left(\buyerutil(\lval) + \displaystyle\int_{\lval}^\val \alloc(t)\cdot \d t \right)= 
        \frac{\sellerutil(0)}{\sellerbenchmark(0)} 
        \;\;
        \Longleftrightarrow
        \;\;
        \left(\buyerutil(\lval) + \displaystyle\int_{\lval}^\val \alloc(t)\cdot \d t \right)= 
        \frac{\sellerutil(0)}{\sellerbenchmark(0)} 
        \cdot \val
    \end{align*}
    Taking the derivative on both sides, we obtain that 
    $\alloc(\val) = \frac{\sellerutil(0)}{\sellerbenchmark(0)}$ for every value $\val\in[\lval,\hval]$ and $\buyerutil(\lval) = \frac{\sellerutil(0)}{\sellerbenchmark(0)}\cdot \lval$.
    Consequently, we obtain $\buyerutil(\val) = \frac{\sellerutil(0)}{\sellerbenchmark(0)}\cdot \val$ and $\price(\val) = 0$ for every value $\val\in[\lval,\hval]$.
    Now consider the seller's (interim) utility $\sellerutil(0)$, which can be upper bounded by
    \begin{align*}
        \sellerutil(0) = \sellerprice(0)\leq \expect[\val]{\price(\val)} 
        = 0
    \end{align*}
    where the inequality holds due to ex ante WBB, the second equality holds due to $\price(\val) = 0$ for every value $\val$ argued above.
    Finally, since $\hval > \lval \geq 0$, the seller's (interim) benchmark $\sellerbenchmark(0) > 0$, and thus for every value $\val \in[\lval, \hval]$:
    \begin{align*}
        \alloc(\val) = \frac{\sellerutil(0)}{\sellerbenchmark(0)} = 0 
    \end{align*}
    which implies mechanism $\mech$ is the {\NoTrade}.
    
    Next we assume mechanism $\mech$ is ex post {\ksfair}. Since the seller has a deterministic value of zero, the buyer's interim allocation is equivalent to her ex post allocation. Hence, following the same argument, we obtain $\alloc(\val, 0) = \frac{\sellerutil(\hval,0)}{\sellerbenchmark(\hval,0)} = 0$ (due to the ex post {\ksfairness}), and thus mechanism $\mech$ is the {\NoTrade}. This completes the proof of \Cref{prop:interim/ex post fairness:no trade}.
\end{proof}

\section{Omitted Proof for \Cref{lem:GFT program:mhr buyer}}
\label{apx:lemgftprogrammhrbuyer}
In this section, we prove \Cref{lem:GFT program:mhr buyer}. Its analysis is similar to the one for \Cref{lem:GFT program:regular buyer} in \Cref{subsec:improved GFT:regular buyer}. Specifically, we search over all MHR distributions and argue that the worst GFT can only be induced by a subclass of them, whose \emph{cumulative hazard rate function} (rather than the revenue curve, as was the case for regular buyer distributions) can be characterized by finite many parameters (see \Cref{fig:GFT program:mhr buyer}).

\lemgftprogrammhrbuyer*

\begin{proof}
    Let $\cumhazard$ be the cumulative hazard rate function of the buyer and $\optreserve$ be the monopoly reserve for $\buyerdist$. (Since the buyer's valuation distribution is MHR, the monopoly reserve is unique.) Without loss of generality, we normalize the monopoly revenue to be equal to one. Since the buyer's valuation distribution is MHR, cumulative hazard rate function $\cumhazard$ is convex.
    
    \xhdr{Step 0- Introducing necessary notations.} We introduce $\revratio\in(0, 1)$ as a constant whose value will be pined down at the end of this analysis. Given constant $\revratio$, let $\price\in[0, \optreserve]$ be the smallest price such that $\price\cdot (1 - \buyercdf(\price)) \geq \revratio$. 
    Moreover, let $\val_0 \triangleq \price - \cumhazard(\price) / \cumhazard'(\price)$ and $\val_1 \triangleq (\cumhazard(\price) - \cumhazard(\optreserve) + \optreserve\cdot \cumhazard'(\optreserve) - \price\cdot \cumhazard'(\price))/(\cumhazard'(\optreserve) - \cumhazard'(\price))$. Finally, we also define 
    \begin{align*}
        H &\triangleq \expect[\val\sim\buyerdist]{\val\cdot \indicator{\val \geq \optreserve}}~,
        \;\;
        M \triangleq \expect[\val\sim\buyerdist]{\val\cdot \indicator{\price \leq \val < \optreserve}}~,
        \;\;
        L \triangleq \expect[\val\sim\buyerdist]{\val\cdot \indicator{ \val < \price}}~.
    \end{align*}
    which partition the {\SecondBest} into three pieces, i.e., $\OPTSB = \expect[\val]{\val} = H + M + L$.
    All notations are illustrated in \Cref{fig:GFT program:mhr buyer}.

    \xhdr{Step 1- Characterizing of a (possibly) not {\ksfair} {\FixPrice}.}
    First, we consider a (possibly not {\ksfair}) {\FixPrice} ({$\FPM_{\price}$}) with trading price $\price$:
    \begin{itemize}
        \item For the seller, her ex ante utility (aka., revenue) is $\sellerexanteutil(\FPM_{\price}) = \price\cdot (1 - \buyercdf(\price)) = \revratio$. Since her benchmark $\sellerbenchmark$ (aka., monopoly revenue) is normalized to one, her ex ante utility is an $\revratio$ fraction of her benchmark $\sellerbenchmark$.
        \item For the buyer, her ex ante utility can be computed as 
        \begin{align*}
            \buyerexanteutil(\FPM_{\price}) &= 
            \expect[\val]{\plus{\val - \price}} = 
            H + M - \revratio
        \end{align*}
        Meanwhile, the buyer's benchmark $\buyerbenchmark$ is 
        \begin{align*}
            \buyerbenchmark = H + M + L
        \end{align*}
    \end{itemize}
    Putting the two pieces together, we conclude that in this (possibly not {\ksfair}) {\FixPrice} ({$\FPM_{\price}$}), both traders' ex ante utilities satisfy
    \begin{align*}
        \frac{\sellerexanteutil(\FPM_{\price})}{\sellerbenchmark} = \revratio
        \;\;
        \mbox{and}
        \;\;
        \frac{\buyerexanteutil(\FPM_{\price})}{\buyerbenchmark} = \frac{H + M - \revratio}{H + M + L}
    \end{align*}
    
    \xhdr{Step 2- Characterizing of {\ksfair} {\FixPrice}.} Define auxiliary notation $\exanteutilratio\in[0, 1]$ as 
    \begin{align*}
        \exanteutilratio \triangleq \displaystyle
        \revratio - \plus{\frac{H + M + L}{H + M + L + 1}\left(\revratio - \frac{H + M - \revratio}{H + M + L}\right)}
    \end{align*}
    We next show that there exists price $\fprice\in[0,\optreserve]$ such that the {\FixPrice} with trading price $\fprice$ is {\ksfair} and both traders' ex ante utilities are at least $\exanteutilratio$ fraction of their benchmarks $\sellerbenchmark,\buyerbenchmark$, respectively. To see this, consider the following two cases separately.
    \begin{itemize}
        \item Suppose that $\revratio < \frac{H + M - \revratio}{H + M + L}$ and thus $\exanteutilratio = \revratio$. In this case, by increasing trading price $\price$ in the {\FixPrice}, the seller's ex ante utility increases continuously (due to the concavity of revenue curve $\revcurve$) and the buyer's ex ante utility decreases continuously. Invoking the intermediate value theorem, there exists price $\fprice \in (\price, \optreserve)$ such that both traders' ex ante utilities is at least $\exanteutilratio$ fraction of their benchmarks $\sellerbenchmark,\buyerbenchmark$, respectively.
        \item Suppose that $\revratio \geq \frac{H + M - \revratio}{H + M + L}$ and thus $\exanteutilratio = \revratio - {\frac{H + M + L}{H + M + L + 1}\left(\revratio - \frac{H + M - \revratio}{H + M + L}\right)}$. In this case, let $\Delta \triangleq {\frac{H + M + L}{H + M + L + 1}\left(\revratio - \frac{H + M - \revratio}{H + M + L}\right)}\geq 0$. By decreases trading price $\price$ in the {\FixPrice}, the seller's ex ante utility decreases continuously (due to the concavity of revenue curve $\revcurve$). Let $\price\primed < \price$ be the trading price such that the seller's ex ante   utility (aka., revenue) is equal to $\revratio - \Delta$. Under trading price $\price\primed$, the buyer's ex ante utility is at least $H + M - \revratio + \Delta$. (To see this, note that by decreasing trading price from $\price$ to $\price\primed$, the GFT weakly increases and the seller's ex ante utility decreases by $\Delta$. Thus, the buyer's ex ante utility increases by at least $\Delta$.) Due to the definition of $\Delta$, two traders' ex ante utilities in the {\FixPrice} ($\FPM_{\price\primed}$) with trading price $\price\primed$ satisfy
        \begin{align*}
            \frac{\sellerexanteutil(\FPM_{\price\primed})}{\sellerbenchmark} = \revratio - \Delta = \frac{H + M - \revratio + \Delta}{H + M + L}
            \leq 
            \frac{\buyerexanteutil(\FPM_{\price\primed})}{\buyerbenchmark}
        \end{align*}
        If the inequality above holds with equality, the {\FixPrice} with trading price $\price\primed$ is {\ksfair} and both traders' ex ante utilities are at least $\exanteutilratio$ fraction of their benchmarks $\sellerbenchmark,\buyerbenchmark$, respectively. Otherwise, we can invoke argument in the previous case.
    \end{itemize}
    Summarizing the analysis above, the GFT-approximation of the {\ksfair} {\FixPrice} ($\FPM_{\fprice}$) with trading price $\fprice$ can be computed as 
    \begin{align*}
        \frac{\GFT{\FPM_{\fprice}}}{\OPTSB}
        \geq
        \frac{\exanteutilratio \cdot (\sellerbenchmark + \buyerbenchmark)}{\expect[\val]{\val}}
        =
        \exanteutilratio + \frac{\exanteutilratio}{H + M + L}
    \end{align*}

    \xhdr{Step 3- Formulating GFT approximation as two-player game.} Putting all the pieces together, the optimization program in the lemma statement can be viewed as a two-player zero-sum game between a min player (adversary) and a max player (ourself as GFT-approximation prover). The payoff in this game is the GFT approximation lower bound $\exanteutilratio + \frac{\exanteutilratio}{H + M + L}$ shown above. As a reminder
    , quantities $\exanteutilratio, H, M, L$ depend on both the buyer's valuation distribution and constant $\revratio$ used in the analysis. The min player chooses the worst MHR distribution (equivalently, convex cumulative hazard rate function) of the buyer, and the max player chooses constant $\revratio$. Importantly, the choice of constant $\revratio$ can depend on the buyer's valuation distribution. To capture this, we formulate this two-player zero-sum game in three stages: 
    \begin{itemize}
        \item (Stage 1) The min player (adversary) chooses monopoly quantile $\optreserve$ and $H$.
        \item (Stage 2) The max player (ourself as GFT-approximation prover) chooses constant $\revratio$ for the analysis.
        \item (Stage 3) The min player (adversary) chooses $\val_0, \price, M, L$.
    \end{itemize}
    It remains to verify that all constraints in the optimization program capture the feasibility condition for the both min player and max player's actions. We next verify the non-trivial constraints individually. 
    \begin{itemize}
        \item (Bounds for monopoly reserve $\optreserve$) Recall that we normalize the monopoly revenue to be one. Hence, the lower bound that $\optreserve \geq 1$ is trivial. Meanwhile, the upper bound that $\optreserve \leq e$ is due to the MHR condition \citep[Lemma~1 in][]{AGM-09}.
        \item (Bounds for price $\price$) Recall that $\price\in[0, \optreserve]$ is the smallest price such that $\price\cdot (1 - \buyercdf(\price)) \geq \revratio$. Hence, $\price \geq \revratio$.
        Moreover, the convexity of cumulative hazard rate function $\cumhazard$ implies that 
        \begin{align*}
            \cumhazard(\optreserve) + (\price - \optreserve)\cdot \cumhazard'(\optreserve) \leq \cumhazard(\price) \leq \cumhazard(\optreserve) \cdot \frac{\price}{\optreserve}
        \end{align*}
        By the definition of cumulative hazard rate function $\cumhazard$ and our normalization of the monopoly revenue to be one, we have $\cumhazard(\optreserve) = \ln(\optreserve)$, $\cumhazard(\price) = \ln(\price/\revratio)$, and $\cumhazard'(\optreserve) = 1/{\optreserve}$.\footnote{To see $\cumhazard'(\optreserve) = 1/{\optreserve}$, consider the equal-revenue distribution (with monopoly revenue equal to one). The cumulative hazard rate function of the equal-revenue distribution is $\ln(\val)$ for $\val\in[1, \infty)$. Since buyer's valuation distribution $\buyerdist$ is MHR and we normalize its monopoly revenue to one, its induced cumulative hazard rate function $\cumhazard$ is convex and has at most one intersection with $\ln(\val)$ at $\val=\optreserve$. This implies that the derivatives of two curves are identical at $\optreserve$, i.e., $\cumhazard'(\optreserve) = 1/\optreserve$. See the black dashed line in \Cref{fig:GFT program:mhr buyer} for an illustration.} Hence, the inequality above is equivalent to  
        \begin{align*}
            -\optreserve\cdot \LambertFunc\left(-\frac{\revratio}{e}\right) 
            \leq \price \leq  
            - \optreserve\cdot \LambertFunc\left(-\frac{\revratio\ln(\optreserve)}{\optreserve}\right)\cdot \frac{1}{\ln(\optreserve)}
        \end{align*}
        \item (Upper bound for value $\val_0$) Recall that $\val_0 \triangleq \price - \cumhazard(\price) / \cumhazard'(\price)$. The convexity of cumulative hazard rate function $\cumhazard$ implies that 
        \begin{align*}
            \cumhazard'(\quant) \leq \frac{\cumhazard(\optreserve) - \cumhazard(\price)}{\optreserve - \price} = \frac{\ln(\optreserve) - \ln(\price/\revratio)}{\optreserve - \price}
        \end{align*}
        After rearranging, we obtain 
        \begin{align*}
            \val_0 \leq \optreserve - \frac{\ln(\optreserve)}{\ln(\optreserve) - \ln(\frac{\price}{\revratio})}\cdot (\optreserve - \price)
        \end{align*}
        as stated in the constraint. 
        
        \item (Bounds for truncated GFT $L$) Recall that $L \triangleq \expect[\val]{\val\cdot \indicator{\val < \price}}$. The convexity of the cumulative hazard rate function implies that for every value $\val \in [0, \price]$,
        \begin{align*}
            \plus{
            \frac{\val - \val_0}{\price - \val_0}
            \cdot \cumhazard\left(\price\right)
            }
            \leq 
            \cumhazard(\val) 
            \leq
            \frac{\val}{\price}\cdot \cumhazard\left(\price\right)
        \end{align*}
        where both inequalities bind at $\val = \price$ and $\val = 0$. Moreover, $\cumhazard(\price) = \ln(\price/\revratio)$. The left-hand side and right-hand side can be viewed as two cumulative hazard rate functions $\cumhazard_1, \cumhazard_2$ that sandwich the original cumulative hazard rate function $\cumhazard$ (see blue and red cumulative hazard rate functions illustrated in \Cref{fig:GFT program:mhr buyer}). It can be verified that $\LOverBar$ and $\LUnderBar$ are $\expect[\val]{\val\cdot \indicator{ \val < \price}}$ where the random value $\val$ is realized from valuation distribution induced by those two cumulative hazard rate functions $\cumhazard_1, \cumhazard_2$, respectively. Invoking \Cref{lem:cum hazard rate monotonicity}, we obtain $\LUnderBar\leq L \leq \LOverBar$ as stated in the constraint, since these two cumulative hazard rate functions $\cumhazard_1, \cumhazard_2$ sandwich the original cumulative hazard rate function $\cumhazard$.
        \item (Bounds for truncated GFT $M$) The argument is similar to the argument above for truncated GFT $L$. Recall that $M \triangleq \expect[\val]{\val\cdot \indicator{ \price \leq \val <\optreserve }}$. The convexity of the cumulative hazard rate function implies that for every quantile $\val \in [\price, \optreserve]$,
        \begin{align*}
            \max\left\{
            \frac{\val - \val_0}{\price - \val_0}
            \cdot 
            \cumhazard\left(\price\right)
            ,
            \cumhazard(\optreserve) - (\optreserve - \val)\cdot \cumhazard'(\optreserve) 
            \right\}
            \leq 
            \cumhazard(\val) 
            \leq
            \frac{\optreserve - \val}{\optreserve - \price}
            \cdot \left(\cumhazard(\optreserve) - \cumhazard\left(\price\right)\right)
            +
            \cumhazard\left(\price\right)
        \end{align*}
        where both inequalities bind at $\val = \price$ and $\val = \optreserve$. Moreover, $\cumhazard(\price) = \ln(\price/\revratio)$, $\cumhazard(\optreserve) = \ln(\optreserve)$, and $\cumhazard'(\optreserve) = 1/\optreserve$. The left-hand side and right-hand side can be viewed as two cumulative hazard rate functions $\cumhazard_1, \cumhazard_2$ that sandwich the original cumulative hazard rate function $\cumhazard$ (see blue and red cumulative hazard rate functions illustrated in \Cref{fig:GFT program:mhr buyer}). It can be verified that $\MOverBar$ and $\MUnderBar$ are $\expect[\val]{\val\cdot \indicator{ \price \leq \val <\optreserve }}$ where the random value $\val$ is realized from valuation distribution induced by those two cumulative hazard rate functions $\cumhazard_1, \cumhazard_2$, respectively. Invoking \Cref{lem:cum hazard rate monotonicity}, we obtain $\MUnderBar\leq M \leq \MOverBar$ as stated in the constraint, since these two cumulative hazard rate functions $\cumhazard_1, \cumhazard_2$ sandwich the original cumulative hazard rate function $\cumhazard$.
        \item (Bounds for truncated GFT $H$) The argument is similar to the argument above for truncated GFT $L$. Recall that $H \triangleq \expect[\val]{\val\cdot \indicator{\val \geq \optreserve}}$. The convexity of the cumulative hazard rate function implies that for every value $\val \in [\optreserve, \infty)$,
        \begin{align*}
            \cumhazard(\optreserve) + (\val - \optreserve) \cdot 
            \cumhazard'(\optreserve)
            \leq 
            \cumhazard(\val) 
        \end{align*}
        where inequality binds at $\val = \optreserve$. The left-hand side can be viewed as cumulative hazard rate functions $\cumhazard_1$. Moreover, consider another cumulative hazard function $\cumhazard_2$ such that $\cumhazard_2(\optreserve) = \cumhazard(\optreserve)$ and $\cumhazard_2(\val) = \infty$ for every $\val > \optreserve$. Note that $\cumhazard_1$ and $\cumhazard_2$ sandwich the original cumulative hazard rate function $\cumhazard$ (see blue and red cumulative hazard rate functions illustrated in \Cref{fig:GFT program:mhr buyer}). It can be verified that $\HOverBar$ and $\HUnderBar$ are $\expect[\val]{\val\cdot \indicator{  \val \geq\optreserve }}$ where the random value $\val$ is realized from valuation distribution induced by those two cumulative hazard rate functions $\cumhazard_1, \cumhazard_2$, respectively. Invoking \Cref{lem:cum hazard rate monotonicity}, we obtain $\HUnderBar\leq H \leq \HOverBar$ as stated in the constraint, since these two cumulative hazard rate functions $\cumhazard_1, \cumhazard_2$ sandwich the original cumulative hazard rate function $\cumhazard$.
        \item (Equation for value $\val_1$) Recall that  $\val_0 \triangleq \price - \cumhazard(\price) / \cumhazard'(\price)$ and $\val_1 \triangleq (\cumhazard(\price) - \cumhazard(\optreserve) + \optreserve\cdot \cumhazard'(\optreserve) - \price\cdot \cumhazard'(\price))/(\cumhazard'(\optreserve) - \cumhazard'(\price))$. Combining both equations with $\cumhazard(\price) = \ln(\price/\revratio)$, $\cumhazard(\optreserve) = \ln(\optreserve)$, and $\cumhazard'(\optreserve) = 1/\optreserve$, we obtain 
        \begin{align*}
            \val_1 = \left(\frac{1}{\optreserve} - \frac{1}{\price - \val_0}\ln\left(\frac{\price}{\revratio}\right)\right)^{-1}\left(
        \ln\left(\frac{\price}{\revratio}\right) - \ln(\optreserve)
        + 1 
        - \frac{\price}{\price-\val_0}\ln\left(\frac{\price}{\revratio}\right) 
        \right)~,
        \end{align*}
        as stated in the optimization program.
    \end{itemize}
    Finally, we numerically evaluation the optimization program and obtain $\fixedPriceGFTPercentageMHR$. We present more details of this numerical evaluation in \Cref{apx:numerical evaluation:mhr buyer}.
    This completes the proof of \Cref{lem:GFT program:mhr buyer}.
\end{proof}

\begin{lemma}
\label{lem:cum hazard rate monotonicity}
    Given any two distributions $\buyerdist_1,\buyerdist_2$ and any two value $\val\primed, \val\doubleprimed$ with $\val\primed \leq \val\doubleprimed$. 
    Suppose $\buyercdf_1(\val\primed) = \buyercdf_2(\val\primed)$ and $\buyercdf_1(\val\doubleprimed) = \buyercdf_2(\val\doubleprimed)$. If the induced cumulative hazard rate functions $\cumhazard_1,\cumhazard_2$ satisfy that for every value $\val \in[\val\primed,\val\doubleprimed]$,
    $\cumhazard_1(\val) \geq \cumhazard_2(\val)$,
    then 
    \begin{align*}
        \expect[\val\sim\buyerdist_1]{\val\cdot \indicator{\val\primed \leq \val\leq \val\doubleprimed}}
        \leq 
        \expect[\val\sim\buyerdist_2]{\val\cdot \indicator{\val\primed \leq \val\leq \val\doubleprimed}}
    \end{align*}
\end{lemma}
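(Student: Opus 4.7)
The plan is to reduce this lemma to the analogous revenue-curve monotonicity statement (\Cref{lem:revenue curve monotonicity}) by translating the hypothesis on cumulative hazard rates into a pointwise comparison of the CDFs. Recall that $\cumhazard(\val) = -\ln(1-\buyercdf(\val))$, so the map $\buyercdf \mapsto \cumhazard$ is monotone: $\cumhazard_1(\val) \geq \cumhazard_2(\val)$ is equivalent to $\buyercdf_1(\val) \geq \buyercdf_2(\val)$. The hypothesis $\buyercdf_1(\val') = \buyercdf_2(\val')$ and $\buyercdf_1(\val'') = \buyercdf_2(\val'')$ ensures the two CDFs agree at the endpoints of $[\val',\val'']$, and the cumulative-hazard-rate inequality ensures $\buyercdf_1 \geq \buyercdf_2$ pointwise on the interior of this interval.

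Given this pointwise CDF comparison, I would repeat the integration-by-parts computation from the proof of \Cref{lem:revenue curve monotonicity} verbatim. Specifically,
\begin{align*}
\expect[\val\sim\buyerdist_1]{\val\cdot \indicator{\val'\leq \val\leq \val''}}
&= \val''\buyercdf_1(\val'') - \val'\buyercdf_1(\val') - \int_{\val'}^{\val''} \buyercdf_1(\val)\,\dd \val \\
&\leq \val''\buyercdf_2(\val'') - \val'\buyercdf_2(\val') - \int_{\val'}^{\val''} \buyercdf_2(\val)\,\dd \val \\
&= \expect[\val\sim\buyerdist_2]{\val\cdot \indicator{\val'\leq \val\leq \val''}},
\end{align*}
where the inequality uses that the two boundary terms coincide (by the endpoint assumption) and that $\buyercdf_1 \geq \buyercdf_2$ on $[\val',\val'']$ makes the subtracted integral larger for $\buyerdist_1$.

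There is essentially no obstacle here; the only step that requires a small amount of care is verifying the translation between the hazard-rate inequality and the CDF inequality at values where $\buyercdf$ might equal $1$ (so $\cumhazard = \infty$), but on the interval $[\val',\val'']$ the endpoint hypothesis together with monotonicity of $\buyercdf$ keeps both CDFs below $1$ (or equal to $1$ on a common tail, in which case both sides of the target inequality contribute the same atomic mass). In short, the lemma is the cumulative-hazard-rate analogue of \Cref{lem:revenue curve monotonicity}, and the proof is the same integration-by-parts argument once the hypothesis is rewritten in terms of CDFs.
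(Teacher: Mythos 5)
Your proof is correct and takes essentially the same approach as the paper: translate the cumulative-hazard-rate inequality to a pointwise CDF inequality via the monotone map $\buyercdf \mapsto -\ln(1-\buyercdf)$, note the endpoint agreement, and conclude by the same integration-by-parts computation used for \Cref{lem:revenue curve monotonicity}.
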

\begin{proof}
    Since for every value $\val \in[\val\primed,\val\doubleprimed]$,
    $\cumhazard_1(\val) \geq \cumhazard_2(\val)$ and the inequality is binding at $\val = \val\primed$ and $\val = \val\doubleprimed$, it is guaranteed that for every value $\val\in[\val\primed,\val\doubleprimed]$, $\buyercdf_1(\val) \geq \buyercdf_2(\val)$
    and the inequality is binding at $\val = \val\primed$ and $\val = \val\doubleprimed$. This is sufficient to prove the lemma statement:
    \begin{align*}
        &\expect[\val\sim\buyerdist_1]{\val\cdot \indicator{\val\primed \leq \val\leq \val\doubleprimed}} 
        =
        \displaystyle\int_{\val\primed}^{\val\doubleprimed}
        \val\cdot \d \buyercdf_1(\val) 
        \overset{(a)}{=}
        \val\doubleprimed \buyercdf_1(\val\doubleprimed)
        -
        \val\primed\buyercdf_1(\val\primed)
        -
        \displaystyle\int_{\val\primed}^{\val\doubleprimed}
        \buyercdf_1(\val) \cdot \d \val
        \\
        &\qquad\overset{(b)}{\leq}
        \val\doubleprimed \buyercdf_2(\val\doubleprimed)
        -
        \val\primed\buyercdf_2(\val\primed)
        -
        \displaystyle\int_{\val\primed}^{\val\doubleprimed}
        \buyercdf_2(\val) \cdot \d \val
        \overset{(c)}{=}  
        \displaystyle\int_{\val\primed}^{\val\doubleprimed}
        \val\cdot \d \buyercdf_2(\val) 
        =
        \expect[\val\sim\buyerdist_2]{\val\cdot \indicator{\val\primed \leq \val\leq \val\doubleprimed}} 
    \end{align*}
    where equalities~(a) (c) hold due to the integration by parts, and inequality~(b) holds as we argued above. This complete the proof of \Cref{lem:cum hazard rate monotonicity}.
\end{proof}

\section{Details about Numerical Evaluation of Program~\ref{program:GFT:regular buyer}}
\label{apx:numerical evaluation:regular buyer}
In this section, we provide additional details for our numerical evaluation of program~\ref{program:GFT:regular buyer}.
As the first step, we prove that it suffices to assume $H = 1$ and $L = \LOverBar$.
\begin{lemma}
\label{lem:regular buyer program:reducing variables}
    For program~\ref{program:GFT:regular buyer}, there exists an optimal solution with $H = 1$ and $L = \LOverBar$. Namely, the optimal objective value of program~\ref{program:GFT:regular buyer} is the same as the optimal objective value of program~\ref{program:GFT:regular buyer} with additional constraints (for the minimization) that $H = 1$ and $L = \LOverBar$
\end{lemma}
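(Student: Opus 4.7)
The plan is to analyze how the objective depends on $L$ and $H$ separately via monotonicity arguments, after simplifying the objective through a case analysis on the $(\cdot)^+$ operator inside $\exanteutilratio$. Writing $T = H + M + L$, a direct expansion gives
\[
\exanteutilratio = \revratio - \frac{\plus{\revratio(T+1) - (H+M)}}{T+1},
\]
from which one checks that the objective $\exanteutilratio\,(1 + 1/T)$ equals $\revratio + \revratio/T$ when $\revratio(T+1) \leq H+M$ (Case A) and equals $(H+M)/T$ when $\revratio(T+1) > H+M$ (Case B), the two agreeing at the boundary. Equivalently, the objective can be written compactly as
\[
\mathrm{Obj} = \min\left\{\revratio + \frac{\revratio}{T},\ \frac{H+M}{T}\right\}.
\]

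For the claim $L = \LOverBar$, observe that both expressions inside the minimum are strictly decreasing in $L$: the first because $T$ grows with $L$, and the second because $(H+M)/T = 1 - L/T$. Hence the minimum itself is strictly decreasing in $L$. Crucially, the upper bound $\LOverBar$ depends only on $\revratio$, $q$, and $v_0$, not on $H$, $M$, or $L$, so raising $L$ to $\LOverBar$ keeps all inner constraints satisfied and weakly lowers the objective. This establishes the first part of the lemma.

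For the claim $H = 1$, after fixing $L = \LOverBar$ the two expressions have \emph{opposite} monotonicity in $H$: $\revratio + \revratio/T$ is strictly decreasing in $H$, while $(H+M)/T = 1 - L/T$ is strictly increasing in $H$. Their pointwise minimum is therefore unimodal on $[1, \infty)$, rising from $H = 1$ up to the crossover $H^\star$ where the two expressions agree, and falling afterwards. Consequently, the infimum over $H \in [\HUnderBar, \HOverBar] = [1, \infty)$ is attained at an endpoint -- either $H = 1$ or in the limit $H \to \infty$, the latter with limiting value $\revratio$.

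The delicate closing step is to show that the value at $H = 1$ does not exceed the limit $\revratio$, so that $H = 1$ is at least as good for the outer minimizer. Once $L = \LOverBar$ is fixed, the inner objective at $H = 1$ is again the minimum of a function decreasing in $M$ (namely $\revratio + \revratio/(1+M+\LOverBar)$) and one increasing in $M$ (namely $\frac{1+M}{1+M+\LOverBar}$), so it is unimodal in $M$ and its inner minimum is attained at an endpoint of $[\MUnderBar, \MOverBar]$. The main obstacle will be the algebraic verification, using the explicit formulas for $\MUnderBar$ and $\LOverBar$ together with the max player's choice of $\revratio$ and the feasibility constraints on $q$ and $v_0$, that this endpoint value is at most $\revratio$. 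Once that inequality is established, it combines with the preceding monotonicity arguments to conclude that the optimal objective value is unchanged after imposing $H = 1$ and $L = \LOverBar$.
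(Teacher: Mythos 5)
Your treatment of $L$ is correct and essentially the same as the paper's: $L$ appears only in the objective, which is decreasing in $L$, and $\LOverBar$ does not depend on $H,M,L$, so the inner minimizer sets $L=\LOverBar$. Your compact form $\mathrm{Obj}=\min\{\revratio+\revratio/T,\ (H+M)/T\}$ with $T=H+M+L$ is also a nice simplification and checks out algebraically.

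The $H=1$ part, however, has a genuine gap, and not just the one you flag. You explicitly leave the ``delicate closing step'' — showing the endpoint value at $H=1$ is at most $\revratio$ — unverified (``Once that inequality is established\ldots''), so the argument as written does not close. But the more basic problem is structural: your unimodality argument compares $H=1$ to $H>1$ \emph{with $\optquant$ held fixed}. There is no reason to expect $\max_{\revratio}\min_{q,\val_0,M,L}\mathrm{Obj}(\optquant,1)\le\max_{\revratio}\min_{q,\val_0,M,L}\mathrm{Obj}(\optquant,H)$ for every fixed $\optquant$; the monotonicity of the raw objective in $H$ (for fixed $M,L$) does not commute with the inner $\min$ over $(q,\val_0,M,L)$ or the $\max$ over $\revratio$, and your sketch does not bridge that. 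The paper instead performs a \emph{coupled} substitution: given any feasible $(\optquant,H,\revratio,q,\val_0,M,L)$ with $H>1$, it replaces $\optquant\mapsto e^{1-H}\optquant$, $H\mapsto 1$, $M\mapsto M+(H-1)$, which leaves $H+M$ and $T$ (hence the objective) exactly unchanged, and then checks that shrinking $\optquant$ only enlarges the feasible ranges of $q,\val_0,M$ while $\LUnderBar,\LOverBar$ are unaffected. This explicit change of variables is the key idea you are missing; without it, the endpoint comparison you are attempting to carry out need not hold for the fixed $\optquant$, and the unfinished inequality at the end would have to be proved for all feasible parameter choices, which is exactly the kind of global verification the paper's transformation is designed to avoid.
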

\begin{proof}
    We first argue that forcing $L = \LOverBar$ does not change the optimal objective value of the program. To see this, note that $L$ only appears in the objective function, where auxiliary variable $\exanteutilratio$ depends on $L$. By algebra, it can be verified that auxiliary variable $\exanteutilratio$ decreases as $L$ increases. Hence, the objective function is decreasing in $L$ and thus there exists an optimal solution such that $L = \LOverBar$.

    We next argue that forcing $H = 1$ does not change the optimal objective value of the program. Suppose there exists an optimal solution $(\optquant, H, \revratio, \quant, \val_0, M, L)$ of the program with $H > 1$ and $L = 1$. Consider an alternative solution $(\optquant\primed, H\primed, \revratio\primed, \quant\primed, \val_0\primed, M\primed, L\primed)$ where we set 
    \begin{align*}
        \optquant\primed \gets e^{1-H}\cdot \optquant,~
        H\primed \gets 1,~
        \revratio\primed \gets \revratio,~
        \quant\primed \gets \quant,~
        \val_0\primed \gets \val_0,~
        M\primed \gets M + (H - 1),~
        L\primed \gets L
    \end{align*} 
    It can be verified that the objective values of two solutions are identical. Moreover, the alternative solution is feasible. To see this, note that $\optquant\primed < \optquant$ since $H > 1$. Consequently, by algebra, the feasible regions of $\quant\primed, \val\primed$, and $M\primed$ becomes larger. This also ensures that there does not exists $\revratio\doubleprimed$ which can improve the objective value while fixing $\optquant\primed$ and $H\primed$. This finishes the proof the lemma.
\end{proof}
After simplifying program~\ref{program:GFT:regular buyer} by fixing $H = 1$ and $L = \LOverBar$, there are still five variables $(\optquant, \revratio,\quant,\val_0, M)$, where $\optquant$ appears in the outer minimization, $\revratio$ appears in the middle maximization, and $\quant,\val_0, M$ appear in the inner minimization. Next, we partition the feasible region of $\optquant$ (i.e., $[0, 1]$) into subintervals. For each subinterval, we manually set the value of $\revratio$. In this way, for each subinterval $[s, \ell]$, we obtain a pure minimization program~\ref{program:GFT:regular buyer:decompose} parameterized by $(s,\ell, \revratio)$ with variables $(\optquant,\quant,\val_0, M)$ defined as follows:
\begin{align}
\label{program:GFT:regular buyer:decompose}
\tag{$\mathcal{P}_{\mathrm{REG}}[s,\ell,\revratio]$}
\arraycolsep=5.4pt\def\arraystretch{1}
    \begin{array}{llll}
     &\min\limits_{\optquant, \quant, \val_0, M}   & 
      \displaystyle\exanteutilratio + \frac{\exanteutilratio}{1 + M + \LOverBar} &
      \vspace{10pt}
      \\
      \vspace{10pt}
      &\text{s.t.}
      & \optquant\in[s, \ell],~  
      \quant\in \left[\optquant + (1 - \revratio)(1 - \optquant), 1\right],~ 
      & 
      \\
      \vspace{10pt}
      && 
      \val_0\in \left[0, 1 - \displaystyle\frac{1 - \revratio}{\quant - \optquant}(1 - \optquant)\right],~ 
      M\in\left[\MUnderBar, \MOverBar\right]& 
    \end{array}
\end{align}
where auxiliary variables $\exanteutilratio$, $\MUnderBar, \MOverBar, \LOverBar$ are follows the same definitions as program~\ref{program:GFT:regular buyer}. 
By construction, program~\ref{program:GFT:regular buyer} can be lower bounded by program~\ref{program:GFT:regular buyer:decompose} as follows
\begin{lemma}
    Fix any $K\in\naturals$, any partition $\{[s_k, \ell_k]\}_{k\in[K]}$ of $[0, 1]$, and any assignment $\{\revratio_k\}_{k\in[K]} \in[0, 1]^K$. The optimal objective value $\Obj{\text{\ref{program:GFT:regular buyer}}}$ of program~\ref{program:GFT:regular buyer} can be lower bounded as
    \begin{align*}
        \Obj{\text{\ref{program:GFT:regular buyer}}} 
        \geq 
        \min_{k\in[K]} 
        \Obj{\text{\hyperref[program:GFT:regular buyer:decompose]{$\mathcal{P}_{\mathrm{REG}}[s_k,\ell_k,\revratio_k]$}}}
    \end{align*}
\end{lemma}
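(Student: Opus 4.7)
The plan is to combine the variable-reduction of \Cref{lem:regular buyer program:reducing variables} with two elementary inequalities: $\max_{\revratio}$ of a quantity is at least its value at any specific $\revratio$, and a $\min$ over a larger feasible set is at most a $\min$ over a smaller one. By \Cref{lem:regular buyer program:reducing variables}, we may restrict program~\ref{program:GFT:regular buyer} to solutions with $H = 1$ and $L = \LOverBar$ without changing its optimal value, after which the outer minimization simplifies to $\min_{\optquant \in [0,1]}$ and the optimal value takes the form
\begin{align*}
    \Obj{\text{\ref{program:GFT:regular buyer}}} = \min_{\optquant \in [0,1]} \max_{\revratio \in (0,1)} \min_{\quant, \val_0, M} f(\optquant, \revratio, \quant, \val_0, M),
\end{align*}
where $f$ denotes the GFT approximation objective and the inner minimization is taken subject to the feasibility constraints on $\quant, \val_0, M$ inherited from program~\ref{program:GFT:regular buyer}; crucially, those constraints match exactly the ones appearing in program~\ref{program:GFT:regular buyer:decompose} once $H$ and $L$ are substituted.

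Next, I would fix an arbitrary $\optquant \in [0,1]$ and invoke the hypothesis that $\{[s_k, \ell_k]\}_{k \in [K]}$ partitions $[0,1]$ to obtain an index $k \in [K]$ with $\optquant \in [s_k, \ell_k]$. Lower-bounding $\max_{\revratio}$ by the value at the specific choice $\revratio = \revratio_k$, and then enlarging the feasible set of the remaining minimization by also allowing the outer variable to range over all of $[s_k, \ell_k]$, yields
\begin{align*}
    \max_{\revratio \in (0,1)} \min_{\quant, \val_0, M} f(\optquant, \revratio, \quant, \val_0, M)
    &\geq \min_{\quant, \val_0, M} f(\optquant, \revratio_k, \quant, \val_0, M) \\
    &\geq \min_{\optquant' \in [s_k, \ell_k], \,\quant, \val_0, M} f(\optquant', \revratio_k, \quant, \val_0, M)
    = \Obj{\text{\hyperref[program:GFT:regular buyer:decompose]{$\mathcal{P}_{\mathrm{REG}}[s_k,\ell_k,\revratio_k]$}}}.
\end{align*}
The right-hand side is trivially at least $\min_{k' \in [K]} \Obj{\text{\hyperref[program:GFT:regular buyer:decompose]{$\mathcal{P}_{\mathrm{REG}}[s_{k'},\ell_{k'},\revratio_{k'}]$}}}$, so after taking $\min_{\optquant \in [0,1]}$ on the left I obtain the claimed bound.

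The argument is essentially routine; no genuine obstacle is anticipated beyond bookkeeping. The only items that merit verification are (i) that the inner feasibility regions for $\quant, \val_0, M$ in the two programs coincide under the substitution $H = 1$, $L = \LOverBar$ (immediate by inspection of the two statements), and (ii) that the partition covers all of $[0,1]$ so that every feasible $\optquant$ admits a valid index $k$, which is precisely the partition hypothesis.
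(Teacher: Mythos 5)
Your proof is correct and takes essentially the same approach as the paper's (reduce via \Cref{lem:regular buyer program:reducing variables} to $H=1$, $L=\LOverBar$, locate the subinterval containing the optimal $\optquant$, fix $\revratio = \revratio_k$, and enlarge the outer minimization set). In fact, your write-up is cleaner than the paper's one-line chain, which begins with an equality between $\Obj{\text{\ref{program:GFT:regular buyer}}}$ and the decomposed program's value at the optimal $\revratio$ that does not literally hold as stated (with $\revratio$ fixed and $\optquant$ allowed to vary in the subinterval, the decomposed program's value can only be $\leq$ the original), and then justifies the next inequality by an appeal to optimality of $\revratio$ that is hard to parse; your sequence -- lower-bound $\max_\revratio$ by its value at $\revratio_k$, then relax the constraint $\optquant' = \optquant$ to $\optquant' \in [s_k,\ell_k]$, then bound by the min over $k$ -- is the rigorous version of what the paper's argument is trying to express.
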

\begin{proof}
    Suppose $(\optquant, H, \revratio, \quant, \val_0, M, L)$ is an optimal solution of program~\ref{program:GFT:regular buyer}. Invoking \Cref{lem:regular buyer program:reducing variables}, it is without loss of generality to assume $H = 1$ and $L = \LOverBar$. Let $k\primed$ be the index such that $\optquant\in [s_{k\primed}, \ell_{k\primed}]$. By definition, we have 
    \begin{align*}
        \Obj{\text{\ref{program:GFT:regular buyer}}} 
        =
        \Obj{\text{\hyperref[program:GFT:regular buyer:decompose]{$\mathcal{P}_{\mathrm{REG}}[s_{k\primed},\ell_{k\primed},\revratio]$}}}
        \geq 
        \Obj{\text{\hyperref[program:GFT:regular buyer:decompose]{$\mathcal{P}_{\mathrm{REG}}[s_{k\primed},\ell_{k\primed},\revratio_k]$}}}
        \geq 
        \min_{k\in[K]} 
        \Obj{\text{\hyperref[program:GFT:regular buyer:decompose]{$\mathcal{P}_{\mathrm{REG}}[s_k,\ell_k,\revratio_k]$}}}
    \end{align*}
    where the first inequality holds since $\revratio$ belong to the optimal solution of program~\ref{program:GFT:regular buyer}.
\end{proof}
We report the empirical choice of partition $\{[s_k, \ell_k]\}$ and assignment $\{\revratio_k\}$ in \Cref{tab:GFT:regular buyer:decompose}. For each $k$, we uniformly discretize the feasible region of each variable into $500$ points and then numerically evaluate program~\ref{program:GFT:regular buyer:decompose}, which leads to the numerical lower bound of $\fixedPriceGFTPercentageRegular$.

\begin{table}
    \centering
    \begin{tabular}{|c|c|c|c|c|}
    \hline
    $[s, \ell]$ &
       [0, 0.002] & [0.002, 0.008]  & [0.008, 0.018]  & [0.018, 0.034] \\
       \hline
       $\revratio$ & 0.8 & 0.78 & 0.76 & 0.74
       \\
       \hline
       \multicolumn{5}{c}{}
       \\
       \hline
      $[s, \ell]$  & [0.034, 0.044]  & [0.044, 0.078] & [0.078, 0.1] & [0.1, 1] \\
       \hline
       $\revratio$ & 0.72 & 0.7 & 0.68 & 0.66
       \\
       \hline
    \end{tabular}
    \caption{Parameters used in program~\ref{program:GFT:regular buyer:decompose}.}
    \label{tab:GFT:regular buyer:decompose}
\end{table}

\section{Details about Numerical Evaluation of Program~\ref{program:GFT:mhr buyer}}
\label{apx:numerical evaluation:mhr buyer}

In this section, we provide additional details for our numerical evaluation of program~\ref{program:GFT:mhr buyer}. It follows a similar approach as the one illustrated in \Cref{apx:numerical evaluation:regular buyer} for program~\ref{program:GFT:regular buyer}.
As the first step, we prove that it suffices to assume $L = \LOverBar$.
\begin{lemma}
\label{lem:mhr buyer program:reducing variables}
    For program~\ref{program:GFT:mhr buyer}, there exists an optimal solution with $L = \LOverBar$. Namely, the optimal objective value of program~\ref{program:GFT:mhr buyer} is the same as the optimal objective value of program~\ref{program:GFT:mhr buyer} with additional constraints (for the minimization) that $L = \LOverBar$
\end{lemma}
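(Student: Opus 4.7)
The plan is to mirror the first half of the analogous argument used in \Cref{lem:regular buyer program:reducing variables}: I will observe that $L$ appears only in the objective of program~\ref{program:GFT:mhr buyer} (it does not appear in any of the box constraints on the other variables, nor in the formulas defining the auxiliary variables $\val_1, \HUnderBar, \HOverBar, \MUnderBar, \MOverBar, \LUnderBar, \LOverBar$), and then show the objective is monotonically non-increasing in $L$. Once monotonicity is established, pushing $L$ up to its upper feasibility bound $\LOverBar$ can only weakly decrease the objective, so an optimal solution with $L=\LOverBar$ exists.

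The core computation is the following simplification of $\exanteutilratio$. Writing $T = H+M+L$, algebra gives
\begin{align*}
\exanteutilratio
= \revratio - \plus{\frac{T}{T+1}\left(\revratio - \frac{H+M-\revratio}{T}\right)}
= \revratio - \plus{\revratio - \frac{H+M}{T+1}}
= \min\!\left\{\revratio,\ \frac{H+M}{T+1}\right\}.
\end{align*}
Hence the objective equals
\begin{align*}
\exanteutilratio + \frac{\exanteutilratio}{T}
= \exanteutilratio \cdot \frac{T+1}{T}
= \min\!\left\{\revratio\cdot\frac{T+1}{T},\ \frac{H+M}{T}\right\}.
\end{align*}
Both arguments of the min are strictly decreasing in $T$ (the first because $\frac{T+1}{T}=1+\frac{1}{T}$ is decreasing in $T$, the second because $H+M$ does not depend on $L$). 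Since $T=H+M+L$ is increasing in $L$, the objective is non-increasing in $L$, which is exactly what I need.

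The only detail worth verifying carefully is that no other variable or auxiliary quantity in program~\ref{program:GFT:mhr buyer} depends on $L$: inspection of $\val_1, \HUnderBar, \HOverBar, \MUnderBar, \MOverBar, \LUnderBar, \LOverBar$ shows that each is a function of the parameters $(\optreserve, \revratio, \price, \val_0)$ only, so raising $L$ to $\LOverBar$ preserves feasibility. There is no real obstacle here; the argument is a direct adaptation of the $L$-monotonicity step from the regular-buyer analysis, just with the cumulative-hazard-rate-based quantities in place of the revenue-curve-based ones.
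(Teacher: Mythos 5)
Your argument is correct and follows essentially the same route as the paper's proof: you observe that $L$ appears only in the objective via $\exanteutilratio$ (and via the denominator $H+M+L$), and then show the objective is non-increasing in $L$, so pushing $L$ to $\LOverBar$ is without loss. Your explicit simplification of the objective to $\min\bigl\{\revratio\cdot\frac{T+1}{T},\ \frac{H+M}{T}\bigr\}$ with $T=H+M+L$ cleanly carries out the ``by algebra, it can be verified'' step that the paper leaves implicit, and is in fact slightly more careful, since it handles both the $\exanteutilratio$ factor and the $1/T$ factor together rather than arguing only about $\exanteutilratio$.
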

\begin{proof}
    We first argue that forcing $L = \LOverBar$ does not change the optimal objective value of the program. To see this, note that $L$ only appears in the objective function, where auxiliary variable $\exanteutilratio$ depends on $L$. By algebra, it can be verified that auxiliary variable $\exanteutilratio$ decreases as $L$ increases. Hence, the objective function is decreasing in $L$ and thus there exists an optimal solution such that $L = \LOverBar$.
\end{proof}
After simplifying program~\ref{program:GFT:mhr buyer} by fixing $L = \LOverBar$, there are still six variables $(\optreserve, H, \revratio,\price,\val_0, M)$, where $\optreserve, H$ appears in the outer minimization, $\revratio$ appears in the middle maximization, and $\price,\val_0, M$ appear in the inner minimization. Next, we partition both the feasible region of $\optreserve$ (i.e., $[1, e]$) and the feasible region of $H$ (i.e., $[1, 2]$) into subintervals. For each partition (specified by the product of two subintervals), we manually set the value of $\revratio$. In this way, we obtain a pure minimization program~\ref{program:GFT:mhr buyer:decompose} parameterized by $(s,\ell, a, b, \revratio)$ with variables $(\optreserve,H, \price,\val_0, M)$ defined as follows:
\begin{align}
\label{program:GFT:mhr buyer:decompose}
\tag{$\mathcal{P}_{\mathrm{MHR}}[s,\ell,a, b,\revratio]$}
\arraycolsep=5.4pt\def\arraystretch{1}
    \begin{array}{llll}
     &\min\limits_{\optreserve, H, \price, \val_0, M}   & 
      \displaystyle\exanteutilratio + \frac{\exanteutilratio}{H + M + \LOverBar} &
      \vspace{10pt}
      \\
      \vspace{10pt}
      &\text{s.t.}
      & \optreserve\in[s, \ell],~
      \price\in \left[\max\left\{
    -\optreserve\cdot \LambertFunc\left(-\frac{\revratio}{e}\right), \revratio\right\},
    - \optreserve\cdot \LambertFunc\left(-\displaystyle\frac{\revratio\ln(\optreserve)}{\optreserve}\right)\cdot \frac{1}{\ln(\optreserve)} \right],~ 
      & 
      \\
      \vspace{10pt}
      && 
      \val_0\in \left[0, \optreserve - \frac{\ln(\optreserve)}{\ln(\optreserve) - \ln\left(\frac{\price}{\revratio}\right)}\cdot (\optreserve - \price)\right],~ 
      H\in[a, b],~
      M\in\left[\MUnderBar, \MOverBar\right]& 
    \end{array}
\end{align}
where auxiliary variables $\exanteutilratio$, $\MUnderBar, \MOverBar, \LOverBar$ are follows the same definitions as program~\ref{program:GFT:mhr buyer}. 
By construction, program~\ref{program:GFT:mhr buyer} can be lower bounded by program~\ref{program:GFT:mhr buyer:decompose} as follows
\begin{lemma}
    Fix any $K\in\naturals$, any partition $\{[s_k, \ell_k]\times[a_k,b_k]\}_{k\in[K]}$ of $[1, e]\times[1,2]$, and any assignment $\{\revratio_k\}_{k\in[K]} \in[0, 1]^K$. The optimal objective value $\Obj{\text{\ref{program:GFT:mhr buyer}}}$ of program~\ref{program:GFT:mhr buyer} can be lower bounded as
    \begin{align*}
        \Obj{\text{\ref{program:GFT:mhr buyer}}} 
        \geq 
        \min_{k\in[K]} 
        \Obj{\text{\hyperref[program:GFT:mhr buyer:decompose]{$\mathcal{P}_{\mathrm{MHR}}[s_k,\ell_k,a_k,b_k,\revratio_k]$}}}
    \end{align*}
\end{lemma}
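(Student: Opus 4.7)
The plan is to mimic exactly the structure used for the analogous regular-buyer reduction in \Cref{apx:numerical evaluation:regular buyer}, but now tracking two outer variables $(\optreserve, H)$ instead of the single $\optquant$. I would start by fixing an optimal solution $(\optreserve^*, H^*, \revratio^*, \price^*, \val_0^*, M^*, L^*)$ of program~\ref{program:GFT:mhr buyer}. By \Cref{lem:mhr buyer program:reducing variables} (which shows the objective is weakly decreasing in $L$ because the auxiliary $\exanteutilratio$ is decreasing in $L$), I may assume without loss of generality that $L^* = \LOverBar$ evaluated at the remaining coordinates.

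Because $\{[s_k,\ell_k]\times[a_k,b_k]\}_{k\in[K]}$ is a partition of $[1,e]\times[1,2]$, there exists an index $k'\in[K]$ with $(\optreserve^*, H^*)\in[s_{k'},\ell_{k'}]\times[a_{k'},b_{k'}]$. Now I would chain the following sequence of inequalities. First, by definition of the outer $\min$--$\max$--$\min$ structure, and using $L=\LOverBar$,
\begin{align*}
\Obj{\text{\ref{program:GFT:mhr buyer}}}
= \max_{\revratio}\;\min_{\price,\val_0,M}\; f\big(\optreserve^*, H^*, \revratio, \price, \val_0, M, \LOverBar\big),
\end{align*}
where $f$ denotes the objective functional. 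Specializing the outer $\max$ to the value $\revratio=\revratio_{k'}$ can only decrease the right-hand side, giving
\begin{align*}
\Obj{\text{\ref{program:GFT:mhr buyer}}}
\geq \min_{\price,\val_0,M}\; f\big(\optreserve^*, H^*, \revratio_{k'}, \price, \val_0, M, \LOverBar\big).
\end{align*}
Since $(\optreserve^*,H^*)$ is a feasible point of the box $[s_{k'},\ell_{k'}]\times[a_{k'},b_{k'}]$, enlarging the minimization to also range over this box weakly decreases the value, yielding $\Obj{\text{\hyperref[program:GFT:mhr buyer:decompose]{$\mathcal{P}_{\mathrm{MHR}}[s_{k'},\ell_{k'},a_{k'},b_{k'},\revratio_{k'}]$}}}$. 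The lemma then follows by taking the minimum over $k\in[K]$.

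Because this is a purely structural reduction, I do not expect a technical obstacle; the only subtlety is verifying that the direction of every monotonicity claim is correct. In particular I would double-check that (i) the argument reducing $L$ to $\LOverBar$ from \Cref{lem:mhr buyer program:reducing variables} genuinely applies at any fixed $(\optreserve^*, H^*, \revratio^*)$ (and not only at the full joint optimum), so that I am allowed to invoke it after fixing the outer coordinates, and (ii) the feasibility constraints on $\price, \val_0, M$ in the decomposed program~\ref{program:GFT:mhr buyer:decompose} are identical (as functions of $\optreserve, \revratio$) to those in program~\ref{program:GFT:mhr buyer}, so that no spurious feasibility is introduced when I enlarge the inner minimization to also range over $(\optreserve, H)$ in the box. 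Both checks are mechanical inspections of the constraint sets and auxiliary variable definitions carried over unchanged from program~\ref{program:GFT:mhr buyer}.
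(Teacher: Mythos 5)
Your proposal is correct and follows essentially the same approach as the paper: fix an optimal solution, invoke \Cref{lem:mhr buyer program:reducing variables} to set $L = \LOverBar$, locate the box $[s_{k'},\ell_{k'}]\times[a_{k'},b_{k'}]$ containing $(\optreserve^*, H^*)$, then chain the three inequalities (specialize the middle $\max$ to $\revratio_{k'}$, relax the outer $\min$ to the box, take $\min_k$). In fact your version is the correct one: the paper's proof contains a copy-paste slip from the regular-buyer case, incorrectly asserting that \Cref{lem:mhr buyer program:reducing variables} lets one assume $H = 1$ (it only gives $L = \LOverBar$) and then picking $k'$ using $\optreserve$ alone, whereas here the partition is over $(\optreserve, H)\in[1,e]\times[1,2]$ and $k'$ must be chosen, as you do, to contain both coordinates. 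Your two sanity checks at the end are both sound: the $L$-reduction argument is pointwise in the remaining variables so it applies at fixed $(\optreserve^*,H^*,\revratio)$, and the feasibility sets for $\price,\val_0,M$ in program~\ref{program:GFT:mhr buyer:decompose} are taken verbatim from program~\ref{program:GFT:mhr buyer} as functions of $\optreserve$ and $\revratio$, so no spurious feasibility is introduced.
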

\begin{proof}
    Suppose $(\optreserve, H, \revratio, \price, \val_0, M, L)$ is an optimal solution of program~\ref{program:GFT:mhr buyer}. Invoking \Cref{lem:mhr buyer program:reducing variables}, it is without loss of generality to assume $H = 1$ and $L = \LOverBar$. Let $k\primed$ be the index such that $\optreserve\in [s_{k\primed}, \ell_{k\primed}]$. By definition, we have 
    \begin{align*}
        \Obj{\text{\ref{program:GFT:mhr buyer}}} 
        &=
        \Obj{\text{\hyperref[program:GFT:mhr buyer:decompose]{$\mathcal{P}_{\mathrm{MHR}}[s_{k\primed},\ell_{k\primed},a_{k\primed},b_{k\primed},\revratio]$}}}
        \geq 
        \Obj{\text{\hyperref[program:GFT:mhr buyer:decompose]{$\mathcal{P}_{\mathrm{MHR}}[s_{k\primed},\ell_{k\primed},a_{k\primed},b_{k\primed},\revratio_k]$}}}
        \\
        &\geq 
        \min_{k\in[K]} 
        \Obj{\text{\hyperref[program:GFT:mhr buyer:decompose]{$\mathcal{P}_{\mathrm{MHR}}[s_k,\ell_k,a_k,b_k,\revratio_k]$}}}
    \end{align*}
    where the first inequality holds since $\revratio$ belong to the optimal solution of program~\ref{program:GFT:mhr buyer}.
\end{proof}
We omit the empirical choice of partition $\{[s_k, \ell_k]\times [a_k, b_k]\}$ and assignment $\{\revratio_k\}$ since the partition size is larger. For each $k$, we uniformly discretize the feasible region of each variable into $100$ points and then numerically evaluate program~\ref{program:GFT:mhr buyer:decompose}, which leads to the numerical lower bound of $\fixedPriceGFTPercentageMHR$.

\end{document}